\documentclass[sigconf,nonacm, authorversion,screen]{acmart}
\usepackage{todonotes}
\usepackage[commandnameprefix=always]{changes}
\usepackage{paco-notations}
\usepackage{supertech-llncs-light}
\usepackage{flushend}
\usepackage{balance}
\usepackage{cleveref}
\usepackage{gsn}
\usepackage{booktabs}
\usepackage{dblfloatfix}
\usepackage{mathtools}
\usepackage{tabularx}
\usepackage{threeparttablex}

\setcopyright{rightsretained}
\punt{
\acmConference[SPAA '26] {ACM SPAA 2026}{July 6-10, 2026}{Royal Holloway, University of London, in London, UK}
}

\begin{document}

\title{PACO: A Fully Cache-Oblivious Parallel FFT with One Global Redistribution}
\punt{
\titlenote{This research is supported in part by Huawei Coorp under the contract no. TC20240528049}
}

\author{Shina Guo}
\affiliation{%
    \institution{College of Computer Science and Artificial Intelligence, Fudan University}
    \city{Shanghai}
    \country{P. R. China}
}
\email{25213050013@m.fudan.edu.cn}

\author{Weiguo Gao}
\affiliation{%
    \institution{School of Mathematical Sciences, School of Data Science, Fudan University}
    \city{Shanghai}
    \country{P. R. China}
}
\email{wggao@fudan.edu.cn}

\author{Yuan Tang}
\authornote{Corresponding Author; Also affiliated with Shanghai Key Lab. of Intelligent Information Processing.}
\affiliation{%
    \institution{College of Computer Science and Artificial Intelligence, Fudan University}
    \city{Shanghai}
    \country{P. R. China}
}
\email{yuantang@fudan.edu.cn}

\begin{abstract}
Fast Fourier transforms (FFTs) on parallel machines incur two distinct
forms of data movement: processor-local transfers through the memory
hierarchy and global redistribution between processors. Existing
four-step FFT organizations can switch the active transform dimension
with one global transpose-like exchange, but this property alone does
not yield cache-efficient local computation. Conversely,
cache-oblivious FFTs obtain asymptotically optimal local memory traffic
through recursively applied layout transformations, whose direct
materialization can require additional data-rearrangement passes and
global exchanges.

We present \algName{}, a fully cache-oblivious parallel FFT framework
that reconciles these objectives. PACO executes
\[
    \textsc{LocalFFT}
    \;\longrightarrow\;
    \textsc{OneGlobalPermutation}
    \;\longrightarrow\;
    \textsc{LocalFFT}.
\]
Its local stages recursively partition both transform and batch
dimensions without knowledge of the cache parameters. Rather than
materializing the transpose-like layouts induced by this recursion,
PACO defers them. We show that the deferred layouts compose into a
base-$b$ digit-reversal permutation, which PACO fuses with the
redistribution already required to change the local transform
dimension. The resulting middle stage is a perfectly balanced parallel
cache-oblivious digit-reversal permutation: each source--destination
processor pair exchanges exactly \(N/p^2\) elements (the \(p\) self-pairs incur no network traffic).

Under an exact base-$b$ slab decomposition in a hybrid ideal-cache/BSP
model, PACO computes an \(N\)-point DFT exactly with maximum per-processor work
\[
    \Theta\!\left(\frac{N}{p}\log N\right)
\]
and maximum per-processor cache complexity
\[
    \Theta\!\left(
        \frac{N}{pB}
        \left(1+\log_M N\right)
    \right).
\]
It uses exactly one global redistribution round. PACO returns
canonical logical DFT coefficients under a factor-swapped target slab
ownership. The single redistribution is necessary under the stated
no-replication ownership model, while its communication volume is
optimal for the prescribed fused permutation and source--target slab
distributions.
\end{abstract}

\maketitle

\secput{intro}{Introduction}

The fast Fourier transform (FFT) is a fundamental primitive in
scientific computing, signal processing, and data analysis. The
Cooley--Tukey algorithm reduces the arithmetic cost of an \(N\)-point
discrete Fourier transform (DFT) from \(\Theta(N^2)\) to
\(\Theta(N\log N)\) operations~\cite{CooleyTu65}. On modern parallel
machines, however, arithmetic is only one component of the execution
cost. An FFT must move data through each processor's local memory
hierarchy and redistribute data between processors when the active
transform dimension changes. Thus, an efficient parallel FFT should
simultaneously control processor-local cache traffic and
interprocessor communication.

Existing FFT organizations typically optimize one of these forms of
data movement more directly than the other. Four-step and six-step FFT
organizations reshape a one-dimensional transform into matrix views
and use transpose-like operations to switch the active transform
dimension~\cite{Bailey90,VanLoan92}. With compatible local kernels and
data distributions, a two-factor four-step organization can use one
global slab redistribution while producing the canonical DFT
coefficient order in a factor-swapped output view. This algebraic
one-exchange structure, however, does not by itself guarantee fully
cache-oblivious local execution. Conversely, cache-oblivious FFTs
obtain asymptotically optimal ideal-cache complexity through
recursively applied transpose-like layout transformations
~\cite{FrigoLePr99}. Directly materializing these
transformations in a distributed setting may introduce extra local
data passes and, when an affected digit field is distributed,
additional global redistributions. Other low-communication FFTs reduce
the number of global all-to-all operations through alternative
factorizations, approximation, oversampling, or additional
neighbor-level communication~\cite{EdelmanMcTo99,TangPaKi12}; their
numerical and distributional contracts differ from the exact FFT
setting considered here.

\paragrf{Problem setting: }
We consider an exact bounded-radix Cooley--Tukey FFT under an explicit
source and target ownership contract. Let
\[
    N=N_1N_2.
\]
The input is viewed as an \(N_1\times N_2\) row-major matrix
distributed by \(N_2\)-column slabs. The output is viewed as an
\(N_2\times N_1\) row-major matrix distributed by \(N_1\)-column
slabs. Its coordinate \((k_2,k_1)\) represents
\[
    y[k_1+k_2N_1].
\]
Hence the output has canonical logical DFT coefficient indices, but
its physical ownership is factor-swapped relative to the input
distribution. Restoring the initial slab ownership is not part of the
problem and may require a further redistribution.

This paper presents PACO, a Processor-Aware Cache-Oblivious FFT framework
\punt{
\footnote{PACO names our algorithm family; here
``Communication-Optimal'' is meant in the scoped sense made precise
in the \emph{Scope of optimality} paragraph following \thmref{main-result}. }
}
that combines fully cache-oblivious
local computation with exactly one global redistribution. PACO follows the fixed structure
\[
    \textsc{LocalFFT}
    \;\longrightarrow\;
    \textsc{OneGlobalPermutation}
    \;\longrightarrow\;
    \textsc{LocalFFT}.
\]
Its local stages use a cache-oblivious batched FFT scheduler that
recursively partitions both transform and batch dimensions without
knowledge of the cache capacity \(M\) or block size \(B\).

The central observation is that a transpose-free cache-oblivious local
FFT does not leave its result in arbitrary disorder. Instead, the
layout transformations deferred during local recursion compose into a
structured base-\(b\) digit-reversal permutation. PACO materializes
this complete permutation exactly once, at the redistribution already
required between the two top-level Cooley--Tukey transform groups.
Thus, the middle stage simultaneously corrects the first local stage's
deferred layout, swaps the two factor fields, prepares the layout
consumed by the second local stage, and transfers each item to its
factor-swapped target owner. PACO does not introduce a separate global
communication phase merely to repair digit order.

\paragrf{Contributions.}
This paper makes the following contributions:
\begin{itemize}
    \item
    We design a fully cache-oblivious batched Cooley--Tukey FFT
    scheduler that recursively partitions transform and batch
    dimensions, avoids recursion-level physical transposes, and
    achieves asymptotically optimal local work and cache complexity
    under the stated memory model.

    \item
    We show that the lazy layouts induced by the local recursion
    compose into \(\proc{PARA-CO-DRP}_b\), a base-\(b\) PARAllel Cache-Oblivious Digit-Reversal Permutation, and we fuse
    that permutation with the ownership change required between the
    two top-level transform groups. The resulting
    \(\proc{PARA-CO-DRP}_b\) middle stage uses exactly one global
    redistribution and routes exactly \(N/p^2\) items from every source
    processor to every destination processor.

    \item
    Under an exact base-\(b\) slab decomposition in a hybrid
    ideal-cache/BSP model, we prove maximum per-processor work
    \[
        \Theta\!\left(
            \frac{N}{p}\log N
        \right)
    \]
    and maximum per-processor cache complexity
    \[
        \Theta\!\left(
            \frac{N}{pB}
            \left(
                1+\log_M N
            \right)
        \right).
    \]
    We further prove that one global redistribution round is necessary
    under the no-replication ownership model, and that PACO's
    communication volume is optimal for the prescribed fused
    slab-to-slab permutation.
\end{itemize}

\punt{
Our migration-optimality claim applies to the prescribed fused
permutation and fixed source--target slab ownerships; it does not
assert communication-volume optimality for all distributed FFT
algorithms with different intermediate ownership schedules.
}
\secput{model}{Model and Main Result}

We study the canonical length-\(N\) discrete Fourier transform
\[
    y
    =
    \Omega_N x,
    \qquad
    (\Omega_N)_{k,j}
    =
    \omega_N^{-kj},
    \qquad
    \omega_N
    =
    \exp\!\left(
        \frac{2\pi\imath}{N}
    \right).
\]
Let \([n]\coloneqq\{0,1,\ldots,n-1\}\). One complex scalar is one
item in both the cache and communication models.

\paragrf{Problem and ownership contract.}
Fix a constant radix
\[
    b
    =
    2^t
    =
    O(1).
\]
We assume
\[
    N=N_1N_2,\qquad
    N_1=b^{m_1},\qquad
    N_2=b^{m_2},
\]
and write
\[
    m:=m_1+m_2=\log_b N.
\]
There are \(p\ge2\) processors, indexed by \([p]\). For the exact
balance results, \(p\), \(N_1\), and \(N_2\) are powers of \(b\), and
\[
    p\mid N_1,
    \qquad
    p\mid N_2.
    \label{eq:exact-divisibility}
\]

The input vector is viewed as an \(N_1\times N_2\) row-major matrix
\[
    X[j_1,j_2]
    =
    x[j_1 N_2+j_2],
    \qquad
    j_1\in[N_1],
    \quad
    j_2\in[N_2],
\]
distributed by equal column slabs:
\begin{equation}
    \operatorname{owner}_0(j_1,j_2)
    =
    \left\lfloor
        \frac{j_2 p}{N_2}
    \right\rfloor.
    \label{eq:input-slab-owner}
\end{equation}
Thus, processor \(q\) initially owns the columns
\[
    q\frac{N_2}{p}
    \le j_2 <
    (q+1)\frac{N_2}{p}.
\]

We factor the canonical frequency index as
\[
    k
    =
    k_1+k_2N_1,
    \qquad
    k_1\in[N_1],
    \quad
    k_2\in[N_2],
\]
and view the output as an \(N_2\times N_1\) row-major matrix
\begin{equation}
    Y[k_2,k_1]
    =
    y[k_1+k_2N_1].
    \label{eq:factor-swapped-output-view}
\end{equation}
The target ownership is the column-slab distribution
\begin{equation}
    \operatorname{owner}_1(k_2,k_1)
    =
    \left\lfloor
        \frac{k_1p}{N_1}
    \right\rfloor.
    \label{eq:output-slab-owner}
\end{equation}
Hence, PACO returns canonical DFT coefficients \(y[k]\), but under a
factor-swapped target ownership. It does not restore the initial slab
distribution.

\paragrf{Machine and execution model.}
Each processor has a private fully associative ideal cache of \(M\)
items and block size \(B\), with optimal replacement and initially
empty cache. We assume the tall-cache condition
\begin{equation}
    M
    =
    \Omega(B^2)
    \label{eq:tall-cache}
\end{equation}
and the slab-width conditions
\begin{equation}
    \frac{N_1}{p}
    \ge B,
    \qquad
    \frac{N_2}{p}
    \ge B.
    \label{eq:slab-width-cache-block}
\end{equation}

Interprocessor communication follows the BSP model. In one BSP
\(h\)-relation, every processor sends and receives at most \(h\)
remote items; its cost is \(gh+L\), where \(g\) is the per-item gap
and \(L\) is the synchronization cost. A \emph{global redistribution
round} consists of local packing, one BSP \(h\)-relation, and local
unpacking. Packing and unpacking are charged to local work and cache
misses, not to \(h\).

We use an owner-computes model: input values are initially
unreplicated, no processor initially owns the entire input, and a
processor obtains a remotely owned value only through an explicit BSP
communication superstep. Arithmetic vertices of the fixed-radix
Cooley--Tukey DAG are not recomputed.

Let \(W_{\max}\) and \(Q_{\max}\) denote,
respectively, the maximum per-processor local work and cache misses.
Let \(R\) be the number of global redistribution rounds, \(h\) the
maximum per-processor remote volume in a round, and \(\mu\) the
aggregate number of remotely transmitted items.

\begin{theorem}[Main result]
\label{thm:main-result}
Under assumptions
\eqref{input-slab-owner}--\eqref{slab-width-cache-block}, PACO
computes the canonical length-\(N\) DFT under the input ownership
\eqref{input-slab-owner} and factor-swapped target ownership
\eqref{output-slab-owner}. In particular,
\[
    Y[k_2,k_1]
    =
    (\Omega_Nx)[k_1+k_2N_1]
\]
for every \(k_1\in[N_1]\) and \(k_2\in[N_2]\).

Its maximum per-processor work and cache complexity are
\[
    W_{\max}
    =
    \Theta\!\left(
        \frac{N}{p}\log N
    \right)
\]
and
\[
    Q_{\max}
    =
    \Theta\!\left(
        \frac{N}{pB}
        \left(
            1+\log_M N
        \right)
    \right).
\]
PACO performs exactly one global redistribution round,
\[
    R
    =
    1,
\]
whose BSP volume and aggregate remote migration volume are
\[
    h
    =
    \frac{N}{p}
    \left(
        1-\frac{1}{p}
    \right),
    \qquad
    \mu
    =
    N
    \left(
        1-\frac{1}{p}
    \right).
\]
\end{theorem}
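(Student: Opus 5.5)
The proof splits into four parts: correctness, work, cache complexity, and communication. For correctness we use the four-step Cooley--Tukey identity. With $j=j_1N_2+j_2$ and $k=k_1+k_2N_1$,
\[
    y[k_1+k_2N_1]
    =
    \sum_{j_2\in[N_2]}\omega_{N_2}^{-j_2k_2}
    \Bigl(\omega_N^{-j_2k_1}
    \sum_{j_1\in[N_1]}\omega_{N_1}^{-j_1k_1}X[j_1,j_2]\Bigr).
\]
This gives the following structure.
\begin{itemize}
  \item Stage one consists of $N_2$ independent length-$N_1$ DFTs along columns, each followed by a twiddle multiplication. Since ownership is by $N_2$-column slabs, every column is local.
  \item Stage two consists of $N_1$ length-$N_2$ DFTs, one for each fixed $k_1$. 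Under \eqref{output-slab-owner} these are local as well.
\end{itemize}

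The substantive correctness step is to show that the transpose-free recursive local scheduler leaves its output in an explicit layout. We prove by induction on the recursion (transform splits $N_1=b^{a}\cdot b^{m_1-a}$ and batch splits) that the stage-one result for coefficient $k_1$ sits at position $\rho_b(k_1)$, where $\rho_b$ is base-$b$ digit reversal on $m_1$ digits. The inductive step uses the fact that a deferred transpose of two digit fields composed with recursive digit reversal of each field equals digit reversal of the concatenation. The middle permutation then maps stage-one position (digit-reversed $k_1$, column $j_2$) to the row-major slot read by the stage-two local FFT. If stage two also emits digit-reversed order, we account for that in the permutation too, so that the final layout is exactly $Y[k_2,k_1]$. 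Composing the three stages then yields $Y=\Omega_Nx$ in the factor-swapped view.

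For work, each processor performs $(N_2/p)$ DFTs of size $N_1$ and then $(N_1/p)$ DFTs of size $N_2$. Since $b=O(1)$, this costs $\Theta\bigl(\tfrac{N}{p}(\log N_1+\log N_2)\bigr)=\Theta(\tfrac Np\log N)$. Twiddle multiplication, packing and unpacking add only $O(N/p)$. The matching lower bound comes from the fixed radix-$b$ DAG: it has $\Theta(N\log N)$ vertices, no vertex is recomputed, and the work is balanced, so the maximum per-processor work is at least the average.

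For cache complexity, we apply the batched cache-oblivious recurrence. When a subproblem fits in $M$ it costs $O(\text{size}/B)$; slab width at least $B$ and the tall-cache condition ensure that the column subblocks are block-aligned and scanned efficiently. Otherwise we split the transform dimension into roughly square digit fields, which contributes $O(\text{size}/B)$ per level over $O(\log_M N)$ levels, as in Frigo et al. Local FFTs therefore cost $O\bigl(\tfrac{N}{pB}(1+\log_M N)\bigr)$. The permutation packing is a cache-oblivious digit-reversal and costs $O(\tfrac{N}{pB}(1+\log_M N))$ under tall cache, by recursive splitting of digit fields as in a cache-oblivious transpose. The lower bound is the Hong--Kung/FFT bound $\Omega(\tfrac{n}{B}\log_M n)$ applied to the $\Theta(N/p)$-size local subproblems, together with the trivial $\Omega(N/(pB))$ for reading the input.

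For communication, we count source--target pairs. Processor $q$ owns $j_2$ in a block of $N_2/p$ values and all $k_1$ values; target $r$ owns a block of $N_1/p$ values of $k_1$. So exactly $(N_2/p)(N_1/p)=N/p^2$ items go from $q$ to $r$. Excluding $r=q$ gives $h=\tfrac Np(1-\tfrac1p)$ and $\mu=N(1-\tfrac1p)$. Since there is a single exchange, $R=1$. The main obstacle is the inductive digit-reversal characterization of the lazy layout, in particular handling interleaved batch and transform splits so that batch splits do not perturb digit order. I will handle this by proving that batch splits act only on the $j_2$ index and commute with the layout map.
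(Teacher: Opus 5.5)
Your correctness skeleton (the four-step identity, plus an induction showing that the transpose-free recursion leaves coefficient $k_1$ at a digit-reversed position), your work bound, and your $N/p^2$-per-pair communication count all agree with the paper.

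The genuine gap is the cache upper bound for the two local phases. You appeal to Frigo et al., but there the $O(\text{size}/B)$ charged per recursion level is the cost of the physical transposes that make each subproblem contiguous, and PACO performs none. In the batched lazy layout, a transform split $n=n_1n_2$ produces child groups $(n_1,sn_2)$ and $(n_2,sn_1)$, both of mass $ns$. So working sets shrink only through batch splits, and the whole difficulty is which batch digits are split and when.
\begin{itemize}
\item If the contiguous column field of width $s=N_2/p$ is never split, then for $s\ge M$ no subproblem ever fits in cache, and the recursion degenerates into $\Theta(\log_b n)$ passes over the whole batch.
\item If that field is split without control, runs fall below $B$ and the $1/B$ factor is lost.
\item A cache-oblivious scheduler cannot simply pick block-aligned column subblocks, since it does not know $B$.
\end{itemize}
The paper resolves this with a specific rule: batch split iff $s\ge n$, and split outer batch digits before the distinguished least-significant field. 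It then needs a frontier argument that you would have to supply:
\begin{itemize}
\item At the last split of the contiguous field, its width is $\Omega(\sqrt M)=\Omega(B)$ by the tall-cache assumption.
\item After any transform split, $s\le b^2n^3$, so a frontier view reached after a transform split has $n=\Omega(M^{1/4})$.
\item Hence $2^K=O(1+\log_M n)$ transform splits occur before the frontier, the total frontier mass is $O(ns(1+\log_M n))$, and each frontier view costs $O(\text{mass}/B)$.
\end{itemize}
Without this argument, your $O\bigl(\frac{N}{pB}(1+\log_M N)\bigr)$ upper bound is unsupported.

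Your correctness argument also closes incorrectly. A permutation applied before Phase~III cannot compensate for a digit-reversed output of Phase~III. For $m_2\ge 2$, $P_{\pi_{m_2}}\Omega_{N_2}$ is not of the form $\Omega_{N_2}P_\sigma$ for any permutation $\sigma$: at frequency $1$ this would force $\sigma^{-1}(j)\equiv (N_2/b)\,j \pmod{N_2}$, which is not a bijection. PACO instead runs Phase~III in the conjugate input-layout mode $\mathcal B_{m_2}=\Omega_{N_2}P_{\pi_{m_2}}$. It makes the middle stage the full $m$-digit reversal, sending $(a,c)$ to $(\pi_{m_2}(c),\pi_{m_1}(a))$, which at once cancels $P_{\pi_{m_1}}$, swaps the factors, and installs $P_{\pi_{m_2}}$. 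A purely local post-reversal of the $k_2$ rows would be an alternative fix.

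Two smaller points:
\begin{itemize}
\item Your induction must state explicitly that every twiddle, including the top-level $\omega_N^{-\pi_{m_1}(a)c}$, is evaluated at the represented logical coordinates rather than the physical ones.
\item Apply the cache lower bound to the disjoint unions of $N_2/p$ length-$N_1$ DAGs and $N_1/p$ length-$N_2$ DAGs, so that $\log_M N_1+\log_M N_2=\log_M N$, not to a single FFT of size $N/p$. Handle the case $N/p<\kappa M$ via $N\le (N/p)^2$.
\end{itemize}
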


\paragrf{Scope of optimality.}
The optimality claims in Theorem~\ref{thm:main-result} have different
scopes. The one-round bound is optimal for the complete DFT under the
stated owner-computes and no-replication assumptions. The work bound
is optimal for the fixed-radix Cooley--Tukey computation DAG without
arithmetic recomputation. The cache bound matches the lower bound for
the local FFT DAGs assigned by the PACO decomposition. Finally, the
migration bound is optimal for the prescribed fused digit-reversal
permutation between the source and target slab ownerships; it is not
a communication-volume lower bound for all distributed FFT schedules
with different intermediate ownerships.

The next section develops the two-factor decomposition and the layout
algebra that allows PACO to fuse the complete digit reversal with its
sole global redistribution.

\secput{overview}{PACO: One Redistribution from Lazy Layouts}

PACO combines two transpose-free cache-oblivious local FFT phases with
one global redistribution:
\[
    \textsc{LocalFFT}
    \;\longrightarrow\;
    \textsc{OneGlobalPermutation}
    \;\longrightarrow\;
    \textsc{LocalFFT}.
\]
This section explains why the layout changes deferred by the local
recursions can be materialized exactly at the ownership change already
required between the two top-level Cooley--Tukey transform groups.
\punt{
The fully cache-oblivious local scheduler is developed in
\secref{local-co-fft}, and the cache-oblivious implementation and
communication analysis of the global permutation are given in
\secref{global-permutation}.
}

\subsection{Two-Factor Cooley--Tukey Decomposition}
\label{sec:two-factor-ct}

Recall that
\[
    N
    =
    N_1N_2.
\]
Writing
\[
    j
    =
    j_1N_2+j_2,
    \qquad
    k
    =
    k_1+k_2N_1,
\]
where
\[
    j_1,k_1\in[N_1],
    \qquad
    j_2,k_2\in[N_2],
\]
gives
\begin{align}
    y[k_1+k_2N_1]
    ={}&
    \sum_{j_2=0}^{N_2-1}
    \left(
        \sum_{j_1=0}^{N_1-1}
        x[j_1N_2+j_2]
        \omega_{N_1}^{-k_1j_1}
    \right)
    \notag\\
    &\qquad\qquad\cdot
    \omega_N^{-k_1j_2}
    \omega_{N_2}^{-k_2j_2}.
    \label{eq:two-factor-ct}
\end{align}

Thus, Phase~I performs one length-\(N_1\) FFT for every
\(j_2\), Phase~II applies the corresponding twiddle factor and
redistributes the intermediate values, and Phase~III performs one
length-\(N_2\) FFT for every \(k_1\). Under the input ownership
\eqref{input-slab-owner}, every Phase-I transform is local to one
processor. Under the target ownership
\eqref{output-slab-owner}, every Phase-III transform is local to one
processor. The purpose of the middle stage is therefore not to create
a new transform factorization, but to change ownership and layout
between these two local transform groups.

\subsection{Lazy Layouts of the Local FFT}
\label{sec:lazy-layout-fft}

For a permutation
\[
    \pi:[n]\longrightarrow[n],
\]
let \(P_\pi\) denote the physical-to-logical layout operator
\[
    (P_\pi z)[j]
    =
    z[\pi(j)].
\]
For \(0\le\ell\le m\), let \(\pi_\ell\) denote base-\(b\) digit
reversal on \(\ell\)-digit strings. In particular, if
\[
    u
    =
    d_{\ell-1}
    \pplus
    \cdots
    \pplus
    d_0,
\]
then
\[
    \pi_\ell(u)
    =
    d_0
    \pplus
    \cdots
    \pplus
    d_{\ell-1}.
\]
Digit reversal is an involution.

\begin{lemma}[Lazy-layout FFT]
\label{lem:lazy-layout-fft}
Consider a transpose-free local radix-\(b\) FFT of length \(b^\ell\)
with natural-order constant-size leaves and twiddle factors evaluated
from logical coordinates. Its output-layout form satisfies
\begin{equation}
    \mathcal A_\ell
    =
    P_{\pi_\ell}\Omega_{b^\ell}.
    \label{eq:output-layout-local-operator}
\end{equation}
Conversely, the same recursion can consume the digit-reversed input
layout and produce a natural-order DFT output:
\begin{equation}
    \mathcal B_\ell
    =
    \Omega_{b^\ell}P_{\pi_\ell}.
    \label{eq:input-layout-local-operator}
\end{equation}
\end{lemma}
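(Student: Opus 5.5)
We argue by induction on \(\ell\), following the recursion structure of the transpose-free radix-\(b\) FFT. The base cases are \(\ell=0\), where both operators are the identity, and \(\ell=1\), where the leaf is a natural-order size-\(b\) DFT and \(\pi_1\) is the identity. For the inductive step, split \(b^\ell=b^{\ell_1}b^{\ell_2}\) with \(\ell_1+\ell_2=\ell\) and \(\ell_1,\ell_2\ge1\). Any split the cache-oblivious scheduler chooses works, but a single-digit split \(\ell_1=1\) or \(\ell_2=1\) suffices if the recursion is always peeled one digit at a time. We then apply the two-factor identity \eqref{two-factor-ct} at size \(b^\ell\) with \(N_1=b^{\ell_1}\) and \(N_2=b^{\ell_2}\). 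Input index \(j=j_1b^{\ell_2}+j_2\) is natural order. The transpose-free recursion runs, in place, the \(b^{\ell_2}\) inner transforms of length \(b^{\ell_1}\) along the \(j_1\) field. It then multiplies by twiddles \(\omega^{-k_1j_2}\), which are evaluated from \emph{logical} coordinates and so need no physical reordering. Finally it runs the \(b^{\ell_1}\) transforms of length \(b^{\ell_2}\) along the \(j_2\) field, again in place. No transpose is materialised between the stages.

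To prove \eqref{output-layout-local-operator}, track the physical address of each logical coefficient. By the induction hypothesis, each inner transform leaves \(k_1\) digit-reversed within the field of the physical index that \(j_1\) occupied. Likewise, each outer transform leaves \(k_2\) digit-reversed within the field that \(j_2\) occupied. Hence the physical index is \(\pi_{\ell_1}(k_1)\,b^{\ell_2}+\pi_{\ell_2}(k_2)\). Now \(k=k_1+k_2b^{\ell_1}\), so the digit string of \(k\) is \(k_2\pplus k_1\). Its full reversal is therefore \(\pi_{\ell_1}(k_1)\pplus\pi_{\ell_2}(k_2)\), which is exactly that physical index. So the physical slot \(j\) holds \(y[\pi_\ell(j)]\), i.e. the output is \(P_{\pi_\ell}\Omega_{b^\ell}x\).

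We must also check that the outer transforms act on the right data. For each fixed \(k_1\), the values indexed by \(j_2\) must form a contiguous-stride batch member. This holds because the inner transforms only permuted within the \(j_1\) field. The \(j_2\) field therefore still addresses \(j_2\) in natural order, as the inductive hypothesis for the outer transforms requires.

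For \eqref{input-layout-local-operator} we run the same recursion in the dual (decimation-in-frequency) order on input stored as \(P_{\pi_\ell}x\). Rather than redo the bookkeeping, the cleaner route is symmetry: \(\Omega_{n}\) is symmetric and \(P_{\pi_\ell}^{\top}=P_{\pi_\ell^{-1}}=P_{\pi_\ell}\) because digit reversal is an involution. Transposing the factorization of \(\mathcal A_\ell\) stage by stage therefore yields a transpose-free recursion of the same shape that computes \(\Omega_{b^\ell}P_{\pi_\ell}\). Each stage of that factorization is a block-diagonal butterfly, a diagonal twiddle, or a leaf DFT, and the transpose of each stage has the same form. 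Equivalently, one can repeat the induction directly. The digit-reversed input places \(j_2\) (reversed) in the high field and \(j_1\) (reversed) in the low field. The inductive hypothesis for \(\mathcal B\) on each sub-length then yields natural-order \(k_1\) and \(k_2\), with the twiddles again taken from logical indices.

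The main obstacle is the address bookkeeping in the inductive step. One has to show that the reversal of the concatenation \(k_2\pplus k_1\) decomposes into per-field reversals with the fields swapped, and that the fields land in exactly the physical positions the recursion leaves them in. A related point is ensuring that logical-coordinate twiddles are applied to the right physical elements. I would handle both by writing physical and logical indices explicitly as digit strings and checking one level of the recursion; the induction then closes automatically.
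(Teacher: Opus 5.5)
Your proof is correct. For the output-layout identity it is the paper's argument: induction on $\ell$ with base cases $\ell\in\{0,1\}$. The first child leaves $k_1$ digit-reversed in the physical field of $j_1$, the twiddles use logical coordinates, and the second child leaves $k_2$ reversed in the field of $j_2$. The step closes with $\pi_\ell(a\pplus c)=\pi_{\ell_2}(c)\pplus\pi_{\ell_1}(a)$; you read this from logical $k$ to physical slot, and the paper reads it from physical $(a,c)$ to represented frequency.

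You genuinely differ in the input-layout half.
\begin{itemize}
\item The paper defines the input-layout traversal as the output-layout traversal entered and exited through digit-reversed views. This gives $\mathcal B_\ell=P_{\pi_\ell}\mathcal A_\ell P_{\pi_\ell}=\Omega_{b^\ell}P_{\pi_\ell}$ from $P_{\pi_\ell}^2=I$ alone.
\item Your main route is transposition: $\mathcal A_\ell^{\top}=\Omega_{b^\ell}^{\top}P_{\pi_\ell}^{\top}=\Omega_{b^\ell}P_{\pi_\ell}$. You combine this with the fact that every stage is symmetric: leaf blocks of the form $I\otimes\Omega_b\otimes I$ up to a digit permutation, and diagonal twiddles.
\end{itemize}
The paper's version is a one-line identity that needs no symmetry. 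However, it leaves to the view machinery the claim that the reversed entry and exit views move no data. Yours needs $\Omega^{\top}=\Omega$ and symmetric stages. In return it exhibits $\mathcal B_\ell$ explicitly as the same in-place stages run in reverse order, so transpose-freeness is immediate.

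For a fixed split with $\ell_1\neq\ell_2$, the two schedules are not literally the same recursion. Your transposed recursion first transforms the low $\ell_2$-digit physical field. The paper's conjugated one first transforms the low $\ell_1$-digit field, which holds the reversed $j_1$; your ``direct induction'' alternative describes this one. Both compute $\Omega_{b^\ell}P_{\pi_\ell}$.

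One small terminology slip: the $\mathcal A_\ell$ recursion transforms the high field $j_1$ first, so it is the decimation-in-frequency form. Its input-layout dual is therefore decimation-in-time, not decimation-in-frequency as you wrote.
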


\begin{proof}[Proof sketch]
A recursive Cooley--Tukey step consists of two child-transform groups,
a diagonal twiddle operator, and a final swap of two transform digit
fields. PACO performs the child transforms and twiddle multiplication,
but does not physically materialize the final digit-field swap.
Recursively deferred swaps compose into base-\(b\) digit reversal:
for every split
\[
    \ell
    =
    \ell_1+\ell_2,
\]
the physical digit fields \(U\pplus V\) represent the logical fields
\[
    \pi_{\ell_2}(V)
    \pplus
    \pi_{\ell_1}(U).
\]
Twiddle factors remain correct because they are evaluated using the
logical coordinates represented by each physical item (see the layout-aware twiddle operator, \appref{twiddle}); equivalently,
the diagonal twiddle operator is conjugated by the current layout
permutation. Since digit reversal is an involution, conjugating the
output-layout form yields the input-layout form. The complete
induction and digit-field view invariant are given in the appendix.
\end{proof}

Lemma~\ref{lem:lazy-layout-fft} shows that a local PACO FFT does not
leave its result in arbitrary disorder. It produces a precise and
composable layout: base-\(b\) digit reversal. PACO exploits this
structure rather than materializing a separate local transpose or
digit-reversal pass after every recursive factorization.

\subsection{One Fused Middle Permutation}
\label{sec:full-middle-permutation}
Phase I applies \(\mathcal A_{m_1}\) to the locally owned length-\(N_1\)
transforms indexed by \(j_2\).  After Phase I, let
\((a,c)\) denote a physical array coordinate, where \(a\in[N_1]\)
is the physical row coordinate and \(c\in[N_2]\) is the unchanged
second-stage input coordinate.  Thus \(c=j_2\), whereas \(a\) is not
the original coordinate \(j_1\): it represents the logical first-stage
frequency
\[
    k_1=\pi_{m_1}(a).
\]

When the item at physical coordinate \((a,c)\) is packed for the
global redistribution, \algName{} applies the top-level twiddle factor
\[
    \omega_N^{-k_1 c}
\]
using its represented logical coordinates \(k_1=\pi_{m_1}(a)\) and
\(c=j_2\).

\algName{} then materializes the complete \(m\)-digit reversal, rather than
only the \(m_1\)-digit layout exposed by Phase~I. For digit fields
\[
    a\pplus c,
    \qquad
    |a|
    =
    m_1,
    \qquad
    |c|
    =
    m_2,
\]
the middle permutation is
\begin{equation}
    \pi_m(a\pplus c)
    =
    \pi_{m_2}(c)
    \pplus
    \pi_{m_1}(a).
    \label{eq:full-middle-digit-reversal}
\end{equation}

Equation~\eqref{full-middle-digit-reversal} performs four operations
simultaneously:
\begin{enumerate}
    \item
    it cancels the \(\pi_{m_1}\) output layout exposed by Phase~I;

    \item
    it swaps the \(N_1\)- and \(N_2\)-digit fields;

    \item
    it establishes the \(\pi_{m_2}\) input layout consumed by
    Phase~III; and

    \item
    it changes ownership from \(\operatorname{owner}_0\) to
    \(\operatorname{owner}_1\), making every complete
    length-\(N_2\) Phase-III transform processor local.
\end{enumerate}

More concretely, an item at physical Phase-I coordinate
\((a,c)\) is routed to the factor-swapped coordinate
\[
    \left(
        \pi_{m_2}(c),
        \pi_{m_1}(a)
    \right).
\]
Its destination processor is therefore determined by the second
coordinate,
\[
    \left\lfloor
        \frac{\pi_{m_1}(a)p}{N_1}
    \right\rfloor.
\]
The detailed packing, routing, local transposition, and unpacking
procedure that realizes this mapping cache-obliviously is given in
\secref{fused-redistribution}. Its exact source--destination balance is
analyzed there as well.

The ownership change is required by the data dependencies of the
second transform group: for a fixed \(k_1\), a length-\(N_2\) FFT
requires values from all \(j_2\in[N_2]\), whereas the input slabs
distribute those coordinates among the processors. PACO therefore
uses the unavoidable communication boundary to materialize the full
deferred layout transformation, rather than introducing a separate
redistribution solely to repair digit order.

\subsection{End-to-End Correctness}
\label{sec:three-phase-paco}

PACO applies the output-layout local operator of
Lemma~\ref{lem:lazy-layout-fft} in Phase~I, the fused permutation
\eqref{full-middle-digit-reversal} together with the top-level
twiddle diagonal in Phase~II, and the input-layout local operator in
Phase~III. 
Identifying the input array with $\mathbb{C}^{N_1}\otimes\mathbb{C}^{N_2}$
(physical coordinate $(a,c)$, $a\in[N_1]$, $c\in[N_2]$) and the output array
with $\mathbb{C}^{N_2}\otimes\mathbb{C}^{N_1}$ (physical coordinate $(u,v)$,
$u\in[N_2]$, $v\in[N_1]$), PACO's operator composition is
\begin{equation}
  \mathcal{F}
  \;=\;
  \bigl(\mathcal{B}_{m_2}\otimes I_{N_1}\bigr)\,
  P_{\pi_m}\,
  D_{\mathrm{tw}}\,
  \bigl(\mathcal{A}_{m_1}\otimes I_{N_2}\bigr), \label{eq:paco-composite-op}
\end{equation}
where $\mathcal{A}_{m_1}\otimes I_{N_2}$ applies the output-layout length-$N_1$
transform $\mathcal{A}_{m_1}=P_{\pi_{m_1}}\Omega_{N_1}$ to the row factor
(batched over the $N_2$ columns); $D_{\mathrm{tw}}$ is the diagonal top-level
twiddle $(D_{\mathrm{tw}})_{(a,c)}=\omega_N^{-\pi_{m_1}(a)\,c}$, evaluated from
the logical coordinate $k_1=\pi_{m_1}(a)$ carried at physical row $a$;
$P_{\pi_m}\colon\mathbb{C}^{N_1}\otimes\mathbb{C}^{N_2}\to
\mathbb{C}^{N_2}\otimes\mathbb{C}^{N_1}$ is the base-$b$ digit reversal~\eqref{full-middle-digit-reversal},
which simultaneously cancels $P_{\pi_{m_1}}$, swaps the two factors, and installs
$P_{\pi_{m_2}}$; and $\mathcal{B}_{m_2}\otimes I_{N_1}$ applies the input-layout
length-$N_2$ transform $\mathcal{B}_{m_2}=\Omega_{N_2}P_{\pi_{m_2}}$ to the row
factor (batched over the $N_1$ columns).

\punt{
Schematically, its operator composition is
\begin{equation}
    \mathcal B_{m_2}
    \;\circ\;
    P_{\pi_m}
    \;\circ\;
    \mathcal D_{\mathrm{tw}}
    \;\circ\;
    \mathcal A_{m_1},
    \label{eq:paco-operator-composition}
\end{equation}
where the local operators act on their respective batched transform
fields and \(\mathcal D_{\mathrm{tw}}\) denotes multiplication by the
top-level Cooley--Tukey twiddle factors.
}

\begin{theorem}[End-to-end correctness]
\label{thm:paco-correctness}
Under the ownership assumptions of \secref{model}, PACO computes the
canonical DFT
\[
    y
    =
    \Omega_Nx.
\]
More precisely, its factor-swapped output view satisfies
\[
    Y[k_2,k_1]
    =
    y[k_1+k_2N_1]
\]
for every
\[
    k_1\in[N_1],
    \qquad
    k_2\in[N_2],
\]
and this coefficient is owned by
\[
    \operatorname{owner}_1(k_2,k_1)
    =
    \left\lfloor
        \frac{k_1p}{N_1}
    \right\rfloor.
\]
\end{theorem}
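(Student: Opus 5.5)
The plan is to verify the operator identity \eqref{paco-composite-op} entry by entry. I will track one input item through the three phases and show that the value landing at physical output coordinate $(u,v)=(k_2,k_1)$ equals the right-hand side of \eqref{two-factor-ct}. Ownership is handled separately at the end.

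\emph{Phase I.} By Lemma~\ref{lem:lazy-layout-fft}, $\mathcal A_{m_1}=P_{\pi_{m_1}}\Omega_{N_1}$. After Phase~I, the item at physical coordinate $(a,c)$ therefore holds
\[
    Z[a,c]=\sum_{j_1}x[j_1N_2+c]\,\omega_{N_1}^{-\pi_{m_1}(a)j_1}.
\]
This is the inner sum of \eqref{two-factor-ct} with $k_1=\pi_{m_1}(a)$ and $j_2=c$. Each such transform acts on a single column $c$. By \eqref{input-slab-owner} that column is owned by one processor, so Phase~I is well defined locally.

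\emph{Phase II.} The diagonal $D_{\mathrm{tw}}$ multiplies $Z[a,c]$ by $\omega_N^{-k_1j_2}$, again with $k_1=\pi_{m_1}(a)$. The permutation \eqref{full-middle-digit-reversal} then moves this value to output coordinate $(u,v)=(\pi_{m_2}(c),\pi_{m_1}(a))$. Since digit reversal is an involution, the value stored at $(u,v)$ is
\[
    W[u,v]=\omega_N^{-v\,\pi_{m_2}(u)}\,
    \widehat{x}_{v}\bigl[\pi_{m_2}(u)\bigr],
\]
where $\widehat{x}_{k_1}[j_2]$ denotes the inner sum of \eqref{two-factor-ct}. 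For fixed $v=k_1$, the column $W[\cdot,v]$ is thus $P_{\pi_{m_2}}$ applied to the natural-order length-$N_2$ vector
\[
    w_{k_1}[j_2]=\omega_N^{-k_1j_2}\,\widehat{x}_{k_1}[j_2].
\]
Here I use $(P_\pi z)[j]=z[\pi(j)]$ with $j=u$.

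\emph{Phase III.} Applying $\mathcal B_{m_2}=\Omega_{N_2}P_{\pi_{m_2}}$ to the physical column $W[\cdot,v]$ is the same as applying $\Omega_{N_2}$ to $P_{\pi_{m_2}}P_{\pi_{m_2}}w_{k_1}=w_{k_1}$. This yields
\[
    Y[k_2,k_1]=\sum_{j_2}w_{k_1}[j_2]\,\omega_{N_2}^{-k_2j_2},
\]
which is exactly \eqref{two-factor-ct}. Hence $Y[k_2,k_1]=y[k_1+k_2N_1]$, and it remains to justify \eqref{two-factor-ct} itself. I will do this by expanding $\omega_N^{-(k_1+k_2N_1)(j_1N_2+j_2)}$ and dropping the factor $\omega_N^{-k_2N_1j_1N_2}=1$.

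\emph{Ownership.} The destination of $(a,c)$ is $\lfloor \pi_{m_1}(a)p/N_1\rfloor=\lfloor k_1p/N_1\rfloor$, which matches \eqref{output-slab-owner}. All entries of a fixed column $v$ therefore share one owner, so every Phase-III transform is local and the output sits where claimed. Phase~III keeps each value in its column, so ownership is preserved through that phase.

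The main obstacle is the index bookkeeping in Phase~II. One must check that \eqref{full-middle-digit-reversal} really equals $P_{\pi_m}$ in the orientation used in \eqref{paco-composite-op}, with the physical-to-logical convention $(P_\pi z)[j]=z[\pi(j)]$ rather than its inverse. One must also check that the twiddle in $D_{\mathrm{tw}}$ is read from the pre-permutation coordinates. I would write the digit strings explicitly and use the concatenation identity
\[
    \pi_m(a\pplus c)=\pi_{m_2}(c)\pplus\pi_{m_1}(a),
\]
which holds because reversing a concatenation reverses each block and swaps their order.
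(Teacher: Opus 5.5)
Your proof is correct and follows essentially the same route as the paper. The paper performs your coordinate trace at the operator level: it factors $P_{\pi_m}=(P_{\pi_{m_2}}\otimes I_{N_1})\,\Pi_{\mathrm{swap}}\,(P_{\pi_{m_1}}\otimes I_{N_2})$, writes the twiddle diagonal as a conjugate by $P_{\pi_{m_1}}\otimes I_{N_2}$, cancels the involutive layout pairs to reach the four-step operator, and reads ownership off the column index $v=k_1$ (which Phase~III leaves untouched), just as you do; the orientation issue you flag is harmless, because $\pi_m$ is an involution on $m$-digit strings and so $P_{\pi_m}$ equals its inverse.
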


\begin{proof}
Write the input array as $\mathbb{C}^{N_1}\otimes\mathbb{C}^{N_2}$ and the
output array as $\mathbb{C}^{N_2}\otimes\mathbb{C}^{N_1}$, as in~\eqref{paco-composite-op}. By
\lemref{lazy-layout-fft} (in batched form), Phase~I realizes
$\mathcal{A}_{m_1}\otimes I_{N_2}=(P_{\pi_{m_1}}\otimes I_{N_2})(\Omega_{N_1}\otimes I_{N_2})$
and Phase~III realizes
$\mathcal{B}_{m_2}\otimes I_{N_1}=(\Omega_{N_2}\otimes I_{N_1})(P_{\pi_{m_2}}\otimes I_{N_1})$.

The middle stage factors as
\[
  P_{\pi_m}
  =(P_{\pi_{m_2}}\otimes I_{N_1})\,\Pi_{\mathrm{swap}}\,(P_{\pi_{m_1}}\otimes I_{N_2}),
\]
where $\Pi_{\mathrm{swap}}\colon\mathbb{C}^{N_1}\otimes\mathbb{C}^{N_2}\to
\mathbb{C}^{N_2}\otimes\mathbb{C}^{N_1}$ is the factor transpose
$(\Pi_{\mathrm{swap}}z)[u,v]=z[v,u]$. Indeed, both sides send physical $(u,v)$ to
$z[\pi_{m_1}(v),\pi_{m_2}(u)]$, since reversing the $m$-digit string
$u\!+\!\!+\!v$ yields $\pi_{m_1}(v)\!+\!\!+\!\pi_{m_2}(u)$.
Likewise the layout-aware twiddle diagonal satisfies
\[
  D_{\mathrm{tw}}
  =(P_{\pi_{m_1}}\otimes I_{N_2})\,\widetilde D\,(P_{\pi_{m_1}}\otimes I_{N_2}),
  \qquad \widetilde D_{(k_1,j_2)}=\omega_N^{-k_1 j_2},
\]
because conjugating a diagonal by the involution $P_{\pi_{m_1}}\otimes I_{N_2}$
merely relabels its entries to $\widetilde D_{(\pi_{m_1}(a),c)}=\omega_N^{-\pi_{m_1}(a)c}$.

Substituting these identities into~(10) and cancelling each adjacent pair
$(P_{\pi_{m_1}}\otimes I_{N_2})^2=I$ and $(P_{\pi_{m_2}}\otimes I_{N_1})^2=I$ gives
\[
  \mathcal{F}
  =(\Omega_{N_2}\otimes I_{N_1})\,\Pi_{\mathrm{swap}}\,\widetilde D\,
   (\Omega_{N_1}\otimes I_{N_2}).
\]
This is the two-factor four-step operator. Evaluating it on the input $x$ reproduces the
right-hand side of~\eqref{two-factor-ct}:
\[
  (\mathcal{F}x)[k_2,k_1]
  =\sum_{j_2=0}^{N_2-1}\!\Bigl(\sum_{j_1=0}^{N_1-1}
      x[j_1N_2+j_2]\,\omega_{N_1}^{-k_1 j_1}\Bigr)
      \omega_N^{-k_1 j_2}\,\omega_{N_2}^{-k_2 j_2}
  = y[k_1+k_2N_1].
\]
Hence $Y[k_2,k_1]=y[k_1+k_2N_1]$ for all $k_1\in[N_1]$, $k_2\in[N_2]$.

Finally, the column factor $v$ is untouched by Phase~III (which transforms only
the row factor) and equals $\pi_{m_1}(a)=k_1$.
Thus $Y[k_2,k_1]$ resides in physical column $v=k_1$ of the output
$N_2\times N_1$ slab, whose owner under~\eqref{output-slab-owner} is
$\lfloor k_1 p/N_1\rfloor=\mathrm{owner}_1(k_2,k_1)$.
\end{proof}

\punt{
\begin{proof}
Phase~I computes the length-\(N_1\) DFTs under the output layout
\(P_{\pi_{m_1}}\). During Phase~II, the top-level twiddle is applied
using the represented logical frequency coordinate, and
\eqref{full-middle-digit-reversal}, both cancels the Phase-I layout and
places each fixed-\(k_1\) intermediate vector under the
\(P_{\pi_{m_2}}\) input layout. Phase~III applies
\[
    \mathcal B_{m_2}
    =
    \Omega_{N_2}P_{\pi_{m_2}},
\]
thereby producing the natural-order length-\(N_2\) transforms.
Together with the two-factor identity
\eqref{two-factor-ct}, this is exactly the canonical DFT
\(\Omega_Nx\). The stated ownership follows from
\eqref{output-slab-owner}.
The operator identities are proved in \appref{layout}; the procedure-level realization in \appref{line-by-line}.
\end{proof}
}

\secput{local-co-fft}{Cache-Oblivious Local FFT and Fused Redistribution}

PACO is implemented by two cache-oblivious building blocks. The first
is a batched local FFT that carries recursive layout changes lazily,
without materializing recursion-level transposes. The second is a
fused redistribution that applies the top-level twiddles, materializes
the complete \punt{middle} digit reversal, and changes ownership in one BSP
communication round.

\subsection{Cache-Oblivious Batched Local FFT}
\label{sec:fully-cache-oblivious-scheduler}

A local PACO subproblem consists of \(s\) independent transforms of
length
\(
    n
    =
    b^\ell.
\)
The transforms are represented by recursive digit-field views rather
than copied or transposed subarrays. One batch field remains the least
significant physical address field; varying this field accesses a
contiguous memory run. This distinguished field is preserved whenever
the transform recursion creates new batch coordinates.

The scheduler uses the relative sizes of \(n\) and \(s\), but does not
inspect either \(M\) or \(B\). If
\[
    s
    \ge
    n,
\]
it recursively splits one batch digit, thereby partitioning the
transform instances into \(b\) data-disjoint subbatches. It first
consumes batch digits introduced by previous transform splits and
splits the distinguished contiguous batch field only when no such
outer batch digit remains. 
If
\[
    s
    <
    n,
\]
the scheduler performs a balanced digit-aligned Cooley--Tukey split.
Specifically, it writes
\[
    n
    =
    n_1n_2,
    \qquad
    n_1
    =
    b^{\lceil\ell/2\rceil},
    \qquad
    n_2
    =
    b^{\lfloor\ell/2\rfloor},
    \qquad
    \ell
    =
    \log_b n,
\]
so that
\[
    1
    \le
    \frac{n_1}{n_2}
    \le
    b.
\]
Writing a logical transform coordinate as
\[
    x
    =
    u n_2+v,
    \qquad
    u\in[n_1],
    \quad
    v\in[n_2],
\]
and denoting an existing batch coordinate by \(\beta\in[t]\), the
first child group consists of \(s n_2\) transforms of length \(n_1\).
Its transform coordinate is \(u\), while \(v\) becomes a new outer
batch coordinate: its batched transforms are indexed by
\[
    (v,\beta).
\]
Thus, the \(n_2\)-digit field precedes the pre-existing batch digits in
the child view, but the least-significant contiguous batch field
within \(\beta\) is unchanged.

Symmetrically, the second child group consists of \(s n_1\) transforms
of length \(n_2\), with transform coordinate \(v\) and batch coordinate
\[
    (u,\beta).
\]
The associated Cooley--Tukey twiddle
\[
    \omega_n^{-uv}
\]
is evaluated from these logical coordinates and is multiplied when
the second child group first accesses the corresponding intermediate
value. Hence, transform splitting changes only the recursive view:
it creates an outer batch field but materializes neither a transpose
nor a separate twiddle pass.

\punt{
If
\[
    t
    <
    n,
\]
it splits the transform digit field into two balanced parts. The
resulting Cooley--Tukey child groups are interpreted as larger batches
of shorter transforms: one transform digit field becomes an outer
batch field, but no data are moved and the contiguous batch field is
unchanged.
}

The top-level twiddle factors of each recursive Cooley--Tukey step are
evaluated from the logical digit coordinates represented by the view.
They are fused into the first accesses of the second child-transform
group. At constant-size leaves, PACO invokes a
natural-order local DFT kernel. In output-layout mode, the resulting
batched operator is
\[
    I_s
    \otimes
    \left(
        P_{\pi_\ell}\Omega_n
    \right);
\]
in input-layout mode, it is
\[
    I_s
    \otimes
    \left(
        \Omega_n P_{\pi_\ell}
    \right).
\]
Thus, the scheduler realizes the lazy-layout operators of
Lemma~\ref{lem:lazy-layout-fft} on every transform in the batch.

\begin{proposition}[Cache-oblivious batched local FFT]
\label{prop:local-batched-fft}
For \(s\) transforms of length \(n=b^\ell\), the local PACO scheduler
performs
\begin{equation}
    W_{\mathrm{local}}(n,s)
    =
    O\!\left(
        ns
        \left(
            1+\log n
        \right)
    \right)
    \label{eq:local-scheduler-work}
\end{equation}
local work and uses
\[
    O(\log(ns))
\]
recursion-stack words. Under the tall-cache assumption
\[
    M
    =
    \Omega(B^2),
\]
and for an initial contiguous batch width at least \(B\), it incurs
\begin{equation}
    Q_{\mathrm{local}}(n,s)
    =
    O\!\left(
        \frac{ns}{B}
        \left(
            1+\log_M n
        \right)
    \right)
    \label{eq:local-cache-upper-bound}
\end{equation}
cache misses.
\end{proposition}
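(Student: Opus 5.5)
We argue the work, stack, and cache bounds separately, each by induction on the recursion of the local scheduler, with state $(n,s)$ and $n=b^\ell$.

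\textbf{Work.} Let $W(n,s)$ denote the work.

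\emph{Batch split} ($s\ge n$). This produces $b$ subproblems $(n,s/b)$ at $O(1)$ overhead each, so $W(n,s)=bW(n,s/b)+O(1)$.

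\emph{Transform split} ($s<n$). This gives
\[
W(n,s)=W(n_1,sn_2)+W(n_2,sn_1)+O(ns).
\]
The $O(ns)$ term covers the twiddle multiplications fused into the first accesses of the second child group. Each twiddle is $O(1)$: $b$ is a constant and the logical coordinates are decoded from the digit-field view in $O(1)$ amortized time, or by incremental update along the contiguous field.

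\emph{Leaves.} A constant-size leaf costs $O(1)$ per item.

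The transform-split recurrence is the standard Cooley--Tukey recurrence $T(n)=n_2T(n_1)+n_1T(n_2)+O(n)$, scaled by $s$. Batch splits only partition the items and contribute $O(ns)$ overhead in total over the recursion tree, because the number of nodes is linear in the number of leaves. Solving by induction with hypothesis $W\le c\,ns(1+\log n)$ gives \eqref{local-scheduler-work}. The key estimate is $\log n_1+\log n_2=\log n$.

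\textbf{Stack.} Each recursion level stores $O(1)$ words. Every batch split divides $s$ by $b$, and every transform split shrinks $\log_b n$ by at least a factor of $1/2$ while increasing $s$. So the depth is $O(\log_b s+\log_b n)=O(\log(ns))$, where the second term is a geometric sum.

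\textbf{Cache.} This is the main part.

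\emph{Base case.} Call a subproblem \emph{fitting} if $ns\le \alpha M$ for a suitable constant $\alpha$. I claim a fitting subproblem incurs $O(ns/B+1)$ misses, and that the additive $1$ is absorbed because the contiguous batch field has width $\ge B$. This is where the distinguished-field rule matters. The scheduler splits outer batch digits first and only touches the contiguous field when no outer digit remains. Hence every view it creates consists of runs of length at least $\min(B,\text{current contiguous width})$, and the contiguous width drops below $B$ only when $s<B$. I expect this to be the main obstacle. I would first state an invariant: every subproblem's items occupy $O(ns/B+1)$ blocks, with $ns/B\ge1$ until the contiguous width has been split. Then I would argue that once $ns\le\alpha M$ all these blocks stay resident, so each is loaded once.

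\emph{Leaves with contiguous width below $B$.} These occur only after batch splits of the contiguous field. Their parent touched a full $B$-width run, so charging them to the parent's block loads keeps the cost at $O(ns/B)$ per fitting ancestor. The tall-cache assumption $M=\Omega(B^2)$ is used to show that a fitting subproblem with $s<B$ still has all of its $\le n$ strided rows of $\ge B$ or aligned blocks resident. This is the usual argument: $n\cdot B\le M$ when $n\le M/B$.

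\emph{Above the fitting threshold.} Batch splits preserve the per-item cost, so $Q(n,s)=bQ(n,s/b)$. Transform splits give $Q(n,s)=Q(n_1,sn_2)+Q(n_2,sn_1)+O(ns/B)$, where the twiddle accesses are fused and hence free beyond the child accesses.

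\emph{Counting levels.} Along any root-to-leaf path, the transform length squares-roots at each transform split. Starting from $n$, it reaches a fitting size after $O(\log\log n-\log\log M)$ transform levels. Because the balanced split and the $s\ge n$ rule keep $s$ comparable to $n$, a subproblem becomes fitting once $n\lesssim\sqrt{M}$. Each item is touched $O(1/B)$ times per transform level. The resulting recurrence is $Q(n)=2\sqrt{n}\,Q(\sqrt n)+O(n/B)$ per unit batch, which solves to $O(\frac{ns}{B}\log_M n)$. This is the Frigo et al.\ analysis: the number of depth levels weighted by the halving of $\log n$ sums to $\log n/\log M$. Adding the base-case $O(ns/B)$ gives \eqref{local-cache-upper-bound}.
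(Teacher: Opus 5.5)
\textbf{The gap: spatial locality at the cache frontier.} Your cache bound depends on every maximal fitting view touching only $O(ns/B)$ blocks, so that its footprint fits in cache and each block is loaded once. You flag this as the main obstacle, but the proposal never proves it, and the supporting arguments do not work.

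\emph{The invariant is false as stated.} ``Every subproblem occupies $O(ns/B+1)$ blocks'' fails below the frontier. Once $C$ has been split to a width $c$ with $1\ll c\ll B$, a later transform split produces a view made of $ns/c$ runs of length $c$. That count exceeds $ns/B+1$ by a factor of order $B/c$.

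\emph{The tall-cache remark does not repair it.} Fitting ($ns\le\alpha M$) with $s<B$ does not give $n\le M/B$. Also, ``charging to the parent's block loads'' presupposes that the fitting ancestor already consists of $\Omega(B)$-wide runs, which is exactly the point at issue.

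\emph{The missing lemma.} Every \emph{maximal} fitting view must have contiguous width $c=\Omega(B)$. The paper proves this by examining the last split of $C$ on the root-to-frontier path. $C$ is split only when $O=\emptyset$, so at that split $s=bc$, and the batch rule gives $n\le bc$. If the child fits, its parent does not, so $\alpha M<n\,bc\le b^2c^2$; otherwise $\alpha M<nc\le bc^2$. Either way $c=\Omega(\sqrt M)=\Omega(B)$ by $M=\Omega(B^2)$. Later outer-batch and transform splits never change $c$, and if $C$ is never split then $c\ge B$ by assumption. The run decomposition ($ns/c$ runs of length $c$) then gives $O(ns/B)$ blocks per frontier view, which fit in cache for small $\alpha$.

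Your own remark that $C$ narrows only when $s$ is small can also be completed, but only with a constant-adjusted threshold. Let $M\ge\gamma B^2$ and $\theta=\sqrt{\alpha\gamma}/b$. A split of $C$ producing width below $\theta B$ happens at a node with $n\le s<b\theta B$, hence of mass below $\alpha M$, i.e.\ strictly inside a fitting subtree. So every maximal fitting view has $c\ge\theta B$.

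\textbf{Smaller points.}

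\emph{Work.} Your induction does not close as written. Substituting $W\le c\,ns(1+\log n)$ into the transform-split recurrence gives $c\,ns(2+\log n)+O(ns)$, because the additive $1$ is duplicated at every transform split. Use the hypothesis $W(n,s)\le c\,ns\log n-a\,ns$ for $n\ge b$ (every transform-split child keeps at least one digit), or count radix-$b$ levels per logical transform as the paper does. Also, the recursion has order $ns\log_b n$ leaves, so batch-split overhead is $O(ns(1+\log n))$ rather than $O(ns)$. This is still within the bound.

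\emph{Level count.} The justification that $s$ stays comparable to $n$ is false: right after a transform split, $s$ can be of order $n^3$. Your conclusion still holds for a simpler reason. Every transform split occurs with $s<n$, so a non-fitting one has $n^2>ns>\alpha M$. Together with $n_j\le\sqrt{b\,n_{j-1}}$, this gives $2^K=O(1+\log_M n)$ directly. That is a slightly cleaner route than the paper's, which proves the invariant $s\le b^2n^3$ and deduces $n=\Omega(M^{1/4})$ at the frontier.

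The rest of your accounting agrees with the paper. Mass doubles at each transform split above the frontier, so total frontier mass is $ns\,2^K$; your extra $O(ns/B)$ per split is unnecessary but harmless.
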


\begin{proof}[Proof sketch]
Fix a sufficiently small constant \(\alpha>0\), let
\[
    M^\star
    =
    \alpha M,
\]
and consider the maximal recursive views satisfying
\[
    ns
    \le
    M^\star.
\]
This is only an analytical cache-fitting frontier: the algorithm itself
never tests this condition.

\paragrf{Frontier geometry.}
A transform split preserves the working-set size of each child group.
Indeed, if
\[
    n
    =
    n_1n_2,
\]
then its two child groups have parameters
\[
    (n_1,sn_2)
    \qquad\text{and}\qquad
    (n_2,sn_1),
\]
and both have volume
\[
    n_1(sn_2)
    =
    n_2(sn_1)
    =
    ns.
\]
Consequently, a non-cache-fitting view can first enter the
cache-fitting region only through a batch split. Every nonroot maximal
frontier view therefore satisfies
\begin{equation}
    \frac{M^\star}{b}
    <
    ns
    \le
    M^\star,
    \label{eq:local-frontier-volume}
\end{equation}
because its parent has the same transform length and \(b\) times as
many batch instances.

We next bound the transform length at a frontier reached after at
least one transform split. A transform split is executed only when
\(s<n\). For a balanced digit-aligned factorization
\[
    n
    =
    n_1n_2,
    \qquad
    1
    \le
    \frac{n_1}{n_2}
    \le
    b,
\]
the first child has parameters \((n_1,sn_2)\eqqcolon (n', s')\). Since \(s<n_1n_2\),
\[
    sn_2
    <
    n_1n_2^2
    \le
    b^2n_1^3. 
\]
Symmetrically, the second child satisfies
\[
    sn_1
    <
    b^2n_2^3.
\]
Thus, after every transform split, and hence after any later batch split
 (a batch split only decreases \(s\) without changing \(n\)), the reachable state satisfies the following invariant:
\begin{equation}
    s
    \le
    b^2 n^3.
    \label{eq:local-reachable-state}
\end{equation}
Combining \eqref{local-frontier-volume} and
\eqref{local-reachable-state} gives
\[
    \frac{M^\star}{b}
    <
    ns
    \le
    b^2n^4,
\]
and therefore every frontier reached after at least one transform split has
\begin{equation}
    n
    =
    \Omega(M^{1/4}).
    \label{eq:local-frontier-transform-lower-bound}
\end{equation}

\paragrf{Transform-induced frontier expansion.}
Let \(K\) be the maximum number of transform splits on a
root-to-frontier path. If the active transform length before one split
is \(n_j=b^{d_j}\), then the larger child has length at most
\[
    b^{\lceil d_j/2\rceil}
    \le
    \sqrt{b n_j}.
\]
Induction gives
\[
    n_K
    \le
    b^{1-1/2^K}n^{1/2^K}
    <
    b\,n^{1/2^K}.
\]
For \(K\ge1\), combining this inequality with
\eqref{local-frontier-transform-lower-bound} yields
\[
    c_bM^{1/4}
    \le
    b\,n^{1/2^K}
\]
for a constant \(c_b>0\) depending only on \(b\) and \(\alpha\).
Hence,
\[
    2^K
    =
    O\!\left(
        1+\log_M n
    \right).
\]
A batch split partitions working-set mass among its children, whereas
a transform split creates two child groups, each of the parent's mass.
Thus, the total mass represented at the cache-fitting frontier is at
most
\[
    O\!\left(
        ns\,2^K
    \right)
    =
    O\!\left(
        ns
        \left(
            1+\log_M n
        \right)
    \right).
\]

\paragrf{Spatial locality at the frontier.} A separate argument (\lemref{B-width} in \appref{B-width-at-frontier}) shows that whenever the distinguished contiguous field is split,
the tall-cache assumption forces the resulting run width to satisfy
$c=\Omega(\sqrt M)=\Omega(B)$; if it is never split it retains its initial width
$\ge B$ by~\eqref{slab-width-cache-block}. Hence a frontier view of mass $ns$ is covered by $ns/c$ runs of
length $\Omega(B)$ and touches $O(ns/B)$ blocks, which remain resident during its
depth-first descendant execution. Summing over the frontier yields
\[
Q_{\mathrm{local}}(n,s)=O\left(\tfrac{ns}{B}(1+\log_M n)\right).
\]

\punt{
\paragrf{Spatial locality at the frontier.}
It remains to show that a frontier view does not fragment into too
many short runs. If the distinguished contiguous field is never split
on the root-to-frontier path, then its width remains at least its
initial width, which is at least \(B\) \eqref{slab-width-cache-block}. Otherwise, consider the last
split of that field, and let \(c\) be its child width. Its parent width
is \(bc\). This split occurs only when no outer batch field remains,
so the parent batch size is \(bc\), and the batch-split condition gives
\[
    n
    \le
    bc.
\]
If this child is already cache fitting, then its parent is not, and
therefore
\[
    M^\star
    <
    nbc
    \le
    b^2c^2.
\]
If it is not yet cache fitting, then
\[
    M^\star
    <
    nc
    \le
    bc^2.
\]
In either case,
\[
    c
    =
    \Omega(\sqrt M)
    =
    \Omega(B)
\]
by the tall-cache assumption. Subsequent outer-batch and transform
splits do not change \(c\).

A frontier view of volume \(ns\) is therefore covered by
\(ns/c\) contiguous runs of length \(c=\Omega(B)\), and intersects
\[
    O\!\left(
        \frac{ns}{B}
    \right)
\]
cache blocks. Its complete descendant execution accesses only this
frontier working set, so these blocks remain resident during the
depth-first execution. Summing the block costs over the frontier gives
\[
    Q_{\mathrm{local}}(n,s)
    =
    O\!\left(
        \frac{ns}{B}
        \left(
            1+\log_M n
        \right)
    \right).
\]
}

The work and stack bounds follow from the fixed-radix
Cooley--Tukey recurrence and the digit-aligned recursion depth. More details in \appref{local-fft}.
\end{proof}

\punt{
\begin{proof}[Proof sketch]
For the cache analysis, consider the maximal recursive views whose
working-set size is at most a sufficiently small constant fraction of
\(M\). A transform split preserves the working-set size of each child
group: if the parent has parameters \((n,s)\), then its two child
groups have parameters \((n_1,sn_2)\) and \((n_2,sn_1)\), both of
volume \(ns\). It can therefore increase the total frontier mass, but
only by creating a second transform group.

After \(K\) balanced digit-aligned transform splits on one path, the
active transform length is at most
\[
    b\,n^{1/2^K}.
\]
Conversely, every nontrivial cache-frontier view reached after a
transform split has transform length
\(\Omega(M^{1/4})\): the scheduler can enter the cache-fitting region
only through a batch split, while every state reached after a transform
split satisfies \(s=O(n^3)\). Hence
\[
    2^K
    =
    O\!\left(
        1+\log_M n
    \right).
\]
Thus, the total mass represented at the cache-fitting frontier is at
most
\[
    O\!\left(
        ns(1+\log_M n)
    \right).
\]

Spatial locality is supplied by the distinguished contiguous batch
field. If this field is never split before the frontier, its initial
width is at least \(B\). Otherwise, consider its last split. The
scheduler splits that field only when no outer batch digits remain and
the batch size is at least the transform length. Combining this
condition with the fact that the parent is not cache fitting gives a
post-split contiguous width \(\Omega(\sqrt{M})\), hence
\(\Omega(B)\) by the tall-cache assumption. Every frontier view of
volume \(ns\) therefore intersects \(O(ns/B)\) blocks, and all of its
descendants reuse this same cache-resident footprint. Summing over the
frontier proves the bound.
\end{proof}
}

\punt{
\begin{proof}[Proof sketch]
Transform splits preserve the subproblem volume and create temporal
locality: once a recursive view fits in cache, all of its descendant
transforms and fused twiddle accesses reuse the same resident data.
The number of transform-induced expansions at the analytical
cache-fitting frontier is
\[
    O(1+\log_M n).
\]

Batch splits partition the data without duplicating work. Moreover,
the digit-field discipline preserves a contiguous batch field of width
\(\Omega(B)\) at the cache-fitting frontier. Hence, a frontier view of
volume \(nt\) intersects
\[
    O(nt/B)
\]
cache blocks. Summing over the frontier gives
\eqref{local-cache-upper-bound}. The complete view invariant and
cache-frontier argument are deferred to the appendix.
\end{proof}
}

For PACO Phase~I, Proposition~\ref{prop:local-batched-fft} applies
with
\[
    n
    =
    N_1,
    \qquad
    s
    =
    \frac{N_2}{p}.
\]
For Phase~III, it applies with
\[
    n
    =
    N_2,
    \qquad
    s
    =
    \frac{N_1}{p}.
\]
The two local phases therefore incur, per processor,
\begin{equation}
    O\!\left(
        \frac{N}{pB}
        \left(
            1+\log_M N
        \right)
    \right)
    \label{eq:two-local-phases-cache-bound}
\end{equation}
cache misses in total.

\subsection{Fused Redistribution}
\label{sec:fused-redistribution}

\def\cw{0.62}
\colorlet{cB}{blue!11}
\colorlet{cO}{orange!20}

\newcommand{\bin}[1]{\ifcase#1 00\or01\or10\or11\fi}

\newcommand{\cl}[5]{%
  \fill[#5] (#1,#2) rectangle ++(\cw,-\cw);
  \draw[thin] (#1,#2) rectangle ++(\cw,-\cw);
  \node[font=\tiny\bfseries,inner sep=0,scale=0.95]
       at (#1+\cw/2,#2-\cw/2){%
       {\color{blue!55!black}\bin{#3}}{\color{black},}{\color{red!70!black}\bin{#4}}};}

\newcommand{\blk}[3]{%
  \foreach \px/\py/\rr/\cc/\ff in {#3}{%
    \cl{#1+\px*\cw}{#2-\py*\cw}{\rr}{\cc}{\ff}}}

\begin{figure*}
    \centering
    \begin{tikzpicture}[font=\small]

\punt{
\node[font=\large\bfseries] at (7.84,2.75)
  {PARA-CO-DRP$_b$: tracking logical indices $(r,c)$ through one global exchange};

\node[font=\scriptsize] at (7.84,2.32)
  {$N_1\!=\!N_2\!=\!4,\ p\!=\!2,\ b\!=\!2$\quad(digit reversal on $2$ bits: $1\!\leftrightarrow\!2$,\ $0,3$ fixed)\quad
   \textbf{\color{blue!55!black}blue\,=\,P$_0$},\ \textbf{\color{orange!85!black}orange\,=\,P$_1$}};
}
\blk{0}{0}{0/0/0/0/cB,1/0/0/1/cB,2/0/0/2/cO,3/0/0/3/cO,
           0/1/1/0/cB,1/1/1/1/cB,2/1/1/2/cO,3/1/1/3/cO,
           0/2/2/0/cB,1/2/2/1/cB,2/2/2/2/cO,3/2/2/3/cO,
           0/3/3/0/cB,1/3/3/1/cB,2/3/3/2/cO,3/3/3/3/cO}

\blk{3.3}{0}{0/0/0/0/cB,1/0/0/1/cB,2/0/0/2/cO,3/0/0/3/cO,
             0/1/2/0/cB,1/1/2/1/cB,2/1/2/2/cO,3/1/2/3/cO,
             0/2/1/0/cB,1/2/1/1/cB,2/2/1/2/cO,3/2/1/3/cO,
             0/3/3/0/cB,1/3/3/1/cB,2/3/3/2/cO,3/3/3/3/cO}

\blk{6.6}{0}{0/0/0/0/cB,1/0/0/1/cB,2/0/0/2/cB,3/0/0/3/cB,
             0/1/2/0/cB,1/1/2/1/cB,2/1/2/2/cB,3/1/2/3/cB,
             0/2/1/0/cO,1/2/1/1/cO,2/2/1/2/cO,3/2/1/3/cO,
             0/3/3/0/cO,1/3/3/1/cO,2/3/3/2/cO,3/3/3/3/cO}

\blk{9.9}{0}{0/0/0/0/cB,1/0/2/0/cB,2/0/1/0/cO,3/0/3/0/cO,
             0/1/0/1/cB,1/1/2/1/cB,2/1/1/1/cO,3/1/3/1/cO,
             0/2/0/2/cB,1/2/2/2/cB,2/2/1/2/cO,3/2/3/2/cO,
             0/3/0/3/cB,1/3/2/3/cB,2/3/1/3/cO,3/3/3/3/cO}

\blk{13.2}{0}{0/0/0/0/cB,1/0/2/0/cB,2/0/1/0/cO,3/0/3/0/cO,
              0/1/0/2/cB,1/1/2/2/cB,2/1/1/2/cO,3/1/3/2/cO,
              0/2/0/1/cB,1/2/2/1/cB,2/2/1/1/cO,3/2/3/1/cO,
              0/3/0/3/cB,1/3/2/3/cB,2/3/1/3/cO,3/3/3/3/cO}

\draw[-{Stealth},gray,thick] (2.48,-1.24)--(3.3,-1.24);

\draw[-{Stealth},blue!60,thick]         (5.78,-0.62)--(6.6,-0.62); 
\draw[-{Stealth},orange!85!black,thick] (5.78,-1.86)--(6.6,-1.86); 
\draw[-{Stealth},blue!60,thick]                                    
  (3.92,0.0) .. controls (4.3,1.25) and (6.2,1.05) .. (6.6,-0.42);
\draw[-{Stealth},orange!85!black,thick]                            
  (3.92,-2.48) .. controls (4.3,-3.7) and (6.2,-3.5) .. (6.6,-2.05);

\draw[-{Stealth},gray,thick] (9.08,-1.24)--(9.9,-1.24);
\draw[-{Stealth},gray,thick] (12.38,-1.24)--(13.2,-1.24);

\foreach \x/\t in {1.24/{initial\\ $N_1\times\frac{N_2}{p}$ (col slab)},
                   4.54/{packed\\ rows $a \to\pi_{m_1}(a)$},
                   7.84/{received\\ (row slab)},
                   11.14/{transposed\\ ($\pi_{m_1}(a)\!\leftrightarrow\!c$)},
                   14.44/{output\\ $N_2\times\frac{N_1}{p}$ (col slab)}}
  \node[font=\scriptsize,align=center] at (\x,-2.85){\t};

\node[font=\scriptsize,align=center] at (2.89,1.55)
  {\textbf{Phase 1}\\ pack rows\\ $a\!\mapsto\!\pi_{m_1}\!(a)$};
\node[font=\scriptsize,align=center] at (6.19,1.55)
  {\textbf{Phase 2}\\ all-to-all $h=N/p\cdot(1-1/p)$\\ P$_0$ blocks routed around};
\node[font=\scriptsize,align=center] at (9.49,1.55)
  {\textbf{Phase 3a}\\ transpose};
\node[font=\scriptsize,align=center] at (12.79,1.55)
  {\textbf{Phase 3b}\\ unpack cols\\ $c\!\mapsto\!\pi_{m_2}\!(c)$};

\node[font=\scriptsize,anchor=west] at (0,-4.05)
  {\textbf{Cell} $=(\,${\color{blue!55!black}\textbf{$a$}}$,\,${\color{red!70!black}\textbf{$c$}}$\,)$ in binary;\ \ fill colour $=$ owning processor};
\fill[cB](8.3,-3.87) rectangle ++(0.4,-0.36); \draw[thin](8.3,-3.87) rectangle ++(0.4,-0.36);
\node[font=\scriptsize,anchor=west] at (8.8,-4.05){P$_0$};
\fill[cO](10.0,-3.87) rectangle ++(0.4,-0.36); \draw[thin](10.0,-3.87) rectangle ++(0.4,-0.36);
\node[font=\scriptsize,anchor=west] at (10.5,-4.05){P$_1$};

\end{tikzpicture}
    \caption{\(\proc{PARA-CO-DRP}_b\): tracking logical indices \((a,c)\), say \((01,00)\) in blue, through one global exchange\\ \(N_1\!=\!N_2\!=\!4,\ p\!=\!2,\ b\!=\!2^1\!=\!2\) (digit reversal on \(2\) bits: \(1\!\leftrightarrow\!2\),\ \(0,3\) fixed) 
   \textbf{\color{blue!55!black}blue\,=\,P\(_0\)},\ \textbf{\color{orange!85!black}orange\,=\,P\(_1\)}}
    \label{fig:para-co-drp-data-transform-pic3}
\end{figure*}

The middle stage materializes the full digit reversal
\[
    \pi_m(a\pplus c)
    =
    \pi_{m_2}(c)
    \pplus
    \pi_{m_1}(a)
\]
identified in \secref{full-middle-permutation}. It consists of local
packing, one balanced BSP exchange, and local unpacking.

\paragrf{Local pack.}
Processor \(q\) owns the Phase-I slab
\[
    \widetilde A[a,c],
    \qquad
    c
    =
    q\frac{N_2}{p}
    +
    c_{\mathrm{loc}}.
\]
Because
\[
    \widetilde A[a,c]
    =
    A[\pi_{m_1}(a),c],
\]
a row macro-element permutation by \(\pi_{m_1}\) places the logical
first-stage value \(A[v,c]\) in packed row
\[
    v
    =
    \pi_{m_1}(a).
\]
During the same scan, PACO forms
\[
    Z[v,c]
    =
    A[v,c]\omega_N^{-vc}.
\]
\punt{
Processor \(q\) initially owns the Phase-I physical slab
\[
    \widetilde A[a,c],
    \qquad
    a\in[N_1],
    \qquad
    c
    =
    q\frac{N_2}{p}
    +
    c_{\mathrm{loc}},
    \qquad
    c_{\mathrm{loc}}
    \in
    \left[
        \frac{N_2}{p}
    \right].
\]
Because Phase~I exposes the output layout \(\pi_{m_1}\),
\[
    \widetilde A[a,c]
    =
    A[\pi_{m_1}(a),c].
\]
PACO applies \(\pi_{m_1}\) as a digit reversal of \(N_1\) contiguous
row macro-elements, each of size \(N_2/p\). If
\[
    v
    =
    \pi_{m_1}(a),
\]
then the item formerly at physical row \(a\) is packed into row \(v\),
and its content becomes
\[
    \widetilde A[a,c]
    =
    A[v,c].
\]
Thus, local packing removes the Phase-I output layout. While scanning
this row macro-element, PACO simultaneously applies the top-level
twiddle factor and forms
\[
    Z[v,c]
    =
    A[v,c]\omega_N^{-vc}.
\]
}

\paragrf{One BSP exchange.}
Decompose the packed logical row coordinate as
\[
    v
    =
    q'\frac{N_1}{p}
    +
    v_{\mathrm{loc}},
    \qquad
    q'\in[p],
    \qquad
    v_{\mathrm{loc}}
    \in
    \left[
        \frac{N_1}{p}
    \right].
\]
The processor \(q'\) is the destination owner of this item. After row
packing, all rows with a common \(q'\) form one contiguous
source--destination block. For every source--destination pair
\((q,q')\), this block has shape
\[
    \frac{N_1}{p}
    \times
    \frac{N_2}{p}
\]
and contains exactly
\(
    (N_1/p)
    \cdot
    (N_2/p)
    =
    N/p^2
\)
items. PACO sends these blocks in one balanced \textsc{All-to-All} BSP exchange.

\paragrf{Local unpack.}
At destination \(q'\), the block received from source \(q\) is indexed
by
\[
    (v_{\mathrm{loc}},c_{\mathrm{loc}}).
\]
PACO transposes this local rectangle, producing the coordinate order
\[
    (c_{\mathrm{loc}},v_{\mathrm{loc}}).
\]
The source identifier \(q\) and the first coordinate reconstruct the
complete former column coordinate
\[
    c
    =
    q\frac{N_2}{p}
    +
    c_{\mathrm{loc}}.
\]
Finally, PACO applies \(\pi_{m_2}\) as a digit reversal of the
\(N_2\) row macro-elements, each of size \(N_1/p\). Writing
\[
    u
    =
    \pi_{m_2}(c),
\]
the item is placed at the final local coordinate
\[
    (u,v)
    =
    \left(
        \pi_{m_2}(c),
        \pi_{m_1}(a)
    \right).
\]
Equivalently, the resulting local array satisfies
\begin{equation}
    \widehat Z[u,v]
    =
    Z[v,\pi_{m_2}(u)],
    \label{eq:unpacked-phase-three-layout}
\end{equation}
which is precisely the input layout consumed by the Phase-III local FFT.

\paragrf{Macro-element digit reversal.}
The row and column digit reversals in the fused stage are not
elementwise random-access permutations. During packing, each of the
\(N_1\) rows is treated as one contiguous macro-element of length
\(N_2/p\); during unpacking, each of the \(N_2\) rows is treated as one
contiguous macro-element of length \(N_1/p\). A digit-reversal
two-cycle is implemented by sequentially swapping its two
macro-elements, using one temporary item. 

\paragrf{Coordinate trace.}
For an item initially stored at source processor \(q\) with physical
Phase-I coordinate
\[
    (a,c),
    \qquad
    c
    =
    q\frac{N_2}{p}
    +
    c_{\mathrm{loc}},
\]
the fused middle stage performs the following coordinate trace:
\begin{align}
    (a,c)
    &\xmapsto{\mathrm{pack}}
    (v,c),
    &
    v
    &=
    \pi_{m_1}(a),
    \notag\\
    (v,c)
    &\xmapsto{\mathrm{route}}
    \bigl(
        q';
        v_{\mathrm{loc}},c_{\mathrm{loc}}
    \bigr),
    &
    v
    &=
    q'\frac{N_1}{p}
    +
    v_{\mathrm{loc}},
    \notag\\
    \bigl(
        q;
        v_{\mathrm{loc}},c_{\mathrm{loc}}
    \bigr)
    &\xmapsto{\mathrm{transpose}}
    \bigl(
        q;
        c_{\mathrm{loc}},v_{\mathrm{loc}}
    \bigr), & \text{at}& \text{ \(q'\) (source = \(q\))}
    \notag\\
    \bigl(
        q;
        c_{\mathrm{loc}},v_{\mathrm{loc}}
    \bigr)
    &\xmapsto{\mathrm{unpack}}
    (u,v),
    &
    u
    &=
    \pi_{m_2}(c).
    \label{eq:para-co-drp-coord-trace}
\end{align}
Hence,
\begin{align}
    (a,c)
    \longmapsto
    \left(
        \pi_{m_2}(c),
        \pi_{m_1}(a)
    \right)
    =
    \pi_m(a\pplus c).
    \label{eq:overall-coord-trace}
\end{align}

\begin{theorem}[Correctness and complexity of the fused global
permutation]
\label{thm:fused-global-permutation}
Under assumptions~\eqref{exact-divisibility} and
\eqref{slab-width-cache-block}, the composition of local packing,
one BSP redistribution, and local unpacking materializes the full
permutation
\begin{equation}
    \pi_m(a\pplus c)
    =
    \pi_{m_2}(c)
    \pplus
    \pi_{m_1}(a).
    \label{eq:fused-global-full-permutation}
\end{equation}
Applied to the Phase-I physical output, it produces
\begin{equation}
    \widehat Z[u,v]
    =
    Z[v,\pi_{m_2}(u)],
    \label{eq:fused-global-phase-three-layout}
\end{equation}
which is precisely the input layout required by Phase~III.

Each processor performs
\begin{equation}
    W_{q,\mathrm{twiddle}}^{\mathrm{arith}}
    =
    \Theta\!\left(
        \frac{N}{p}
    \right)
    \label{eq:fused-global-twiddle-work}
\end{equation}
twiddle arithmetic and
\begin{equation}
    W_{q,\mathrm{perm}}^{\mathrm{rearr}}
    =
    O\!\left(
        \frac{N}{p}
    \right)
    \label{eq:fused-global-rearrangement-work}
\end{equation}
local rearrangement work. Its local cache complexity is
\begin{equation}
    Q_{q,\mathrm{perm}}
    =
    O\!\left(
        \frac{N}{pB}
    \right).
    \label{eq:fused-global-local-cache}
\end{equation}

The permutation uses one global redistribution round with exactly
\begin{equation}
    \frac{N}{p^2}
    \label{eq:fused-global-pairwise-volume}
\end{equation}
items for every source--destination processor pair.  Excluding the
self block, every processor sends and receives
\begin{equation}
    h
    =
    \frac{N}{p}
    \left(
        1-\frac{1}{p}
    \right)
    \label{eq:fused-global-h-volume}
\end{equation}
remote items, and the aggregate remote migration volume is
\begin{equation}
    \mu
    =
    N
    \left(
        1-\frac{1}{p}
    \right).
    \label{eq:fused-global-total-volume}
\end{equation}

Excluding the mandatory local input and output arrays, the auxiliary
workspace is
\begin{equation}
    O\!\left(
        \frac{N}{p^2}
        +
        \log N
    \right)
    \label{eq:fused-global-auxiliary-space}
\end{equation}
items or words per processor.
\end{theorem}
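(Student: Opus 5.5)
The proof splits into three parts: correctness of the coordinate map, the exact communication counts, and the local work, cache, and space bounds.

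\textbf{Correctness.} I would follow the coordinate trace \eqref{para-co-drp-coord-trace} step by step and show that each step is a bijection on the indicated coordinates.
\begin{enumerate}
\item Packing sends physical row $a$ to row $v=\pi_{m_1}(a)$. Since $\widetilde A[a,c]=A[\pi_{m_1}(a),c]$, packed row $v$ holds $A[v,c]$, and the scan multiplies it by $\omega_N^{-vc}$ to give $Z[v,c]$.
\item Routing writes $v=q'N_1/p+v_{\mathrm{loc}}$ and $c=qN_2/p+c_{\mathrm{loc}}$. Because $p\mid N_1$ and $p\mid N_2$ by \eqref{exact-divisibility}, both decompositions are unique.
\item After the local transpose, the destination $q'$ recovers $c$ from the pair $(q,c_{\mathrm{loc}})$, and $v$ from $(q',v_{\mathrm{loc}})$.
\item Unpacking places the item at $(\pi_{m_2}(c),v)$.
\end{enumerate}
Composing these gives \eqref{overall-coord-trace}. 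The string identity $\pi_m(a\pplus c)=\pi_{m_2}(c)\pplus\pi_{m_1}(a)$ holds because reversing a concatenation reverses each field and swaps their order. Finally, $\pi_{m_2}$ is an involution, so inverting $u=\pi_{m_2}(c)$ gives $c=\pi_{m_2}(u)$, and the item stored at $(u,v)$ is $Z[v,\pi_{m_2}(u)]$, which is \eqref{fused-global-phase-three-layout}. Its owner is $q'=\lfloor vp/N_1\rfloor=\operatorname{owner}_1$, as required.

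\textbf{Communication counts.} For a fixed source $q$, the items bound for $q'$ are exactly those with $v$ in the $q'$-th block of $N_1/p$ consecutive values and $c$ in $q$'s slab of $N_2/p$ columns. Since $\pi_{m_1}$ is a bijection, the packed rows of each block are all present, so the pair $(q,q')$ exchanges exactly $N/p^2$ items, which is \eqref{fused-global-pairwise-volume}. Each processor therefore sends and receives $(p-1)N/p^2$ remote items, giving \eqref{fused-global-h-volume}, and summing over the $p$ processors gives \eqref{fused-global-total-volume}. All of this happens in a single $h$-relation, so $R=1$.

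\textbf{Work and space.} Twiddle arithmetic costs one multiplication per item, hence $\Theta(N/p)$. Each twiddle can be generated within $O(1)$ amortized work per item, for instance by a recurrence along the row or a table lookup; I will state which. A digit-reversal permutation of macro-elements is a product of disjoint two-cycles, and swapping each pair moves every item $O(1)$ times. Together with the transpose of the $p$ received blocks, the rearrangement work is $O(N/p)$. For auxiliary space, the only needed buffer is one received block of $N/p^2$ items, so that blocks can be transposed and placed without a full copy. Adding one temporary item and an $O(\log N)$ recursion stack for the cache-oblivious transpose gives \eqref{fused-global-auxiliary-space}.

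\textbf{Cache complexity (main obstacle).} The bound \eqref{fused-global-local-cache} is the delicate part, and there are three contributions.
\begin{enumerate}
\item \emph{Macro-element swaps.} A row macro-element has $N_2/p\ge B$ contiguous items by \eqref{slab-width-cache-block}, so swapping two such rows costs $O(N_2/(pB)+1)=O(N_2/(pB))$ misses. Summed over the $N_1$ rows this is $O(N/(pB))$. Unpacking is symmetric, using macro-elements of length $N_1/p\ge B$.
\item \emph{Twiddles.} Fusing the twiddle multiplication into the same sequential scan adds no extra passes.
\item \emph{Local transposes.} Each received $(N_1/p)\times(N_2/p)$ block is transposed with the cache-oblivious recursive transpose. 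Under the tall-cache assumption \eqref{tall-cache}, a block of $n$ items costs $O(n/B+1)$ misses, and since both sides are at least $B$ this is $O(n/B)$. Summing over the $p$ blocks gives $O(N/(pB))$.
\end{enumerate}
The subtle issue is that the transposed output rows must land inside the final array, where rows have stride $N_1/p$ and are ordered by $c$. I expect to handle this by observing that block $q$ writes $N_2/p$ full output rows, each of length $N_1/p\ge B$, before the row-level $\pi_{m_2}$ pass. Packing and unpacking themselves incur no charge to $h$.
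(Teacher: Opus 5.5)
Your proposal is correct and follows essentially the same route as the paper. Correctness comes from the coordinate trace, with the involution of $\pi_{m_2}$ giving the Phase-III layout. Bijectivity of $\pi_{m_1}$ gives the exact $N/p^2$ block counts, and macro-element widths of at least $B$ handle the two reversals. The bound $N/p^2=(N_1/p)(N_2/p)\ge B^2$ absorbs the per-block additive term of the transpose; the paper writes this as $O(p+N/(pB))$ with $p=O(N/(pB))$. The one $N/p^2$ scratch buffer plus an $O(\log N)$ stack is the only auxiliary space.

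One small caution: commit to the along-row recurrence (or direct $O(1)$ evaluation, as the paper implicitly assumes) for the twiddles. A precomputed global twiddle table would break the $O(N/p^2+\log N)$ workspace bound, and its random lookups could break the $O(N/(pB))$ cache bound.
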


\begin{proof}
The coordinate trace \eqref{para-co-drp-coord-trace} maps every Phase-I physical coordinate
\((a,c)\) to
\[
    (u,v)
    =
    \left(
        \pi_{m_2}(c),
        \pi_{m_1}(a)
    \right).
\]
It therefore materializes
\[
    \pi_m(a\pplus c)
    =
    \pi_{m_2}(c)
    \pplus
    \pi_{m_1}(a).
\]
Since the Phase-I physical layout satisfies
\[
    \widetilde A[a,c]
    =
    A[\pi_{m_1}(a),c],
\]
the packed row \(v=\pi_{m_1}(a)\) contains \(A[v,c]\). Fusing the
top-level twiddle multiplication during this scan forms
\[
    Z[v,c]
    =
    A[v,c]\omega_N^{-vc}.
\]
The final coordinate mapping consequently gives
\[
    \widehat Z[u,v]
    =
    Z[v,\pi_{m_2}(u)],
\]
which is the Phase-III input layout.

Packing applies macro-element digit reversal to \(N_1\) rows of
length \(N_2/p\), and unpacking applies it to \(N_2\) rows of length
\(N_1/p\). By the slab-width assumptions \eqref{slab-width-cache-block}, both macro-element widths
are at least \(B\). Thus, together they require
\[
    O\!\left(
        \frac{N}{p}
    \right)
\]
local rearrangement work and
\[
    O\!\left(
        \frac{N}{pB}
    \right)
\]
cache misses; the fused twiddle scan contributes
\(\Theta(N/p)\) arithmetic work without an additional cache pass.

Each processor transposes \(p\) blocks of size \(N/p^2\). The
cache-oblivious transpose bound gives
\[
    O\!\left(
        p+\frac{N}{pB}
    \right)
\]
misses in total. Since
\[
    \frac{N}{p^2}
    =
    \frac{N_1}{p}\frac{N_2}{p}
    \ge
    B^2,
\]
we have \(p=O(N/(pB))\), so this cost is also
\(O(N/(pB))\).

For every source--destination pair, the block has
\[
    \frac{N_1}{p}\frac{N_2}{p}
    =
    \frac{N}{p^2}
\]
items. Hence, excluding the self block, each processor sends and
receives
\[
    \frac{N}{p}
    \left(
        1-\frac{1}{p}
    \right)
\]
remote items, and the aggregate remote volume is
\[
    N
    \left(
        1-\frac{1}{p}
    \right).
\]
The raw incoming regions are part of the mandatory output array; the
only non-output array workspace is one reusable \(N/p^2\)-item
transpose buffer, plus \(O(\log N)\) recursion-stack words.
\end{proof}

\punt{
\begin{proof}
Consider an item initially stored at physical Phase-I coordinate
\((a,c)\), where
\[
    c
    =
    q\frac{N_2}{p}
    +
    c_{\mathrm{loc}}.
\]
Local row packing maps
\[
    a
    \longmapsto
    v
    =
    \pi_{m_1}(a)
\]
and applies the top-level twiddle factor
\(\omega_N^{-vc}\).  Decompose the packed row coordinate as
\[
    v
    =
    q'\frac{N_1}{p}
    +
    v_{\mathrm{loc}}.
\]
The item belongs to the source--destination block from processor
\(q\) to processor \(q'\), where it has block-local coordinate
\[
    (v_{\mathrm{loc}},c_{\mathrm{loc}}).
\]

The BSP redistribution sends this block directly to processor \(q'\).
The local rectangular transpose changes its coordinate to
\[
    (c_{\mathrm{loc}},v_{\mathrm{loc}}).
\]
Together with the source identifier \(q\), the first coordinate
reconstructs the original global column
\[
    c
    =
    q\frac{N_2}{p}
    +
    c_{\mathrm{loc}}.
\]
Finally, the local row permutation maps this row to
\[
    u
    =
    \pi_{m_2}(c).
\]
The output column is
\[
    v
    =
    q'\frac{N_1}{p}
    +
    v_{\mathrm{loc}}
    =
    \pi_{m_1}(a).
\]
Hence, the final global coordinate is
\[
    (u,v)
    =
    \left(
        \pi_{m_2}(c),
        \pi_{m_1}(a)
    \right),
\]
which proves~\eqref{fused-global-full-permutation}.

The Phase-I physical layout satisfies
\[
    \widetilde A[a,c]
    =
    A[\pi_{m_1}(a),c].
\]
Thus, after the row permutation, physical packed row \(v\) contains
the logical first-stage value \(A[v,c]\).  The fused multiplication
therefore forms
\[
    Z[v,c]
    =
    A[v,c]\omega_N^{-vc}.
\]
Applying the coordinate mapping above gives
\[
    \widehat Z[u,v]
    =
    Z[v,\pi_{m_2}(u)],
\]
which proves~\eqref{fused-global-phase-three-layout}.

For local packing, apply the macro-element digit-reversal bound to
\(N_1\) contiguous rows of length \(N_2/p\).  Since
\[
    \frac{N_2}{p}
    \ge
    B,
\]
packing performs
\[
    O\!\left(
        \frac{N}{p}
    \right)
\]
rearrangement work and
\[
    O\!\left(
        \frac{N}{pB}
    \right)
\]
cache misses.  The destination-indexed twiddle multiplication scans
the same macro-elements and contributes
\(\Theta(N/p)\) arithmetic operations without requiring an
additional cache pass.

Each processor transposes \(p\) rectangular blocks, each containing
\(N/p^2\) items.  A cache-oblivious out-of-place transpose performs
linear work and incurs
\[
    O\!\left(
        1+
        \frac{N}{p^2B}
    \right)
\]
cache misses per block~\cite{FrigoLePr99}.  Thus, all local block
transpositions perform \(O(N/p)\) work and incur
\[
    O\!\left(
        p+
        \frac{N}{pB}
    \right)
\]
cache misses.  By the slab-width assumptions,
\[
    \frac{N}{p^2}
    =
    \frac{N_1}{p}
    \frac{N_2}{p}
    \ge
    B^2.
\]
Consequently,
\[
    p
    \le
    \frac{N}{pB^2}
    \le
    \frac{N}{pB},
\]
so the total transpose cost is
\[
    O\!\left(
        \frac{N}{pB}
    \right)
\]
cache misses.  The copy from the reusable transpose scratch array
back to the designated output region is sequential and is absorbed
by the same bound.

For final unpacking, apply the macro-element digit-reversal bound to
\(N_2\) contiguous rows of length \(N_1/p\).  The condition
\(N_1/p\ge B\) gives another
\[
    O\!\left(
        \frac{N}{p}
    \right)
\]
rearrangement work and
\[
    O\!\left(
        \frac{N}{pB}
    \right)
\]
cache misses.  Combining packing, transposition, and unpacking
proves~\eqref{fused-global-rearrangement-work} and
\eqref{fused-global-local-cache}.

For every fixed source processor \(q\) and destination processor
\(q'\), exactly \(N_1/p\) packed rows belong to \(q'\), and processor
\(q\) owns \(N_2/p\) source columns.  Hence, every source--destination
block contains
\[
    \frac{N_1}{p}
    \cdot
    \frac{N_2}{p}
    =
    \frac{N}{p^2}
\]
items.  Excluding the self block yields
\[
    (p-1)\frac{N}{p^2}
    =
    \frac{N}{p}
    \left(
        1-\frac{1}{p}
    \right)
\]
remote items sent and received by each processor.  Summing the remote
send volumes over all processors proves
\eqref{fused-global-total-volume}.

The \(p\) raw incoming block regions together occupy \(N/p\) items
in the mandatory local output array and are therefore not auxiliary
workspace.  The algorithm uses one reusable transpose scratch array
of \(N/p^2\) items.  Macro-element swaps require \(O(1)\) temporary
items, and the digit-reversal and transpose recursions use
\(O(\log N)\) stack words.  This proves
\eqref{fused-global-auxiliary-space}.
\end{proof}
}

\punt{
\paragrf{Local pack.}
Processor \(q\) initially owns an
\[
    N_1
    \times
    \frac{N_2}{p}
\]
Phase-I slab. Its physical row coordinate \(a\) represents logical
frequency
\[
    k_1
    =
    \pi_{m_1}(a).
\]
PACO first applies \(\pi_{m_1}\) as a digit reversal of \(N_1\)
contiguous row macro-elements, each of size \(N_2/p\). This corrects
the Phase-I output layout. While scanning the macro-elements, it
simultaneously multiplies every item by its top-level twiddle factor
\[
    \omega_N^{-k_1j_2}.
\]
The packed rows are then grouped into contiguous blocks by their
target processor.

\paragrf{One BSP exchange.}
Writing the logical row coordinate as
\[
    k_1
    =
    q'\frac{N_1}{p}
    +
    k_{1,\mathrm{loc}},
\]
identifies \(q'\) as the target processor. Since \(\pi_{m_1}\) is a
bijection, every source owns exactly
\[
    \frac{N_1}{p}
    \cdot
    \frac{N_2}{p}
    =
    \frac{N}{p^2}
\]
items for every destination. The packed blocks are therefore exchanged
in one balanced BSP \(h\)-relation.

\paragrf{Local unpack.}
At destination processor \(q'\), each received block is a rectangle of
shape
\[
    \frac{N_1}{p}
    \times
    \frac{N_2}{p}.
\]
PACO transposes these rectangles locally, reconstructing the complete
\(N_2\) coordinate from the source identifier and local column
coordinate. It then applies \(\pi_{m_2}\) as a digit reversal of
\(N_2\) contiguous row macro-elements. The resulting local view
satisfies
\[
    \widehat Z[u,k_1]
    =
    Z[k_1,\pi_{m_2}(u)],
\]
which is exactly the Phase-III input layout required by
Lemma~\ref{lem:lazy-layout-fft}.
}

\begin{theorem}[Fused redistribution]
\label{thm:fused-global-permutation}
The fused middle stage materializes the complete permutation
\[
    \pi_m(a\pplus c)
    =
    \pi_{m_2}(c)
    \pplus
    \pi_{m_1}(a),
\]
applies the top-level Cooley--Tukey twiddle factors, and produces the
input layout consumed by Phase~III. Per processor, it performs
\begin{equation}
    W_{\mathrm{perm,local}}
    =
    \Theta\!\left(
        \frac{N}{p}
    \right)
    \label{eq:permutation-local-work}
\end{equation}
local work and incurs
\begin{equation}
    Q_{\mathrm{perm,local}}
    =
    \Theta\!\left(
        \frac{N}{pB}
    \right)
    \label{eq:permutation-local-cache}
\end{equation}
cache misses.

The unique communication round is a balanced BSP \(h\)-relation with
\begin{equation}
    h
    =
    \frac{N}{p}
    \left(
        1-\frac{1}{p}
    \right)
    \label{eq:permutation-h-volume}
\end{equation}
remote items sent and received by each processor. Its aggregate remote
communication volume is
\begin{equation}
    \mu
    =
    N
    \left(
        1-\frac{1}{p}
    \right).
    \label{eq:permutation-total-volume}
\end{equation}
\end{theorem}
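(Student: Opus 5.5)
This restated theorem is essentially a repackaging of the earlier fused-permutation theorem with $\Theta$ in place of $O$. I would reuse that theorem for the correctness, upper-bound and communication parts, and add matching lower bounds for work and cache misses. The plan has three steps.

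\emph{Step 1: correctness.} I would invoke the coordinate trace \eqref{para-co-drp-coord-trace}, which shows that pack, route, transpose and unpack send physical $(a,c)$ to $(\pi_{m_2}(c),\pi_{m_1}(a))=\pi_m(a\pplus c)$. Since $\widetilde A[a,c]=A[\pi_{m_1}(a),c]$, the twiddle fused into the packing scan multiplies the value $A[v,c]$ at packed row $v=\pi_{m_1}(a)$ by $\omega_N^{-vc}$. This is exactly the top-level factor in \eqref{two-factor-ct}. The resulting layout $\widehat Z[u,v]=Z[v,\pi_{m_2}(u)]$ is then the input layout of $\mathcal B_{m_2}$ from Lemma~\ref{lem:lazy-layout-fft}. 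No new argument is needed here.

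\emph{Step 2: upper bounds.} Local work is $O(N/p)$: two macro-element digit reversals of $N_1$ rows of length $N_2/p$ and of $N_2$ rows of length $N_1/p$, $p$ block transposes of total size $N/p$, and $\Theta(N/p)$ twiddle multiplications. For cache misses, each macro-element has width at least $B$ by \eqref{slab-width-cache-block}, so each reversal costs $O(N/(pB))$. The transposes cost $O(p+N/(pB))$, and $N/p^2\ge B^2$ gives $p=O(N/(pB))$.

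\emph{Step 3: lower bounds.} Every processor owns $N/p$ items, and each item must be read (to apply its nonunit twiddle or to be packed) and then written into the output region. This forces $\Omega(N/p)$ work. It also forces $\Omega(N/(pB))$ cache misses, because the cache starts empty and $N/p$ distinct items occupy at least $N/(pB)$ blocks. Together with Step 2 this gives the $\Theta$ bounds in \eqref{permutation-local-work} and \eqref{permutation-local-cache}.

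\emph{Step 4: communication.} For each source $q$, the columns it holds are fixed, and because $\pi_{m_1}$ is a bijection, exactly $N_1/p$ packed rows belong to each destination $q'$. So every pair $(q,q')$ exchanges exactly $N/p^2$ items. Dropping the self block gives $h=(p-1)N/p^2=\frac{N}{p}(1-\frac1p)$ per processor, and summing over the $p$ processors gives $\mu=N(1-\frac1p)$. There is one exchange, so this is one round.

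The main obstacle is stating the cache lower bound honestly. It relies only on the trivial compulsory-miss argument (every item must be touched and blocks hold $B$ items). That suffices here, since it matches the upper bound without any appeal to permutation lower bounds.
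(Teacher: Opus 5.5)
Your Steps 1, 2 and 4 follow the paper's proof: the coordinate trace for correctness and the twiddle/layout claims; macro-element reversals of width at least $B$ plus transpose aggregation via $N/p^2\ge B^2\Rightarrow p=O(N/(pB))$ for the upper bounds; and bijectivity of $\pi_{m_1}$ for the exact $N/p^2$ block counts. The paper's sketch justifies only the upper halves of the $\Theta$ bounds, so your Step~3 adds something the paper does not argue. The work half is fine, since the $N/p$ fused twiddle multiplications alone give $\Omega(N/p)$.

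The cache half of Step~3 rests on a false premise: the middle stage does not start with an empty cache. The model makes the cache empty only at the start of the whole computation. Phase~I runs immediately before the middle stage, so up to $M$ items may already be resident, possibly the entire Phase-I output slab when $N/p\le M$. In that case packing can incur no misses at all, and your counting only yields $\Omega((N/p-M)/B)$. That is not $\Omega(N/(pB))$ when $N/p=O(M)$, a regime the assumptions allow, since the slab-width conditions only force $N/p\ge pB^2$.

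The repair is to count received items rather than local ones. Each processor receives $(p-1)N/p^2\ge N/(2p)$ items from other processors (using $p\ge 2$). These values cannot be in its cache before the exchange, under the natural convention that the BSP step deposits incoming items in local main memory. The block transposition reads every one of them, so at least $N/(2pB)$ misses are forced whatever cache state Phase~I leaves behind. Alternatively, declare explicitly that per-stage costs are measured from a cold cache; as written, the argument silently assumes this.
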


\begin{proof}[Proof sketch]
Consider an item initially stored at source coordinate \((a,c)\).
The packing permutation sends it to logical first-stage frequency
\[
    v
    =
    \pi_{m_1}(a),
\]
thereby cancelling the output layout produced by Phase~I; the same
scan applies the twiddle \(\omega_N^{-vc}\). Writing
\[
    v
    =
    q'\frac{N_1}{p}
    +
    v_{\mathrm{loc}}
\]
routes the item to its unique Phase-III owner \(q'\). At that
destination, transposing the received block exchanges
\((v_{\mathrm{loc}},c_{\mathrm{loc}})\) into
\((c_{\mathrm{loc}},v_{\mathrm{loc}})\), while the source identifier
reconstructs
\[
    c
    =
    q\frac{N_2}{p}
    +
    c_{\mathrm{loc}}.
\]
The final row digit reversal places the item at
\[
    u
    =
    \pi_{m_2}(c).
\]
Thus, the complete coordinate mapping is
\[
    (a,c)
    \longmapsto
    \left(
        \pi_{m_2}(c),
        \pi_{m_1}(a)
    \right),
\]
which is the full digit reversal
\(\pi_m(a\pplus c)\).

Both macro-element digit reversals and the local rectangular
transpositions \cite{FrigoLePr99} are cache-oblivious linear-work, linear-cache-miss
rearrangements: 
The row and column digit reversals are not element-wise random-access permutations. They are conducted on contiguous macro-elements of length \(N_2/p\) and \(N_1/p\), respectively. Given the slab-width conditions \eqref{slab-width-cache-block}, these permutations take linear local work and
\(O(N/(pB))\) cache misses. 
\punt{
The only nonconstant auxiliary array in
the middle stage is the reusable \(N/p^2\)-item scratch space used by
the local block transpose.
}

Since \(\pi_{m_1}\) is a bijection, each source has
exactly \(N_1/p\) packed rows for every destination \(q'\), and hence
exactly \(N/p^2\) items in every source--destination block. This gives
the stated balanced communication volume and local complexity bounds.
The complete in-place digit-reversal construction, cache analysis,
and workspace details are deferred to the appendix.
\end{proof}

\punt{
\begin{proof}[Proof sketch]
The initial macro-element digit reversal removes the
\(\pi_{m_1}\) layout exposed by Phase~I, and the final macro-element
digit reversal establishes the \(\pi_{m_2}\) layout required by
Phase~III. The intervening rectangular transpositions exchange the
two top-level digit fields and reconstruct the factor-swapped local
slabs. Thus, their composition realizes the full permutation
\(\pi_m\).

Both local digit reversals and the rectangular transpositions are
cache-oblivious linear-work, linear-cache-miss rearrangements. The
routing balance follows because \(\pi_{m_1}\) is a bijection: every
source--destination pair contains exactly \(N/p^2\) items. The
complete in-place digit-reversal construction, routing trace, and
workspace bound are given in the appendix.
\end{proof}
}
\secput{complexity}{Analysis and Optimality}

This section combines the two local FFT phases and the fused middle
redistribution, and then states the matching lower bounds under their
respective scopes. Correctness was established in
Theorem~\ref{thm:paco-correctness}; here we analyze only local work,
cache traffic, and interprocessor communication.

\subsection{Upper Bounds}
\label{sec:paco-upper-bounds}

Table~\ref{tab:paco-cost-summary} summarizes the per-processor costs.
In Phase~I, every processor evaluates \(N_2/p\) transforms of length
\(N_1\). In Phase~III, it evaluates \(N_1/p\) transforms of length
\(N_2\). The middle stage performs one fused redistribution: its
top-level twiddles are included in local packing, and its local
rearrangements consist of two macro-element digit reversals and
rectangular block transpositions.

\begin{table}[t]
\caption{Per-processor costs of the three PACO components.}
\label{tab:paco-cost-summary}
\centering
\small
\renewcommand{\arraystretch}{1.25}
\setlength{\tabcolsep}{3pt}
\begin{tabular}{lccc}
\toprule
Component
&
Local work
&
Cache misses
&
Communication
\\
\midrule
Local Phase~I
&
\(\displaystyle
 O\!\left(
 \frac{N}{p}\log N_1
 \right)\)
&
\(\displaystyle
 O\!\left(
 \frac{N}{pB}
 (1+\log_M N_1)
 \right)\)
&
\(0\)
\\
Fused middle stage
&
\(\displaystyle
 O\!\left(
 \frac{N}{p}
 \right)\)
&
\(\displaystyle
 O\!\left(
 \frac{N}{pB}
 \right)\)
&
one \(h\)-relation
\\
Local Phase~III
&
\(\displaystyle
 O\!\left(
 \frac{N}{p}\log N_2
 \right)\)
&
\(\displaystyle
 O\!\left(
 \frac{N}{pB}
 (1+\log_M N_2)
 \right)\)
&
\(0\)
\\
\bottomrule
\end{tabular}
\end{table}

\begin{proposition}[PACO upper bounds]
\label{prop:paco-upper-bounds}
Under assumptions \eqref{exact-divisibility}--\eqref{slab-width-cache-block}, PACO has maximum per-processor
local work
\begin{equation}
    W_{\max}
    =
    O\!\left(
        \frac{N}{p}\log N
    \right)
    \label{eq:paco-critical-work-upper-bound}
\end{equation}
and maximum per-processor cache complexity
\begin{equation}
    Q_{\max}
    =
    O\!\left(
        \frac{N}{pB}
        \left(
            1+\log_M N
        \right)
    \right).
    \label{eq:paco-critical-cache-upper-bound}
\end{equation}
It performs exactly one global redistribution round,
\[
    R
    =
    1,
\]
with
\begin{equation}
    h
    =
    \frac{N}{p}
    \left(
        1-\frac{1}{p}
    \right)
    \label{eq:paco-upper-h}
\end{equation}
remote items sent and received by each processor, and aggregate remote
volume
\begin{equation}
    \mu
    =
    N
    \left(
        1-\frac{1}{p}
    \right).
    \label{eq:paco-upper-communication-volume}
\end{equation}
\end{proposition}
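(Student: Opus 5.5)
The plan is to sum the per-phase costs of Table~\ref{tab:paco-cost-summary}. Each row of that table comes from an earlier result, and we instantiate it with the Phase-I and Phase-III parameters. Because the three phases run sequentially on every processor, and every processor receives an identically shaped subproblem (the exact divisibility \eqref{exact-divisibility} makes all slabs equal), the maximum per-processor cost is simply the sum of the three stage costs.

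\emph{Work.} For Phase~I, apply Proposition~\ref{prop:local-batched-fft} with $n=N_1$ and $s=N_2/p$. This gives $O\bigl(\frac{N}{p}(1+\log N_1)\bigr)$. For Phase~III, take $n=N_2$ and $s=N_1/p$, which gives $O\bigl(\frac{N}{p}(1+\log N_2)\bigr)$. The fused middle stage contributes $\Theta(N/p)$ by Theorem~\ref{thm:fused-global-permutation}. Since $\log N_1+\log N_2=\log N$ and $N\ge p^2\ge 4$, the additive $O(N/p)$ terms are absorbed, and we obtain \eqref{paco-critical-work-upper-bound}.

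\emph{Cache.} Proposition~\ref{prop:local-batched-fft} requires an initial contiguous batch width of at least $B$. In Phase~I that width is the slab width $N_2/p$; in Phase~III it is $N_1/p$, since after unpacking each row macro-element of the $N_2\times N_1/p$ local array has length $N_1/p$. Both widths are at least $B$ by \eqref{slab-width-cache-block}, and the tall-cache condition \eqref{tall-cache} holds by assumption. The proposition therefore gives $O\bigl(\frac{N}{pB}(1+\log_M N_i)\bigr)$ for $i=1,2$, and these sum to \eqref{two-local-phases-cache-bound} because $\log_M N_i\le\log_M N$. Adding $O(N/(pB))$ for the middle stage from Theorem~\ref{thm:fused-global-permutation} yields \eqref{paco-critical-cache-upper-bound}.

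The one point I expect to need care is confirming that the layout Phase~III actually receives, namely \eqref{unpacked-phase-three-layout}, is the contiguous-batch view the scheduler assumes. Concretely, the batch index $v$ must be the least significant address field with width $N_1/p$, and the input-layout mode must use the same frontier analysis as output-layout mode. I would argue the latter by noting that the two modes differ only in how twiddles are labeled, not in the recursion's access pattern, so the cache argument carries over unchanged.

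\emph{Communication.} Phases~I and~III are processor local: each transform lies entirely in one slab under owner$_0$ and owner$_1$, respectively, as noted in \secref{two-factor-ct}. The only communication is therefore the single $h$-relation of the fused stage, which gives $R=1$. The values of $h$ and $\mu$ are quoted directly from \eqref{fused-global-h-volume} and \eqref{fused-global-total-volume}, where each source--destination block holds $N/p^2$ items and the self block is excluded.
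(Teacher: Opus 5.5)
Your proposal is correct and matches the paper's proof. Both instantiate the batched local FFT bound with $(n,s)=(N_1,N_2/p)$ and $(N_2,N_1/p)$, add the linear work and $O(N/(pB))$ misses of the fused middle stage, combine the logarithms via $\log N_1+\log N_2=\log N$, and read $h$ and $\mu$ off the $N/p^2$-item source--destination blocks with the self block excluded. The extra checks you add are left implicit in the paper's short argument: the Phase~III contiguous width $N_1/p$, absorbing the additive $N/p$ term since $N\ge p^2\ge 4$, and processor-locality of Phases~I and~III to justify $R=1$. One small refinement: input-layout mode reads the transform digits through a digit-reversed interpretation, so its physical access pattern mirrors that of output-layout mode rather than differing only in twiddle labels; the frontier analysis depends only on the child-group masses and the untouched contiguous field, so your conclusion stands.
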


\begin{proof}
Proposition~\ref{prop:local-batched-fft} gives the local work and cache
bounds for Phases~I and III, while
Theorem~\ref{thm:fused-global-permutation} gives the linear local cost
of the middle stage. Since
\[
    \log N_1+\log N_2
    =
    \log N,
\]
their local work bounds sum to
\eqref{paco-critical-work-upper-bound}; the same identity yields
\eqref{paco-critical-cache-upper-bound} after absorbing the
constant number of linear scans.

For every source--destination processor pair, the fused permutation
routes exactly
\[
    \frac{N}{p^2}
\]
items. Excluding the self block, every processor therefore sends and
receives
\[
    (p-1)\frac{N}{p^2}
    =
    \frac{N}{p}
    \left(
        1-\frac{1}{p}
    \right)
\]
remote items. Summing these balanced send volumes gives
\eqref{paco-upper-communication-volume}.
\end{proof}

The corresponding aggregate work and cache bounds are
\[
    O(N\log N)
    \qquad\text{and}\qquad
    O\!\left(
        \frac{N}{B}
        \left(
            1+\log_M N
        \right)
    \right),
\]
respectively. The local FFT recursion uses \(O(\log N)\) stack words,
and the fused middle stage uses one reusable
\(N/p^2\)-item transpose scratch array; these workspace details are
established in Section~\ref{sec:fused-redistribution} and the
appendix.

\subsection{Lower Bounds and Exact Scope}
\label{sec:paco-lower-bounds}
\begin{table*}[!t]
\centering
\caption{Representative FFT frameworks. ``Communication'' counts global
exchanges; additional interprocessor communication is noted where relevant.}
\label{tab:related-fft}
\small
\setlength{\tabcolsep}{3pt}
\renewcommand{\arraystretch}{0.92}
\vspace{-0.5em}
\begin{tabular}{@{}llll@{}}
\toprule
Method & Numerical & Communication & Output ownership / cache guarantee \\
\midrule
Four-/six-step FFT
& Exact
& One or more transpose-like exchanges
& Factor-swapped view possible; no cache-oblivious guarantee \\

Frigo et al.\ CO-FFT
& Exact
& N/A (Sequential Alg.)
& Natural-order output; optimal ideal-cache complexity \\

Edelman et al.
& Tunable accuracy
& One transpose + hierarchical communication
& Typically block ownership; no local ideal-cache theorem \\

Tang et al.
& Tunable accuracy
& One all-to-all + neighbor exchange
& Typically block ownership; no local ideal-cache theorem \\

PACO
& Exact
& One BSP redistribution only
& Canonical output under factor-swapped slabs; cache-oblivious \\
\bottomrule
\end{tabular}
\vspace{-0.8em}
\end{table*}

\punt{
\begin{table*}[!h]
\caption{
Representative FFT frameworks.  The communication column counts
algorithmic global exchanges and notes additional interprocessor
communication when relevant.
}
\label{tab:related-summary}
\centering
\renewcommand{\arraystretch}{1.12}
\setlength{\tabcolsep}{3pt}
\begin{tabularx}{\textwidth}{
    @{}
    p{0.16\textwidth}
    c
    p{0.25\textwidth}
    X
    @{}
}
\hline
Method
&
Numerical
&
Communication
&
Output ownership and local-cache guarantee
\\
\hline

Four-/six-step FFT
&
Exact
&
One or more transpose-like exchanges; further exchanges depend on
layout and distribution
&
A factor-swapped matrix view can be used; the factorization alone gives
no cache-oblivious local-cache theorem
\\

Frigo et al.\ CO-FFT
&
Exact
&
Sequential setting
&
Natural-order sequential output; asymptotically optimal
ideal-cache complexity
\\

Edelman et al.
&
Tunable accuracy
&
One distributed transpose, plus hierarchical communication in the FMM
&
Typically restores contiguous-block ownership; no processor-local
ideal-cache theorem
\\

Tang et al.
&
Tunable accuracy
&
One all-to-all on an oversampled array, plus neighbor ghost exchange
&
Typically restores contiguous-block ownership; no processor-local
ideal-cache theorem
\\

PACO
&
Exact
&
One BSP redistribution; no communication outside that superstep
&
Canonical coefficients under factor-swapped slab ownership; fully
cache-oblivious processor-local bound
\\
\hline
\end{tabularx}
\end{table*}
}

\punt{
\begin{table*}[t]
\caption{High-level comparison of representative FFT frameworks.}
\label{tab:related-summary}
\centering
\small
\renewcommand{\arraystretch}{1.15}
\setlength{\tabcolsep}{4pt}
\begin{tabular}{lccccc}
\toprule
Method
&
Exact?
&
Global redistribution
&
Other communication
&
Output ownership
&
Local cache theorem
\\
\midrule
Four-/six-step
&
Yes
&
One or more transpose-like exchanges
&
Layout dependent
&
Often factor-swapped
&
No general cache-oblivious result
\\
Cache-oblivious FFT
&
Yes
&
Sequential setting
&
---
&
Sequential output
&
Yes
\\
Approximate low-round FFTs
&
Typically no
&
One all-to-all/transpose
&
Often hierarchical or neighbor exchange
&
Usually restored
&
Not in this model
\\
PACO
&
Yes
&
One BSP redistribution
&
None outside it
&
Factor-swapped
&
Yes
\\
\bottomrule
\end{tabular}
\end{table*}
}

We now prove the lower bounds under the scopes stated after
\thmref{main-result} and the fixed-radix scope \footnote{Let $b=2^t$, $t=O(1)$. If every radix-$b$ butterfly is expanded into its $t$
constituent radix-2 layers, the resulting dependency graph is the standard
radix-2 FFT DAG. Hence the maximum per-processor work/cache lower bounds are proved based on radix-2 DAG.}(see Remark~\ref{rmk:fixed-radix-scope} in \appref{lb}).
\punt{
The following lower bounds have deliberately different scopes. The
one-round bound concerns the complete DFT under the ownership model of
\secref{model}. The migration bound concerns
\emph{materializing the prescribed fused permutation under the fixed
source and target ownerships}. The work and cache bounds concern
no-recomputation executions of the fixed-radix Cooley--Tukey FFT DAG.
}

\begin{proposition}[At least one redistribution is necessary]
\label{prop:one-redistribution-lower-bound}
Every correct parallel DFT algorithm satisfying the owner-computes and
no-replication assumptions of \secref{model} requires
\[
    R
    \ge
    1.
\]
\end{proposition}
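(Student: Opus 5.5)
We argue by contradiction, using a dependency (information-flow) argument. Suppose some correct algorithm satisfying the owner-computes and no-replication assumptions of \secref{model} runs with \(R=0\). By the execution model, a processor can obtain a remotely owned value only through an explicit BSP communication superstep. So with no redistribution round, every value a processor ever computes is a function of its own initial input items alone.

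Under the input ownership \eqref{input-slab-owner}, processor \(q\) initially owns only the column slab \(j_2\in[qN_2/p,(q+1)N_2/p)\). Since \(p\ge2\), there is at least one column \(j_2^\ast\) it does not own, and that column's items are owned by some other processor. The same holds in any no-replication ownership where no processor owns the whole input, which is the general form of the assumption.

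Now fix an output coefficient \(y[k]\) assigned to processor \(q\) under the target ownership \eqref{output-slab-owner}. Pick any input index \(j\) with \(x[j]\) not owned by \(q\). From the definition of \(\Omega_N\), the partial derivative of \(y[k]\) with respect to \(x[j]\) is \(\omega_N^{-kj}\neq0\). So \(y[k]\) genuinely depends on \(x[j]\): changing \(x[j]\) while keeping all of \(q\)'s input fixed changes the correct value of \(y[k]\). Processor \(q\)'s computation, however, is unchanged by this perturbation, so its output cannot track it. This contradicts correctness, hence \(R\ge1\).

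The step most likely to need care is making precise that ``no communication'' really means \(q\)'s output is a function of \(q\)'s local input only. This includes ruling out the case where the target ownership happens to coincide with the input ownership so that nothing needs to move. I would handle it by noting that the owner-computes model, together with the condition that no processor initially owns the entire input, is exactly what the indistinguishability argument needs. The argument then uses only that every entry of \(\Omega_N\) is nonzero, so every output depends on every input, independent of the particular slab layouts. Recomputation restrictions play no role here.
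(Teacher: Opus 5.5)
Your proposal is correct and follows essentially the same argument as the paper: with \(R=0\), each processor's returned coefficients are functions of its own initial items only. Yet every \(y[k]\) depends on every \(x[j]\), because all entries of \(\Omega_N\) are nonzero, so perturbing a remotely owned input changes the correct output without affecting that processor's computation, which contradicts correctness; your version just makes this explicit by choosing a concrete coefficient owned by \(q\) under \(\operatorname{owner}_1\) and an unowned column guaranteed by \(p\ge 2\).
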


\begin{proof}
Every DFT coefficient depends on every input item, since every entry
of \(\Omega_N\) is nonzero. No processor initially owns the complete
input. If \(R=0\), then a processor's local state and every output it
returns are independent of some remotely owned input item, which
contradicts correctness after changing only that input.
\end{proof}

\begin{proposition}[Exact migration bound for the fused permutation]
\label{prop:fused-permutation-migration-lower-bound}
Consider materializing
\[
    \pi_m(a\pplus c)
    =
    \pi_{m_2}(c)
    \pplus
    \pi_{m_1}(a)
\]
between the source ownership
\eqref{input-slab-owner} and target ownership
\eqref{output-slab-owner}. Any such algorithm must communicate at
least
\begin{equation}
    \mu
    \ge
    N
    \left(
        1-\frac{1}{p}
    \right)
    \label{eq:fused-permutation-total-lower-bound}
\end{equation}
remote items. Moreover,
\begin{equation}
    \max_{q\in[p]}
    \left\{
        h_q^{\mathrm{send}},
        h_q^{\mathrm{recv}}
    \right\}
    \ge
    \frac{N}{p}
    \left(
        1-\frac{1}{p}
    \right).
    \label{eq:fused-permutation-critical-lower-bound}
\end{equation}
\end{proposition}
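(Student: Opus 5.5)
The plan is a counting argument over source--destination pairs, carried out directly on the permutation and independent of any particular algorithm. Fix an item at physical Phase-I coordinate \((a,c)\). Its source owner is \(q=\lfloor cp/N_2\rfloor\) by \eqref{input-slab-owner}. Its destination coordinate is \((\pi_{m_2}(c),\pi_{m_1}(a))\), so by \eqref{output-slab-owner} its target owner is \(q'=\lfloor \pi_{m_1}(a)p/N_1\rfloor\). Under the no-replication, owner-computes model, a value that must end at a processor other than the one holding it must be explicitly sent at least once. Therefore every item with \(q\ne q'\) contributes at least one remote transmission, and \(\mu\) is at least the number of such items.

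Next I count, for each ordered pair \((q,q')\), the number \(n_{q,q'}\) of items with source \(q\) and target \(q'\). The source condition restricts \(c\) to a block of \(N_2/p\) values and places no constraint on \(a\). The target condition says \(\pi_{m_1}(a)\) lies in the interval \([q'N_1/p,(q'+1)N_1/p)\). Because \(\pi_{m_1}\) is a bijection on \([N_1]\), exactly \(N_1/p\) values of \(a\) satisfy it, and this uses \(p\mid N_1\) from \eqref{exact-divisibility}. Since the two constraints are independent, \(n_{q,q'}=(N_1/p)(N_2/p)=N/p^2\) for every pair. Summing over the \(p(p-1)\) pairs with \(q\ne q'\) gives \(\mu\ge p(p-1)N/p^2=N(1-1/p)\), which is \eqref{fused-permutation-total-lower-bound}.

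For \eqref{fused-permutation-critical-lower-bound}, the same count shows that each fixed source \(q\) holds \(\sum_{q'\ne q}n_{q,q'}=N/p\,(1-1/p)\) items that must leave it. No other processor has any of those values initially, so \(q\) itself must send each of them at least once, and hence \(h_q^{\mathrm{send}}\ge N/p\,(1-1/p)\) for every \(q\). Symmetrically, each target \(q'\) must end up with \(N/p\,(1-1/p)\) values it did not initially own, so \(h_{q'}^{\mathrm{recv}}\) is bounded below in the same way. Taking the maximum over processors gives the claim.

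The main obstacle is justifying the ``sent at least once / received at least once'' step in the presence of possible relaying. Forwarding through an intermediate processor only increases both the total volume and the send count at the original owner, so it cannot help. Still, the argument should state explicitly that each value is a distinct datum that cannot be synthesized locally: the claim concerns materializing a prescribed permutation of arbitrary data, not computing a function. With that in place, the proposition also shows that the volumes of Theorem~\ref{thm:fused-global-permutation} are tight.
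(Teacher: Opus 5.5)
Your proposal is correct and is essentially the paper's argument: bijectivity of $\pi_{m_1}$ determines how many items keep their owner, which yields $\mu\ge N(1-1/p)$. The paper counts the $N/p$ retained items directly, while you count $N/p^2$ items per ordered pair $(q,q')$. The only real difference is the per-processor bound. The paper averages the aggregate send/receive volume over the $p$ processors, whereas your per-source and per-target count gives $h_q^{\mathrm{send}},h_q^{\mathrm{recv}}\ge (N/p)(1-1/p)$ for \emph{every} $q$. That is a slightly stronger conclusion, and it rests on the same implicit indivisible-item assumption that both proofs use and that you rightly flag.
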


\begin{proof}
For each fixed source column \(c\), exactly \(N_1/p\) row coordinates
\(a\) satisfy
\[
    \left\lfloor
        \frac{\pi_{m_1}(a)}{N_1/p}
    \right\rfloor
    =
    \left\lfloor
        \frac{c}{N_2/p}
    \right\rfloor,
\]
because \(\pi_{m_1}\) is a bijection. Hence exactly
\[
    N_2\frac{N_1}{p}
    =
    \frac{N}{p}
\]
items retain their owner, while all remaining
\[
    N-\frac{N}{p}
\]
items must cross a processor boundary at least once. Averaging the
resulting aggregate send and receive volumes over the \(p\) processors
gives \eqref{fused-permutation-critical-lower-bound} since the maximum over \(p\) processors is at least the average.
\end{proof}

\begin{proposition}[Maximum Per-Processor work lower bound]
\label{prop:max-per-processor-work-lower-bound}
For the fixed-base-\(b\) Cooley--Tukey DAG used by PACO, under the
no-recomputation assumption,
\[
    W_{\max}
    =
    \Omega\!\left(
        \frac{N}{p}\log N
    \right).
\]
\end{proposition}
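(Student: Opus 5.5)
The plan is to count the arithmetic vertices of the fixed-radix Cooley--Tukey DAG and then use the no-recomputation assumption to argue that some processor must evaluate at least a \(1/p\) fraction of them. Following the footnote and Remark~\ref{rmk:fixed-radix-scope}, I will expand every radix-\(b\) butterfly into its \(t=\log_2 b\) radix-2 layers. The resulting DAG is the standard radix-2 FFT DAG on \(N=2^{tm}\) points. It has \(\log_2 N\) butterfly levels, each with \(N\) output vertices, and every such vertex is a nontrivial arithmetic operation: an addition or subtraction combining two predecessors, possibly scaled by a twiddle. The DAG therefore contains \(N\log_2 N=\Theta(N\log N)\) arithmetic vertices. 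Since \(b=O(1)\), this expansion changes the work of a radix-\(b\) butterfly by at most a constant factor, so counting in the radix-2 DAG is faithful up to constants.

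The second step is to note that any correct execution must evaluate every vertex of this DAG at least once. Under no recomputation, each vertex is evaluated \emph{exactly} once, by exactly one processor. Writing \(W_q\) for the work of processor \(q\), we then have \(\sum_{q\in[p]}W_q\ge N\log_2 N\). Because the maximum is at least the average, \(W_{\max}\ge (N/p)\log_2 N=\Omega((N/p)\log N)\).

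The main obstacle is justifying that every arithmetic vertex must actually be performed, rather than skipped or replaced by cheaper operations such as multiplications by trivial twiddles. I will handle this by appealing to the scope fixed in \secref{model}: the computation is the fixed Cooley--Tukey DAG, and its arithmetic vertices are not recomputed. The claim is therefore a statement about executions of that DAG, not about arbitrary DFT algorithms. Under this scope each vertex counts as one unit of work regardless of its twiddle value, and the additions alone already supply \(N\log_2 N\) units. Finally, combining this lower bound with Proposition~\ref{prop:paco-upper-bounds} yields the tight \(\Theta\) bound asserted in Theorem~\ref{thm:main-result}.
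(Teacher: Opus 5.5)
Your proposal is correct and follows the paper's proof: count the $\Theta(N\log N)$ arithmetic vertices of the fixed-radix DAG, use no recomputation to say each is executed by some processor, and conclude that the maximum per-processor work is at least the average, $\Omega((N/p)\log N)$. The only difference is that the paper counts vertices in the fixed-base DAG directly instead of first expanding into radix-2 layers; since $b=O(1)$, your expansion changes only constants.
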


\begin{proof}
The fixed-base FFT DAG contains \(\Theta(N\log N)\) arithmetic
vertices. Without recomputation, these vertices must be executed
somewhere, so at least one processor executes at least their average
number,
\[
    \Omega\!\left(
        \frac{N}{p}\log N
    \right).
\]
\end{proof}

\begin{proposition}[Maximum Per-Processor cache lower bound]
\label{prop:max-per-processor-cache-lower-bound}
Under the exact slab decomposition and no arithmetic recomputation,
the two local FFT phases satisfy
\begin{equation}
    Q_{\max}
    =
    \Omega\!\left(
        \frac{N}{pB}
        \left(
            1+\log_M N
        \right)
    \right).
    \label{eq:paco-critical-cache-lower-bound}
\end{equation}
Consequently, the complete PACO execution satisfies the same lower
bound.
\end{proposition}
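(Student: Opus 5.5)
The plan is to reduce the statement to the classical sequential lower bound of Hong and Kung for the radix-2 FFT DAG. As noted in the footnote on the fixed-radix scope, expanding each radix-\(b\) butterfly gives this DAG. The bound then has to be applied separately to the local DAGs assigned to each processor by the slab decomposition. I will fix an arbitrary processor \(q\). In Phase~I, \(q\) evaluates the complete length-\(N_1\) FFT DAGs for the \(N_2/p\) columns it owns. In Phase~III, it evaluates the complete length-\(N_2\) FFT DAGs for the \(N_1/p\) columns it owns. Because recomputation is excluded and the slab decomposition is exact, these subDAGs are genuinely executed by \(q\) in full.

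The first step is a compulsory-miss term. Processor \(q\) must read its \(N/p\) input items and write \(N/p\) outputs, and its cache is initially empty. It therefore incurs \(\Omega(N/(pB))\) misses, which gives the ``\(1\)'' term.

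The second step is the \(\log_M\) term. I will apply the Hong--Kung red-blue pebbling bound, in the form \(Q=\Omega(n\log n/(B\log M))\) for a length-\(n\) FFT DAG with block transfers. Frigo et al.\ give this as \(\Omega((n/B)\log_M n)\). The bound must be applied to the disjoint union of \(s\) independent FFT DAGs of length \(n\) each. The \(S\)-partition argument gives the following: any segment of \(\Theta(M)\) I/Os can evaluate at most \(O(M\log M)\) vertices of a union of FFT DAGs, since the dominator/minimum-set bound of Hong--Kung is additive over disjoint components. There are \(ns\log n\) vertices in total, so the number of misses is \(\Omega(ns\log n/(B\log M))\), after converting item I/O to block I/O by the standard factor \(B\). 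I will set \((n,s)=(N_1,N_2/p)\) for Phase~I and \((n,s)=(N_2,N_1/p)\) for Phase~III. The two contributions together give \(\Omega\big(\frac{N}{pB}(\log_M N_1+\log_M N_2)\big)=\Omega\big(\frac{N}{pB}\log_M N\big)\).

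Adding the compulsory term gives \eqref{paco-critical-cache-lower-bound} for every processor, so in particular for \(Q_{\max}\). The ``consequently'' part follows because the complete execution contains both local phases, and misses are nonnegative.

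The main obstacle is the block-transfer version of the per-segment bound for a disjoint union of DAGs. Here I will first try invoking the Hong--Kung lemma directly. It states that any vertex set of the radix-2 FFT DAG with dominator size at most \(2M\) has \(O(M\log M)\) vertices, and this is proved per component and summed. Blocks are handled by noting that each miss transfers at most \(B\) items. The subtlety is that block transfers could in principle help beyond a factor \(B\). The tall-cache condition \eqref{tall-cache} ensures \(\log(M/B)=\Theta(\log M)\), which absorbs this.
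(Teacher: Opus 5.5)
Your overall architecture matches the paper's: a segment argument on the expanded radix-2 DAG, additivity over the data-disjoint batch, one application per local phase, and $\log_M N_1+\log_M N_2=\log_M N$. The key lemma differs.

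\begin{itemize}
\item You use Hong--Kung's dominator bound: a segment whose available set has at most $2M$ items computes $O(M\log M)$ vertices.
\item The paper imports the Ranjan--Savage--Zubair input-boundary theorem ($|U|\ge k\log_2(4k)\Rightarrow|\mathrm{in}(U)|\ge k$). It converts this into $\lceil (k-M)/B\rceil$ misses per segment, proves a disjoint-union version with constant $4$, and sets $k=2M$.
\end{itemize}

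Either lemma suffices for an $\Omega$ statement. The paper's form gives an explicit finite bound and the constant-one byproduct. The dominator form is the classical route, and Hong--Kung's partition argument does not rely on excluding recomputation.

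Your stated ``main obstacle'' is not one. A segment of $M/B$ misses brings in at most $M$ items however they are blocked, so the segment count yields the $\frac{ns}{B}\log_M n$ term directly. The tall-cache condition is never needed; a $\log(M/B)$ would only appear if you aimed for the stronger Aggarwal--Vitter form.

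The real gap is that your per-phase bound $\Omega(ns\log n/(B\log M))$ is not unconditional. Counting segments of $M/B$ misses only gives $Q\ge c\,\frac{ns}{B}\log_M n-\frac{M}{B}$. Every segment may begin with $M$ useful items already resident, and the last segment is incomplete.

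For Phase~III this is not a formality. Its cache is not empty at the start, and if $N/p\le M/2$ its whole working set can survive the middle stage. It may then incur no misses at all, contradicting a clean $\Omega(\frac{N}{pB}\log_M N_2)$.

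Also, the compulsory bound and the logarithmic bound both count Phase-I misses. You must combine them via $\max\{x,y\}\ge(x+y)/2$, not by adding them.

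The missing step is the paper's case split:
\begin{itemize}
\item If $N/p\ge\kappa M$ for a large constant $\kappa$, the $-M/B$ terms are absorbed by the linear term.
\item If $N/p<\kappa M$, then $p\mid N_1$ and $p\mid N_2$ give $N_1,N_2\le N/p$. Hence $N\le (N/p)^2<\kappa^2M^2$ and $1+\log_M N=O(1)$. The $\Omega(N/(pB))$ compulsory misses of Phase~I, which starts with an empty cache, then already suffice.
\end{itemize}
Without the observation $N\le (N/p)^2$, the small-$N/p$ regime is left uncovered.
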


\begin{proof}[Proof roadmap]
Each processor evaluates a batch of \(N_2/p\) length-\(N_1\) FFT DAGs
in Phase~I and a batch of \(N_1/p\) length-\(N_2\) FFT DAGs in
Phase~III. Applying an FFT-DAG input-boundary lower bound \cite{RanjanSaZu11} to these
data-disjoint batches yields
\[
    \Omega\!\left(
        \frac{N}{pB}
        \left(
            1+\log_M N_1
        \right)
    \right)
\]
and
\[
    \Omega\!\left(
        \frac{N}{pB}
        \left(
            1+\log_M N_2
        \right)
    \right)
\]
misses, respectively, up to the standard initial-cache term. 
Fix a sufficiently large  constant \(\epsilon\).
If
\[
    \frac{N}{p}\ge \epsilon M,
\]
the \(\Theta(M/B)\) boundary term is absorbed. Otherwise,
\[
    \frac{N}{p}<\epsilon M.
\]
Because \(p\mid N_1\) and \(p\mid N_2\),
\[
    \frac{N_2}{p}\ge 1,
    \qquad
    \frac{N_1}{p}\ge 1.
\]
Hence
\[
    N_1
    \le
    N_1\frac{N_2}{p}
    =
    \frac{N}{p},
    \qquad
    N_2
    \le
    N_2\frac{N_1}{p}
    =
    \frac{N}{p}.
\]
Therefore,
\[
    N
    =
    N_1N_2
    \le
    \left(
        \frac{N}{p}
    \right)^2
    < \epsilon^2 M^2,
\]
and thus \(1+\log_M N=O(1)\). Since Phase~I begins with an
empty cache and accesses \(\Theta(N/p)\) distinct local input items,
it incurs
\[
    \Omega\!\left(
        \frac{N}{pB}
    \right)
\]
compulsory misses, which is the required bound in this case.
\punt{
otherwise
\(N=O(M^2)\), so \(1+\log_M N=O(1)\), and the compulsory scan of the
initial local slab supplies the required linear bound. 
}
The detailed
segment argument is deferred to Appendix.
\end{proof}

Together, Propositions~\ref{prop:paco-upper-bounds}--
\ref{prop:max-per-processor-cache-lower-bound} establish \thmref{main-result} under the scopes
stated following that theorem.

\punt{
These claims should be read with their stated scopes. In particular,
\eqref{fused-permutation-total-lower-bound} is optimal
\emph{for materializing the fused digit-reversal permutation under the
fixed source and target slab ownerships}; it is not a
communication-volume lower bound for all distributed FFT algorithms
with different intermediate ownership schedules or with communicated
linear combinations.
}

As a byproduct, \appref{lb} gives a finite block-aware lower bound for
arbitrary no-recomputation schedules of the radix-2 FFT DAG. In the
stated asymptotic regime, it yields
\[
    Q_{\mathrm{miss}}
    \ge
    \left(
        1-o(1)
    \right)
    \frac{N}{B}
    \log_M N.
\]
This stronger general statement is not needed for the PACO upper-bound
analysis and is therefore kept out of the main proof flow.

\subsection{Related Work}
\label{sec:related-work}

Classical four-step and six-step FFTs decompose a transform into local
FFT groups separated by twiddle factors and transpose-like changes of
view~\cite{Bailey90,VanLoan92}.  Exact distributed-memory FFTs have
further explored the tradeoffs among data allocation, input/output
ordering, and communication volume~\cite{JohnssonJaKr92,IndaBi01}.
These works show that a factor-swapped or otherwise reordered output
can save a transpose.  PACO adopts an explicit factor-swapped slab
contract, but additionally proves that its local computation is fully
cache oblivious and that its sole redistribution materializes the
layout deferred by the local recursions.

Cache-oblivious FFTs obtain ideal-cache locality through recursive
decomposition and recursive layout transformations~\cite{FrigoLePr99}.
PACO uses the same cache-oblivious objective in a distributed setting,
but does not materialize a transpose-like layout transformation at each
recursive split.  Instead, it represents these transformations lazily,
shows that they compose into base-\(b\) digit reversal, and fuses that
permutation with the ownership change between the two top-level FFT
groups.

Several lines of work reduce communication under different numerical,
distributional, or practical contracts.  Approximate distributed
transforms use low-rank structure or oversampling to reduce global
communication~\cite{EdelmanMcTo99,TangPaKi12}.  Exact cyclic-to-cyclic
multidimensional FFTs can also use a single all-to-all
exchange~\cite{KoopmanBi23}; unlike PACO, they retain cyclic ownership
rather than returning a factor-swapped slab distribution.  Other
multidimensional FFT frameworks optimize decomposition choices,
transpose order, communication overlap, or MPI datatype-based
redistribution~\cite{DuyOz14,Pekurovsky12,SongHo14,DalcinMoKe19,AyalaToHa20}.
They target practical multidimensional performance and do not establish
PACO's processor-local ideal-cache bound or its fused
digit-reversal-permutation guarantee.

Finally, Hong and Kung introduced the red-blue pebble framework for
FFT I/O lower bounds~\cite{HongKu81}, while Ranjan, Savage, and Zubair
established FFT-DAG boundary results used by our local-cache lower
bound~\cite{RanjanSaZu11}.  BSP and distributed-memory lower-bound work
also gives communication lower bounds for FFT computation under broader
schedule classes~\cite{BilardiScSi18,ScquizzatoSi14}.  
PACO's claims are more specific: one redistribution is necessary under its owner-computes and no-replication assumptions (\propositionref{one-redistribution-lower-bound}), and its migration bound is scoped as in the Scope of optimality paragraph of \secref{model}.
\punt{
PACO's claims are
more specific: one redistribution is necessary under its owner-computes
and no-replication assumptions, whereas its exact migration bound
applies only to the prescribed fused digit-reversal permutation between
the fixed source and target slab ownerships.
}

\subsection{Discussion: Ownership, Permutations, and Composition}
\label{sec:discussion}

PACO's factor-swapped output ownership is an explicit interface
contract, rather than an omitted cost.  The sole redistribution changes
from input-column slabs indexed by the \(N_2\) field to output-column
slabs indexed by the \(N_1\) field, exactly as required to make every
second-phase length-\(N_2\) transform local. 
\punt{
Thus, the one-round result
does not assert that an application requiring the original input slab
ownership can avoid a further exchange.  Rather, PACO isolates the
minimum communication needed to compute the DFT under the stated
factor-swapped target distribution.
}

This contract can nevertheless be useful when subsequent computation can
consume the factor-swapped frequency view directly.  More importantly
for the present result, it separates two issues that are often conflated:
the global ownership change required by the Cooley--Tukey dependency
structure, and the local layout changes introduced by cache-oblivious
recursion.  PACO shows that, for the stated slab decomposition, the
latter need not cause an additional redistribution: their complete
base-\(b\) digit reversal can be fused into the former.  The resulting
middle stage remains a balanced all-to-all, with every source--destination
pair carrying exactly \(N/p^2\) items.

\punt{
\secput{related}{Related Work, Limitations, and Conclusion}

\subsection{Related Work}
\label{sec:related-work}

Four-step and six-step FFT organizations factor a one-dimensional
transform into local transform groups separated by twiddle factors and
transpose-like changes of view~\cite{Bailey90,VanLoan92}. BSP-oriented parallel FFT algorithms have also used cyclic or group-cyclic distributions to
balance work and communication across processors~[9].  These
organizations demonstrate that communication behavior depends
essentially on the chosen ownership schedule; however, they do not
provide the fully cache-oblivious local-layout guarantee established
here. A compatible
four-step implementation can therefore use one global transpose under
a factor-swapped slab distribution. This algebraic property alone does
not provide a cache-oblivious local-memory guarantee: obtaining
contiguous local subproblems may require blocking, strided kernels, or
additional layout transformations. Cache-oblivious FFTs, beginning
with the framework of Frigo et al.~\cite{FrigoLePr99}, instead obtain
locality through recursive layout transformations. PACO combines these
two perspectives by retaining such transformations lazily and
materializing their composed digit permutation at the single required
ownership-change boundary.

Other low-communication distributed DFT methods also reduce the
number of global exchanges, but use different numerical or ownership
contracts. Edelman, McCorquodale, and Toledo~\cite{EdelmanMcTo99}
exploit low-rank structure and fast multipole ideas to obtain
tunable-accuracy communication-reducing transforms, with additional
hierarchical communication. Tang et al.~\cite{TangPaKi12} develop
single-all-to-all FFT frameworks based on convolution and
oversampling; their practical constructions also require
neighbor-level communication. In contrast, PACO is an exact
bounded-radix Cooley--Tukey FFT, performs no interprocessor
communication outside its sole BSP redistribution, and returns
canonical coefficients under an explicitly factor-swapped target slab
ownership.

Hong and Kung introduced the red-blue pebble framework for FFT I/O
lower bounds~\cite{HongKu81}, while Ranjan, Savage, and Zubair established the
FFT-DAG boundary result used for PACO's local batches~\cite{RanjanSaZu11}. PACO's
migration bound is narrower: it counts the items that must change
owner when materializing the prescribed fused permutation under the
fixed slab ownerships.
}

\subsection{Limitations}
\label{sec:limitations}

PACO assumes a fixed power-of-two radix and exact divisibility:
\(N=N_1N_2=b^m\), with \(p\), \(N_1\), and \(N_2\) aligned to the
base-\(b\) slab decomposition. Arbitrary sizes and processor counts
require an imbalance analysis. The algorithm returns canonical
coefficient indices under factor-swapped ownership; restoring the
original ownership may require another redistribution. The present
analysis is limited to one-dimensional slab decompositions, rather
than pencil or block distributions. Finally, it assumes homogeneous
ideal-cache and BSP parameters and excludes arithmetic recomputation
from its work and cache optimality claims.

\subsection{Conclusion}
\label{sec:conclusion}

PACO shows that cache-oblivious local FFT execution and a
one-redistribution parallel organization can coexist when recursive
layout changes are represented lazily and fused with the unavoidable
ownership change. Under the stated contract, it achieves optimal maximum per-processor work and cache complexity, uses the minimum number
of global redistribution rounds, and attains the exact migration cost
of its prescribed fused permutation. Extending these guarantees to
more general ownership distributions and processor configurations
remains open.

\balance
\clearpage

\bibliographystyle{ACM-Reference-Format}
\bibliography{papers}
\balance

\clearpage 
\appendix
\section{Deferred Layout Algebra and Local FFT Correctness}
\label{app:layout}

This appendix formalizes the deferred-layout interpretation used by the
local PACO FFT recursion. Its purpose is to prove that recursive transform
splits do not materialize physical transposes, and that the resulting lazy
layouts are precisely base-$b$ digit reversals. In particular, we establish
the two local operator identities used in Sect.~3:
\[
  \mathcal{A}_\ell = P_{\pi_\ell}\,\Omega_{b^\ell}
  \qquad\text{and}\qquad
  \mathcal{B}_\ell = \Omega_{b^\ell}\,P_{\pi_\ell}.
\]
The global materialization and redistribution of these layouts are handled
separately in Appendix~\ref{app:local-fft}.

\subsection{Coordinate, Digit-Field, and Layout Conventions}
\label{app:conv}

Fix the constant radix $b=2^{t}$. An $\ell$-digit base-$b$ string is written
from most significant to least significant digit,
\[
  u = d_{\ell-1}\pplus d_{\ell-2}\pplus\cdots\pplus d_0,\qquad d_r\in[b],
\]
where $\pplus$ denotes digit-field concatenation: if $U$ and $V$ have lengths
$\ell_U$ and $\ell_V$, then $U\pplus V$ is the $(\ell_U+\ell_V)$-digit field
whose digits in $U$ are more significant than those in $V$.

For $0\le\ell$, the base-$b$ digit-reversal permutation
$\pi_\ell:[b^\ell]\to[b^\ell]$ is
\[
  \pi_\ell(d_{\ell-1}\pplus\cdots\pplus d_0)=d_0\pplus\cdots\pplus d_{\ell-1},
\]
an involution: $\pi_\ell^{-1}=\pi_\ell$ and $\pi_\ell^2=\id{id}$.
For a permutation $\pi:[n]\to[n]$, the physical-to-logical layout operator is
$(P_\pi z)[i]=z[\pi(i)]$; thus, under layout $P_\pi$, the value at physical
position $i$ represents logical coordinate $\pi(i)$. Since digit reversal is
an involution,
\begin{equation}\label{eq:invol}
  P_{\pi_\ell}^{2}=I.
\end{equation}

We distinguish throughout:
\begin{itemize}
  \item \emph{Physical digit slots}: fixed base-$b$ positions in the physical
        address of an item.
  \item \emph{Logical transform-input digits}: the input coordinate of an
        active local transform.
  \item \emph{Logical output-frequency digits}: a transform frequency
        coordinate after a child transform has completed.
  \item \emph{Batch digits}: distinguish independent transform instances.
  \item \emph{Fixed digits}: selected by a recursive batch branch and hence
        holding one fixed value within that branch.
\end{itemize}
A transform split may change the logical role of a physical digit slot (e.g.
from a transform-input digit to an outer-batch digit); it never moves the
item stored at that physical address.

\subsection{Layout-Aware Twiddle Operator}
\label{app:twiddle}

Consider one local two-factor Cooley--Tukey split
$n=n_1n_2$, $n_1=b^{\ell_1}$, $n_2=b^{\ell_2}$, $\ell=\ell_1+\ell_2$, with
$j=j_1n_2+j_2$ and $k=k_1+k_2n_1$. Suppose that after the first
child-transform group the physical coordinate $a\in[n_1]$ represents the
logical first-stage frequency $k_1=\pi_{\ell_1}(a)$. The twiddle
multiplication must use this represented logical coordinate rather than $a$:
\begin{equation}\label{eq:twop}
  \bigl(D^{\mathrm{view}}_{\mathrm{tw}}z\bigr)[a,j_2]
  =\omega_n^{-\pi_{\ell_1}(a)\,j_2}\,z[a,j_2].
\end{equation}
Equivalently, the ordinary diagonal twiddle operator is conjugated by the
current layout permutation and evaluated lazily from the view descriptor; no
physical permutation is needed. The correctness of this lazy evaluation is
established in \appref{outid}.

\subsection{Admissible Digit-Field Views}
\label{app:views}

Let $D=\{d_0,\dots,d_{D-1}\}$ be the digit labels of the physical base-$b$
slots of a root local array (slot $0$ least significant). A recursive view is
a tuple $V=(T,O,C,F,P,\mathrm{mode})$:
\begin{itemize}
  \item $T$: ordered active transform digit field;
  \item $O$: ordered list of active outer-batch digit fields;
  \item $C$: distinguished contiguous batch field;
  \item $F$: partial assignment from fixed digit labels to values in $[b]$;
  \item $P$: bijection from current digit labels to physical digit slots;
  \item $\mathrm{mode}\in\{\textsc{OutputLayout},\textsc{InputLayout}\}$.
\end{itemize}
The active labels in $T,O,C$ and the fixed labels in $\id{dom}(F)$ are disjoint
and cover the root digit labels. A label may be renamed when its logical role
changes, but its physical slot under $P$ is unchanged. Let
\[
  n=b^{|T|},\quad o=b^{|O|},\quad c=b^{|C|},\quad s=oc,
\]
so the view represents $s$ transforms of length $n$; a logical item has
coordinate $(\tau,\eta,\chi)\in[n]\times[o]\times[c]$.

For a digit label $d$, define $\id{stride}_P(d)=b^{\id{pos}_P(d)}$. The active digits
of $C$ occupy the least-significant contiguous physical interval, slots
$0,\dots,|C|-1$ (fixed digits may occupy immediately higher slots without
interrupting the remaining active interval). The address map is
\begin{align}
\func{addr}_V(\tau,\eta,\chi)&=\id{base}_0
    +\!\!\sum_{d\in\func{dom}(F)}\!\! F(d)\,\func{stride}_P(d)\nonumber\\
    &\quad +\sum_{d\in T}\id{digit}_d(\tau)\,\func{stride}_P(d)\nonumber\\
    &\quad +\sum_{d\in O}\id{digit}_d(\eta)\,\func{stride}_P(d)+\chi.\label{eq:addr}
\end{align}
For fixed $(\tau,\eta,F)$, set $\id{base}_{\tau,\eta,F}=\func{addr}_V(\tau,\eta,0)$;
then $\func{addr}_V(\tau,\eta,\chi)=\id{base}_{\tau,\eta,F}+\chi$, so varying $\chi$
traverses a contiguous run.

\begin{definition}[Admissible digit-field view]\label{eq:def}
A view is \emph{admissible} if:
(1) its active and fixed labels partition the root digit slots;
(2) $P$ assigns each current label a unique physical slot;
(3) the active digits of $C$ form one least-significant contiguous interval;
(4) its address map is \eqref{addr}; and
(5) every logical coordinate maps to a distinct physical item.
\end{definition}

\begin{lemma}[Active-run decomposition]\label{eq:runs}
An admissible view with $s$ transforms of length $n$ is the disjoint union of
$ns/c$ contiguous physical runs, each of length $c$.
\end{lemma}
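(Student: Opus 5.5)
The plan is to read the decomposition straight off the address map \eqref{addr}, treating the pair $(\tau,\eta)$ together with the fixed assignment $F$ as the index of a run and $\chi$ as the offset inside it. For a fixed view $V$, the fixed digits contribute one constant, $\sum_{d\in\mathrm{dom}(F)}F(d)\,\mathrm{stride}_P(d)$. So for each pair $(\tau,\eta)\in[n]\times[o]$ we set
\[
  R_{\tau,\eta}=\{\mathrm{addr}_V(\tau,\eta,\chi):\chi\in[c]\}.
\]
The identity $\mathrm{addr}_V(\tau,\eta,\chi)=\mathrm{base}_{\tau,\eta,F}+\chi$, which follows directly from \eqref{addr}, shows that $R_{\tau,\eta}$ is the contiguous interval $[\mathrm{base}_{\tau,\eta,F},\,\mathrm{base}_{\tau,\eta,F}+c)$ of length exactly $c$.

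Next I count the runs. There are $n\cdot o$ pairs $(\tau,\eta)$, and $s=oc$, so $no=ns/c$. The union of all $R_{\tau,\eta}$ is the image of $[n]\times[o]\times[c]$ under $\mathrm{addr}_V$, which is exactly the set of physical items the view represents.

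It remains to show that the runs are pairwise disjoint and that each really has $c$ distinct addresses. Both facts come from condition (5) of Definition~\ref{eq:def}: $\mathrm{addr}_V$ is injective on logical coordinates. Injectivity gives $|R_{\tau,\eta}|=c$, and it gives $R_{\tau,\eta}\cap R_{\tau',\eta'}=\emptyset$ whenever $(\tau,\eta)\neq(\tau',\eta')$. An independent check comes from conditions (2) and (3). The active digits of $C$ fill slots $0,\dots,|C|-1$, so $\chi<c=b^{|C|}$. Every $T$ or $O$ digit sits at a unique slot of position at least $|C|$, so its stride is a multiple of $c$. Hence $\mathrm{base}_{\tau,\eta,F}$ is a multiple of $c$ (with $\mathrm{base}_0$ aligned), distinct $(\tau,\eta)$ give distinct bases by uniqueness of base-$b$ digits, and the intervals $[\mathrm{base},\mathrm{base}+c)$ are aligned and disjoint.

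The only delicate point is fixed digits lying in slots directly above $C$, as the parenthetical in the conventions allows. Their contribution is the same constant for every run of the view, so it shifts all bases uniformly and leaves both contiguity and disjointness intact. With that noted, the argument is routine bookkeeping, and I expect no real obstacle beyond stating the alignment carefully.
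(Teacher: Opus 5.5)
Your proof is correct and follows the paper's argument. Both take one run per pair $(\tau,\eta)\in[n]\times[o]$, get contiguity from $\mathrm{addr}_V(\tau,\eta,\chi)=\mathrm{base}_{\tau,\eta,F}+\chi$, count $no=ns/c$ runs, and use injectivity (condition (5)) for disjointness, with covering holding by definition. Your alignment cross-check is extra but sound, and it does not need $\mathrm{base}_0$ to be aligned: the difference of two distinct bases is already a nonzero multiple of $c$, so the length-$c$ intervals cannot overlap.
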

\begin{proof}
There is one run per pair $(\tau,\eta)\in[n]\times[o]$; varying $\chi\in[c]$
gives a consecutive interval of length $c$. There are $no=ns/c$ such pairs,
and injectivity of the address map makes the intervals disjoint and covering.
\end{proof}

The distinguished field $C$ is retained to expose spatial locality even when
the active transform field is strided; this is used by the cache analysis in
Appendix~\ref{app:colocal}.

\subsection{Legal View Operations}
\label{app:ops}

The scheduler uses outer-batch splits, contiguous-field splits, and transform
splits. Batch splits partition batch instances into data-disjoint children; a
transform split creates two sequential child groups reinterpreting the same
parent working set.

\paragraph{Outer-batch split.}
If $d\in O$ is selected, the $r$-th child ($r\in[b]$) has
$T_r=T$, $O_r=O\setminus\{d\}$, $C_r=C$, $F_r=F\cup\{d\mapsto r\}$; $P$ and
$\mathrm{mode}$ are unchanged. The $b$ children are data disjoint.

\paragraph{Contiguous-field split.}
Used only when $O=\emptyset$. Write $C=d\pplus C'$ with $d$ the most significant
active digit of $C$; the $r$-th child has $T_r=T$, $O_r=\emptyset$, $C_r=C'$,
$F_r=F\cup\{d\mapsto r\}$. The child width is $c/b$, and $C'$ still occupies
the least-significant contiguous interval.

\paragraph{Transform split.}
Decompose $T=T_1\pplus T_2$ with $\ell_1=|T_1|$, $\ell_2=|T_2|$,
$n_1=b^{\ell_1}$, $n_2=b^{\ell_2}$, $n=n_1n_2$, and write $j=j_1n_2+j_2$.
The first child group performs length-$n_1$ transforms over $j_1$:
\begin{equation}\label{eq:firstfields}
  T_{\mathrm{I}}=T_1,\quad O_{\mathrm{I}}=T_2\pplus O,\quad C_{\mathrm{I}}=C,
\end{equation}
with $sn_2$ transforms and inherited address map
\begin{equation}\label{eq:firstaddr}
    \func{addr}_{V_{\mathrm{I}}}\!\bigl(j_1,(j_2,\eta),\chi\bigr)=\func{addr}_V\bigl(j_1n_2+j_2,\eta,\chi\bigr).
\end{equation}
After the first child completes, the slots formerly holding $T_1$ contain
first-stage outputs; denote this physical field $A$. This is a semantic
relabeling only:
\begin{equation}\label{eq:relabel}
  \func{pos}_P(A_j)=\func{pos}_P\bigl((T_1)_j\bigr).
\end{equation}
In output-layout mode a physical coordinate $a\in[n_1]$ of $A$ represents
$k_1=\pi_{\ell_1}(a)$. The second child group performs length-$n_2$
transforms over $T_2$:
\begin{equation}\label{eq:secondfields}
  T_{\mathrm{II}}=T_2,\quad O_{\mathrm{II}}=A\pplus O,\quad C_{\mathrm{II}}=C,
\end{equation}
with $sn_1$ transforms and address map
\begin{equation}\label{eq:secondaddr}
  \func{addr}_{V_{\mathrm{II}}}\!\bigl(j_2,(a,\eta),\chi\bigr)
  =\func{addr}_V\bigl(an_2+j_2,\eta,\chi\bigr).
\end{equation}
The field $T_2$ may be strided past the physical field $A$; $P$ records this
interpretation, and no exchange of the two fields is performed. The twiddle
$\omega_n^{-\pi_{\ell_1}(a)\,j_2}$ is fused into the first access of the
corresponding second-child input, as in \eqref{twop}.

\subsection{Closure Under Arbitrary Split Interleavings}
\label{app:closure}

\begin{theorem}[Closure of admissible views]\label{eq:thm7}
Starting from an admissible root view, every finite sequence of legal
outer-batch, contiguous-field, and transform splits produces admissible child
views. Moreover:
(1) active and fixed labels continue to partition the physical slots;
(2) each label retains its physical slot under $P$, except for the semantic
relabeling of a transform-input field as a transform-output field;
(3) $C$ remains one least-significant contiguous interval;
(4) the children of every batch split are data disjoint and partition the
parent address set;
(5) each transform child group is a bijective reindexing of the complete
parent working set; and
(6) invariant \eqref{addr} is preserved under arbitrary interleavings.
\end{theorem}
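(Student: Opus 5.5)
The proof is by induction on the length of the split sequence. The root view is admissible by hypothesis, so it suffices to show that each of the three legal operations maps an admissible view \(V=(T,O,C,F,P,\mathrm{mode})\) to admissible children satisfying (1)--(6). Properties (1)--(3) and (6) are then inherited along every path, since each step preserves them. Properties (4) and (5) are local to a single parent--child step, so they need only be checked once per operation.

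\textbf{Batch splits.} For an outer-batch split on \(d\in O\), the child removes \(d\) from \(O\) and adds \(d\mapsto r\) to \(F\), leaving \(P\) unchanged. The active and fixed labels therefore still partition the slots, and every slot is retained. The child address map is the parent map \eqref{addr} with \(\id{digit}_d(\eta)\) frozen to \(r\). That is exactly the term \(F(d)\,\func{stride}_P(d)\), so \eqref{addr} holds for the child. Injectivity is inherited by restriction. The \(b\) children correspond to the \(b\) values of one address digit, so they are disjoint, and their union is the parent address set, which gives (4).

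The contiguous-field split works the same way. The removed digit is the most significant active digit of \(C\), so \(C'\) still occupies slots \(0,\dots,|C|-2\). This gives (3). Fixing \(d\) contributes \(r\,b^{|C|-1}\) to the base, consistent with \eqref{addr} once \(\chi\) ranges over \([c/b]\). The mild subtlety is the parenthetical in \appref{views}: a fixed digit may sit immediately above the active \(C\) interval. I would phrase (3) accordingly, as saying that the active digits of \(C\) form a least-significant contiguous interval.

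\textbf{Transform splits.} For \(T=T_1\pplus T_2\) I would verify \eqref{firstaddr} and \eqref{secondaddr} by direct expansion. Since \(\id{digit}_{d}(j_1n_2+j_2)\) equals a digit of \(j_1\) for \(d\in T_1\) and a digit of \(j_2\) for \(d\in T_2\), the parent sum over \(T\) splits as a sum over \(T_1\) plus a sum over \(T_2\). Moving \(T_2\) into the outer-batch list (child I), or moving \(A\) into it (child II), then reproduces \eqref{addr} with no change to \(P\). By \eqref{relabel}, \(A\) inherits the slots of \(T_1\), and this relabeling is the only exception allowed in (2). Each child map is the parent map composed with the bijection \((j_1,(j_2,\eta),\chi)\mapsto(j_1n_2+j_2,\eta,\chi)\), or with \(a\) in place of \(j_1\). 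Each child is therefore a bijective reindexing of the full parent working set, which gives (5) and injectivity. \(C\) is untouched, so (3) holds.

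\textbf{Main obstacle.} The step I expect to be hardest is keeping the bookkeeping rigorous under arbitrary interleavings, in particular once \(O\) contains composite fields such as \(T_2\pplus O\) or \(A\pplus O\). I would handle this by treating \(O\) as a flat ordered list of digit labels and \(\eta\) as its mixed-radix coordinate. Every operation then becomes a relabeling of terms in a single sum \(\sum_d \id{digit}_d(\cdot)\,\func{stride}_P(d)\), with no physical movement. With that view, preservation of the invariant \eqref{addr} reduces to checking that each operation moves labels between \(T\), \(O\), \(C\) and \(\func{dom}(F)\) without altering \(P\), which is what gives (6).
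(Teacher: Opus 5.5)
Your proposal is correct and follows the same route as the paper. Both argue by induction on the number of view operations, checking that each outer-batch, contiguous-field, and transform split only moves labels among $T$, $O$, $C$, and the fixed set without changing $P$; the two transform-child address maps serve as bijective reindexings of the parent working set, the $T_1\to A$ renaming preserves slots, and your explicit splitting of the address-map digit sum just spells out what the paper's brief proof asserts.
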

\begin{proof}
Induction on the number of operations; the base case is Definition~\ref{eq:def}.
An outer-batch split moves one digit from $O$ to $F$ with a fixed value,
leaving $T,C$ and all physical slots unchanged, and its $b$ values partition
the parent batch. A contiguous-field split moves the most significant digit
of $C$ to $F$; the remaining active digits still form the least-significant
interval, and the $b$ values partition every run into $b$ subruns. A transform
split has child maps \eqref{firstaddr} and \eqref{secondaddr}, each a
bijective reindexing of the parent set; it does not change $F$ or the slots of
$C$, and the relabeling $T_1\to A$ preserves slots by \eqref{relabel}.
Thus every operation preserves Definition~\ref{eq:def}.
\end{proof}

The fixed assignment $F$ is necessary because batch and transform splits can
interleave. For example, with physical digit order $[a_1][a_0][c_1][c_0]$ and
$T=[a_1,a_0]$, $O=\emptyset$, $C=[c_1,c_0]$, $F=\emptyset$, a contiguous-field
split fixing $c_1=d$ gives $C=[c_0]$, $F=\{c_1\mapsto d\}$; splitting
$[a_1,a_0]=[a_1]\pplus[a_0]$ makes the second child's transform field $[a_0]$
with outer-batch field $[A_1]$; a later batch split fixing $A_1=e$ yields
$T=[a_0]$, $O=\emptyset$, $C=[c_0]$, $F=\{c_1\mapsto d,\ A_1\mapsto e\}$.
Although the fixed slots need not be adjacent, varying $c_0$ still traverses a
contiguous run---exactly the information recorded by $F$ and $P$.

\subsection{Transform Splits Are View Transformations}
\label{app:notranspose}

\begin{theorem}[No physical transposes inside a local transform]\label{eq:thm8}
No transform split in the local PACO recursion materializes a physical
permutation. The transition from the first child group to the second changes
only (1) the logical role of the physical $T_1$ slots, renamed as $A$;
(2) the active transform field, from $T_1$ to $T_2$; and (3) the descriptor
$P$. All items remain in their original physical positions.
\end{theorem}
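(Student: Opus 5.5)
The proof goes by structural induction on the recursion tree of legal view operations. We rely on the closure result, Theorem~\ref{eq:thm7}, which already guarantees that every child view is admissible and that each label keeps its physical slot under \(P\), apart from semantic relabeling. The task is therefore to show that the one place where a conventional Cooley--Tukey recursion would physically move data, namely the transition between the two child groups of a transform split, involves no data movement in the PACO scheduler. Batch splits (outer-batch and contiguous-field) are dismissed first: by their definitions in \appref{ops} they only move a digit label from \(O\) or \(C\) into \(F\) with a fixed value. They leave \(P\) and every address unchanged, so they cannot move items.

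For a transform split, we compare the parent address map \eqref{addr} with the two child maps \eqref{firstaddr} and \eqref{secondaddr}. Both child maps are defined as compositions with \(\func{addr}_V\). Hence an item at a given physical address in the parent is reached by the first child group at exactly that address, and after relabeling \(T_1\to A\) via \eqref{relabel}, by the second child group at that same address. In particular, the second group reads the first group's outputs in place: \(\func{addr}_{V_{\mathrm{II}}}(j_2,(a,\eta),\chi)=\func{addr}_V(an_2+j_2,\eta,\chi)\) is precisely the slot into which the first group wrote the output with physical frequency coordinate \(a\). We then list what changes between the groups. The logical role of the \(T_1\) slots becomes \(A\), which is item (1). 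The active field changes from \(T_1\) to \(T_2\), and \(A\pplus O\) becomes the outer batch, which is item (2). The descriptor \(P\) records the renaming, which is item (3). No step writes an item to a different address. The twiddle is also non-moving: by \eqref{twop} it is a diagonal multiplication evaluated from the represented logical coordinate \(\pi_{\ell_1}(a)\), applied at the first access of each second-child input, so it rescales values in place.

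Next we check the leaves. A constant-size natural-order DFT kernel overwrites its \(O(1)\) inputs at the same addresses; it is in-place and not a permutation of the surrounding array. Induction over the recursion tree then shows that every item stays in its root physical position throughout. The point of the theorem is the \emph{absence} of the final digit-field swap: the item that a materialized algorithm would store at \(k_1+k_2n_1\) is instead left at physical \(a\pplus j_2\)-position slots. That is why the output layout is given by Lemma~\ref{lem:lazy-layout-fft} rather than by the natural order.

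The main obstacle is showing that \(P\) remains well defined, so that the in-place reinterpretation is consistent when \(T_2\) is strided past \(A\) and after arbitrary interleavings with batch splits that have already fixed some digits. I would handle this with the bijectivity clause (5) of Theorem~\ref{eq:thm7}. Since each child map is a bijective reindexing of the parent's address set and never creates a new address, the composition of the descriptors along any root-to-node path is a bijection on the physical slots. The theorem's claim that all items remain in their original positions then follows from the fact that the only operations performed on memory are in-place leaf kernels and in-place diagonal scalings.
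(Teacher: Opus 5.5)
Your proposal is correct and follows essentially the same route as the paper. Both arguments rest on three points: the two child address maps are defined by composition with the parent map $\mathrm{addr}_V$, the relabeling of $T_1$ as $A$ preserves physical slots, and this observation applies recursively at every transform-split node; your extra checks (batch splits, in-place leaf kernels, the in-place twiddle scaling, and the bijectivity clause of the closure theorem) just make explicit what the paper's short proof leaves implicit.
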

\begin{proof}
The child maps \eqref{firstaddr} and \eqref{secondaddr} refer only to
physical addresses of the parent. The first child overwrites each parent item
with its first-stage result; relabeling $T_1\to A$ preserves positions by
\eqref{relabel}; the second child accesses the inherited $T_2$ slots through
view-defined strides while treating $A$ as outer-batch. Hence no
address-to-address copy, transpose, or recursive layout permutation occurs,
and the argument applies recursively at every transform-split node.
\end{proof}

The only physical digit permutation in the complete PACO execution is the
fused middle permutation of Sect.~4.2 and Appendix~\ref{app:global}; the
macro-element digit reversals used there are components of that single
materialization, not recursive transposes.

\subsection{Output-Layout Local FFT Correctness}
\label{app:outid}

Let a view contain $s$ data-disjoint transforms of length
$n=b^\ell=n_1n_2$, and for batch instance $q$ let $Y[k,q]=\sum_j V[j,q]\,
\omega_n^{-kj}$.

\begin{theorem}[Output-layout local FFT identity]\label{eq:thm9}
For every $\ell\ge0$ the output-layout traversal implements
$\mathcal{A}_\ell=P_{\pi_\ell}\Omega_{b^\ell}$; equivalently, for a physical
output coordinate $r$, $Y^{\mathrm{phys}}[r,q]=Y[\pi_\ell(r),q]$.
\end{theorem}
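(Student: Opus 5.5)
We argue by strong induction on $\ell$, tracking a single batch instance $q$; batch splits are data disjoint (Theorem~\ref{eq:thm7}, item (4)), so they commute with the claimed identity and may be ignored. For the base case, a constant-size leaf (including $\ell\le1$, where $\pi_\ell=\id{id}$) is evaluated by the natural-order DFT kernel, so $Y^{\mathrm{phys}}[r,q]=Y[r,q]=Y[\pi_\ell(r),q]$. If the scheduler stops at a leaf of larger constant length, we regard that kernel as returning its output under $\pi_\ell$ (equivalently, we unfold it into radix-$b$ splits), so the base case again holds.

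For the inductive step, we take the transform split $n=n_1n_2$ with $\ell=\ell_1+\ell_2$, $j=j_1n_2+j_2$ and $k=k_1+k_2n_1$, and the child views \eqref{firstaddr} and \eqref{secondaddr}. Since $\ell_1,\ell_2<\ell$, the induction hypothesis applies to the first child group, treating $j_2$ as a batch coordinate. After it, the physical field $A$ at coordinate $a$ holds
\[
  \textstyle\sum_{j_1}V[j_1n_2+j_2,q]\,\omega_{n_1}^{-\pi_{\ell_1}(a)j_1}.
\]
The fused twiddle \eqref{twop} multiplies this entry by $\omega_n^{-\pi_{\ell_1}(a)j_2}$. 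This is the correct Cooley--Tukey twiddle because it is evaluated at the represented frequency $k_1=\pi_{\ell_1}(a)$ (Appendix~\ref{app:twiddle}). The second child group runs length-$n_2$ transforms over $j_2$ with $a$ as batch coordinate. Applying the induction hypothesis again, the physical slot $r_2$ of the $T_2$ field holds, at physical $(a,r_2)$,
\[
  \textstyle\sum_{j_2}\bigl(\cdots\bigr)\,\omega_n^{-\pi_{\ell_1}(a)j_2}\,\omega_{n_2}^{-\pi_{\ell_2}(r_2)j_2}.
\]
By the two-factor identity \eqref{two-factor-ct}, specialized to length $n$, this equals $Y[\pi_{\ell_1}(a)+\pi_{\ell_2}(r_2)n_1,q]$.

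It remains to match addresses. Physically, the item sits at transform-field position $r=a\pplus r_2=a\,n_2+r_2$, since Theorem~\ref{eq:thm8} guarantees that nothing is moved and \eqref{relabel} keeps $A$ in the $T_1$ slots. The logical index is $\pi_{\ell_2}(r_2)\,n_1+\pi_{\ell_1}(a)=\pi_{\ell_2}(r_2)\pplus\pi_{\ell_1}(a)$, and this is exactly $\pi_\ell(a\pplus r_2)$ by the digit-field splitting of reversal (the same identity used in the proof of Theorem~\ref{thm:paco-correctness}). Hence $Y^{\mathrm{phys}}[r,q]=Y[\pi_\ell(r),q]$, which is $\mathcal A_\ell=P_{\pi_\ell}\Omega_{b^\ell}$.

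The main obstacle is the bookkeeping in this last step. The inductive hypothesis for each child is stated in the child's own view, where the transform field may be strided and interleaved with fixed and outer-batch digits. We must justify that the "physical coordinate $r$" of the child is the digit string in the child's $T$ slots, read in significance order. We would handle this via the address map \eqref{addr} and Theorem~\ref{eq:thm7}: the hypothesis is formulated in terms of $T$-field digit values rather than raw addresses, and the $T_1$ slots are more significant than the $T_2$ slots within $T$, since $T=T_1\pplus T_2$.
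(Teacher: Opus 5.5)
Your proposal is correct and follows the paper's proof essentially step for step: induction on $\ell$, the first child's output layout $\pi_{\ell_1}$, the twiddle evaluated at the represented frequency $\pi_{\ell_1}(a)$, the second child's layout $\pi_{\ell_2}$, and the digit-splitting identity $\pi_\ell(an_2+c)=\pi_{\ell_1}(a)+\pi_{\ell_2}(c)n_1$. Citing the two-factor Cooley--Tukey identity instead of re-completing the exponent, as the paper does, is only a cosmetic difference. Your caveat about leaves of larger constant length is vacuous, because the scheduler's base case fires only when $n\le b$, i.e.\ $\ell\le 1$ where $\pi_\ell=\mathrm{id}$; note also that simply ``regarding'' a natural-order kernel of length $\ge b^2$ as digit-reversed would not have been a valid move, and only your unfolding alternative would.
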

\begin{proof}
Induction on $\ell$. For $\ell=0$ there is no transform coordinate; for
$\ell=1$ the routine performs one base-$b$ DFT and $\pi_1=\id{id}$. Let $\ell>1$
and split $\ell=\ell_1+\ell_2$, $j=j_1n_2+j_2$, $k=k_1+k_2n_1$. The first
child computes, for each $(j_2,q)$,
\begin{equation}\label{eq:H}
  H[k_1,j_2,q]=\sum_{j_1=0}^{n_1-1}V[j_1n_2+j_2,q]\,\omega_{n_1}^{-k_1j_1},
\end{equation}
so by induction its physical coordinate $a$ represents $k_1=\pi_{\ell_1}(a)$:
$H^{\mathrm{phys}}[a,j_2,q]=H[\pi_{\ell_1}(a),j_2,q]$. Applying the twiddle,
$Z[k_1,j_2,q]=H[k_1,j_2,q]\,\omega_n^{-k_1j_2}$, gives
\begin{equation}\label{eq:Zphys}
  Z^{\mathrm{phys}}[a,j_2,q]=Z[\pi_{\ell_1}(a),j_2,q].
\end{equation}
The second child computes, for each $(k_1,q)$,
$Y[k_1+k_2n_1,q]=\sum_{j_2}Z[k_1,j_2,q]\,\omega_{n_2}^{-k_2j_2}$. Substituting
\eqref{H} and the twiddle,
\[
  Y[k_1+k_2n_1,q]
  =\sum_{j_2=0}^{n_2-1}\sum_{j_1=0}^{n_1-1}
      V[j_1n_2+j_2,q]\,
      \omega_{n_1}^{-k_1j_1}\,\omega_n^{-k_1j_2}\,\omega_{n_2}^{-k_2j_2}.
\]
Using $\omega_n^{n_2}=\omega_{n_1}$ and $\omega_n^{n_1}=\omega_{n_2}$, the
three factors combine into $\omega_n^{-(k_1j_1n_2+k_1j_2+k_2n_1j_2)}$.
Adjoining the cross term $\omega_n^{-k_2n_1j_1n_2}
=\omega_n^{-k_2j_1(n_1n_2)}=(\omega_n^{\,n})^{-k_2j_1}=1$ completes the
exponent, so
\begin{equation}\label{eq:CT}
  Y[k_1+k_2n_1,q]
  =\sum_{j_1,j_2}V[j_1n_2+j_2,q]\,
    \omega_n^{-(k_1+k_2n_1)(j_1n_2+j_2)},
\end{equation}
which, as $j=j_1n_2+j_2$ ranges over $[n]$, is the length-$n$ DFT at frequency
$k_1+k_2n_1$. By \eqref{Zphys} the second child's input for fixed physical
$a$ is $j_2\mapsto Z[\pi_{\ell_1}(a),j_2,q]$; by induction its physical output
coordinate $c$ represents $k_2=\pi_{\ell_2}(c)$, whence
\begin{equation}\label{eq:finalfreq}
  Y^{\mathrm{phys}}[a,c,q]=Y\bigl[\pi_{\ell_1}(a)+\pi_{\ell_2}(c)n_1,\,q\bigr].
\end{equation}
Finally, the parent physical coordinate $r=an_2+c$ has digit string
$(\text{digits of }a)\pplus(\text{digits of }c)$, so
$\pi_\ell(an_2+c)=\pi_{\ell_1}(a)+\pi_{\ell_2}(c)n_1$; combining with
\eqref{finalfreq} yields $Y^{\mathrm{phys}}[r,q]=Y[\pi_\ell(r),q]$.
\end{proof}

\subsection{Input-Layout Local FFT Correctness}
\label{app:inid}

\begin{theorem}[Input-layout local FFT identity]\label{eq:thm10}
For every $\ell\ge0$ the input-layout traversal implements
$\mathcal{B}_\ell=\Omega_{b^\ell}P_{\pi_\ell}$. In particular, if
$x^{\mathrm{phys}}=P_{\pi_\ell}x$, then
$\mathcal{B}_\ell x^{\mathrm{phys}}=\Omega_{b^\ell}x$.
\end{theorem}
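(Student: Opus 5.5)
The plan is a short algebraic derivation from Theorem~\ref{eq:thm9}, using the involution property \eqref{invol} and the fact that Theorem~\ref{eq:thm9} holds for every split. A direct induction mirroring that proof is kept as a fallback.

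\emph{Algebraic route.} Conjugating $\mathcal{A}_\ell$ by $P_{\pi_\ell}$ gives
\[
P_{\pi_\ell}\mathcal{A}_\ell P_{\pi_\ell}=P_{\pi_\ell}^2\,\Omega_{b^\ell}P_{\pi_\ell}=\Omega_{b^\ell}P_{\pi_\ell}.
\]
The operator identity is then immediate. The real task is to show that the \emph{input-layout traversal} actually realizes this conjugate.

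I would establish that by induction on $\ell$, with the same split $\ell=\ell_1+\ell_2$, $j=j_1n_2+j_2$, $k=k_1+k_2n_1$ as in Theorem~\ref{eq:thm9}. The base cases are the same as there: $\ell\le1$ gives $\pi_\ell=\id{id}$.

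For the inductive step, the physical input coordinate is $r=a n_2'+c$ under the reversed digit fields. The digit string of $r$ splits as $U\pplus V$ and represents the logical input $\pi_\ell(r)=\pi_{|V|}(V)\,b^{|U|}+\pi_{|U|}(U)$. So the logical $j_1$-digits sit, reversed, in the \emph{low} physical field, and the logical $j_2$-digits sit, reversed, in the high field.

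In input-layout mode the first child group therefore transforms over the physical field carrying $j_1$. That field is in digit-reversed order. The induction hypothesis for $\mathcal{B}_{\ell_1}$ says that this child consumes digit-reversed input and leaves $H[k_1,j_2,q]$ in natural order in the same slots; these slots are relabeled by \eqref{relabel}.

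The twiddle $\omega_n^{-k_1 j_2}$ is then evaluated from the logical coordinates, as in \eqref{twop}. Here $k_1$ is natural and $j_2=\pi_{\ell_2}(\text{physical})$.

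The second child group transforms over the high field, whose input is digit-reversed. By induction it outputs $k_2$ in natural order. The Cooley--Tukey exponent computation \eqref{CT} is reused verbatim to conclude that the result equals $Y[k_1+k_2n_1,q]$.

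\emph{Main obstacle.} The hard step is the bookkeeping of which physical slot holds which logical field. The output must end in natural order $k=k_1+k_2n_1$. This requires that, after both children complete, $k_2$ sits in the high slots and $k_1$ in the low slots. I would check that the field sizes match the physical split, i.e.\ that the traversal in input-layout mode splits $T$ as $T_1\pplus T_2$ with $|T_2|=\ell_1$ and $|T_1|=\ell_2$, the mirror image of the output mode. If the paper's scheduler uses the same $(\lceil\ell/2\rceil,\lfloor\ell/2\rfloor)$ split in both modes, I would instead argue that input-layout mode runs the two child groups in the swapped order. In that mirrored recursion the field sizes match.

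\emph{Fallback.} If that slot bookkeeping becomes messy, I would fall back on a transposition-duality argument. The DFT matrix is symmetric, so $(\Omega P_\pi)^{\mathsf T}=P_\pi^{\mathsf T}\Omega=P_\pi\Omega$, and the input-layout DAG is the transpose (reversed dataflow) of the output-layout DAG. Theorem~\ref{eq:thm9} then yields the claim directly, with twiddles unchanged because diagonals are self-transpose.
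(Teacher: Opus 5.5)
Your algebraic route is exactly the paper's proof. The paper reads ``digit-reversed entrance and exit interpretations'' as meaning $\mathcal B_\ell=P_{\pi_\ell}\mathcal A_\ell P_{\pi_\ell}$, cancels $P_{\pi_\ell}^2=I$ using Theorem~\ref{eq:thm9}, and does no further induction.

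What you add is a check that a recursive traversal actually computes this conjugate, and your induction is sound. With $j_1$ stored reversed in the low physical field and $j_2$ stored reversed in the high field, the steps are:
\begin{itemize}
  \item run $\mathcal B_{\ell_1}$ on the low field;
  \item apply the twiddle $\omega_n^{-k_1\pi_{\ell_2}(u)}$, where $u$ is the physical high-field coordinate;
  \item run $\mathcal B_{\ell_2}$ on the high field.
\end{itemize}
This leaves $k_1$ in the low slots and $k_2$ in the high slots, which is natural order.

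Your ``main obstacle'' resolves the way you suspect. Conjugating by $P_{\pi_\ell}$ reverses the assignment of transform digits to physical slots. So the first child group of $\mathcal A_\ell$, which acts on its high $\lceil\ell/2\rceil$ digits, lands on the low physical field; that is precisely your mirrored recursion.

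The induction works for either choice of split sizes. The only real requirement is that input-layout mode processes the low physical field first. The paper leaves this implicit: its scheduler pseudocode does not branch on the mode.

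Your transposition fallback is also valid as an operator identity, since $(P_{\pi_\ell}\Omega)^{\mathsf T}=\Omega P_{\pi_\ell}$. Like the paper's argument, though, it only works if you posit that the traversal is the transposed DAG, rather than proving it.
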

\begin{proof}
The input-layout traversal uses digit-reversed entrance and exit
interpretations, so $\mathcal{B}_\ell=P_{\pi_\ell}\mathcal{A}_\ell
P_{\pi_\ell}$. By Theorem~\ref{eq:thm9} and \eqref{invol},
$\mathcal{B}_\ell=P_{\pi_\ell}(P_{\pi_\ell}\Omega_{b^\ell})P_{\pi_\ell}
=\Omega_{b^\ell}P_{\pi_\ell}$, hence
$\mathcal{B}_\ell x^{\mathrm{phys}}=\Omega_{b^\ell}P_{\pi_\ell}P_{\pi_\ell}x
=\Omega_{b^\ell}x$. The two occurrences of $P_{\pi_\ell}$ are view
interpretations, not runtime physical permutations.
\end{proof}

\subsection{Local Correctness Summary}
\label{app:sum}

\begin{corollary}[Batched local FFT operators]\label{eq:cor11}
For $s$ independent transforms of length $b^\ell$, the output- and
input-layout traversals implement
$\mathcal{A}_{\ell,s}=I_s\otimes(P_{\pi_\ell}\Omega_{b^\ell})$ and
$\mathcal{B}_{\ell,s}=I_s\otimes(\Omega_{b^\ell}P_{\pi_\ell})$, respectively.
\end{corollary}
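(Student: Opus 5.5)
The plan is to reduce the batched statement to the single-transform identities already proved, Theorem~\ref{eq:thm9} for output layout and Theorem~\ref{eq:thm10} for input layout. The reduction relies on the structural facts of Theorem~\ref{eq:thm7} and Theorem~\ref{eq:thm8}.

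\emph{Step 1: write the root view.} The root view for $s$ transforms of length $b^\ell$ has transform field $T$ with $|T|=\ell$ and batch fields $(O,C)$ with $oc=s$. By the address map \eqref{addr}, for each fixed batch coordinate $(\eta,\chi)$ the items with that coordinate form one transform instance. The address map is injective (condition~(5) of Definition~\ref{eq:def}). Hence the logical array is $\mathbb{C}^{s}\otimes\mathbb{C}^{b^\ell}$, with batch index $q=(\eta,\chi)$, and it suffices to show that the traversal acts on each fiber $q$ by the claimed length-$b^\ell$ operator.

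\emph{Step 2: show that batch coordinates are never mixed.} Every arithmetic step of the traversal only combines items with the same batch coordinate $q$ and never combines different $q$. I would prove this by induction over the recursion, using Theorem~\ref{eq:thm7}:
\begin{itemize}
\item Outer-batch and contiguous-field splits partition $q$ into data-disjoint children, by item~(4).
\item A transform split only moves digits from $T$ into the outer-batch field ($T_2$ or $A$ prepended to $O$). The original $O$ and $C$ are therefore preserved as a suffix of every descendant's batch coordinate, by \eqref{firstfields} and \eqref{secondfields}.
\item Leaf DFT kernels and fused twiddles act along transform digits only. The twiddle $\omega_n^{-\pi_{\ell_1}(a)j_2}$ depends on $(a,j_2)$ and not on $q$.
\end{itemize}
Consequently the whole traversal acts as $I_s\otimes X$ for some operator $X$ that does not depend on $q$.

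\emph{Step 3: identify $X$.} Fix $q$. The sequence of transform-level operations applied to fiber $q$ is exactly the single-instance recursion analysed in Theorem~\ref{eq:thm9}; the batch splits interleaved by the scheduler only reorder independent work across fibers. Theorem~\ref{eq:thm9} already carries an explicit batch index $q$, so it gives $X=P_{\pi_\ell}\Omega_{b^\ell}$ in output-layout mode. Theorem~\ref{eq:thm10} gives $X=\Omega_{b^\ell}P_{\pi_\ell}$ in input-layout mode. This yields $\mathcal{A}_{\ell,s}$ and $\mathcal{B}_{\ell,s}$.

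The main obstacle is the justification in Step~3 that the scheduler's data-dependent interleaving of batch and transform splits does not change the per-fiber computation. Theorem~\ref{eq:thm9} inducts on a fixed split shape, whereas here the split shape is chosen by the scheduler's $s$ versus $n$ test. I would handle this in two parts. First, batch splits commute with transform-level operations on disjoint fibers, so any legal interleaving is equivalent to running, per fiber, the pure transform recursion that uses the same digit-aligned splits. Second, Theorem~\ref{eq:thm9} holds for an arbitrary split $\ell=\ell_1+\ell_2$ at each node, so the identity holds whichever splits are chosen. The last point is that different fibers may see different split trees, and the identity holds regardless.
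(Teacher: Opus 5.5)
Your proposal is correct and follows the same route as the paper: use Theorem~\ref{eq:thm7} to show that batch instances stay data disjoint, then apply Theorems~\ref{eq:thm9} and~\ref{eq:thm10} to each instance to obtain the Kronecker forms. Your added detail on interleaved batch and transform splits, and on Theorem~\ref{eq:thm9} holding for any split $\ell=\ell_1+\ell_2$, makes explicit what the paper's one-line proof leaves implicit.
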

\begin{proof}
Theorem~\ref{eq:thm7} preserves data disjointness across batch instances;
applying Theorems~\ref{eq:thm9} and \ref{eq:thm10} per instance gives the
Kronecker forms.
\end{proof}

Consequently, Phase~I invokes $\mathcal{A}_{m_1}$ (physical row $a$ represents
$\pi_{m_1}(a)$); the fused middle stage materializes
$\pi_m(a\pplus c)=\pi_{m_2}(c)\pplus\pi_{m_1}(a)$, converting the Phase-I output
layout into the Phase-III input layout; and Phase~III invokes
$\mathcal{B}_{m_2}$, producing natural-order coefficients as used in
Theorem~3.

\punt{
\appput{layout}{Deferred Layout Algebra and Local FFT Correctness}

This appendix formalizes the deferred-layout interpretation used by
the local PACO FFT recursion. Its purpose is to prove that recursive
transform splits do not materialize physical transposes, and that the
resulting lazy layouts are precisely base-$b$ digit reversals. In
particular, we establish the two local operator identities used in \secref{overview}:
\[
A_{\ell}=P_{\pi_{\ell}}\Omega_{b^{\ell}}
\qquad\text{and}\qquad
B_{\ell}=\Omega_{b^{\ell}}P_{\pi_{\ell}}.
\]
The global materialization and redistribution of these layouts are
handled separately in \appref{local-fft}.

\subsection{Coordinate, Digit-Field, and Layout Conventions}

Fix the constant radix $b=2^t$. An $\ell$-digit base-$b$ string is
written from most significant digit to least significant digit:
\[
u=d_{\ell-1}\pplus d_{\ell-2}
\pplus \cdots\pplus d_0,
\qquad d_r\in[b].
\]
The notation $\pplus $ denotes digit-field concatenation.
Thus, if $U$ and $V$ are digit fields of lengths $\ell_U$ and
$\ell_V$, respectively, then
\[
U\pplus V
\]
denotes the $(\ell_U+\ell_V)$-digit field whose digits in $U$ are more
significant than those in $V$.

For $0\leq \ell$, define the base-$b$ digit-reversal permutation
$\pi_{\ell}:[b^{\ell}]\rightarrow[b^{\ell}]$ by
\[
\pi_{\ell}
\left(
d_{\ell-1}\pplus \cdots\pplus d_0
\right)
=
d_0\pplus \cdots\pplus d_{\ell-1}.
\]
Digit reversal is an involution:
\[
\pi_{\ell}^{-1}=\pi_{\ell},
\qquad
\pi_{\ell}^2=\mathrm{id}.
\]

For a permutation $\pi:[n]\rightarrow[n]$, let $P_{\pi}$ denote the
physical-to-logical layout operator
\[
(P_{\pi}z)[i]=z[\pi(i)].
\label{eq:app-layout-operator}
\]
Hence, if an array is stored under layout $P_{\pi}$, then the value at
physical position $i$ represents logical coordinate $\pi(i)$. Since
digit reversal is an involution,
\begin{equation}
P_{\pi_{\ell}}^2=I.
\label{eq:app-digit-reversal-involution}
\end{equation}

We distinguish throughout between the following notions.

\begin{itemize}
    \item \emph{Physical digit slots} are fixed base-$b$ positions in
    the physical address of an item.

    \item \emph{Logical transform-input digits} describe the input
    coordinate of an active local transform.

    \item \emph{Logical output-frequency digits} describe a transform
    frequency coordinate after a child transform has completed.

    \item \emph{Batch digits} distinguish independent transform
    instances.

    \item \emph{Fixed digits} have been selected by a recursive batch
    branch and therefore have one fixed value within that branch.
\end{itemize}

A transform split may change the logical role of a physical digit slot,
for example from a transform-input digit to an outer-batch digit. It never moves the item stored at that
physical address.

\subsection{Layout-Aware Twiddle Operators}

Consider one local two-factor Cooley--Tukey split
\[
n=n_1n_2,
\qquad
n_1=b^{\ell_1},
\qquad
n_2=b^{\ell_2},
\qquad
\ell=\ell_1+\ell_2.
\]
Write
\[
j=j_1n_2+j_2,
\qquad
k=k_1+k_2n_1,
\]
where
\[
j_1,k_1\in[n_1],
\qquad
j_2,k_2\in[n_2].
\]
Suppose that, after the first child-transform group, physical coordinate
$a\in[n_1]$ represents logical first-stage frequency
\[
k_1=\pi_{\ell_1}(a).
\]
The twiddle multiplication must use this represented logical
coordinate, rather than the physical coordinate $a$. Thus the
layout-aware twiddle operator is
\begin{equation}
\bigl(D_{\mathrm{tw}}^{\mathrm{view}}z\bigr)[a,j_2]
=
\omega_n^{-\pi_{\ell_1}(a)j_2}z[a,j_2].
\label{eq:app-layout-aware-twiddle}
\end{equation}
Equivalently, the ordinary diagonal twiddle operator is interpreted
under the current layout permutation. No physical permutation is needed
to evaluate~\eqref{app-layout-aware-twiddle}: the recursive view
descriptor identifies the logical digit values represented by the
current physical coordinate. As in the main algorithm, PACO evaluates
the required twiddle from these logical coordinates when the second
child group first accesses the corresponding intermediate value. This
fuses twiddle multiplication with the child computation and does not
require a separate traversal or a large precomputed twiddle table.

\subsection{Admissible Digit-Field Views}

Let
\[
D=\{d_0,d_1,\ldots,d_{D-1}\}
\]
be the digit labels associated with the physical base-$b$ digit slots
of a root local array. Physical slot $0$ is the least significant
physical slot. A recursive view is a tuple
\[
V=(T,O,C,F,P,\mathit{mode}).
\label{eq:app-view-tuple}
\]
Its components are as follows.

\begin{itemize}
    \item $T$ is the ordered active transform digit field.

    \item $O$ is an ordered list of active outer-batch digit fields.

    \item $C$ is the distinguished contiguous batch field.

    \item $F$ is a partial assignment from fixed digit labels to values
    in $[b]$.

    \item $P$ is a bijection from current digit labels to physical
    digit slots.

    \item $\mathit{mode}\in
    \{\mathit{OutputLayout},\mathit{InputLayout}\}$ specifies which
    local traversal interpretation is used.
\end{itemize}

The active labels in $T$, $O$, and $C$, together with the fixed labels
in $\operatorname{dom}(F)$, are disjoint and cover the digit labels of
the root view. A label may be renamed when its logical role changes,
but its physical slot under $P$ remains unchanged.

Let
\[
n=b^{|T|},
\qquad
o=b^{|O|},
\qquad
c=b^{|C|},
\qquad
s=oc.
\label{eq:app-view-parameters}
\]
Thus, the view represents $s$ transforms of length $n$. The coordinate
of one logical item is
\[
(\tau,\eta,\chi)\in[n]\times[o]\times[c],
\]
where $\tau$ is the active transform coordinate, $\eta$ is the
flattened outer-batch coordinate, and $\chi$ is the distinguished
contiguous batch coordinate.

For a digit label $d$, define
\[
\operatorname{stride}_{P}(d)=b^{\operatorname{pos}_{P}(d)}.
\]
The active digits of $C$ occupy one least-significant contiguous
physical interval. In particular, they occupy physical slots
\[
0,1,\ldots,|C|-1.
\]
Digits fixed by earlier contiguous-field splits may occupy immediately
higher physical slots, but they do not interrupt the remaining active
interval.

Let $\operatorname{digit}_{d}(\tau)$ and
$\operatorname{digit}_{d}(\eta)$ denote the value of active digit $d$
in the appropriate transform or outer-batch coordinate. The physical
address map of $V$ is
\begin{align}
\operatorname{addr}_{V}(\tau,\eta,\chi)
={}&\operatorname{base}_0
+\sum_{d\in\operatorname{dom}(F)}
F(d)\operatorname{stride}_{P}(d)
\nonumber\\
&+\sum_{d\in T}
\operatorname{digit}_{d}(\tau)\operatorname{stride}_{P}(d)\nonumber\\
&+\sum_{d\in O}
\operatorname{digit}_{d}(\eta)\operatorname{stride}_{P}(d)
+\chi.
\label{eq:app-address-map}
\end{align}
For fixed $(\tau,\eta)$ and fixed assignment $F$, define
\[
\operatorname{base}_{\tau,\eta,F}
=
\operatorname{addr}_{V}(\tau,\eta,0).
\]
Then
\begin{equation}
\operatorname{addr}_{V}(\tau,\eta,\chi)
=
\operatorname{base}_{\tau,\eta,F}+\chi.
\label{eq:app-contiguous-run}
\end{equation}
Thus, varying $\chi$ traverses a physically contiguous active run.

\begin{definition}[Admissible digit-field view]
\label{def:app-admissible-view}
A view is \emph{admissible} if:
\begin{enumerate}
    \item its active and fixed digit labels partition the digit slots of
    its root view;

    \item $P$ assigns each current label a unique physical slot;

    \item the active digits of $C$ form one least-significant contiguous
    physical interval;

    \item its address map is given by~\eqref{app-address-map}; and

    \item every logical view coordinate maps to a distinct physical
    item.
\end{enumerate}
\end{definition}

\begin{lemma}[Active-run decomposition]
\label{lem:app-active-runs}
An admissible view containing $s$ transforms of length $n$ is the
disjoint union of
\[
\frac{ns}{c}
\]
contiguous physical runs, each of length $c$.
\end{lemma}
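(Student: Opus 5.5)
The plan is to fix the pair $(\tau,\eta)\in[n]\times[o]$ and study the set of addresses obtained by letting $\chi$ range over $[c]$. By the address map \eqref{app-address-map}, the address is $\operatorname{base}_{\tau,\eta,F}+\chi$, where $\operatorname{base}_{\tau,\eta,F}=\operatorname{addr}_V(\tau,\eta,0)$ depends only on $\tau$, $\eta$ and the fixed assignment $F$; this is exactly \eqref{app-contiguous-run}. So, with $\tau,\eta,F$ held fixed, the $c$ items are stored at the consecutive addresses $\operatorname{base}_{\tau,\eta,F},\ldots,\operatorname{base}_{\tau,\eta,F}+c-1$. One point needs care here. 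Condition~(3) of Definition~\ref{def:app-admissible-view} places the active digits of $C$ in physical slots $0,\ldots,|C|-1$, which have strides $1,b,\ldots,b^{|C|-1}$. Hence the digit expansion of $\chi$ contributes exactly $\chi$ to the address, and the remaining digits (from $T$, from $O$, and the fixed digits in $F$) all sit in slots of position at least $|C|$. Their contributions are therefore multiples of $c$, so the runs do not interleave.

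Next I count runs. There is one run for each pair $(\tau,\eta)$, giving $n\cdot o$ runs. Since $s=oc$, this equals $ns/c$, and each run has length $c$.

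It remains to show the runs are disjoint and cover the view. By condition~(5), the map $(\tau,\eta,\chi)\mapsto\operatorname{addr}_V(\tau,\eta,\chi)$ is injective on $[n]\times[o]\times[c]$, so distinct pairs $(\tau,\eta)$ give disjoint address intervals. The view, understood as its set of items, is by definition the image of this map, so the union of the runs is the whole view. The total size is $(ns/c)\cdot c=ns$, consistent with $s$ transforms of length $n$.

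The only real subtlety, and the step I would argue most carefully, is the claim that $\chi$ enters the address additively with unit stride. This depends on the active $C$ digits occupying the least-significant slots in order. Fixed digits from earlier contiguous-field splits sit in higher slots and so only shift the base. With that established, the rest is counting plus injectivity.
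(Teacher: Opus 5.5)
Your proposal is correct and is essentially the paper's proof. Both argue one contiguous run per pair $(\tau,\eta)$ from the unit-stride $\chi$ term in the address map, count $no=ns/c$ runs, and use injectivity of the address map for disjointness and coverage. Your extra remark, that the non-$C$ digits contribute only multiples of $c$, is correct but not needed, since injectivity alone already makes the runs disjoint.
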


\begin{proof}
There is one run for each pair
\[
(\tau,\eta)\in[n]\times[o].
\]
By~\eqref{app-contiguous-run}, varying $\chi\in[c]$ produces a
consecutive interval of length $c$. There are $no=ns/c$ such pairs.
Injectivity of the address map implies that these intervals are
disjoint and cover the complete view.
\end{proof}

The distinguished field $C$ is retained to expose spatial locality even
when the active transform field is strided. This property is used by the
cache analysis in \appref{digit-reversal}.

\subsection{Legal View Operations}

The local scheduler uses three view operations: outer-batch splits,
contiguous-field splits, and transform splits. Batch splits partition
the current batch instances into data-disjoint children. A transform
split creates two sequential child-transform groups that reinterpret the
same parent working set.

\subsubsection{Outer-batch split}

Suppose an active outer-batch digit $d\in O$ is selected. For each
$r\in[b]$, the $r$-th child fixes $d=r$ and has fields
\[
T_r=T,
\qquad
O_r=O\setminus\{d\},
\qquad
C_r=C,
\]
with fixed assignment
\[
F_r=F\cup\{d\mapsto r\}.
\]
The physical map $P$ and the traversal mode are unchanged. Equivalently,
the child base offset is increased by
\[
r\operatorname{stride}_{P}(d).
\]
The $b$ children are data disjoint and partition the parent batch
instances.

\subsubsection{Contiguous-field split}

A contiguous-field split is used only when
\[
O=\varnothing.
\]
Write
\[
C=d\pplus C',
\]
where $d$ is the most significant active digit of $C$. For every
$r\in[b]$, the $r$-th child has
\[
T_r=T,
\qquad
O_r=\varnothing,
\qquad
C_r=C',
\]
and
\[
F_r=F\cup\{d\mapsto r\}.
\]
The child contiguous width is $c/b$. Because the most significant
active digit of $C$ is fixed, the remaining digits of $C'$ still occupy
the least-significant contiguous physical interval. Therefore, every
parent active run is partitioned into $b$ contiguous child subruns.

\subsubsection{Transform split}

Let the active transform field be decomposed as
\[
T=T_1\pplus T_2,
\]
where
\[
\ell_1=|T_1|,
\qquad
\ell_2=|T_2|,
\qquad
n_1=b^{\ell_1},
\qquad
n_2=b^{\ell_2},
\qquad
n=n_1n_2.
\]
Write a parent logical transform coordinate as
\[
j=j_1n_2+j_2,
\qquad
j_1\in[n_1],
\qquad
j_2\in[n_2].
\]

The first child-transform group performs length-$n_1$ transforms over
$j_1$, indexed by $j_2$ and the existing batch coordinates. Its
persistent view fields are
\begin{equation}
T_{\mathrm{I}}=T_1,
\qquad
O_{\mathrm{I}}=T_2\pplus O,
\qquad
C_{\mathrm{I}}=C.
\label{eq:app-first-child-view}
\end{equation}
It represents $sn_2$ transforms of length $n_1$, and its inherited
address map is
\begin{equation}
\operatorname{addr}_{V_{\mathrm{I}}}
\bigl(j_1,(j_2,\eta),\chi\bigr)
=
\operatorname{addr}_{V}
\bigl(j_1n_2+j_2,\eta,\chi\bigr).
\label{eq:app-first-child-address}
\end{equation}
Thus, $T_2$ becomes an outer-batch field without moving any item.

After the first child group completes, the physical slots formerly
occupied by $T_1$ contain first-stage output coordinates. Denote this
physical field by $A$. This is a semantic relabeling only:
\begin{equation}
\operatorname{pos}_{P}(A_j)
=
\operatorname{pos}_{P}\bigl((T_1)_j\bigr).
\label{eq:app-output-field-relabeling}
\end{equation}
In output-layout mode, a physical coordinate $a\in[n_1]$ in this field
represents logical first-stage frequency
\begin{equation}
k_1=\pi_{\ell_1}(a).
\label{eq:app-first-stage-layout}
\end{equation}

The second child-transform group performs length-$n_2$ transforms over
the physical slots of $T_2$, indexed by the first-stage physical output
coordinate $a$ and the pre-existing batch coordinates. Its persistent
fields are
\begin{equation}
T_{\mathrm{II}}=T_2,
\qquad
O_{\mathrm{II}}=A\pplus O,
\qquad
C_{\mathrm{II}}=C.
\label{eq:app-second-child-view}
\end{equation}
It represents $sn_1$ transforms of length $n_2$, with address map
\begin{equation}
\operatorname{addr}_{V_{\mathrm{II}}}
\bigl(j_2,(a,\eta),\chi\bigr)
=
\operatorname{addr}_{V}
\bigl(an_2+j_2,\eta,\chi\bigr).
\label{eq:app-second-child-address}
\end{equation}
The active transform field $T_2$ can be strided past the physical
first-stage output field $A$. The descriptor $P$ records this
interpretation; no exchange of the two physical digit fields is
performed.

The twiddle
\[
\omega_n^{-k_1j_2}
=
\omega_n^{-\pi_{\ell_1}(a)j_2}
\]
is evaluated from the logical coordinates represented by the view and
is fused into the first access of the corresponding second-child input.

\subsection{Closure Under Arbitrary Split Interleavings}

\begin{theorem}[Closure of admissible views]
\label{thm:app-view-closure}
Starting from an admissible root view, every finite sequence of legal
outer-batch splits, contiguous-field splits, and transform splits
produces admissible child views. More precisely:
\begin{enumerate}
    \item active labels and fixed labels continue to partition the
    physical digit slots;

    \item each label retains its physical slot under $P$, except for
    semantic relabeling of a transform-input field as a
    transform-output field;

    \item the active field $C$ remains one least-significant contiguous
    physical interval;

    \item the children of every batch split are data disjoint and
    partition the parent address set;

    \item each transform child group is a bijective reindexing of the
    complete parent working set; and

    \item the address invariant~\eqref{app-address-map} is preserved
    under arbitrary interleavings.
\end{enumerate}
\end{theorem}

\begin{proof}
We use induction on the number of legal view operations. The properties
hold for the root by Definition~\ref{def:app-admissible-view}.

For an outer-batch split, one active outer digit is removed from $O$ and
inserted into $F$ with one fixed value. Its physical slot under $P$ is
unchanged. The $b$ possible fixed values partition the parent batch
coordinates, while $T$ and $C$ remain unchanged.

For a contiguous-field split, the most significant active digit of $C$
is inserted into $F$. The remaining active digits still occupy the
least-significant physical interval. The $b$ possible values of the
removed digit partition every parent run into $b$ consecutive child
subruns.

For a transform split, the first and second child address maps are given
by~\eqref{app-first-child-address} and
\eqref{app-second-child-address}. Each is a bijective reindexing of
the complete parent working set. The split neither changes $F$ nor
moves the physical slots of $C$. The relabeling from $T_1$ to $A$
preserves physical slots by~\eqref{app-output-field-relabeling}.

Thus every legal operation preserves the conditions of
Definition~\ref{def:app-admissible-view}.
\end{proof}

The fixed assignment $F$ is necessary because batch and transform splits
can interleave. For example, consider physical digit order
\[
[a_1][a_0][c_1][c_0].
\]
Initially, let
\[
T=[a_1,a_0],
\qquad
O=\varnothing,
\qquad
C=[c_1,c_0],
\qquad
F=\varnothing.
\]
A contiguous-field split can first fix $c_1=d$, leaving
\[
T=[a_1,a_0],
\qquad
C=[c_0],
\qquad
F=\{c_1\mapsto d\}.
\]
If the transform field is next split as
\[
[a_1,a_0]=[a_1]\pplus [a_0],
\]
then the first child has outer batch field $[a_0]$. After that child
completes, the second child has transform field $[a_0]$ and outer batch
field $[A_1]$, where $A_1$ occupies the former $a_1$ slot. A subsequent
batch split selects the outer field $A_1$ before splitting $[c_0]$.
Fixing $A_1=e$ gives
\[
T=[a_0],
\qquad
O=\varnothing,
\qquad
C=[c_0],
\qquad
F=\{c_1\mapsto d,A_1\mapsto e\}.
\]
Although the fixed slots need not be adjacent, varying $c_0$ still
traverses a contiguous physical run. This is exactly the information
recorded by $F$ and $P$.

\subsection{Transform Splits Are View Transformations}

\begin{theorem}[No physical transposes inside a local transform]
\label{thm:app-no-local-transpose}
No transform split in the local PACO recursion materializes a physical
permutation. The transition from the first child-transform group to the
second changes only:
\begin{enumerate}
    \item the logical role of the physical $T_1$ slots, which are
    renamed as the output field $A$;

    \item the active transform field, from $T_1$ to $T_2$; and

    \item the view descriptor $P$, which records the physical strides
    and logical interpretations of the new transform and outer-batch
    fields.
\end{enumerate}
All items remain in their original physical memory positions.
\end{theorem}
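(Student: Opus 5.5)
The plan is to argue directly from the child address maps \eqref{app-first-child-address} and \eqref{app-second-child-address}, by induction over the recursion tree of the local scheduler. The quantity tracked at every node is the physical position of each item. The base cases are constant-size leaves, which invoke a natural-order in-place DFT kernel, and batch splits, which only add a fixed digit to $F$ and an offset to the base. Neither case writes any item to an address other than the one it was read from.

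For a transform split with $T=T_1\pplus T_2$, I would check three facts in order.

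\textbf{First child group.} By \eqref{app-first-child-address}, its logical item $(j_1,(j_2,\eta),\chi)$ is read from and written back to $\operatorname{addr}_V(j_1n_2+j_2,\eta,\chi)$. So, by the induction hypothesis applied to that child, the group overwrites each parent address with its first-stage value and performs no address-to-address copy.

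\textbf{Relabeling.} The renaming $T_1\to A$ is, by \eqref{app-output-field-relabeling}, an equality of physical slots. It changes only the logical interpretation recorded in $P$, namely that physical $a$ represents $\pi_{\ell_1}(a)$; it is not a data movement.

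\textbf{Second child group.} By \eqref{app-second-child-address}, it addresses $\operatorname{addr}_V(an_2+j_2,\eta,\chi)$. This is exactly the same parent address set, with the same item at each address that the first group left there. Hence the transition between the groups only swaps which field is active and which is outer batch, and updates $P$. The fused twiddle $\omega_n^{-\pi_{\ell_1}(a)j_2}$ is multiplied in place at the first access, by \eqref{app-layout-aware-twiddle}, so it also moves nothing.

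Closure (Theorem~\ref{thm:app-view-closure}) guarantees that each child view is admissible. The induction hypothesis therefore applies to both child groups, and no descendant performs a permutation either.

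The main obstacle is being precise about what ``materializes a physical permutation'' means, since items are overwritten with new values. I would define it as an address-to-address transfer: some value computed from the item at address $x$ is stored at an address $y\neq x$, other than through a leaf kernel's in-place butterfly on its own $b$ addresses. Items then stay at their original positions in the sense that the logical coordinate bound to each physical address changes only through $P$. One subtlety to address is that a leaf DFT mixes $b$ items. I would treat this as arithmetic on a fixed address set whose output digit is stored in natural order at the same slots ($\pi_1=\mathrm{id}$), so no layout permutation is induced. This matches the base case of Theorem~\ref{eq:thm9}.
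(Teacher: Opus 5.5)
Your proposal is correct and follows essentially the same argument as the paper. Both rest on three observations: the first and second child address maps refer only to the parent's physical addresses, the relabeling of $T_1$ as $A$ preserves physical slots, and the second group reads the inherited $T_2$ slots through view strides while treating $A$ as outer batch, with this repeated at every transform-split node. You make that recursion an explicit induction over the recursion tree and add a precise definition of ``materializing a permutation,'' together with the in-place handling of leaf kernels and fused twiddles; the paper leaves all of these implicit.
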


\begin{proof}
The first child address map~\eqref{app-first-child-address} and the
second child address map~\eqref{app-second-child-address} both refer
only to physical addresses of the parent view. The first child
overwrites each parent item with its first-stage transform result.
Relabeling $T_1$ as $A$ preserves all physical positions by
\eqref{app-output-field-relabeling}. The second child subsequently
accesses the inherited $T_2$ slots through their view-defined strides
while treating the $A$ slots as outer-batch coordinates.

Therefore, no address-to-address copy, transpose, or recursive layout
permutation occurs at a transform split. The same argument applies
recursively at every transform-split node.
\end{proof}

The only physical digit permutation in the complete PACO execution is
the fused middle permutation materialized by packing, one BSP exchange,
and unpacking in Sect.~4.2 and Appendix~B. The macro-element digit
reversals used in that stage are components of this single middle
materialization; they are not recursive transposes inside either local
FFT phase.

\subsection{Output-Layout Local FFT Correctness}

We now prove the output-layout identity for one generic local transform.
Let a view contain $s$ data-disjoint transforms of length
\[
n=b^{\ell}=n_1n_2,
\qquad
n_1=b^{\ell_1},
\qquad
n_2=b^{\ell_2},
\qquad
\ell=\ell_1+\ell_2.
\]
For a batch instance $q\in[s]$, let $V[j,q]$ be the logical input and
define
\[
Y[k,q]
=
\sum_{j=0}^{n-1}
V[j,q]\omega_n^{-kj}.
\]

\begin{theorem}[Output-layout local FFT identity]
\label{thm:app-output-layout}
For every $\ell\geq 0$, the output-layout local traversal implements
\begin{equation}
A_{\ell}=P_{\pi_{\ell}}\Omega_{b^{\ell}}.
\label{eq:app-output-layout-identity}
\end{equation}
Equivalently, if $r$ is a physical output coordinate, then
\[
Y^{\mathrm{phys}}[r,q]
=
Y[\pi_{\ell}(r),q].
\]
\end{theorem}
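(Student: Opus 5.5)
The proof goes by strong induction on $\ell$, tracking, for every batch instance $q$, the logical frequency represented by each physical output coordinate. For $\ell=0$ there is nothing to transform. For $\ell=1$ the leaf kernel is a natural-order base-$b$ DFT and $\pi_1=\mathrm{id}$, so $Y^{\mathrm{phys}}[r,q]=Y[r,q]$. Batch splits need no separate argument. By the closure theorem for admissible views, their children are data disjoint and act independently per instance, so the identity for the parent follows instance by instance from the children. The substantive case is therefore a transform split $\ell=\ell_1+\ell_2$ with $\ell_1,\ell_2<\ell$, $n_1=b^{\ell_1}$, $n_2=b^{\ell_2}$, $j=j_1n_2+j_2$ and $k=k_1+k_2n_1$.

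In that case the argument proceeds in four steps.

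\emph{Step 1 (first child group).} By \eqref{firstaddr}, the first child group sees, for each fixed $(j_2,q)$, the logical sequence $j_1\mapsto V[j_1n_2+j_2,q]$. By the induction hypothesis at $\ell_1$, it leaves
\[
H^{\mathrm{phys}}[a,j_2,q]=H[\pi_{\ell_1}(a),j_2,q],
\]
where $H$ is the length-$n_1$ DFT over $j_1$.

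\emph{Step 2 (twiddle).} The layout-aware twiddle \eqref{twop} multiplies physical $(a,j_2)$ by $\omega_n^{-\pi_{\ell_1}(a)j_2}$. Hence the physical array holds $Z[\pi_{\ell_1}(a),j_2,q]$ with $Z[k_1,j_2,q]=H[k_1,j_2,q]\,\omega_n^{-k_1j_2}$.

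\emph{Step 3 (second child group).} By \eqref{secondaddr}, for fixed physical $a$ the second child group transforms the sequence $j_2\mapsto Z[\pi_{\ell_1}(a),j_2,q]$. The induction hypothesis at $\ell_2$ then says that physical coordinate $c$ holds the length-$n_2$ DFT at frequency $k_2=\pi_{\ell_2}(c)$.

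\emph{Step 4 (algebra and digit bookkeeping).} Expanding the nested sums and using $\omega_n^{n_2}=\omega_{n_1}$, $\omega_n^{n_1}=\omega_{n_2}$, and $\omega_n^{n}=1$ for the missing cross term $k_2n_1j_1n_2$ recovers the full Cooley--Tukey identity. Thus the value at physical $(a,c)$ is $Y[\pi_{\ell_1}(a)+\pi_{\ell_2}(c)n_1,q]$. The physical parent coordinate is $r=an_2+c$, with digit string $a\pplus c$. Its base-$b$ reversal is $\pi_{\ell_2}(c)\pplus\pi_{\ell_1}(a)$, which as an integer is exactly $\pi_{\ell_1}(a)+\pi_{\ell_2}(c)\,n_1$. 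Therefore $Y^{\mathrm{phys}}[r,q]=Y[\pi_\ell(r),q]$, which is \eqref{output-layout-local-operator} for each instance.

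The main obstacle is Step~3, not the algebra. We must check that the second child group's view really presents the logical sequence in $j_2$ in natural order, so that the induction hypothesis applies as stated. If the strided $T_2$ field were itself under some lazy layout, the hypothesis would not fit. I would argue this directly: Step~1 only overwrites the $T_1$ slots (renamed $A$ via \eqref{relabel}), and the $T_2$ slots retain their original input-digit meaning. The inductive claim is about the logical values delivered through the address map, not about contiguity, so strided access does not matter. Likewise, deeper twiddles evaluated inside the children use their own logical coordinates by the same layout-aware rule. I would therefore phrase the induction hypothesis over arbitrary admissible views, using the closure theorem, so that it applies to both child groups regardless of their strides or interleaved batch splits.
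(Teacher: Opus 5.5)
Your proposal is correct and follows the paper's own proof essentially step for step. Both argue by induction on $\ell$ with base cases $\ell\in\{0,1\}$, apply the hypothesis to the first child to get $H^{\mathrm{phys}}[a,j_2,q]=H[\pi_{\ell_1}(a),j_2,q]$, apply the layout-aware twiddle, apply the hypothesis to the second child, expand the Cooley--Tukey sum with the vanishing cross term, and finish with $\pi_\ell(an_2+c)=\pi_{\ell_1}(a)+\pi_{\ell_2}(c)n_1$. Your extra care on two points (batch splits handled instance by instance via the closure theorem, and the check that the second child receives $j_2$ in natural order) makes explicit what the paper leaves implicit, since it defers batch disjointness to its batched corollary.
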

\begin{proof}
We prove the claim by induction on the number $\ell$ of transform
digits.  Write
\[
n=b^\ell.
\]

For $\ell=0$, there is no transform coordinate, so the claim is
immediate.  For $\ell=1$, the routine performs one base-$b$ DFT, and
the digit-reversal permutation $\pi_1$ is the identity.  Hence the
physical output coordinate represents the same logical frequency
coordinate, as required.

Now suppose $\ell>1$, and assume the theorem holds for all shorter
transform fields.  Let the transform field be split as
\[
\ell=\ell_1+\ell_2,
\qquad
n_1=b^{\ell_1},
\qquad
n_2=b^{\ell_2},
\qquad
n=n_1n_2.
\]
Write the logical input and output indices as
\[
j=j_1n_2+j_2,
\qquad
k=k_1+k_2n_1,
\]
where
\[
j_1,k_1\in[n_1],
\qquad
j_2,k_2\in[n_2].
\]

For each fixed $(j_2,q)$, the first child group performs a length-$n_1$
transform over $j_1$.  Define its logical output by
\begin{equation}
H[k_1,j_2,q]
=
\sum_{j_1=0}^{n_1-1}
V[j_1n_2+j_2,q]\,
\omega_{n_1}^{-k_1j_1}.
\label{eq:app-first-stage-transform}
\end{equation}
The first child has $\ell_1$ transform digits.  Therefore, by the
induction hypothesis, its physical output coordinate $a\in[n_1]$
represents the logical first-stage frequency
\[
k_1=\pi_{\ell_1}(a).
\]
Equivalently, after the first child group has completed,
\begin{equation}
H^{\mathrm{phys}}[a,j_2,q]
=
H[\pi_{\ell_1}(a),j_2,q].
\label{eq:app-first-stage-layout}
\end{equation}

The algorithm then applies the Cooley--Tukey twiddle factor.  Define
the twiddled intermediate array by
\begin{equation}
Z[k_1,j_2,q]
=
H[k_1,j_2,q]\,
\omega_n^{-k_1j_2}.
\label{eq:app-twiddled-intermediate}
\end{equation}
Thus, using \eqref{app-first-stage-layout}, the physical data supplied
to the second child group satisfy
\begin{equation}
Z^{\mathrm{phys}}[a,j_2,q]
=
Z[\pi_{\ell_1}(a),j_2,q].
\label{eq:app-twiddled-layout}
\end{equation}

For each fixed $(k_1,q)$, the second child group performs a
length-$n_2$ transform over $j_2$.  Its logical output is
\begin{equation}
Y[k_1+k_2n_1,q]
=
\sum_{j_2=0}^{n_2-1}
Z[k_1,j_2,q]\,
\omega_{n_2}^{-k_2j_2}.
\label{eq:app-second-stage-transform}
\end{equation}
Indeed, substituting \eqref{app-first-stage-transform} and
\eqref{app-twiddled-intermediate} into
\eqref{app-second-stage-transform} gives
\begin{align}
Y[k_1+k_2n_1,q]
={}&
\sum_{j_2=0}^{n_2-1}
\sum_{j_1=0}^{n_1-1}
V[j_1n_2+j_2,q]\,
\omega_{n_1}^{-k_1j_1}
\omega_n^{-k_1j_2}
\omega_{n_2}^{-k_2j_2}
\nonumber\\
={}&
\sum_{j_2=0}^{n_2-1}
\sum_{j_1=0}^{n_1-1}
V[j_1n_2+j_2,q]\,
\omega_n^{-(k_1+k_2n_1)(j_1n_2+j_2)}.
\label{eq:app-cooley-tukey-factorization}
\end{align}
Since $j=j_1n_2+j_2$ ranges over $[n]$, this is precisely the
length-$n$ DFT output at frequency $k_1+k_2n_1$.

The second child has $\ell_2$ transform digits.  For each fixed
physical first-stage output coordinate $a$ and batch coordinate $q$,
its input sequence is, by \eqref{app-twiddled-layout},
\[
j_2
\longmapsto
Z[\pi_{\ell_1}(a),j_2,q].
\]
Applying the induction hypothesis to the second child, its physical
output coordinate $c\in[n_2]$ represents
\[
k_2=\pi_{\ell_2}(c).
\]
Consequently, the final physical output satisfies
\begin{equation}
Y^{\mathrm{phys}}[a,c,q]
=
Y\bigl[
\pi_{\ell_1}(a)
+
\pi_{\ell_2}(c)n_1,
q
\bigr].
\label{eq:app-two-field-output-layout}
\end{equation}

It remains to identify this frequency with the digit reversal of the
parent physical coordinate.  The parent physical transform coordinate
is
\[
r=an_2+c.
\]
Its digit string is the concatenation of the $\ell_1$ digits of $a$
followed by the $\ell_2$ digits of $c$.  Reversing all $\ell$ digits
therefore gives the digit string consisting of the reversed digits of
$c$ followed by the reversed digits of $a$.  Hence
\begin{equation}
\pi_\ell(an_2+c)
=
\pi_{\ell_1}(a)
+
\pi_{\ell_2}(c)n_1.
\label{eq:app-digit-reversal-composition}
\end{equation}
Combining \eqref{app-two-field-output-layout} and
\eqref{app-digit-reversal-composition} yields
\[
Y^{\mathrm{phys}}[r,q]
=
Y[\pi_\ell(r),q].
\]
This is the desired output-layout identity.
\end{proof}

\punt{
\begin{proof}
We proceed by induction on $\ell$. The constant-size leaf case is
immediate because the local DFT kernel uses natural-order coordinates
and the digit reversal on a constant-size leaf is incorporated into the
leaf interpretation.

For the inductive step, split the transform digit field as
\[
T=T_1\pplus T_2,
\qquad
|T_1|=\ell_1,
\qquad
|T_2|=\ell_2.
\]
Write
\[
j=j_1n_2+j_2,
\qquad
k=k_1+k_2n_1.
\]
Define the first-stage transform
\begin{equation}
H[k_1,j_2,q]
=
\sum_{j_1=0}^{n_1-1}
V[j_1n_2+j_2,q]\omega_{n_1}^{-k_1j_1},
\label{eq:app-first-stage-transform}
\end{equation}
and define
\begin{equation}
Z[k_1,j_2,q]
=
H[k_1,j_2,q]\omega_n^{-k_1j_2}.
\label{eq:app-first-stage-twiddle}
\end{equation}

By the induction hypothesis applied to the first child group, its
physical output coordinate $a\in[n_1]$ satisfies
\begin{equation}
H^{\mathrm{phys}}[a,j_2,q]
=
H[\pi_{\ell_1}(a),j_2,q].
\label{eq:app-first-child-output-layout}
\end{equation}
Thus, physical coordinate $a$ represents
\[
k_1=\pi_{\ell_1}(a).
\]
The layout-aware twiddle rule~\eqref{app-layout-aware-twiddle}
therefore gives
\begin{equation}
Z^{\mathrm{phys}}[a,j_2,q]
=
Z[\pi_{\ell_1}(a),j_2,q].
\label{eq:app-twiddled-first-child-layout}
\end{equation}

For every fixed first-stage frequency $k_1$ and batch instance $q$, the
second stage is the length-$n_2$ DFT
\begin{equation}
Y[k_1+k_2n_1,q]
=
\sum_{j_2=0}^{n_2-1}
Z[k_1,j_2,q]\omega_{n_2}^{-k_2j_2},
\qquad
k_2\in[n_2].
\label{eq:app-second-stage-transform}
\end{equation}
Indeed, substituting \eqref{app-first-stage-transform} and
\eqref{app-first-stage-twiddle} yields
\begin{align}
Y[k_1+k_2n_1,q]
={}&
\sum_{j_2=0}^{n_2-1}
\sum_{j_1=0}^{n_1-1}
V[j_1n_2+j_2,q]
\omega_{n_1}^{-k_1j_1}
\omega_n^{-k_1j_2}
\omega_{n_2}^{-k_2j_2}
\nonumber\\
={}&
\sum_{j_2=0}^{n_2-1}
\sum_{j_1=0}^{n_1-1}
V[j_1n_2+j_2,q]
\omega_n^{-(k_1+k_2n_1)(j_1n_2+j_2)}.
\label{eq:app-cooley-tukey-expansion}
\end{align}
Thus \eqref{app-second-stage-transform} is exactly the
two-factor Cooley--Tukey decomposition of the length-$n$ DFT.

For fixed $(a,q)$, the second child receives the physical sequence
\[
j_2\longmapsto
Z^{\mathrm{phys}}[a,j_2,q]
=
Z[\pi_{\ell_1}(a),j_2,q].
\]
By the induction hypothesis applied to the second child, its physical
output coordinate $c\in[n_2]$ represents
\[
k_2=\pi_{\ell_2}(c).
\]
Consequently,
\begin{align}
Y^{\mathrm{phys}}[a,c,q]
&=
Y\bigl[
\pi_{\ell_1}(a)
+
\pi_{\ell_2}(c)n_1,
q
\bigr].
\label{eq:app-two-field-output-layout}
\end{align}

\punt{
For fixed $(a,q)$, the second child performs a length-$n_2$ transform
over $j_2$. By the induction hypothesis applied to that child, physical
output coordinate $c\in[n_2]$ represents
\[
k_2=\pi_{\ell_2}(c).
\]
Hence,
\begin{align}
Y^{\mathrm{phys}}[a,c,q]
&=
Y[\pi_{\ell_2}(c),\pi_{\ell_1}(a),q].
\label{eq:app-two-field-output-layout}
\end{align}
The physical parent coordinate is
\[
r=an_2+c,
\]
whose digit string is
\[
a\pplus c.
\]
}

The represented logical frequency is
\[
k
=
\pi_{\ell_1}(a)
+
\pi_{\ell_2}(c)n_1.
\]
In digit notation, its most-to-least-significant digit string is
\[
\pi_{\ell_2}(c)\mathbin{+\!\!+}\pi_{\ell_1}(a).
\]
The physical parent coordinate is
\[
r=an_2+c,
\]
whose digit string is
\[
a\mathbin{+\!\!+}c.
\]
Hence,
\[
\pi_{\ell}
\left(
a\mathbin{+\!\!+}c
\right)
=
\pi_{\ell_2}(c)\mathbin{+\!\!+}\pi_{\ell_1}(a),
\]
and therefore
\[
Y^{\mathrm{phys}}[r,q]
=
Y[\pi_{\ell}(r),q].
\]
\punt{
The represented logical frequency has digit string
\[
\pi_{\ell_2}(c)\pplus \pi_{\ell_1}(a).
\]
}
By the definition of digit reversal,
\begin{equation}
\pi_{\ell}
\left(
a\pplus c
\right)
=
\pi_{\ell_2}(c)\pplus \pi_{\ell_1}(a).
\label{eq:app-digit-reversal-composition}
\end{equation}
Therefore,
\[
Y^{\mathrm{phys}}[r,q]
=
Y[\pi_{\ell}(r),q].
\]

\punt{
Finally, substituting~\eqref{app-first-stage-transform} and
\eqref{app-first-stage-twiddle} into the second-child transform gives
\begin{align}
Y[k_1+k_2n_1,q]
={}&
\sum_{j_2=0}^{n_2-1}
\left(
\sum_{j_1=0}^{n_1-1}
V[j_1n_2+j_2,q]\omega_{n_1}^{-k_1j_1}
\right)
\omega_n^{-k_1j_2}
\omega_{n_2}^{-k_2j_2},
\end{align}
which is the standard two-factor Cooley--Tukey identity. Thus the
logical result is the canonical length-$n$ DFT, stored under layout
$P_{\pi_{\ell}}$.
}
\end{proof}
}

\subsection{Input-Layout Local FFT Correctness}

The input-layout traversal is the conjugate interpretation of the
output-layout traversal.

\begin{theorem}[Input-layout local FFT identity]
\label{thm:app-input-layout}
For every $\ell\geq 0$, the input-layout local traversal implements
\begin{equation}
B_{\ell}
=
\Omega_{b^{\ell}}P_{\pi_{\ell}}.
\label{eq:app-input-layout-identity}
\end{equation}
In particular, if the physical input layout satisfies
\[
x^{\mathrm{phys}}=P_{\pi_{\ell}}x,
\]
then the input-layout traversal produces the natural-order DFT:
\[
B_{\ell}x^{\mathrm{phys}}
=
\Omega_{b^{\ell}}x.
\]
\end{theorem}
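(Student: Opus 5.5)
The plan is to obtain the input-layout identity from the output-layout identity (Theorem~\ref{thm:app-output-layout}) by conjugation, without re-running the induction. The first step is to make precise what the input-layout traversal does. It executes the same recursion as the output-layout traversal: the same transform and batch splits, and the same admissible views by Theorem~\ref{eq:thm7}. The only difference is that, at every level, the transform coordinate is interpreted through $\pi_\ell$ at both entrance and exit. Concretely, on entry, physical position $r$ of the active field is read as logical input index $\pi_\ell(r)$; on exit, the result written at physical position $r$ is declared to be logical frequency $\pi_\ell(r)$ of the output-layout operator. Equivalently, I will show by the same induction on $\ell$ used for Theorem~\ref{thm:app-output-layout} that the traversal realizes $P_{\pi_\ell}\mathcal{A}_\ell P_{\pi_\ell}$.

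The inductive step has three pieces. At a split $\ell=\ell_1+\ell_2$, the digit-reversal composition identity~\eqref{app-digit-reversal-composition} says that reversing the parent field $a\pplus c$ equals the concatenation $\pi_{\ell_2}(c)\pplus\pi_{\ell_1}(a)$. Hence the entrance reinterpretation of the parent restricts to entrance reinterpretations of the two child groups, with the roles of the two digit fields exchanged. The twiddle is evaluated from represented logical coordinates, as in~\eqref{twop}, so it is again the layout-aware diagonal conjugated by the current layout. The leaves are natural-order kernels of length $b$, on which $\pi_1=\mathrm{id}$. Together these give the conjugated form.

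Once $\mathcal{B}_\ell=P_{\pi_\ell}\mathcal{A}_\ell P_{\pi_\ell}$ is established, I substitute $\mathcal{A}_\ell=P_{\pi_\ell}\Omega_{b^\ell}$ and use the involution~\eqref{app-digit-reversal-involution}:
\[
\mathcal{B}_\ell=P_{\pi_\ell}P_{\pi_\ell}\Omega_{b^\ell}P_{\pi_\ell}=\Omega_{b^\ell}P_{\pi_\ell}.
\]
The particular case then follows directly: if $x^{\mathrm{phys}}=P_{\pi_\ell}x$, then $\mathcal{B}_\ell x^{\mathrm{phys}}=\Omega_{b^\ell}P_{\pi_\ell}^2x=\Omega_{b^\ell}x$. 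I will also point out, as in Theorem~\ref{eq:thm8}, that both factors $P_{\pi_\ell}$ are view interpretations only and are never executed as data movement.

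I expect the main obstacle to be the bookkeeping in the conjugation step. One must check that reinterpreting the input through $\pi_\ell$ really turns the recursion into one whose first child acts on the low logical digits. If the reading order is not kept consistent, one could silently compute a transposed factorization instead. The equivalent fallback is an independent direct induction in the style of Theorem~\ref{thm:app-output-layout}. In that version one uses decimation in frequency, with $j=j_1+j_2n_1$ and $k=k_1n_2+k_2$, and verifies that the physical input $r=an_2+c$ holds $x[\pi_{\ell_1}(c)+\ldots]$ and that the output lands in natural order. Either route reduces to the same digit-concatenation identity.
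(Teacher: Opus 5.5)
Your proposal is correct and follows the paper's proof. The paper also takes the input-layout traversal to be the output-layout traversal read through $\pi_\ell$ at entrance and exit, writes $\mathcal{B}_\ell=P_{\pi_\ell}\mathcal{A}_\ell P_{\pi_\ell}$, substitutes $\mathcal{A}_\ell=P_{\pi_\ell}\Omega_{b^\ell}$, and cancels using $P_{\pi_\ell}^2=I$. The paper treats the conjugated form as immediate from that definition, since remapping every address through $\pi_\ell$ is a single relabeling, so your per-level induction is extra justification rather than a required step. One small slip in your fallback: the entry at physical $r=an_2+c$ is $x[\pi_{\ell_1}(a)+\pi_{\ell_2}(c)n_1]$, not $x[\pi_{\ell_1}(c)+\cdots]$.
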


\begin{proof}
Let $A_{\ell}$ be the output-layout operator established in
Theorem~\ref{thm:app-output-layout}. The input-layout traversal uses
digit-reversed entrance and exit interpretations, so its operator is
\[
B_{\ell}
=
P_{\pi_{\ell}}A_{\ell}P_{\pi_{\ell}}.
\]
Using~\eqref{app-output-layout-identity} and
\eqref{app-digit-reversal-involution},
\[
B_{\ell}
=
P_{\pi_{\ell}}
\bigl(P_{\pi_{\ell}}\Omega_{b^{\ell}}\bigr)
P_{\pi_{\ell}}
=
\Omega_{b^{\ell}}P_{\pi_{\ell}}.
\]
Therefore,
\[
B_{\ell}x^{\mathrm{phys}}
=
\Omega_{b^{\ell}}P_{\pi_{\ell}}
P_{\pi_{\ell}}x
=
\Omega_{b^{\ell}}x.
\]
The two occurrences of $P_{\pi_{\ell}}$ in this argument are view
interpretations. They do not denote additional runtime physical
permutations.
\end{proof}

\subsection{Local Correctness Summary}

\begin{corollary}[Batched local FFT operators]
\label{cor:app-batched-local-operators}
For $s$ independent local transforms of length $b^{\ell}$, the
output-layout and input-layout traversals implement, respectively,
\[
A_{\ell,s}
=
I_s\otimes
\bigl(P_{\pi_{\ell}}\Omega_{b^{\ell}}\bigr)
\]
and
\[
B_{\ell,s}
=
I_s\otimes
\bigl(\Omega_{b^{\ell}}P_{\pi_{\ell}}\bigr).
\]
\end{corollary}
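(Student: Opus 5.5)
The plan is to prove the corollary per transform instance and then assemble the Kronecker form. First I will use Theorem~\ref{thm:app-view-closure} to fix the batch structure. Every view reached by the scheduler is admissible. Batch splits (outer-batch and contiguous-field) have data-disjoint children that partition the parent address set. Each transform child group is a bijective reindexing of the whole parent working set, so it never mixes items with different original batch coordinates. Items in different original batch coordinates are therefore only acted on by the same local operations applied independently, and no arithmetic couples two instances. Concretely, I will argue by induction on the recursion tree that the operator realized at any node is block diagonal with respect to the root batch coordinate \(q\in[s]\). The twiddle factors \(\omega_n^{-\pi_{\ell_1}(a)j_2}\) depend only on the transform coordinates and not on \(q\), so each diagonal block is the same operator.

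Second, I will identify that common block. For a single instance, the traversal restricted to that instance is exactly the length-\(b^\ell\) recursion analyzed in Theorem~\ref{thm:app-output-layout}. That theorem was stated with an arbitrary batch index \(q\), and its induction is insensitive to how batch splits interleave with transform splits. In output-layout mode the block is therefore \(P_{\pi_\ell}\Omega_{b^\ell}\). In input-layout mode Theorem~\ref{thm:app-input-layout} gives \(\Omega_{b^\ell}P_{\pi_\ell}\). Here I must check that the layout permutation acts only on the transform digits and fixes the batch digits. Both the physical-to-logical relabeling \eqref{app-output-field-relabeling} and the digit reversal concern only the slots of \(T\), while the slots of \(O\) and \(C\) keep their positions under \(P\).

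Third, I will order the coordinates with batch outermost, which gives \(I_s\otimes(P_{\pi_\ell}\Omega_{b^\ell})\) and \(I_s\otimes(\Omega_{b^\ell}P_{\pi_\ell})\). The physical interleaving of batch and transform slots (e.g.\ \(C\) least significant) is only a fixed relabeling of the index set. I will note that the Kronecker form is meant with respect to this logical \((q,\tau)\) indexing, as in Phase~I and Phase~III of \secref{local-co-fft}.

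The main obstacle is the first step. Interleaved batch splits could, in principle, create a leaf in which a batch digit was fixed before a transform split, so that the inductive hypothesis of Theorem~\ref{thm:app-output-layout} is applied to a sub-batch rather than the full batch. I would handle this by strengthening the induction to arbitrary views: for any admissible view with \(s'\) instances, the subtree realizes \(I_{s'}\otimes\mathcal A\). Batch-split nodes are then handled trivially, because a direct sum of the children's identical blocks gives the same form. Transform-split nodes follow from the two-child composition argument, applied with the enlarged batch \((j_2,\eta)\) or \((a,\eta)\).
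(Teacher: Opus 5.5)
Your proposal is correct and follows the same route as the paper: use the closure theorem for admissible views to keep the root batch instances data-disjoint, then apply the output- and input-layout local FFT identities to each instance to obtain the Kronecker forms. Your strengthened induction over arbitrary admissible views (batch-split nodes as direct sums of identical blocks, transform-split nodes via the two-child composition with the enlarged batches $(j_2,\eta)$ and $(a,\eta)$), and your remark that $I_s\otimes(\cdot)$ is meant in the logical $(q,\tau)$ order, fill in details that the paper's two-line proof leaves implicit; they do not change the argument.
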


\begin{proof}
The recursive view operations preserve data disjointness among batch
instances by Theorem~\ref{thm:app-view-closure}. Applying
Theorems~\ref{thm:app-output-layout} and
\ref{thm:app-input-layout} independently to every batch instance gives
the stated Kronecker-product forms.
\end{proof}

In PACO Phase~I, each processor invokes the output-layout traversal on
its local length-$N_1$ transforms. Thus, its physical row coordinate
$a$ represents logical first-stage frequency $\pi_{m_1}(a)$. The fused
middle stage of Sect.~4.2 and \appref{local-fft} materializes the complete
digit reversal
\[
\pi_m
\left(
a\pplus c
\right)
=
\pi_{m_2}(c)\pplus \pi_{m_1}(a),
\]
thereby converting the Phase-I output layout into the input layout
consumed by Phase~III. Phase~III then invokes the input-layout
traversal on its local length-$N_2$ transforms and produces
natural-order DFT coefficients, as used in Theorem~3.
}

\section{Cache-Oblivious Batched Local FFT}\label{app:local-fft}

This appendix supplies the complete scheduler and the rigorous cache
analysis underlying \propositionref{local-batched-fft} of the main text. The scheduler
operates on the admissible digit-field views defined in
Appendix~\ref{app:layout}.

Consider an admissible view
\[
  V=(T,O,C,F,P,\mathrm{mode}).
\]
Let
\begin{equation}\label{eq:B-params}
  n=b^{|T|},\qquad s=b^{|O|+|C|},\qquad c=b^{|C|},
\end{equation}
where $n$ is the transform length, $s$ is the number of batched
transforms, and $c$ is the width of the distinguished contiguous batch
field. We call the product
\[
  \mu(V)=ns
\]
the \emph{mass} of the view; it is the number of items in its
working set. The scheduler is independent of the cache parameters
$M$ and $B$.

\subsection{Scheduler Pseudocode}

Figure~\ref{fig:cofft} realizes the scheduler described in the main
text: it invokes a constant-size kernel at the leaves, performs a batch
split when $s\ge n$ (splitting an outer-batch digit whenever one is
available, and the distinguished contiguous field only otherwise), and
performs a balanced digit-aligned transform split when $s<n$.

\begin{figure}[t]
\begin{codebox}
\Procname{$\proc{CO-Batched-FFT}(V)$}
\li $n \gets b^{|T|}$
\li $s \gets b^{|O|+|C|}$
\li \If $n \le b$ \textbf{ and } $s \le b$  \Comment base case
\li \Then invoke the constant-size natural-order DFT kernels
\li \Return
      \End
\li \If $s \ge n$  \Comment batch split
\li \Then
      \If $O \ne \emptyset$
\li   \Then choose an active outer-batch digit $d \in O$
\li   \Else 
\li        choose the most significant active digit $d \in C$
      \End
\li   \For $r \gets 0$ \To $b-1$
\li   \Do $\proc{CO-Batched-FFT}\bigl(\proc{Batch-Child}(V,d,r)\bigr)$
      \End
\li   \Return
      \End
\zi \Comment transform split
\li split $T = T_1 \pplus T_2$ with
      $\bigl\lvert\,\lvert T_1\rvert-\lvert T_2\rvert\,\bigr\rvert \le 1$
\li $V_{\mathrm{I}} \gets \proc{First-Transform-Child}(V,T_1,T_2)$
\li $\proc{CO-Batched-FFT}(V_{\mathrm{I}})$
\li relabel the completed physical $T_1$-field as output field $A$
\li $V_{\mathrm{II}} \gets \proc{Second-Transform-Child}(V,T_1,T_2,A)$
\li fuse layout-aware twiddles into the first accesses of $V_{\mathrm{II}}$
\li $\proc{CO-Batched-FFT}(V_{\mathrm{II}})$
\end{codebox}
\vspace{-1em}
\caption{Cache-oblivious batched local FFT scheduler. The operations
$\proc{Batch-Child}$, $\proc{First-Transform-Child}$, and
$\proc{Second-Transform-Child}$ are the admissible view operations
defined in Appendix~\ref{app:layout}.}
\label{fig:cofft}
\end{figure}

The transform split in Figure~\ref{fig:cofft} is only a recursive view
transformation: by the no-physical-transpose property of
Appendix~\ref{app:layout}, the first recursive call computes the
first-stage transforms, the physical $T_1$-slots are then relabeled as
the output field $A$, and the second call treats $A$ as an outer-batch
field. The layout-aware twiddle is evaluated from the logical
coordinates of the current view and fused into the second child group,
so the scheduler introduces no separate recursive twiddle pass or
physical transpose.

\subsection{Local Work Bound}

\begin{proposition}[Local work]\label{prop:B-work}
For $s$ transforms of length $n=b^{\ell}$, the scheduler performs
\begin{equation}\label{eq:B-work}
  W_{\mathrm{local}}(n,s)=O\!\bigl(ns(1+\log n)\bigr)
\end{equation}
local operations.
\end{proposition}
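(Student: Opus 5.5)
I will bound the work by induction on the recursion tree of $\proc{CO-Batched-FFT}$, charging each node a cost proportional to its mass $\mu(V)=ns$. Define $W(n,s)$ as the total work on a view with parameters $(n,s)$. The claim to establish is $W(n,s)\le C\,ns(1+\log_b n)$ for a suitable constant $C$ depending only on $b$. The leaf cost is $O(1)$, since $n,s\le b=O(1)$. A batch split costs $O(1)$ view-descriptor bookkeeping and yields $b$ children of parameters $(n,s/b)$. A transform split costs $O(ns)$ for the fused layout-aware twiddles (one multiplication per item, with $O(1)$ time per twiddle evaluated from the view's logical coordinates) plus $O(1)$ descriptor work. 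It yields children $(n_1,sn_2)$ and $(n_2,sn_1)$.

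The transform-split recurrence $W(n,s)\le W(n_1,sn_2)+W(n_2,sn_1)+O(ns)$ is the standard Cooley--Tukey recurrence. Substituting the inductive hypothesis gives
\[
C\,ns\bigl(2+\log_b n_1+\log_b n_2\bigr)+O(ns).
\]
Since $\log_b n_1+\log_b n_2=\log_b n$, this equals $C\,ns(1+\log_b n)+C\,ns+O(ns)$. This is the point where the naive induction loses a term, so I will use the potential $ns\log_b n$ plus a separate count of nodes. The argument runs as follows.

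\textbf{Per-depth charge.} The transform-split contributions are $O(\text{mass})$ per node. Along the transform-split structure, the total mass at transform-split depth $j$ is $ns\,2^j$, while $\sum \log_b(\text{length})$ halves. It is cleaner to charge twiddle work directly. Every item receives exactly one twiddle per Cooley--Tukey level of the radix-$b$ DAG, and there are at most $\log_b n$ such levels per item per instance. The total twiddle and kernel arithmetic is therefore exactly the arithmetic of the fixed-radix DAG, $O(ns\log n)$ plus $O(ns)$ for the leaves.

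\textbf{Overhead nodes.} It remains to bound the number of recursion nodes, since each carries $O(1)$ overhead. Every internal node has at least two children, and the leaves partition the work into constant-size kernel calls, each performing $\Theta(1)$ arithmetic. Hence the number of internal nodes is at most the number of leaves, which is $O(ns(1+\log n))$.

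The main obstacle is making the "one twiddle per DAG level" accounting rigorous. Transform splits are balanced rather than radix-$b$ peeling, so I will instead show that the leaf kernels plus fused twiddles across all transform splits compute exactly the vertices of the radix-$b$ DAG. I will do this by induction using the closure of admissible views and the Cooley--Tukey identity already proved for the output-layout traversal. The number of DAG vertices satisfies $V(n)=n_2V(n_1)+n_1V(n_2)+n$, which solves to $O(n\log n)$ per instance. Multiplying by the $s$ data-disjoint instances, as in the batched corollary, gives the bound.
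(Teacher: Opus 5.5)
Your final argument is correct and is essentially the paper's. The leaf-kernel and fused-twiddle arithmetic is charged to the fixed-radix DAG of each of the $s$ logical transforms, and control is charged $O(1)$ per recursion node; your recurrence $V(n)=n_2V(n_1)+n_1V(n_2)+n$, i.e.\ $V(n)/n=f(\ell)$ with $f(\ell)=f(\ell_1)+f(\ell_2)+O(1)$, is just an explicit form of the paper's ``$\log_b n$ radix-$b$ levels with $O(1)$ work per item per level.''

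Your overhead bound (internal nodes $\le$ leaves $=O(ns(1+\log n))$) is in fact more accurate than the paper's. The paper asserts that batch-split and leaf-call overhead total $O(ns)$, which undercounts: leaves have mass at most $b^2$ while their total mass is $ns\log_b n$, so there are $\Theta(ns\log_b n)$ of them when $n\ge b$. The discrepancy is harmless, since this is still absorbed into $O(ns(1+\log n))$.
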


\begin{proof}
We separate the scheduler work into three parts: (i) constant-size leaf
DFT work, (ii) fused twiddle and transform-split control work, and
(iii) batch-split control work.

Fix one of the $s$ logical transforms at the root. Ignore batch splits
momentarily: they only partition the collection of independent
transforms and never alter the Cooley--Tukey recursion followed by an
individual transform. Whenever a transform subproblem of length
$n'=b^{d}$ is split into factors $n_1=b^{d_1}$ and $n_2=b^{d_2}$, where
$d_1+d_2=d$, the first child evaluates the $d_1$ radix-$b$ FFT levels
associated with the first factor and the second child evaluates the
$d_2$ levels associated with the second factor. Thus, over the complete
transform recursion, the transform dependencies of the fixed logical
transform contain exactly $\log_b n$ radix-$b$ levels, up to a constant
factor depending only on the constant-size leaf convention.

At each radix-$b$ level, every item participates in $O(1)$ fixed-radix
butterfly and twiddle operations. The twiddle of a transform split is
fused into the first access of the corresponding second child group,
but it is still charged to that same FFT level. Hence one logical
transform incurs $O(n\log_b n)$ leaf, butterfly, and twiddle work.
Summed over all $s$ transforms,
\begin{equation}\label{eq:B-wfft}
  W_{\mathrm{FFT}}(n,s)=O(ns\log_b n).
\end{equation}

It remains to bound scheduler overhead. A batch split performs only
$O(1)$ descriptor and loop-control work per child. Its $b$ children
partition the parent batch instances, and $b=O(1)$. Therefore, over all
batch-only recursion nodes associated with any fixed transform
subproblem, the total administrative work is linear in the number of
batch instances represented by that subproblem. Summing over the
transform recursion gives $O(ns)$ additional work. The constant-size
leaf calls also contribute $O(ns)$ work. Consequently,
\[
  W_{\mathrm{local}}(n,s)=O(ns\log_b n)+O(ns)=O\!\bigl(ns(1+\log n)\bigr).
\]
Finally, Appendix~\ref{app:layout} proves that transform splits are
view transformations rather than physical transposes; thus no
additional $\Theta(ns)$ data-rearrangement pass is incurred at each
recursive transform split.
\end{proof}

\subsection{Recursion Depth and Stack Space}

\begin{lemma}[Recursion depth]\label{lem:B-depth}
Along every root-to-leaf path, the scheduler performs
\begin{equation}\label{eq:B-tsplits}
  O\!\bigl(1+\log\log_b n\bigr)
\end{equation}
transform splits and
\begin{equation}\label{eq:B-bsplits}
  O\!\bigl(\log_b(ns)\bigr)
\end{equation}
batch splits. Hence its total recursion depth is $O(\log(ns))$.
\end{lemma}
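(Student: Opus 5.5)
The plan is to prove the three claims of \lemref{B-depth} separately, reading everything off the scheduler of Figure~\ref{fig:cofft}. Each recursive call changes exactly one of the two exponents $\ell=|T|$ and $\beta=|O|+|C|$, and both are nonnegative integers. A batch split keeps $\ell$ and lowers $\beta$ by one. A transform split lowers $\ell$ and, in each child, raises $\beta$ by the length of the field moved into $O$.

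\textbf{Transform splits.} Along a path, the active transform length only changes at transform splits, and every child takes one of the two balanced halves. If the parent has $d\ge 2$ digits, each child has at most $\lceil d/2\rceil$ digits. Iterating, after $K$ transform splits the digit count is at most
\[
1+\log_b n/2^K .
\]
A transform split is only executed when $d\ge 2$. Such a split requires $s<n$; it cannot occur at the base case $n\le b$, because in that regime either $s\ge n$ or $s<n\le b$ (the base-case check is made first). Consequently $K\le \lceil\log_2\log_b n\rceil+O(1)$, which gives \eqref{B-tsplits}. When $d=2$, a single further split produces $d=1$, which is absorbed into the $O(1)$.

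\textbf{Batch splits.} Here I use a potential argument on $\beta$. Batch splits are the only operations that decrease $\beta$, each by one. So the number of batch splits on a path is at most the initial $\beta_0=\log_b s$ plus the total increase of $\beta$ caused by transform splits on that path. A transform split moves the field $T_2$ (or $A$) into $O$, adding at most $\lceil d/2\rceil$ digits, where $d$ is the current transform digit count.

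The digit counts along the path form a halving sequence starting from $\log_b n$. Their halves therefore sum geometrically, so the total increase is at most $\log_b n+O(K)$. Hence the number of batch splits is
\[
\log_b s+\log_b n+O(K)=O(\log_b(ns)),
\]
using $K=O(\log\log_b n)$ from the previous step.

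\textbf{Total depth and stack.} The total depth is the sum of the two counts, $O(\log(ns))$, since $b=O(1)$. This yields the $O(\log(ns))$ stack words claimed in Proposition~\ref{prop:local-batched-fft}, because each frame stores an $O(1)$-word descriptor update. For example, each frame can record only the split digit and branch index rather than copying the whole view.

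The main obstacle is the batch-split count: a naive argument loses because transform splits inflate $s$. The geometric-sum accounting of the digits transferred into $O$ is the step that must be stated carefully. In particular, the second child receives the $A$-field of $\ell_1$ digits, which can be the larger half, and the ceilings must be handled so that the sum still telescopes to $\log_b n+O(K)$.
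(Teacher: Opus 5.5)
Your proof is correct and follows the paper's argument: balanced halving of the transform digits gives $O(1+\log\log_b n)$ transform splits, and the batch splits are bounded by the initial $\log_b s$ batch digits plus the transform digits transferred into outer-batch fields. The paper bounds the transferred digits directly by $\log_b n$, since each transform digit leaves $T$ for $O$ at most once and the count telescopes to $d_0-d_K\le\log_b n$; this makes your geometric-sum estimate and its $O(K)$ slack unnecessary, though harmless (also, a transform split changes both $\ell$ and $\beta$, contrary to your opening sentence that each call changes exactly one).
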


\begin{proof}
Let $d=\log_b n$ be the number of active transform digits. A balanced
transform split leaves either $\lceil d/2\rceil$ or $\lfloor d/2\rfloor$
transform digits in a child. Repeated balanced splitting therefore
reaches a constant-size transform after $O(1+\log d)$ transform splits,
which gives \eqref{B-tsplits}.

Every batch split fixes one active batch digit. Initially there are
$\log_b s$ batch digits. A transform split may move transform digits
into an outer-batch field, but along one root-to-leaf path the total
number of such transferred digits is at most $\log_b n$. Thus no more
than $\log_b(ns)$ active batch digits can be fixed, proving
\eqref{B-bsplits}. Summing the two bounds gives the depth bound.
\end{proof}

\begin{corollary}[Stack space]\label{cor:B-stack}
The scheduler uses $S_{\mathrm{stack}}=O(\log(ns))$ words of
recursion-stack space.
\end{corollary}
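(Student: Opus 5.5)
The stack bound follows directly from Lemma~\ref{lem:B-depth}, once we check that each active recursion frame stores only $O(1)$ words. I will first bound the number of simultaneously live frames by the maximum root-to-leaf depth, which Lemma~\ref{lem:B-depth} gives as $O(\log(ns))$. Then I will show that each frame's own storage is constant. The algorithm never materializes an auxiliary array: no transposes and no twiddle tables, by Theorem~\ref{eq:thm8} and the fused layout-aware twiddle of \eqref{twop}. So a frame holds only loop counters such as $r\in[b]$, the scalars $n$ and $s$, and the child view descriptor.

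The delicate point is the view descriptor $V=(T,O,C,F,P,\mathrm{mode})$. Written out naively, it contains $\Theta(\log_b(ns))$ digit labels, and storing a full copy per frame would give $O(\log^2(ns))$ words. I will avoid this by storing each child descriptor as a constant-size \emph{delta} from its parent. A batch child records one fixed digit and its value, together with the base-offset increment $r\,\func{stride}_P(d)$ (see \eqref{addr}). A transform child records the split point $|T_1|$ and the relabeling $T_1\to A$ of \eqref{relabel}. Since $P$ never moves a label's physical slot (Theorem~\ref{eq:thm7}(2)), none of these updates rewrites $P$.

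What a frame actually needs from the descriptor is available from $O(1)$ words. The current base address, the stride of any named digit ($b^{\mathrm{pos}}$, computable in $O(1)$), and the logical coordinate needed for a twiddle can all be reconstructed that way. An alternative is to keep one global descriptor of size $O(\log_b(ns))$ that is updated in place on entry and restored on return. That global descriptor also contributes only $O(\log(ns))$ words, not words per frame.

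Multiplying the $O(\log(ns))$ depth by $O(1)$ words per frame, and adding the $O(\log(ns))$ shared descriptor if that variant is used, gives $S_{\mathrm{stack}}=O(\log(ns))$. I expect the only real obstacle to be the descriptor-storage argument above. The depth count itself is immediate from the lemma.
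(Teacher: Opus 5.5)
Your proof is correct and follows the paper's route: the depth bound $O(\log(ns))$ from Lemma~\ref{lem:B-depth} multiplied by $O(1)$ words per frame. Your delta-encoding (or single shared, in-place-restored descriptor) argument usefully makes explicit what the paper's one-line claim leaves implicit. That claim is that each frame holds only a constant number of field descriptors, and your argument shows this holds even though a full view descriptor has $\Theta(\log_b(ns))$ digit labels.
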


\begin{proof}
Each stack frame stores only a constant number of field descriptors,
fixed-digit assignments, loop counters, and mode bits. The result
follows from Lemma~\ref{lem:B-depth}.
\end{proof}

\subsection{Analytical Cache-Fitting Frontier}

We now analyze cache complexity. The following frontier is used only in
the proof; the scheduler never tests $M$, $B$, or the cache-fitting
condition.

Fix a sufficiently small constant $\alpha>0$, and define
$M^{\star}=\alpha M$. A view is \emph{cache fitting} if its mass
satisfies
\begin{equation}\label{eq:B-fit}
  ns\le M^{\star}.
\end{equation}
A \emph{maximal} cache-fitting view is either the root, when the root is
cache fitting, or a cache-fitting view whose parent is not cache
fitting. Its purpose is to identify subproblems whose complete
descendant execution can remain resident in the ideal cache; the
scheduler continues recursively below the frontier, and the frontier
merely groups that descendant work for analysis.

We record the mass behavior of the two split types, used repeatedly
below. A batch split partitions the batched transforms into $b$
data-disjoint children, so the child masses sum to the parent mass $ns$.
A transform split $n=n_1n_2$ produces child groups with parameters
$(n_1,sn_2)$ and $(n_2,sn_1)$, each of mass
\begin{equation}\label{eq:B-massinv}
  n_1(sn_2)=n_2(sn_1)=ns .
\end{equation}
Thus a transform split preserves the mass of each child group.

\subsection{Frontier Geometry}

\begin{lemma}[Frontier entry is through a batch split]\label{lem:B-entry}
A non-cache-fitting view can first enter the cache-fitting region only
through a batch split.
\end{lemma}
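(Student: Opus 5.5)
The plan is a case analysis on the three legal view operations of the scheduler in Figure~\ref{fig:cofft}, applied to an edge from a parent view $V$ that is not cache fitting (so $\mu(V)=ns>M^\star$) to a child view $V'$. We must show that if $V'$ is cache fitting, then the operation producing $V'$ was a batch split (an outer-batch split or a contiguous-field split). The base case of the scheduler creates no child views, so only split edges need to be examined.

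The key step is the transform-split case. By the mass identity \eqref{B-massinv}, the children of a transform split $n=n_1n_2$ are the two groups $V_{\mathrm{I}}$ and $V_{\mathrm{II}}$, with parameters $(n_1,sn_2)$ and $(n_2,sn_1)$. Each has mass exactly $ns$. To make this rigorous within the view formalism, I would invoke the transform-split definitions \eqref{firstfields} and \eqref{secondfields}. In $V_{\mathrm{I}}$ the transform field is $T_1$, and the batch fields are $T_2\pplus O$ together with $C$. This gives $b^{|T_1|}\cdot b^{|T_2|+|O|+|C|}=b^{|T|+|O|+|C|}=ns$. The computation for $V_{\mathrm{II}}$ is identical, with $A$ occupying the slots of $T_1$ by \eqref{relabel}. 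Hence both children have mass $ns>M^\star$ and neither is cache fitting. Equivalently, by item~(5) of Theorem~\ref{eq:thm7}, each transform child group is a bijective reindexing of the full parent working set.

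The batch-split cases need only be checked for consistency, not ruled out. Here the child masses are $ns/b$, so entry into the cache-fitting region is possible through them. This is the case the lemma permits.

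The conclusion then follows by induction along any root-to-frontier path. Consider the first view on the path that is cache fitting, assuming it is not the root. Its parent is not cache fitting. By the case analysis, the edge into that view cannot be a transform split, so it must be a batch split. The degenerate possibility is that the root itself is cache fitting. In that case no entry occurs, and it is handled by the definition of a maximal view. I do not expect a real obstacle anywhere in this argument. The only point needing care is stating the mass of a child view through the digit-field counts $|T|,|O|,|C|$, rather than informally, so that the relabeling $T_1\to A$ visibly preserves $|O|+|C|+|T|$.
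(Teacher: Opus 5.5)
Your proposal is correct and is essentially the paper's argument: a transform split yields two child groups each of mass exactly $ns$ by \eqref{B-massinv}, so it cannot carry a view with $ns>M^{\star}$ into the cache-fitting region, whereas a batch split divides the mass by $b$ and therefore can. Your explicit digit-field count through the transform-split field definitions and the $T_1\to A$ relabeling, and your first-entry argument along a path, just spell out details that the paper's two-line proof leaves implicit.
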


\begin{proof}
A transform split preserves the mass of each child group, by
\eqref{B-massinv}, and therefore cannot turn a non-cache-fitting
mass $ns>M^{\star}$ into a cache-fitting one. A batch split reduces the
batch size of each child by a factor of $b$ and can therefore produce a
cache-fitting child.
\end{proof}

\begin{lemma}[Mass of a nonroot frontier view]\label{lem:B-mass}
Every nonroot maximal cache-fitting view satisfies
\begin{equation}\label{eq:B-massrange}
  \frac{M^{\star}}{b}<ns\le M^{\star}.
\end{equation}
\end{lemma}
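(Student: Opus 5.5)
The upper bound \(ns\le M^{\star}\) holds by definition: every maximal cache-fitting view is in particular cache fitting, by \eqref{B-fit}. So only the strict lower bound \(ns>M^{\star}/b\) needs an argument, and I will obtain it by comparing the frontier view with its parent.

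Let \(V\) be a nonroot maximal cache-fitting view with parameters \((n,s)\). By the definition of maximality, its parent \(V_{\mathrm{par}}\) is not cache fitting, so the parent's mass exceeds \(M^{\star}\). I will split on the type of operation that produced \(V\) from \(V_{\mathrm{par}}\).

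\begin{itemize}
\item \emph{Transform split.} By \eqref{B-massinv}, \(V\) has exactly the parent's mass. Then \(V\) would also exceed \(M^{\star}\), contradicting that it is cache fitting. This is precisely Lemma~\ref{lem:B-entry}, which I invoke directly to rule out this case.
\item \emph{Batch split.} Both an outer-batch split and a contiguous-field split fix one active batch digit and leave \(T\) unchanged. Hence the parent has parameters \((n,bs)\), since the parent batch count is \(b^{|O|+|C|}\), which is larger by one digit.
\end{itemize}

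In the batch-split case, the parent's mass is \(n\cdot bs>M^{\star}\), which gives \(ns>M^{\star}/b\). Combined with \(ns\le M^{\star}\), this yields \eqref{B-massrange}.

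The only step needing care is the claim that a batch child has exactly \(s/b\) instances with the same \(n\). I will justify it from the legal view operations of Appendix~\ref{app:ops}. An outer-batch split removes one digit from \(O\). A contiguous-field split removes the most significant digit of \(C\). Neither touches \(T\). So \(|O|+|C|\) drops by exactly one and \(|T|\) is preserved, giving child parameters \((n,s/b)\) as required. Both the base-case leaves and the pseudocode in Figure~\ref{fig:cofft} generate children only through these two operations or a transform split, so the case analysis above is exhaustive.
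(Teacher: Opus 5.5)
Your proposal is correct and follows the paper's own proof. Like the paper, you rule out the transform-split case by mass preservation (Lemma~\ref{lem:B-entry}), then use that a batch-split parent has the same $n$ and $b$ times as many instances, giving $b\,ns>M^{\star}\ge ns$; your check that both batch-split variants remove exactly one batch digit simply spells out a step the paper states in one line.
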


\begin{proof}
By Lemma~\ref{lem:B-entry}, a nonroot maximal cache-fitting view is
created by a batch split. Its parent has the same transform length and
$b$ times as many batch instances. Since the child fits but the parent
does not, $ns\le M^{\star}$ and $b\,ns>M^{\star}$, which is
equivalent to \eqref{B-massrange}.
\end{proof}

\subsection{Contiguous Width at the Frontier}\label{app:B-width-at-frontier}

\begin{lemma}[Contiguous width at the frontier]\label{lem:B-width}
Suppose the root view has initial contiguous width $c_0\ge B$. Then
every maximal cache-fitting view has
\begin{equation}\label{eq:B-widthB}
  c=\Omega(B).
\end{equation}
\end{lemma}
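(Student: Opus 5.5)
The plan is to follow the root-to-frontier path of a fixed maximal cache-fitting view $V$ and track only the distinguished contiguous field $C$. By Theorem~\ref{eq:thm7}, $C$ is changed by exactly one operation: the contiguous-field split, which replaces width $c$ by $c/b$. Outer-batch splits and transform splits leave $C_{\mathrm{I}}=C_{\mathrm{II}}=C$. This gives two cases. If no contiguous-field split occurs on the path, then $c=c_0\ge B$ and we are done. Otherwise, let the \emph{last} contiguous-field split on the path go from a parent view $V_p$ with parameters $(n_p,s_p,bc)$ to a child of width $c$. After this step the width stays equal to $c$ down to $V$.

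In the second case, we read off the scheduler's preconditions at $V_p$. A contiguous-field split is taken only when $O=\emptyset$, so $s_p=c_p=bc$. It is taken only in the batch-split branch, so $s_p\ge n_p$, giving $n_p\le bc$. Next we compare $V_p$ with the cache-fitting frontier. Every view on the path strictly above $V$ is non-cache-fitting by maximality of $V$. The child of the last split is either $V$ itself or a strict ancestor of $V$.

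\begin{itemize}
\item If the child is $V$, then its parent $V_p$ is not cache fitting, so $M^\star<n_ps_p=n_p\,bc\le b^2c^2$.
\item If the child is a strict ancestor of $V$, then the child itself is not cache fitting, so $M^\star<n_p c\le bc^2$.
\end{itemize}

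In both cases $c>\sqrt{M^\star}/b=\sqrt{\alpha M}/b$. Since $b=O(1)$, $\alpha$ is a fixed constant, and the tall-cache assumption~\eqref{tall-cache} gives $\sqrt M=\Omega(B)$, we obtain $c=\Omega(B)$.

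The main obstacle is the root case and the bookkeeping around ``last split''. If the root itself is cache fitting there are no splits, so $c=c_0\ge B$. I must also check that the child of a contiguous-field split keeps $n$ unchanged, so the bound $n_p\le bc$ transfers to the child's mass $n_pc$. I must further confirm that no later operation can shrink $C$. In particular, transform splits reuse $C$ verbatim, and outer-batch splits only fix digits of $O$, which follows from the closure theorem. Finally, the width must be exactly $c_p=s_p$ when $O=\emptyset$, which holds because $s=b^{|O|+|C|}$.
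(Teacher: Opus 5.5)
Your proof is correct and follows the paper's argument essentially step for step. It uses the same case split on whether $C$ is ever split, and the same analysis of the last contiguous-field split, with $O=\emptyset$ giving $s_p=bc$ and the batch-split test giving $n_p\le bc$. It reaches the same two inequalities $M^\star<b^2c^2$ and $M^\star<bc^2$, closed off by the tall-cache assumption.

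Your case distinction (the split's child is $V$ versus a strict ancestor of $V$) is equivalent to the paper's (child cache fitting versus not). The equivalence rests on mass never increasing along a recursion path: batch splits divide it by $b$ and transform splits preserve it. That same fact is what justifies your claim that every strict ancestor of $V$ is non-cache-fitting, since maximality by itself only says this about $V$'s parent.
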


\begin{proof}
If the distinguished field $C$ is never split on the root-to-frontier
path, then its width remains $c=c_0\ge B$.

Otherwise, consider the last split of $C$ before the frontier view, and
let $c$ denote the child width immediately after that split. The parent
width was $bc$. Since the scheduler splits $C$ only when $O=\emptyset$,
the parent batch size was also $bc$, and the batch-split condition gives
$n\le bc$ at that split. If the resulting child is already cache
fitting, its parent is not, so $M^{\star}<n\,bc\le b^2c^2$; if the child
is not yet cache fitting, then $M^{\star}<nc\le bc^2$. In either case
\begin{equation}\label{eq:B-widthM}
  c=\Omega(\sqrt{M^{\star}})=\Omega(\sqrt{M}),
\end{equation}
a bound stronger than \eqref{B-widthB}; only \eqref{B-widthB} is
used downstream. Subsequent outer-batch and transform splits do not
change $c$, and the tall-cache assumption $M=\Omega(B^2)$ turns
\eqref{B-widthM} into \eqref{B-widthB}.
\end{proof}

\subsection{Transform Length and Number of Transform Splits}\label{app:B-tb}

\begin{lemma}[Transform--batch relation]\label{lem:B-tb}
There is a constant $\kappa_b=O(1)$, depending only on the fixed radix
$b$, such that every view reached after at least one transform split
satisfies
\begin{equation}\label{eq:B-tb}
  s\le\kappa_b\,n^{3}.
\end{equation}
\end{lemma}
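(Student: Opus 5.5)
We argue by induction along the root-to-node path, showing that the invariant \(s\le\kappa_b n^3\) is created by the first transform split and then preserved by every later operation. We take \(\kappa_b=b^2\), matching the main-text computation.

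\emph{Creation at a transform split.} A transform split is executed only when \(s<n\). We write \(n=n_1n_2\) with \(n_1=b^{\lceil \ell/2\rceil}\) and \(n_2=b^{\lfloor \ell/2\rfloor}\), so that \(n_2\le n_1\le b\,n_2\). The first child has parameters \((n',s')=(n_1,sn_2)\), and
\[
sn_2<n_1n_2^2\le n_1^3\le b^2n_1^3.
\]
The second child has parameters \((n_2,sn_1)\), and
\[
sn_1<n_1^2n_2\le b^2n_2^3.
\]
Hence both children satisfy \(s'\le b^2 (n')^3\).

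\emph{Preservation.} A batch split (outer-batch or contiguous-field) leaves \(n\) unchanged and divides \(s\) by \(b\), so the inequality persists. A later transform split again requires \(s<n\) at that node, and the computation above applies verbatim regardless of the inherited bound. A base-case leaf performs no further split. It follows that every view whose root path contains at least one transform split satisfies \eqref{B-tb}.

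\emph{Expected obstacle.} The only subtlety is the unbalanced ceiling/floor split, handled by \(n_1\le b n_2\). We must also confirm that the scheduler never increases \(s\) except through a transform split. This holds by the legal view operations of Appendix~\ref{app:ops}: batch splits only fix digits, and transform splits are the sole operations that move digits into \(O\).
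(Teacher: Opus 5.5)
Your proof is correct and follows the paper's argument. A transform split requires $s<n$, so by the balance $n_2\le n_1\le b\,n_2$ both children satisfy $s'\le b^2 (n')^3$, and later batch splits only decrease $s$ with $n$ fixed; induction along the path then gives the claim with $\kappa_b=b^2$. The one cosmetic difference is that you fix the orientation $n_1\ge n_2$, as in the main text, whereas the paper uses the symmetric ratio bound $1/b\le n_1/n_2\le b$, which also covers the reversed ceiling/floor assignment permitted by the pseudocode's $\bigl\lvert\,\lvert T_1\rvert-\lvert T_2\rvert\,\bigr\rvert\le 1$.
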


\begin{proof}
Suppose a transform split is applied to a view with $n=n_1n_2$ and
$s<n$. The balanced digit-aligned split satisfies
$\tfrac1b\le n_1/n_2\le b$. For the first child the transform length is
$n_1$ and the batch size is $sn_2$; since $s<n_1n_2$,
\[
  sn_2<n_1n_2^2\le b^2 n_1^3 .
\]
The second child is symmetric, $sn_1<n_2 n_1^2\le b^2 n_2^3$. Thus both
immediate children satisfy \eqref{B-tb} with $\kappa_b=b^2$. A later
batch split decreases $s$ without changing $n$, while a later transform
split re-establishes the same inequality for its children. The claim
follows by induction along the recursion path.
\end{proof}

\begin{lemma}[Transform length at the frontier]\label{lem:B-tlen}
Let a nonroot maximal cache-fitting view be reached along a path
containing at least one transform split. Then
\begin{equation}\label{eq:B-tlen}
  n=\Omega(M^{1/4}).
\end{equation}
\end{lemma}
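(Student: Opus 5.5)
The bound follows by combining the two geometric facts already established for the frontier: the mass window of Lemma~\ref{lem:B-mass} and the transform--batch relation of Lemma~\ref{lem:B-tb}. The plan is to sandwich the mass $ns$ of the frontier view between $M^\star/b$ from below and a polynomial in $n$ from above, and then solve for $n$.

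First, since the view is a nonroot maximal cache-fitting view, Lemma~\ref{lem:B-mass} gives
\[
\frac{M^{\star}}{b} < ns \le M^{\star}.
\]
Second, because the path from the root contains at least one transform split, Lemma~\ref{lem:B-tb} applies to the frontier view itself. Lemma~\ref{lem:B-tb} is stated for every view reached after at least one transform split, and intervening batch splits only decrease $s$. So $s\le\kappa_b n^3$ with $\kappa_b=b^2$. Multiplying by $n$ gives $ns\le\kappa_b n^4$.

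Chaining the two bounds yields $M^\star/b<\kappa_b n^4$, that is,
\[
n > \left(\frac{\alpha M}{b\,\kappa_b}\right)^{1/4} = \left(\frac{\alpha}{b^{3}}\right)^{1/4} M^{1/4}.
\]
Because $b=O(1)$ and $\alpha$ is a fixed constant, this is $n=\Omega(M^{1/4})$, with the hidden constant $c_b=(\alpha/b^3)^{1/4}$ depending only on $b$ and $\alpha$. This is exactly the constant used in the main-text argument bounding $2^K$.

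The only point needing care is ensuring that Lemma~\ref{lem:B-tb} really applies at the frontier view. We must check that batch splits after the last transform split preserve $s\le\kappa_b n^3$: they leave $n$ unchanged and divide $s$ by $b$, so the inequality persists. We must also check that the frontier view is nonroot, which holds because the path contains a transform split. Neither check is a real obstacle; the lemma is essentially a two-line corollary.
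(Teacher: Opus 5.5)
Your proposal is correct and matches the paper's proof. The paper likewise chains $M^{\star}/b<ns$ from Lemma~\ref{lem:B-mass} with $s\le\kappa_b n^{3}$ from Lemma~\ref{lem:B-tb} to obtain $M^{\star}/b<\kappa_b n^{4}$; your explicit constant $(\alpha/b^{3})^{1/4}$ just makes the hidden constant concrete.
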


\begin{proof}
By Lemma~\ref{lem:B-mass}, $M^{\star}/b<ns$, and by
Lemma~\ref{lem:B-tb}, $s\le\kappa_b n^3$. Hence
$M^{\star}/b<\kappa_b n^4$, which proves \eqref{B-tlen}.
\end{proof}

We index transform splits by their \emph{output} length: let $n_0=n$ be
the root transform length, and let $n_j$ denote the transform length of
the active child group immediately \emph{after} the $j$-th transform
split on a root-to-frontier path.

\begin{lemma}[Transform-size shrinkage]\label{lem:B-shrink}
For every $j\ge 1$,
\begin{equation}\label{eq:B-shrink1}
  n_j\le\sqrt{b\,n_{j-1}} .
\end{equation}
Hence, after $K$ transform splits,
\begin{equation}\label{eq:B-shrink2}
  n_K\le b^{\,1-1/2^{K}}\,n^{\,1/2^{K}}<b\,n^{\,1/2^{K}} .
\end{equation}
\end{lemma}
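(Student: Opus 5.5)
The plan is to prove \eqref{B-shrink1} directly from the scheduler's balanced digit-aligned split, and then obtain \eqref{B-shrink2} by induction on $K$.

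\textbf{Single-step bound.} Write $n_{j-1}=b^{d}$ with $d\ge 1$; the transform split only occurs when the base case fails, so the active field has at least one digit. The scheduler in Figure~\ref{fig:cofft} splits $T=T_1\pplus T_2$ with $\bigl\lvert\,\lvert T_1\rvert-\lvert T_2\rvert\,\bigr\rvert\le1$. Hence both child lengths are at most $b^{\lceil d/2\rceil}$. The active child after the $j$-th split is one of these two groups, so $n_j\le b^{\lceil d/2\rceil}$. Since $\lceil d/2\rceil\le (d+1)/2$, we get
\[
n_j\le b^{(d+1)/2}=\sqrt{b\,b^{d}}=\sqrt{b\,n_{j-1}}.
\]
Batch splits between consecutive transform splits leave $n$ unchanged. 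The quantity $n_{j-1}$ is therefore also the transform length entering the $j$-th split, which justifies indexing by output length.

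\textbf{Iteration.} Apply induction on $K$ with the hypothesis $n_K\le b^{1-1/2^K}n^{1/2^K}$. The case $K=0$ reads $n_0\le n$, which holds. For the inductive step, combine \eqref{B-shrink1} with the hypothesis:
\[
n_{K+1}\le \bigl(b\cdot b^{1-1/2^K}n^{1/2^K}\bigr)^{1/2}=b^{1-1/2^{K+1}}n^{1/2^{K+1}}.
\]
The exponents check out because $(2-2^{-K})/2=1-2^{-(K+1)}$. The final strict inequality $b^{1-1/2^K}<b$ holds since $b\ge2$ and $1-1/2^K<1$. The non-strict form of the intermediate inequality suffices for $K=0$ as well.

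\textbf{Main obstacle.} The only step needing care is the ceiling bound $\lceil d/2\rceil\le(d+1)/2$, together with confirming that ``balanced'' means exactly this digit split. The main-text description with $n_1=b^{\lceil\ell/2\rceil}$ and the pseudocode condition both give it. The rest is routine exponent arithmetic.
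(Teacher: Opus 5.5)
Your proof is correct and follows the paper's argument. The balanced split gives $n_j\le b^{\lceil d/2\rceil}\le\sqrt{b\,n_{j-1}}$, and iterating from $n_0=n$ yields the closed form; your explicit induction, and your remark that intervening batch splits leave $n$ unchanged, just spell out the paper's one-line ``iterating'' step.
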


\begin{proof}
If the length before the $j$-th split is $n_{j-1}=b^{d}$, the larger
child contains at most $\lceil d/2\rceil$ transform digits, so
$n_j\le b^{\lceil d/2\rceil}\le b^{1/2}n_{j-1}^{1/2}$, which is
\eqref{B-shrink1}. Iterating from $n_0=n$ gives
$n_j\le b^{\,1-1/2^{j}}n^{\,1/2^{j}}$, and in particular
\eqref{B-shrink2}.
\end{proof}

\begin{lemma}[Number of transform splits before the frontier]\label{lem:B-nsplit}
If $K$ is the number of transform splits on a root-to-frontier path,
then
\begin{equation}\label{eq:B-nsplit}
  2^{K}=O\!\bigl(1+\log_M n\bigr).
\end{equation}
\end{lemma}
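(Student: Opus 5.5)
We split into the cases $K=0$ and $K\ge1$ and then combine \lemref{B-tlen} with \lemref{B-shrink}. If $K=0$, then $2^K=1=O(1+\log_M n)$ and there is nothing to prove.

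If $K\ge1$, the frontier view is reached along a path containing at least one transform split. We first handle the root frontier. If the root itself is cache fitting, then $K=0$, so we may assume the frontier view is nonroot. \lemref{B-tlen} then gives a constant $c_b>0$, depending only on $b$ and $\alpha$, with
\[
  c_b M^{1/4}\le n_K .
\]
Here $n_K$ is the transform length after the last transform split; later batch splits do not change $n$. By \eqref{B-shrink2}, $n_K<b\,n^{1/2^K}$, so
\[
  c_b M^{1/4} < b\, n^{1/2^K}.
\]
Taking logarithms base $M$ gives
\[
  \tfrac14 + \log_M(c_b/b) < 2^{-K}\log_M n .
\]

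This inversion is the main obstacle. The additive constant $\log_M(c_b/b)$ is harmless only when $M$ exceeds a sufficiently large constant $M_0$. In that case $\log_M(c_b/b)\ge -\tfrac18$, so $2^{-K}\log_M n>\tfrac18$, and therefore $2^K<8\log_M n=O(1+\log_M n)$.

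When $M\le M_0$, the frontier argument does not give a nontrivial bound, so we fall back on the recursion structure. By \lemref{B-depth}, the number of transform splits satisfies $K=O(1+\log\log_b n)$, hence $2^K=O(1+\log_b n)$. Since $M\le M_0$ is a constant, $\log_b n=O(\log_M n)$, with a constant depending on $M_0$ and $b$. This yields $2^K=O(1+\log_M n)$ in this case as well.

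One must also note that $M\ge 2$ is needed for $\log_M$ to be defined; this follows from the tall-cache assumption with $B\ge1$. Combining the two cases for $M$ completes the case $K\ge1$ and proves \eqref{B-nsplit}.
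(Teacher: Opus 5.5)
Your large-$M$ case is the paper's argument. You reduce to a nonroot frontier view, combine Lemma~\ref{lem:B-tlen} with (\ref{eq:B-shrink2}) to get $c_bM^{1/4}<b\,n^{1/2^K}$, and absorb the additive constant once $M$ exceeds a fixed $M_0$, which gives $2^K<8\log_M n$. For the leftover regime the paper writes ``$n=O(M)$, so only $O(1)$ transform splits occur.'' That is not the complement of ``$M$ larger than a constant.'' Your split on $M\le M_0$ with arbitrary $n$ is the correct complement, and it is exactly the regime in which $K$ can grow with $n$.

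The step you use in that regime, however, does not follow as written. Lemma~\ref{lem:B-depth} gives only $K\le C(1+\log_2\log_b n)$ for an unspecified constant $C$. Exponentiating yields $2^K\le 2^C(\log_b n)^C$, which is polynomial in $\log_b n$ rather than $O(1+\log_b n)$; a big-$O$ bound on $K$ cannot be exponentiated. You need the additive form $K\le\log_2\log_b n+O(1)$, and it follows from facts you already have:
\begin{itemize}
\item The scheduler performs a transform split only when $s<n$ and the view is not a base case. This forces $n>b$, i.e., at least two transform digits.
\item The $K$-th split acts on a view of length $n_{K-1}=b^{d_{K-1}}$, since intervening batch splits leave $n$ unchanged.
\item Applying (\ref{eq:B-shrink2}) with $K-1$ in place of $K$ gives $2\le d_{K-1}<1+d/2^{K-1}$, where $d=\log_b n$. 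Hence $2^K<2\log_b n$.
\end{itemize}
Combined with $\log_b n=(\log_b M)\log_M n\le(\log_b M_0)\log_M n$ for $M\le M_0$, this closes the small-$M$ case and completes the proof.
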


\begin{proof}
The claim is immediate when $K=0$, so assume $K\ge 1$. By
Lemma~\ref{lem:B-tlen} the frontier transform length satisfies
$n_K\ge c_b M^{1/4}$ for a constant $c_b>0$ depending only on $b$ and
$\alpha$. Combining this with \eqref{B-shrink2},
\[
  c_b M^{1/4}\le n_K<b\,n^{\,1/2^{K}} .
\]
Taking base-$2$ logarithms,
\[
  \frac{1}{2^{K}}\log_2 n>\tfrac14\log_2 M-O(1).
\]
For $M$ larger than a suitable constant the right-hand side is at least
$\tfrac18\log_2 M$, so
\[
  2^{K}<\frac{\log_2 n}{\tfrac18\log_2 M}=8\log_M n,
\]
i.e.\ $2^{K}=O(\log_M n)$. In the remaining regime $n=O(M)$, one has
$1+\log_M n=O(1)$, and only $O(1)$ balanced transform splits can occur
before the transform length becomes constant, so $2^{K}=O(1)$. The two
cases give \eqref{B-nsplit}.
\end{proof}

\subsection{Block Cover of a Frontier View}

\begin{lemma}[Block cover of one frontier view]\label{lem:B-cover}
A maximal cache-fitting view of mass $ns$ intersects
\begin{equation}\label{eq:B-cover}
  O\!\Bigl(1+\frac{ns}{B}\Bigr)
\end{equation}
cache blocks. Its complete depth-first descendant execution incurs the
same asymptotic number of cache misses.
\end{lemma}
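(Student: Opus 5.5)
I will first bound the set of physical items touched by the frontier view, and then argue that its descendant execution never needs to evict them. By Lemma~\ref{eq:runs}, a maximal cache-fitting view with parameters $(n,s)$ and contiguous width $c$ is the disjoint union of $ns/c$ runs of length $c$. A run of length $c$ meets at most $\lceil c/B\rceil+1$ blocks. The whole view therefore meets at most
\[
  \frac{ns}{c}\Bigl(\frac{c}{B}+2\Bigr)=\frac{ns}{B}+\frac{2ns}{c}
\]
blocks. Lemma~\ref{lem:B-width} gives $c=\Omega(B)$ at every frontier view, so the second term is $O(ns/B)$. For the root frontier case, where the root itself fits, the width is $c_0\ge B$ by the slab-width assumption. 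The additive $1$ in the statement covers the degenerate situation $ns<B$, where the view still touches at least one block.

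The second step is to show that every descendant of the frontier view accesses only addresses inside this same set. Theorem~\ref{eq:thm7} gives this directly. Batch-split children partition the parent address set (item~4). Transform-split child groups are bijective reindexings of the full parent working set (item~5). Theorem~\ref{eq:thm8} adds that no copy or transpose ever writes outside it. Twiddles are computed from the view descriptor and are fused into accesses, so they touch no extra array storage beyond $O(1)$ registers. The recursion stack has depth $O(\log(ns))$ by Corollary~\ref{cor:B-stack}, and each frame has constant size. So the total footprint is the $O(ns/B)$ data blocks plus $O(\log(ns)/B+1)$ stack blocks. The stack term is dominated because $\log(ns)\le ns$.

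The final step is the ideal-cache argument. The footprint is at most $ns/B+O(ns/c)+O(1)$ blocks, which is $O(M^\star/B)+O(1)$ because $ns\le M^\star=\alpha M$ and $c=\Omega(B)$. Choosing $\alpha$ small enough makes this at most $M/B$ blocks, so the whole footprint fits in cache simultaneously. Under optimal replacement, which is no worse than the explicit policy that loads each footprint block once and never evicts it during the subtree, the depth-first execution then incurs at most one miss per footprint block. That gives $O(1+ns/B)$ misses.

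I expect the main obstacle to be making the constant in $c=\Omega(B)$ work together with the choice of $\alpha$. The per-run overhead $2ns/c$ has to be bounded by a constant multiple of $ns/B$ that is independent of $\alpha$, otherwise the footprint might not fit. I would handle this by taking the tall-cache constant and Lemma~\ref{lem:B-width}'s bound $c\ge\gamma B$ as fixed first, and only then choosing $\alpha\le 1/(2(1+2/\gamma))$, leaving slack for the stack and the $O(1)$ term.
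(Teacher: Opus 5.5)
Your proposal is correct and follows the paper's own proof. The active-run decomposition together with $c=\Omega(B)$ bounds the footprint by $O(ns/B)$, the closure properties confine every descendant to that footprint, and $ns\le\alpha M$ lets it stay resident under depth-first execution; you give more detail than the paper on run-boundary blocks, the stack, and the replacement argument.

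One small caution on your last paragraph. When $C$ has been split, the constant $\gamma$ in $c\ge\gamma B$ comes from $c>\sqrt{\alpha M}/b$, so it depends on $\alpha$ and cannot be fixed before you choose $\alpha$. The argument still goes through: the number of runs is then $ns/c\le b\sqrt{\alpha M}$, so the run-boundary overhead is $O(\sqrt{\alpha}\,M/B)$ under the tall-cache assumption, which falls below $M/B$ for small enough $\alpha$.
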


\begin{proof}
By the active-run decomposition of Appendix~\ref{app:layout}, the view
is the disjoint union of $ns/c$ contiguous runs of length $c$.
Therefore its block footprint is at most
\[
  O\!\Bigl(\frac{ns}{c}\Bigl(1+\frac{c}{B}\Bigr)\Bigr)
  =O\!\Bigl(\frac{ns}{c}+\frac{ns}{B}\Bigr)
  =O\!\Bigl(\frac{ns}{B}\Bigr)
  =O\!\Bigl(1+\frac{ns}{B}\Bigr),
\]
where the third equality uses $c=\Omega(B)$ from
Lemma~\ref{lem:B-width}, which gives $ns/c=O(ns/B)$.

Every descendant of the frontier view accesses only items in the
frontier view's working set: batch splits select subsets of its runs,
contiguous-field splits subdivide those runs, and transform splits
reinterpret the same addresses through sequential child groups.
Moreover the view contains at most $M^{\star}=\alpha M$ items; choosing
$\alpha$ sufficiently small ensures that its block footprint, together
with constant recursion metadata, fits in cache. Under depth-first
execution these blocks can remain resident while the complete
descendant computation is performed. Hence \eqref{B-cover} also
bounds the descendant cache cost.
\end{proof}

\subsection{Total Frontier Mass}

\begin{lemma}[Total frontier mass]\label{lem:B-totalmass}
Suppose every root-to-frontier path contains at most $K$ transform
splits. Then the sum of the masses of all maximal cache-fitting views is
at most $ns\,2^{K}$.
\end{lemma}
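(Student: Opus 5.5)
I will prove the statement by assigning a weight to each node of the recursion tree truncated at the frontier, and inducting downward from the root. For a view $V$ on a root-to-frontier path, let $K(V)$ be the largest number of transform splits on any path from $V$ down to a frontier view in its subtree. The claim to prove by induction is: the total frontier mass below $V$ is at most $\mu(V)\,2^{K(V)}$. Taking $V$ to be the root gives the lemma, because $K(\text{root})\le K$ and $\mu(\text{root})=ns$. I will run the induction on the height of the truncated subtree, i.e.\ on the number of split operations between $V$ and the deepest frontier view below it. Since every path terminates, either at the frontier or at a leaf that already fits in cache, this subtree is finite.

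\emph{Base case.} If $V$ is itself a maximal cache-fitting view, then its frontier mass is exactly $\mu(V)=\mu(V)2^{0}$.

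\emph{Batch split.} If $V$ is not cache fitting and is split by a batch split (outer-batch or contiguous-field), its $b$ children $V_r$ are data disjoint, and their masses sum to $\mu(V)$. Every child satisfies $K(V_r)\le K(V)$. By induction the frontier mass below $V$ is at most
\[
\sum_r\mu(V_r)2^{K(V_r)}\le 2^{K(V)}\sum_r\mu(V_r)=\mu(V)2^{K(V)}.
\]

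\emph{Transform split.} If $V$ is split by a transform split, its two child groups $V_{\mathrm I}$ and $V_{\mathrm{II}}$ each have mass exactly $\mu(V)$, by the mass identity \eqref{B-massinv}. Each satisfies $K(V_{\mathrm I}),K(V_{\mathrm{II}})\le K(V)-1$, since any frontier path through a child extends by this split. The frontier mass below $V$ is therefore at most
\[
2\,\mu(V)\,2^{K(V)-1}=\mu(V)2^{K(V)}.
\]

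The only delicate point is the precise meaning of ``frontier mass''. A transform split does not partition data: both child groups reinterpret the same working set. Frontier views in different child groups can therefore cover the same physical items, and the sum must be taken over frontier view \emph{occurrences} in the recursion tree, not over distinct items. I will state this explicitly, since it is exactly the quantity summed in the subsequent cache bound, where each occurrence is charged its own block footprint by Lemma~\ref{lem:B-cover}.

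I also need to check that, by Lemma~\ref{lem:B-entry}, a view that is not cache fitting never turns into a cache-fitting one through a transform split. Consequently no frontier view lies strictly between a transform split and its children in a way the recursion above fails to account for. With this, the case analysis is exhaustive, and the bound $ns\,2^{K}$ follows.
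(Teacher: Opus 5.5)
Your proposal is correct and is essentially the paper's proof: the same induction over the recursion tree truncated at the cache-fitting frontier. It has the same three cases. A frontier view is the base case. At a batch split the child masses sum to the parent mass. At a transform split both child groups carry the parent's mass with one fewer remaining transform split, giving $2\mu\,2^{K-1}=\mu\,2^{K}$.

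Your remark that frontier mass is summed over occurrences in the recursion tree, not over distinct items, is a useful clarification the paper leaves implicit. The closing appeal to the frontier-entry lemma is unnecessary, since a cache-fitting child of a transform split would simply be absorbed by the base case.
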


\begin{proof}
We prove the more general statement that a recursion subtree rooted at a
view of mass $\mu$, with at most $k$ remaining transform splits on each
path, has total frontier mass at most $\mu\,2^{k}$.

If the root of the subtree is cache fitting, it is itself a frontier
view and contributes mass $\mu\le\mu\,2^{k}$.

If the root performs a batch split, the child masses sum to $\mu$;
applying the induction hypothesis to every child gives total frontier
mass at most $2^{k}\sum_{u}\mathrm{mass}(u)=\mu\,2^{k}$.

If the root performs a transform split, both child groups have mass
$\mu$ by \eqref{B-massinv}, and each has at most $k-1$ remaining
transform splits. Their combined frontier mass is therefore at most
$2\mu\,2^{k-1}=\mu\,2^{k}$.

Applying the claim at the original root proves the lemma.
\end{proof}

\subsection{Cache Complexity}

\begin{theorem}[Cache complexity of the batched local FFT]\label{thm:B-cache}
Let an admissible root view contain $s$ transforms of length $n$, with
initial contiguous width $c\ge B$. Under the tall-cache assumption
$M=\Omega(B^2)$, the scheduler incurs
\begin{equation}\label{eq:B-cache}
  Q_{\mathrm{local}}(n,s)=O\!\Bigl(\frac{ns}{B}\bigl(1+\log_M n\bigr)\Bigr)
\end{equation}
cache misses.
\end{theorem}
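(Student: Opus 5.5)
The argument assembles the frontier lemmas of this appendix. Let $\mathcal{F}$ be the set of maximal cache-fitting views, as defined by the analytical frontier~\eqref{B-fit} with $M^\star=\alpha M$. The total miss count splits into two parts: (i) misses incurred inside the depth-first execution of each frontier view, and (ii) misses incurred at recursion nodes strictly above the frontier. Part (ii) performs no data access except descriptor manipulation and the fused twiddle accesses. Those twiddle accesses occur as the first accesses of a second transform child group, which lies at or below some frontier view, so they are charged to that view. Above-frontier nodes therefore contribute only $O(1)$ stack words each. The stack has depth $O(\log(ns))$ by Lemma~\ref{lem:B-depth}, so it stays resident and costs nothing asymptotically beyond an $O(1)$ term.

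For part (i), Lemma~\ref{lem:B-cover} bounds the cost of each $V\in\mathcal{F}$ of mass $\mu(V)$ by $O(1+\mu(V)/B)$. Summing over $\mathcal{F}$ gives $O(|\mathcal{F}|+\sum_V\mu(V)/B)$. By Lemma~\ref{lem:B-nsplit}, every root-to-frontier path has $K$ transform splits with $2^K=O(1+\log_M n)$. Lemma~\ref{lem:B-totalmass} then gives total frontier mass at most $ns\,2^K=O(ns(1+\log_M n))$. Dividing by $B$ yields the main term of~\eqref{B-cache}.

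The remaining obstacle is the additive $O(|\mathcal{F}|)$ term, i.e., showing that the number of frontier views does not exceed $O(\frac{ns}{B}(1+\log_M n))$. There are two cases.

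\emph{Root cache fitting.} Then $|\mathcal{F}|=1$. The root consists of $ns/c\ge1$ runs with $c\ge B$, so $ns\ge B$ and the term $1$ is absorbed.

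\emph{Root not cache fitting.} Every frontier view is nonroot, and Lemma~\ref{lem:B-mass} gives $\mu(V)>M^\star/b$. Hence
\[
|\mathcal{F}|\le \frac{b\sum_V\mu(V)}{M^\star}=O\!\left(\frac{ns(1+\log_M n)}{M}\right).
\]
Since $M\ge B$, this is dominated by the main term.

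I would also use the same observation to double-check the use of Lemma~\ref{lem:B-cover} with $c=\Omega(B)$ from Lemma~\ref{lem:B-width}. That lemma requires $c_0\ge B$, which is exactly the hypothesis of the theorem, and tall cache, which is assumed.

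Finally, I would note the subtle point: frontier views produced by the two child groups of a transform split overlap in address space, since they reinterpret the same data. The bound nonetheless sums their costs separately. This is legitimate because the execution is sequential and depth-first, and each view's cost is bounded as if its blocks start out cold, so overlap can only reduce the true cost. Combining (i) and (ii) gives~\eqref{B-cache}.
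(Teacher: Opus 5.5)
Your proof is correct and follows the paper's argument: the total frontier mass is at most $ns\,2^K$ with $2^K=O(1+\log_M n)$ (Lemmas~\ref{lem:B-nsplit} and~\ref{lem:B-totalmass}), and each frontier view is charged via the block cover of Lemma~\ref{lem:B-cover}. The only real difference is how you absorb the additive $O(1)$ per frontier view, and both versions work: the paper notes that each frontier view has mass $\mu\ge c=\Omega(B)$ by Lemma~\ref{lem:B-width}, while you bound the number of frontier views using $\mu>M^\star/b$ from Lemma~\ref{lem:B-mass}. Your extra remarks on above-frontier overhead and on the overlapping footprints of the two transform-child groups are sound, and they make explicit points the paper leaves implicit.
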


\begin{proof}
Let $K$ be the maximum number of transform splits on a root-to-frontier
path. Lemma~\ref{lem:B-nsplit} gives $2^{K}=O(1+\log_M n)$, so by
Lemma~\ref{lem:B-totalmass} the total mass of the maximal cache-fitting
frontier is
\begin{equation}\label{eq:B-frontiermass}
  O\!\bigl(ns(1+\log_M n)\bigr).
\end{equation}
By Lemma~\ref{lem:B-cover}, a frontier view of mass $\mu$ costs
$O(1+\mu/B)$ cache misses. The additive constant is absorbed into the
mass term: by Lemma~\ref{lem:B-width} every frontier view has
contiguous width $\Omega(B)$, hence mass $\mu=\Omega(B)$. Summing the
cache costs over the frontier and using \eqref{B-frontiermass} gives
\eqref{B-cache}.
\end{proof}

\subsection{Application to the Two PACO Local Phases}

In Phase~I, each processor performs $N_2/p$ transforms of length $N_1$;
its initial contiguous batch field is precisely the local input slab, so
\begin{equation}\label{eq:B-phaseI}
  n=N_1,\qquad s=c=\frac{N_2}{p}.
\end{equation}
The slab-width assumption $N_2/p\ge B$ permits direct application of
Theorem~\ref{thm:B-cache}, yielding
\begin{equation}\label{eq:B-QI}
  Q_{q,\mathrm{I}}=O\!\Bigl(\frac{N}{pB}\bigl(1+\log_M N_1\bigr)\Bigr).
\end{equation}
Similarly, in Phase~III each processor performs $N_1/p$ transforms of
length $N_2$, with
\begin{equation}\label{eq:B-phaseIII}
  n=N_2,\qquad s=c=\frac{N_1}{p}.
\end{equation}
Since $N_1/p\ge B$,
\begin{equation}\label{eq:B-QIII}
  Q_{q,\mathrm{III}}=O\!\Bigl(\frac{N}{pB}\bigl(1+\log_M N_2\bigr)\Bigr).
\end{equation}
Finally, $\log_M N_1+\log_M N_2=\log_M N$. Adding \eqref{B-QI} and
\eqref{B-QIII}, and absorbing the linear cache cost of the fused
middle stage, gives
\begin{equation}\label{eq:B-Qtotal}
  Q_{q,\mathrm{I}}+Q_{q,\mathrm{III}}
   =O\!\Bigl(\frac{N}{pB}\bigl(1+\log_M N\bigr)\Bigr).
\end{equation}

\punt{
\section{Cache-Oblivious Batched Local FFT}
\label{app:local-fft}

This appendix supplies the complete scheduler and the rigorous cache
analysis underlying \propositionref{local-batched-fft} of the main text.  The scheduler operates on the
admissible digit-field views defined in \appref{layout}.  

Consider an admissible view
\[
\mathcal{V}=(T,O,C,F,P,\mathit{mode}).
\]
Let
\begin{equation}
n=b^{|T|},
\qquad
s=b^{|O|+|C|},
\qquad
c=b^{|C|},
\label{eq:local-parameters}
\end{equation}
where \(n\) is the transform length, \(s\) is the number of batched
transforms, and \(c\) is the width of the distinguished contiguous
batch field.  The scheduler is independent of the cache parameters
\(M\) and \(B\).

\subsection{Scheduler Pseudocode}

Figure~\ref{fig:co-batched-fft} realizes the scheduler described in the
main text: it invokes a constant-size kernel at the leaves, performs a
batch split when \(s\ge n\) (splitting an outer-batch digit whenever one
is available, and the distinguished contiguous field only otherwise),
and performs a balanced digit-aligned transform split when \(s<n\).

\begin{figure}[t]
\centering
\begin{codebox}
\Procname{$\proc{CO-Batched-FFT}(\mathcal{V})$}
\li $n \gets b^{|T|}$
\li $s \gets b^{|O|+|C|}$
\li \If $n \le b$ and $s \le b$ \Comment{Base case}
\li \Then invoke the constant-size natural-order DFT kernels
\li \Return
\End
\li \If $s \ge n$ \Comment{Batch Split}
\li \Then
    \If $O \neq \varnothing$
    \li \Then choose an active outer-batch digit $d \in O$
    \li \Else 
    \li choose the most significant active digit $d \in C$
    \End
\li \For $r \gets 0$ \To $b-1$
\li \Do $\proc{CO-Batched-FFT}(\proc{Batch-Child}(\mathcal{V},d,r))$
\End
\li \Return
\End
\zi \Comment{Transform Split}
\li split $T=T_1\mathbin{+\!\!+}T_2$ with
$\bigl||T_1|-|T_2|\bigr|\le 1$
\li $\mathcal{V}_{\mathrm{I}}
\gets \proc{First-Transform-Child}(\mathcal{V},T_1,T_2)$
\li $\proc{CO-Batched-FFT}(\mathcal{V}_{\mathrm{I}})$
\li relabel the completed physical $T_1$-field as output field $A$
\li $\mathcal{V}_{\mathrm{II}}
\gets \proc{Second-Transform-Child}(\mathcal{V},T_1,T_2,A)$
\li fuse layout-aware twiddles into the first accesses of
$\mathcal{V}_{\mathrm{II}}$
\li $\proc{CO-Batched-FFT}(\mathcal{V}_{\mathrm{II}})$
\end{codebox}
\vspace{-1em}
\caption{Cache-oblivious batched local FFT scheduler.  The operations
\(\mathrm{BatchChild}\), \(\mathrm{FirstTransformChild}\), and
\(\mathrm{SecondTransformChild}\) are the admissible view operations
defined \appref{layout}.}
\label{fig:co-batched-fft}
\end{figure}

The transform split in Figure~\ref{fig:co-batched-fft} is only a
recursive view transformation: by the no-physical-transpose property of
\appref{layout}, the first recursive call computes the first-stage
transforms, the physical \(T_1\)-slots are then relabeled as the output
field \(A\), and the second call treats \(A\) as an outer-batch field.
The layout-aware twiddle is evaluated from the logical coordinates of
the current view and fused into the second child group, so the
scheduler introduces no separate recursive twiddle pass or physical
transpose.

\subsection{Local Work Bound}

\begin{proposition}[Local work]
\label{prop:local-work}
For \(s\) transforms of length \(n=b^\ell\), the scheduler performs
\begin{equation}
W_{\mathrm{local}}(n,s)
=
O\bigl(ns(1+\log n)\bigr)
\label{eq:local-work-bound}
\end{equation}
local operations.
\end{proposition}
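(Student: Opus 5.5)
The plan is to bound the scheduler's operations by separating genuine FFT arithmetic from recursion overhead. I will charge arithmetic to the Cooley--Tukey levels of each logical transform, and charge control work to recursion nodes.

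First I fix one of the $s$ root transforms and follow it through the recursion tree.

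\begin{itemize}
\item Batch splits (outer-batch or contiguous-field) never alter the transform field $T$ of the view containing that instance, so they are invisible to its Cooley--Tukey structure.
\item A transform split of a length-$b^{d}$ subproblem into $b^{d_1}\cdot b^{d_2}$ hands $d_1$ digits to the first child group and $d_2$ digits to the second. Only the twiddle $\omega_n^{-\pi_{\ell_1}(a)j_2}$ is added in between, and it is fused into the second child's first access.
\item By induction on $d$, each item of the fixed transform is touched by exactly one leaf kernel per transform digit, that is, $\log_b n$ times in total. Leaves are radix-$b$ DFTs of constant size, so this costs $O(1)$ per item per digit.
\item Each item receives at most one twiddle per transform-split node on its path. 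Along a path there are $O(1+\log\log_b n)$ such nodes by Lemma~\ref{lem:B-depth}, and more crudely at most $\log_b n$.
\end{itemize}

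Each layout-aware twiddle costs $O(1)$ to compute from the view descriptor, given a constant-time digit-reversal lookup or incremental update. Summing over the $n$ items and $s$ instances gives $O(ns\log_b n)$ arithmetic, which is the $W_{\mathrm{FFT}}$ term.

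Next I bound the overhead. Every batch-split node does $O(b)=O(1)$ descriptor work. I charge it to the instances it partitions: the children partition the parent's batch, so the batch-only subtree beneath any transform-split node forms a $b$-ary tree. That tree has $O(\text{number of leaves})$ internal nodes, and its leaves correspond to disjoint sets of instances.

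The delicate point is the leaf case $n\le b$, $s\le b$. Here each leaf view has mass $O(1)$, so the number of leaves in the whole recursion equals, up to constants, the total leaf mass divided by $O(1)$. A node count could in principle be inflated by the $2^K$ duplication of mass at transform splits. I will handle this with the same mass-doubling induction as Lemma~\ref{lem:B-totalmass}, but without stopping at a cache frontier.

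That induction shows the total leaf mass is at most $ns\cdot 2^{K_{\max}}$, where $K_{\max}=O(1+\log\log_b n)$, so $2^{K_{\max}}=O(1+\log_b n)$. Hence the number of leaves and batch nodes is $O(ns(1+\log n))$. Transform-split nodes number at most the leaf count.

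Combining the two parts gives $W_{\mathrm{local}}=O(ns(1+\log n))$. I also check that no per-split data movement exists, by invoking Theorem~\ref{eq:thm8}. This rules out a hidden $\Theta(ns)$ transpose per transform-split level.

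The main obstacle is bounding the overhead. The naive ``linear per batch subtree'' argument must be combined with the mass-duplication of transform splits, and I expect the $2^{K}$ bound above to be the step that needs the most care. I would prove it cleanly by the mass induction rather than by per-instance charging.
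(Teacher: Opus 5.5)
Your argument is correct. Its arithmetic half is the paper's own. You fix one logical transform and note that batch splits never alter its Cooley--Tukey recursion. Balanced transform splits distribute its $\log_b n$ digits so that each item meets exactly one radix-$b$ leaf level per digit. You then charge $O(1)$ butterfly and fused-twiddle work per item per level. Your proviso that each layout-aware twiddle is evaluated in $O(1)$ is an assumption the paper also makes, silently.

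The overhead half is a genuinely different route. The paper charges each batch split's $O(1)$ bookkeeping linearly to the instances of the enclosing transform subproblem. It then asserts that summing over the transform recursion, plus the leaf calls, gives $O(ns)$. You instead bound the number of recursion nodes:
\begin{itemize}
\item every internal node has at least two children, so there are at most twice as many nodes as leaves;
\item every leaf has mass at least $1$, so there are at most as many leaves as the total leaf mass;
\item the mass-doubling induction of Lemma~\ref{lem:B-totalmass}, run all the way to the leaves instead of stopping at the cache frontier, bounds that mass by $ns\,2^{K_{\max}}$.
\end{itemize}
Your counting is the sounder one. The paper's $O(ns)$ overhead claim does not actually hold: leaves have mass at most $b^2$ while their total mass is $ns\log_b n$, so there are already at least $ns\log_b n/b^2$ leaf calls. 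This does not hurt the proposition, because the target bound carries the $\log n$ factor. Your argument delivers exactly that $O(ns(1+\log n))$ overhead, while the paper's charging does not actually establish it.

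Two small repairs are needed.
\begin{itemize}
\item A logical item does not lie on a single root-to-leaf path. It belongs to both child groups of every transform split in its digit tree, so it receives one twiddle per internal node of that tree. That is $\ell-1$ twiddles, not $O(1+\log\log_b n)$. Your cruder bound of $\log_b n$ is the correct one, and it is the one your sum actually uses.
\item $K_{\max}=O(1+\log\log_b n)$ does not by itself give $2^{K_{\max}}=O(1+\log_b n)$, since exponentiating does not preserve hidden constants. Use the exact depth of balanced splitting instead: $K_{\max}\le\lceil\log_2\ell\rceil$, so $2^{K_{\max}}<2\ell$. In fact the total leaf mass is exactly $ns\ell$.
\end{itemize}
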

\begin{proof}
We separate the scheduler work into three parts:
(i) constant-size leaf DFT work,
(ii) fused twiddle and transform-split control work, and
(iii) batch-split control work.

Fix one of the \(s\) logical transforms at the root.  Ignore batch
splits momentarily: they only partition the collection of independent
transforms and never alter the Cooley--Tukey recursion followed by an
individual transform.  Whenever a transform subproblem of length
\(n'=b^d\) is split into factors \(n_1=b^{d_1}\) and
\(n_2=b^{d_2}\), where \(d_1+d_2=d\), the first child evaluates the
\(d_1\) radix-\(b\) FFT levels associated with the first factor and the
second child evaluates the \(d_2\) levels associated with the second
factor.  Thus, over the complete transform recursion, the transform
dependencies of the fixed logical transform contain exactly
\(\log_b n\) radix-\(b\) levels, up to a constant factor depending only
on the constant-size leaf convention.

At each radix-\(b\) level, every item participates in \(O(1)\)
fixed-radix butterfly and twiddle operations.  The twiddle of a
transform split is fused into the first access of the corresponding
second child group, but it is still charged to that same FFT level.
Hence, one logical transform incurs
\[
O(n\log_b n)
\]
leaf, butterfly, and twiddle work.  Summed over all \(s\) transforms,
this gives
\begin{equation}
W_{\mathrm{FFT}}(n,s)=O(ns\log_b n).
\label{eq:fft-work-level-bound}
\end{equation}

It remains to bound scheduler overhead.  A batch split performs only
\(O(1)\) descriptor and loop-control work per child.  Its \(b\) children
partition the parent batch instances, and \(b=O(1)\).  Therefore, over
all batch-only recursion nodes associated with any fixed transform
subproblem, the total administrative work is linear in the number of
batch instances represented by that subproblem.  Summing over the
transform recursion gives \(O(ns)\) additional work.  The constant-size
leaf calls also contribute \(O(ns)\) work.

Consequently,
\[
W_{\mathrm{local}}(n,s)
=
O(ns\log_b n)+O(ns)
=
O\bigl(ns(1+\log n)\bigr).
\]

Finally, \appref{layout} proves that transform splits are view
transformations rather than physical transposes.  Thus, no additional
\(\Theta(ns)\) data-rearrangement pass is incurred at each recursive
transform split.
\end{proof}

\subsection{Recursion Depth and Stack Space}

\begin{lemma}[Recursion depth]
\label{lem:local-recursion-depth}
Along every root-to-leaf path, the scheduler performs
\begin{equation}
O\bigl(1+\log\log_b n\bigr)
\label{eq:transform-split-depth}
\end{equation}
transform splits and
\begin{equation}
O\bigl(\log_b(ns)\bigr)
\label{eq:batch-split-depth}
\end{equation}
batch splits.  Hence, its total recursion depth is
\begin{equation}
O\bigl(\log(ns)\bigr).
\label{eq:total-recursion-depth}
\end{equation}
\end{lemma}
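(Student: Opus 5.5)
The plan is to count the two kinds of splits separately along a fixed root-to-leaf path, bounding each by a potential that strictly decreases with every split of that kind. The depth bound then follows by adding the two counts.

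\emph{Transform splits.} Let $d=|T|=\log_b n$ be the number of active transform digits of the current view. Batch splits never change $T$, because they move digits only from $O$ or $C$ into $F$. A balanced transform split replaces $T$ by $T_1$ or by $T_2$, and each of these has at most $\lceil d/2\rceil$ digits. Since $\lceil d/2\rceil<d$ for $d\ge2$, iterating this $O(1+\log d)$ times brings $d$ down to at most $1$, that is, to $n\le b$.

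After that point a further transform split would need $s<n\le b$, hence $s=1$ and $n=b$. But then the base-case test ($n\le b$ and $s\le b$) fires first. So no transform split ever occurs with $d\le1$, and the path has at most $O(1+\log\log_b n)$ transform splits. This is \eqref{B-tsplits}.

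\emph{Batch splits.} Use the potential $\Phi=|O|+|C|=\log_b s$, the number of active batch digits. Initially $\Phi=\log_b s_0$. Every batch split, whether of an outer digit or of the distinguished field, moves exactly one active batch digit into $F$, so $\Phi$ drops by $1$. A transform split increases $\Phi$ by $|T_2|$ (first child, by \eqref{firstfields}) or by $|A|=|T_1|$ (second child, by \eqref{secondfields}). In either case the increase equals the number of digits removed from $T$.

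Hence the total increase of $\Phi$ along the path is at most the total decrease of $|T|$, which is at most $\log_b n_0$. Since $\Phi\ge0$ always, the number of batch splits is at most $\log_b s_0+\log_b n_0=\log_b(ns)$. This is \eqref{B-bsplits}.

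Adding the two counts gives depth $O(1+\log\log_b n+\log_b(ns))=O(\log(ns))$, using $b=O(1)$ and $\log\log_b n\le\log_b n$. In the degenerate case $ns=1$ the root is already a leaf and the depth is $O(1)$.

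The one point needing care is the batch-split accounting: a transform split raises the batch-digit count, so $\log_b s$ alone is not monotone. The combined potential $|T|+|O|+|C|$ handles this. It is nonincreasing under every split and strictly decreases under batch splits, which bounds the number of batch splits by its initial value $\log_b(ns)$.
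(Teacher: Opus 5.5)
Your proof is correct and follows the paper's route. Halving the transform digit count gives the $O(1+\log\log_b n)$ transform splits, and charging batch splits against the initial $\log_b s$ batch digits plus the at most $\log_b n$ digits that transform splits move out of $T$ gives $O(\log_b(ns))$; your invariant $|T|+|O|+|C|$, and your check that the base case preempts any transform split with $|T|\le 1$, make explicit what the paper states informally.
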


\begin{proof}
Let \(d=\log_b n\) be the number of active transform digits.  A balanced
transform split leaves either \(\lceil d/2\rceil\) or
\(\lfloor d/2\rfloor\) transform digits in a child.  Repeated balanced
splitting therefore reaches a constant-size transform after
\(O(1+\log d)\) transform splits, which gives
\eqref{transform-split-depth}.

Every batch split fixes one active batch digit.  Initially there are
\(\log_b s\) batch digits.  A transform split may move transform digits
into an outer-batch field, but along one root-to-leaf path the total
number of such transferred digits is at most \(\log_b n\).  Thus, no
more than \(\log_b(ns)\) active batch digits can be fixed, proving
\eqref{batch-split-depth}.  Summing the two bounds gives
\eqref{total-recursion-depth}.
\end{proof}

\begin{corollary}[Stack space]
\label{cor:local-stack-space}
The scheduler uses
\begin{equation}
S_{\mathrm{stack}}=O\bigl(\log(ns)\bigr)
\label{eq:local-stack-space}
\end{equation}
words of recursion-stack space.
\end{corollary}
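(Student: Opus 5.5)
The corollary follows by multiplying the depth bound of Lemma~\ref{lem:local-recursion-depth} by the constant size of a single activation frame. The argument has two steps.

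First, I will bound what one activation of $\proc{CO-Batched-FFT}$ stores beyond its parent.

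\begin{itemize}
\item The view $\mathcal V=(T,O,C,F,P,\mathit{mode})$ is never copied. A child is described by a constant-size delta relative to its parent:
\begin{itemize}
\item a batch child records one fixed digit label together with its value $r\in[b]$;
\item a transform child records the split point $|T_1|$ and the relabeling $T_1\to A$.
\end{itemize}
\item The loop counter $r$ ranges over $[b]$ with $b=O(1)$.
\item The mode bit, the derived quantities $n$ and $s$, and the return address each take $O(1)$ words.
\item The layout-aware twiddle $\omega_n^{-\pi_{\ell_1}(a)j_2}$ is evaluated from logical coordinates. Recovering $\pi_{\ell_1}(a)$ requires walking the ancestor deltas, and this uses $O(1)$ scratch words per evaluation rather than an $\Omega(n)$ twiddle table.
\end{itemize}
So every frame has $O(1)$ words. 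The full descriptor is the chain of deltas on the current root-to-node path, and its total size is the path length times $O(1)$.

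Second, I will invoke Lemma~\ref{lem:local-recursion-depth}. Every root-to-leaf path has $O(1+\log\log_b n)$ transform splits and $O(\log_b(ns))$ batch splits, so its depth is $O(\log(ns))$. Because execution is depth-first, only the frames on one path are live at any moment. This gives $S_{\mathrm{stack}}=O(\log(ns))$. The leaf kernels act on $O(b^2)=O(1)$ items and add only constant workspace.

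The main point to verify is that the descriptors $F$ and $P$ never need an explicit per-frame copy, which would cost $\Theta(\log_b(ns))$ words per frame and $O(\log^2(ns))$ words overall. I will argue this via the closure theorem for admissible views. By that theorem, every legal operation changes $F$ by adding one label and changes $P$ only by the semantic relabeling of \eqref{app-output-field-relabeling}. Hence each frame's change is constant-size and can be undone on return.
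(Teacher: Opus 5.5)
Your proposal is correct and follows the paper's argument. The recursion-depth lemma bounds every root-to-leaf path by $O(\log(ns))$ frames, and each frame holds only $O(1)$ words (descriptors, one fixed-digit assignment, a loop counter, the mode bit), so depth-first execution uses $O(\log(ns))$ stack words; your delta-encoding discussion of $F$ and $P$ via the closure theorem simply spells out the constant-frame-size claim that the paper asserts in a single sentence.
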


\begin{proof}
Each stack frame stores only a constant number of field descriptors,
fixed-digit assignments, loop counters, and mode bits.  The result
follows directly from Lemma~\ref{lem:local-recursion-depth}.
\end{proof}

\subsection{Analytical Cache-Fitting Frontier}

We now analyze cache complexity.  The following frontier is used only
in the proof; the scheduler never tests \(M\), \(B\), or the
cache-fitting condition.

Fix a sufficiently small constant \(\alpha>0\), and define
\begin{equation}
M^\star=\alpha M.
\label{eq:cache-fitting-capacity}
\end{equation}
A view with parameters \((n,s,c)\) is \emph{cache fitting} if
\begin{equation}
ns\le M^\star.
\label{eq:cache-fitting-condition}
\end{equation}
A maximal cache-fitting view is either the root, when the root is cache
fitting, or a cache-fitting view whose parent is not cache fitting.

The purpose of this frontier is to identify subproblems whose complete
descendant execution can remain resident in the ideal cache.  The
scheduler continues recursively below the frontier; the frontier merely
groups that descendant work for analysis.

We record the working-set mass behavior of the two split types, used
repeatedly below.  A batch split partitions the batched transforms into
\(b\) data-disjoint children, so the child masses sum to the parent mass
\(ns\).  A transform split \(n=n_1n_2\) produces child groups with
parameters \((n_1,sn_2)\) and \((n_2,sn_1)\), each of working-set mass
\begin{equation}
n_1(sn_2)=n_2(sn_1)=ns.
\label{eq:transform-child-mass}
\end{equation}
Thus a transform split preserves the working-set mass of each child
group.

\subsection{Frontier Geometry}

\begin{lemma}[Frontier entry is through a batch split]
\label{lem:frontier-entry-batch}
A non-cache-fitting view can first enter the cache-fitting region only
through a batch split.
\end{lemma}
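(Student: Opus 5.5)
The argument compares the mass of a parent view with the masses of its children, one split type at a time. The relevant scalar is the mass $\mu(V)=ns$, and a view is cache fitting exactly when $\mu(V)\le M^{\star}$. The scheduler produces children in only three ways: an outer-batch split, a contiguous-field split, or a transform split. The plan is to show that exactly the first two can decrease the mass of a child, and the transform split cannot.

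\textbf{Step 1: batch splits.} Both batch-split kinds fix one active batch digit (from $O$ or from $C$). They leave $n$ unchanged and replace $s$ by $s/b$, so each of the $b$ children has mass $ns/b$. This follows from the legal view operations and the closure theorem of \appref{layout}. A parent with $ns>M^{\star}$ can therefore have children with $ns/b\le M^{\star}$. So the cache-fitting region is indeed reachable through a batch split.

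\textbf{Step 2: transform splits.} A transform split with $n=n_1n_2$ produces two child groups with parameters $(n_1,sn_2)$ and $(n_2,sn_1)$. By \eqref{B-massinv}, each has mass exactly $ns$. Hence if the parent satisfies $ns>M^{\star}$, so does every transform child. Such a child is not cache fitting, so it cannot be the first cache-fitting view on the path.

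\textbf{Step 3: the claim.} Take any root-to-frontier path and the first view on it that is cache fitting, with a non-cache-fitting parent. By Step 2 the edge into that view is not a transform split, so by the case analysis it is a batch split. I expect no real obstacle here. The only point needing care is that the mass bookkeeping in Step 2 counts the full working set of each child group, not a per-transform quantity. This is guaranteed by the bijective-reindexing property (5) of the closure theorem: each transform child group touches exactly the parent's address set.
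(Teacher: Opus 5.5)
Your proposal is correct and is essentially the paper's argument: a transform split gives each child group mass $n_1(sn_2)=n_2(sn_1)=ns$, so it can never carry a view with $ns>M^{\star}$ into the cache-fitting region, while a batch split divides $s$ by $b$ and therefore can. Your separate treatment of outer-batch and contiguous-field splits and the explicit path argument in Step~3 only spell out what the paper's two-line proof leaves implicit.
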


\begin{proof}
A transform split preserves the working-set mass of each child group,
as shown in \eqref{transform-child-mass}.  It therefore cannot change
a non-cache-fitting mass \(ns>M^\star\) into a cache-fitting mass.
Conversely, a batch split reduces the batch size of each child by a
factor of \(b\), and can therefore produce a cache-fitting child.
\end{proof}

\begin{lemma}[Volume of a nonroot frontier view]
\label{lem:frontier-volume}
Every nonroot maximal cache-fitting view satisfies
\begin{equation}
\frac{M^\star}{b}<ns\le M^\star.
\label{eq:frontier-volume}
\end{equation}
\end{lemma}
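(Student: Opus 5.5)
The argument rests on two facts already established. By the earlier lemma that frontier entry is through a batch split, a nonroot maximal cache-fitting view is never produced by a transform split. A transform split preserves the mass $ns$ of each child group (equation \eqref{B-massinv}), so a child of a non-fitting parent produced by a transform split would itself have mass $ns>M^\star$ and could not be cache fitting. Hence the nonroot maximal view $V$ is a child of a batch split, either an outer-batch split or a contiguous-field split.

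Next I compare $V$ with its parent. Both kinds of batch split fix exactly one active batch digit and leave the transform field $T$ untouched. So if $V$ has parameters $(n,s)$, its parent has parameters $(n,bs)$, since one base-$b$ batch digit has been removed from $O$ or from $C$ and $T$ is unchanged.

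The conclusion then follows from maximality. The view $V$ is cache fitting, so $ns\le M^\star$. Its parent is not cache fitting, so $n\cdot bs>M^\star$, which gives $ns>M^\star/b$. Combining the two inequalities gives \eqref{frontier-volume}.

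The only step needing care is checking that the parent's batch count is exactly $b$ times the child's, and not merely at least that. I would argue this directly from the view operations defined in Appendix~\ref{app:layout}. There, $s=b^{|O|+|C|}$, and each batch-split child has $|O|$ or $|C|$ reduced by exactly one while $|T|$ is unchanged. No other step should cause difficulty.
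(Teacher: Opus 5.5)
Your proposal is correct and matches the paper's proof: the frontier-entry lemma forces the view to be a batch-split child, its parent therefore has parameters $(n,bs)$, and combining $ns\le M^\star$ with $bns>M^\star$ gives the claim. Your extra check that $s=b^{|O|+|C|}$ drops by exactly a factor of $b$ simply spells out what the paper states in one line.
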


\begin{proof}
By Lemma~\ref{lem:frontier-entry-batch}, a nonroot maximal
cache-fitting view is created by a batch split.  Its parent has the
same transform length and \(b\) times as many batch instances.  Since
the child fits but the parent does not,
\[
ns\le M^\star
\qquad\text{and}\qquad
bn s>M^\star.
\]
These inequalities are equivalent to \eqref{frontier-volume}.
\end{proof}

\subsection{Contiguous Width at the Frontier}

\begin{lemma}[Contiguous width at the frontier]
\label{lem:frontier-contiguous-width}
Suppose the root view has initial contiguous width \(c_0\ge B\).  Then
every maximal cache-fitting view has
\begin{equation}
c=\Omega(B).
\label{eq:frontier-contiguous-width}
\end{equation}
Moreover, if the root-to-frontier path contains a split of the
distinguished field \(C\), then
\begin{equation}
c=\Omega(\sqrt{M}).
\label{eq:frontier-contiguous-width-strong}
\end{equation}
\end{lemma}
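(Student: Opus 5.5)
The approach is a case split on whether the distinguished field \(C\) is ever split along the root-to-frontier path. If it is never split, the width is unchanged, because outer-batch splits, transform splits and relabelings never touch the slots of \(C\) (Theorem~\ref{eq:thm7}, items (2)--(3)). Hence \(c=c_0\ge B\), which gives \eqref{frontier-contiguous-width}. The main content is therefore the second case, and there I will prove the stronger bound \eqref{frontier-contiguous-width-strong}. The weaker bound then follows from the tall-cache assumption \(M=\Omega(B^2)\), since it gives \(\sqrt M=\Omega(B)\).

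For the second case, fix the \emph{last} split of \(C\) on the path. Let \(c\) be the child width, so the parent width is \(bc\). The scheduler splits \(C\) only when \(O=\varnothing\), so the parent batch size equals its contiguous width, \(s_{\mathrm{par}}=bc\). The batch-split branch is taken only when \(s_{\mathrm{par}}\ge n\), which gives \(n\le bc\). Every later split on the path is an outer-batch split or a transform split, so it leaves \(c\) unchanged. This means the frontier view has the same \(c\).

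It remains to show that this split happens before the frontier or exactly at it; this is the step I expect needs the most care. By maximality, no strict ancestor of the frontier view is cache fitting. So either the child produced by this split is the frontier view itself, or it is a strict ancestor and hence non-fitting. In the first subcase the parent does not fit, so \(M^\star< n\cdot bc\le b^2c^2\). In the second subcase the child does not fit, and its mass is \(n\cdot c\) (child batch size \(c\), same \(n\)), so \(M^\star<nc\le bc^2\). The root case is consistent: a cache-fitting root is its own frontier with no splits, and it falls under the first case.

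Both subcases give \(c>\sqrt{M^\star}/b=\sqrt{\alpha M}/b\). Since \(\alpha\) and \(b\) are constants, this is \(c=\Omega(\sqrt M)\), which proves \eqref{frontier-contiguous-width-strong}. Applying \(M=\Omega(B^2)\) then yields \eqref{frontier-contiguous-width}.
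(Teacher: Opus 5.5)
Your proposal is correct and follows the paper's proof essentially step for step. It uses the same case split on whether $C$ is ever split, takes the last $C$-split with $O=\emptyset$ to get $n\le bc$, and handles the same two subcases, yielding $M^\star<b^2c^2$ or $M^\star<bc^2$, before applying the tall-cache assumption. One small point: the step you attribute to maximality, that no strict ancestor of the frontier view is cache fitting, actually rests on mass being non-increasing along recursion paths (batch splits divide it by $b$, transform splits preserve it). The paper also relies on this implicitly.
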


\begin{proof}
If the distinguished field \(C\) is never split on the path, then its
width remains \(c=c_0\ge B\).

Otherwise, consider the last split of \(C\) before the frontier view,
and let \(c\) denote the child width immediately after that split.  The
parent width was \(bc\).  Since the scheduler splits \(C\) only when
\(O=\varnothing\), the parent batch size was also \(bc\).  The
batch-split condition therefore gives
\begin{equation}
n\le bc
\label{eq:contiguous-split-condition}
\end{equation}
at that split.

If the resulting child is already cache fitting, its parent is not.
Using \eqref{contiguous-split-condition},
\begin{equation}
M^\star<nbc\le b^2c^2.
\label{eq:contiguous-width-fitting-child}
\end{equation}
If the child is not yet cache fitting, then
\begin{equation}
M^\star<nc\le bc^2.
\label{eq:contiguous-width-nonfitting-child}
\end{equation}
Both cases imply \(c=\Omega(\sqrt{M^\star})=\Omega(\sqrt{M})\).
Subsequent outer-batch and transform splits do not change \(c\).  This
proves \eqref{frontier-contiguous-width-strong}.  The tall-cache
assumption \(M=\Omega(B^2)\) then implies
\eqref{frontier-contiguous-width}.
\end{proof}

\subsection{Reachable States after Transform Splitting}

\begin{lemma}[Transform--batch relation]
\label{lem:reachable-state}
There is a constant \(\kappa_b=O(1)\), depending only on the fixed
radix \(b\), such that every view reached after at least one transform
split satisfies
\begin{equation}
s\le \kappa_b n^3.
\label{eq:reachable-state}
\end{equation}
\end{lemma}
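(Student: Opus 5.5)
The plan is to prove the relation by induction along a root-to-view path of the recursion, taking $\kappa_b=b^2$. I would argue that the inequality $s\le\kappa_b n^3$ is created by the first transform split and is then preserved by every later operation. The induction hypothesis is that the current view either lies before the first transform split on its path, or satisfies the relation. The root lies before any transform split, so the claim is vacuous there.

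\textbf{Transform split.} Consider any transform split, not just the first. The scheduler performs one only when $s<n$, and it uses a balanced digit-aligned factorization $n=n_1n_2$ with $\bigl||T_1|-|T_2|\bigr|\le 1$. This balance gives $b^{-1}\le n_1/n_2\le b$.
\begin{itemize}
\item The first child group has parameters $(n_1,sn_2)$. Using $s<n_1n_2$ and $n_2\le b\,n_1$, we get
\[
sn_2<n_1n_2^2\le b^2n_1^3 .
\]
\item The second child group has parameters $(n_2,sn_1)$. Symmetrically, $sn_1<n_2n_1^2\le b^2n_2^3$.
\end{itemize}
So both children satisfy the relation with $\kappa_b=b^2$, whatever held at the parent. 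This also covers the first transform split on a path, where nothing held before.

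\textbf{Batch splits.} Outer-batch and contiguous-field splits leave $n$ unchanged. They divide $s$ by $b$, so an inequality $s\le\kappa_b n^3$ that holds at the parent still holds at every child. The twiddle fusion and the relabeling $T_1\to A$ are view reinterpretations only. They change neither $n$ nor $s$, so they need no separate treatment.

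Combining the two cases by induction on the number of operations proves \eqref{B-tb} for every view reached after at least one transform split.

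The step that needs the most care is pinning down the balance ratio. With $n_1=b^{\lceil\ell/2\rceil}$, the ratio $n_1/n_2$ lies in $[1,b]$. The general pseudocode allows $|T_1|$ to be the smaller of the two fields, so I would state the bound as $b^{-1}\le n_1/n_2\le b$. The constant $b^2$ then covers both children in either orientation.
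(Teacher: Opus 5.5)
Your proposal is correct and follows the paper's proof essentially line for line. Both use $\kappa_b=b^2$ and the inequalities $sn_2<n_1n_2^2\le b^2n_1^3$ and $sn_1<n_2n_1^2\le b^2n_2^3$, obtained from the split condition $s<n$ and the balance ratio $b^{-1}\le n_1/n_2\le b$; both then note that batch splits shrink $s$ without changing $n$, and close by induction along the recursion path.
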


\begin{proof}
Suppose a transform split is applied to a view with
\(n=n_1n_2\) and \(s<n\).  The balanced digit-aligned split satisfies
\[
\frac{1}{b}\le \frac{n_1}{n_2}\le b.
\]

For the first child, the transform length is \(n_1\) and the batch size
is \(sn_2\).  Since \(s<n_1n_2\),
\[
sn_2<n_1n_2^2\le b^2n_1^3.
\]
The second child is symmetric:
\[
sn_1<n_2n_1^2\le b^2n_2^3.
\]
Thus, both immediate children satisfy \eqref{reachable-state} with
\(\kappa_b=b^2\).  A later batch split decreases \(s\) without changing
\(n\), while a later transform split establishes the same inequality
for its children.  The claim follows by induction along the recursion
path.
\end{proof}

\begin{lemma}[Transform length at the frontier]
\label{lem:frontier-transform-length}
Let a nonroot maximal cache-fitting view be reached along a path
containing at least one transform split.  Then
\begin{equation}
n=\Omega(M^{1/4}).
\label{eq:frontier-transform-length}
\end{equation}
\end{lemma}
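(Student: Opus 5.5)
The statement to prove is the last one shown, Lemma~\ref{lem:frontier-transform-length}: every nonroot maximal cache-fitting view reached along a path containing at least one transform split has $n=\Omega(M^{1/4})$. The plan is to sandwich the frontier mass $ns$ between two bounds. The lower bound will come from the frontier-volume lemma, and the upper bound from the reachable-state invariant.

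The first step establishes the lower bound on mass. By Lemma~\ref{lem:frontier-volume}, which rests on Lemma~\ref{lem:frontier-entry-batch}, a nonroot maximal cache-fitting view was created by a batch split. Its parent therefore has the same $n$ and $b$ times the batch size, and the parent does not fit. This gives
\[
\frac{M^\star}{b}<ns\le M^\star,\qquad M^\star=\alpha M .
\]
Here I only need the left inequality.

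The second step supplies the upper bound on $s$. The path to the view contains at least one transform split, so Lemma~\ref{lem:reachable-state} applies to the view itself and yields $s\le\kappa_b n^3$ with $\kappa_b=b^2$. The point to check is that the invariant survives everything that can follow the last transform split, namely batch splits (outer-batch and contiguous-field). These only decrease $s$ and leave $n$ unchanged, so the invariant persists down to the frontier view. This is exactly the induction already recorded in Lemma~\ref{lem:reachable-state}.

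Combining the two bounds gives
\[
\frac{\alpha M}{b}<ns\le \kappa_b n^4 .
\]
Hence $n>\bigl(\alpha/(b\kappa_b)\bigr)^{1/4}M^{1/4}=(\alpha/b^3)^{1/4}M^{1/4}$. Because $b=O(1)$ and $\alpha$ is a fixed constant, this is $n=\Omega(M^{1/4})$, with the hidden constant $c_b=(\alpha/b^3)^{1/4}$ depending only on $b$ and $\alpha$. This is the form used later in Lemma~\ref{lem:B-nsplit}.

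I do not expect a real obstacle. The only delicate point is confirming that the frontier view is genuinely reached by a batch split whose parent lies after the transform split, so that both lemmas apply to the same view. Lemma~\ref{lem:frontier-entry-batch} guarantees that the entering split is a batch split. The hypotheses of the present lemma (nonroot, at least one transform split on the path) ensure the invariant holds at the view. The invariant holding at the parent is not needed.
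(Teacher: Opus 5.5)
Your proposal is correct and is essentially the paper's proof. You combine the frontier-mass lower bound $M^\star/b<ns$, which holds because entry into the frontier is through a batch split, with the transform--batch invariant $s\le\kappa_b n^3$ to get $M^\star/b<\kappa_b n^4$, which gives $n>(\alpha/b^3)^{1/4}M^{1/4}$.
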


\begin{proof}
By Lemma~\ref{lem:frontier-volume},
\[
\frac{M^\star}{b}<ns.
\]
By Lemma~\ref{lem:reachable-state},
\[
s\le \kappa_b n^3.
\]
Consequently,
\[
\frac{M^\star}{b}<\kappa_b n^4,
\]
which proves \eqref{frontier-transform-length}.
\end{proof}

\subsection{Transform Splits before the Frontier}

\begin{lemma}[Transform-size shrinkage]
\label{lem:transform-size-shrinkage}
Let \(n_j\) be the transform length after the \(j\)-th transform split
on a root-to-frontier path.  The larger child satisfies
\begin{equation}
n_{j+1}\le \sqrt{b n_j}.
\label{eq:transform-size-shrinkage}
\end{equation}
Hence, after \(K\) transform splits,
\begin{equation}
n_{K}
\le
b^{1-1/2^K}n^{1/2^K}
<
b n^{1/2^K}.
\label{eq:transform-size-after-k-splits}
\end{equation}
\end{lemma}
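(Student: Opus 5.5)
The statement to prove is the transform-size shrinkage lemma: after each balanced transform split the larger child has length at most $\sqrt{b\,n_j}$, and iterating gives $n_K\le b^{1-1/2^K}n^{1/2^K}<b\,n^{1/2^K}$. The plan is to prove the one-step inequality from the digit-count arithmetic of the balanced split, and then iterate it by induction on $K$.

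\textbf{One-step bound.} Write $n_j=b^{d}$. The scheduler in Figure~\ref{fig:cofft} splits $T=T_1\pplus T_2$ with $\bigl|\,|T_1|-|T_2|\,\bigr|\le 1$. Hence either child keeps at most $\lceil d/2\rceil\le (d+1)/2$ transform digits. This gives
\[
n_{j+1}\le b^{\lceil d/2\rceil}\le b^{(d+1)/2}=\sqrt{b\,b^{d}}=\sqrt{b\,n_j},
\]
which is \eqref{transform-size-shrinkage}. It applies to whichever child group lies on the root-to-frontier path, because the bound holds for the larger one. Batch splits between consecutive transform splits leave $n$ unchanged, so the transform length just before the $(j{+}1)$-st split is still $n_j$.

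\textbf{Iteration.} Next I would prove $n_j\le b^{1-1/2^j}n^{1/2^j}$ by induction on $j$, with $n_0=n$. The base case $j=0$ reads $n\le b^{0}n$. For the inductive step, substitute the hypothesis into the one-step bound:
\[
n_{j+1}\le b^{1/2}\bigl(b^{1-1/2^j}n^{1/2^j}\bigr)^{1/2}=b^{1/2+1/2-1/2^{j+1}}n^{1/2^{j+1}}=b^{1-1/2^{j+1}}n^{1/2^{j+1}}.
\]
The strict inequality $b^{1-1/2^K}<b$ holds since $b\ge2$ and $1/2^K>0$, and this yields the second form.

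\textbf{Main subtlety.} The only delicate point is indexing: $n_j$ must denote the length on the chosen path immediately after the $j$-th transform split, not the parent's length. I would make this explicit, and note that the ceiling is what absorbs odd $d$ into the $b^{1/2}$ factor. With that convention fixed, every step is routine arithmetic.
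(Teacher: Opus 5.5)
Your proposal is correct and follows the paper's proof essentially verbatim: the balanced split leaves at most $\lceil d/2\rceil$ digits in either child, so $n_{j+1}\le b^{\lceil d/2\rceil}\le\sqrt{b\,n_j}$, and the closed form follows by iterating (your explicit induction spells out what the paper calls ``repeated substitution''). Your point about indexing $n_j$ as the length immediately after the $j$-th split, with batch splits leaving $n$ unchanged in between, matches the convention the paper adopts in its appendix version of the lemma.
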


\begin{proof}
If \(n_j=b^{d_j}\), the larger child contains at most
\(\lceil d_j/2\rceil\) transform digits.  Therefore,
\[
n_{j+1}
\le
b^{\lceil d_j/2\rceil}
\le
b^{1/2}n_j^{1/2},
\]
which proves \eqref{transform-size-shrinkage}.  Repeated substitution
gives \eqref{transform-size-after-k-splits}.
\end{proof}

\begin{lemma}[Number of transform splits before the frontier]
\label{lem:frontier-transform-splits}
If \(K\) is the number of transform splits on a root-to-frontier path,
then
\begin{equation}
2^K=O\bigl(1+\log_M n\bigr).
\label{eq:frontier-transform-splits}
\end{equation}
\end{lemma}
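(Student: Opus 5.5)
The plan is to combine the frontier lower bound on transform length with the per-split shrinkage of Lemma~\ref{lem:transform-size-shrinkage}, and then take logarithms twice. We distinguish the trivial case $K=0$, where $2^K=1$ and there is nothing to prove, from the case $K\ge1$. In the latter case, the frontier view at the end of the path is nonroot: the root itself performed a transform split and hence was not cache fitting. The path also contains at least one transform split, so Lemma~\ref{lem:frontier-transform-length} applies and gives $n_K\ge c_bM^{1/4}$ for a constant $c_b>0$ depending only on $b$ and $\alpha$. Here we use that the frontier view's transform length equals $n_K$, because batch splits after the last transform split do not change $n$.

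Next we chain this with \eqref{transform-size-after-k-splits} to obtain
\[
c_bM^{1/4}\le n_K<b\,n^{1/2^K}.
\]
Taking base-$2$ logarithms yields $2^{-K}\log_2 n>\tfrac14\log_2M-\log_2(b/c_b)$, and rearranging gives $2^K<\log_2 n/(\tfrac14\log_2 M-O(1))$.

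The only delicate step is making the $O(1)$ additive term harmless, so we split on the size of $M$. When $M$ exceeds a constant $M_0$ depending only on $b$ and $\alpha$, the denominator is at least $\tfrac18\log_2M$, and hence $2^K<8\log_Mn$. When $M\le M_0$, we argue directly from Lemma~\ref{lem:local-recursion-depth}. The total number of transform splits on any path is $O(1+\log\log_b n)$, so $2^K=O(1+\log_b n)$. Since $M=O(1)$ in this regime, $\log_b n=O(\log_Mn)$ up to constants depending on $b$ and $M_0$, which again gives $2^K=O(1+\log_Mn)$.

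Alternatively, for the regime $n=O(M)$, we would note that $c_bM^{1/4}\le n_K$ together with $n_K\le n$ and the shrinkage bound forces $K=O(1)$. In either case \eqref{frontier-transform-splits} follows. I expect this case bookkeeping on the constants to be the main obstacle; the rest is mechanical.
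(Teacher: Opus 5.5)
Your argument follows the paper's proof. You dispose of $K=0$, combine the frontier bound $n_K\ge c_bM^{1/4}$ with the shrinkage bound $n_K<b\,n^{1/2^K}$, take logarithms, and for $M$ above a constant $M_0$ conclude $2^K<8\log_M n$. Your complementary case $M\le M_0$ is a cleaner complement than the paper's. The paper states its remaining case as ``$n=O(M)$'', which does not literally cover bounded $M$ with large $n$.

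One step in your small-$M$ case does not follow as written. From $K=O(1+\log\log_b n)$ you cannot infer $2^K=O(1+\log_b n)$: an unspecified constant multiplying $\log\log_b n$ only yields $2^K=(\log_b n)^{O(1)}$. Use the exact count instead.
\begin{itemize}
\item A transform split is executed only when $n\ge b^2$, because if $n\le b$ and $s<n$ the base case fires.
\item Batch splits do not change the transform digit count.
\item Each balanced split leaves at most $\lceil d/2\rceil$ digits in either child.
\end{itemize}
So, with $d=\log_b n$, the view receiving the $K$-th transform split has $2\le d_{K-1}\le\lceil d/2^{K-1}\rceil$ digits. Hence $2^{K-1}<d$ and $2^K<2\log_b n$. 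Equivalently, $b^2\le n_{K-1}<b\,n^{1/2^{K-1}}$ by the shrinkage bound. Your conversion $\log_b n\le(\log_b M_0)\log_M n$ then finishes that case.

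A minor wording point: for $K\ge1$ the frontier view is nonroot simply because the path is nonempty. The root itself need not be the node that performs a transform split, since the first split on the path may be a batch split.
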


\begin{proof}
The claim is immediate when \(K=0\).  Suppose \(K\ge1\).  By
Lemma~\ref{lem:frontier-transform-length}, the frontier transform
length satisfies \(n_K=\Omega(M^{1/4})\).  Combining this with
\eqref{transform-size-after-k-splits} gives
\[
\Omega(M^{1/4})
\le
b n^{1/2^K}.
\]
Taking logarithms yields
\[
\frac{\log n}{2^K}
=
\Omega(\log M),
\]
whenever \(n\) is asymptotically larger than \(M\).  Thus,
\[
2^K=O(\log_M n)
\]
in that case.

If \(n=O(M)\), only \(O(1)\) balanced transform splits can occur before
the transform length becomes constant or the path reaches the
cache-fitting frontier.  Since \(1+\log_M n=O(1)\) in this case, the
same asymptotic bound holds.  Combining the two cases proves
\eqref{frontier-transform-splits}.
\end{proof}

\subsection{Block Cover of a Frontier View}

\begin{lemma}[Block cover of one frontier view]
\label{lem:frontier-block-cover}
A maximal cache-fitting view of mass \(ns\) intersects
\begin{equation}
O\left(1+\frac{ns}{B}\right)
\label{eq:frontier-block-cover}
\end{equation}
cache blocks.  Its complete depth-first descendant execution incurs
the same asymptotic number of cache misses.
\end{lemma}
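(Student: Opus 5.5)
The plan has two parts: bound the block footprint of a single maximal cache-fitting view, then show that its entire depth-first descendant execution stays inside that footprint.

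\emph{Footprint.} By the active-run decomposition (Lemma~\ref{eq:runs}), an admissible view of mass $ns$ is the disjoint union of $ns/c$ contiguous physical runs, each of length $c$. A run of length $c$ meets at most $1+\lceil c/B\rceil$ cache blocks, so the footprint is at most
\[
\frac{ns}{c}\Bigl(2+\frac{c}{B}\Bigr)=O\!\Bigl(\frac{ns}{c}+\frac{ns}{B}\Bigr).
\]
Lemma~\ref{lem:frontier-contiguous-width} gives $c=\Omega(B)$ at every maximal cache-fitting view, so $ns/c=O(ns/B)$. The footprint is therefore $O(ns/B)\subseteq O(1+ns/B)$. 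The additive $1$ only matters in degenerate cases with $ns<B$, which cannot arise since $c\le ns$ and $c=\Omega(B)$.

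\emph{Descendants stay inside the footprint.} By closure of admissible views (Theorem~\ref{eq:thm7}), every descendant operation stays inside the parent's address set. Outer-batch splits select a subset of the runs. Contiguous-field splits cut each run into $b$ consecutive subruns. Transform splits are bijective reindexings of the same address set, with no physical movement (Theorem~\ref{eq:thm8}). Fused twiddles are computed from the view descriptor, not loaded from a table. Hence every item a descendant accesses lies in one of the $O(ns/B)$ blocks already counted.

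\emph{Residency.} This is the step needing care. Choose $\alpha$ small enough that the footprint, at most $C_0(ns/c+ns/B)\le C_1\alpha M/B$ blocks, plus the $O(\log(ns))$ stack words from Corollary~\ref{cor:B-stack}, fits within $M/B$ blocks. The stack cost is absorbed because $\log(ns)=O(\log M)\ll M/B$ under tall cache. Under the ideal (optimal) replacement policy, a working set of size at most $M/B$ blocks accessed in isolation incurs at most one miss per block. Depth-first execution guarantees that nothing outside this view's footprint (other than the stack) is touched while the subtree runs. So the whole descendant computation incurs $O(1+ns/B)$ misses.

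The main obstacle is making the constant in the footprint bound independent of the subtree's shape. The run count $ns/c$ is fixed at the frontier, and $c$ never changes below it except through contiguous-field splits, which only refine runs that are already counted. So the constant depends only on $b$, and choosing $\alpha$ by a fixed function of $b$ and the tall-cache constant suffices.
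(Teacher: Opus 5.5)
Your proposal is correct and follows the paper's proof essentially step for step. The active-run decomposition plus the contiguous-width lemma bound the footprint, closure of admissible views keeps all descendants inside it, and a small choice of $\alpha$ makes the footprint resident under depth-first execution. One minor refinement: the constant hidden in $c=\Omega(B)$ comes from $c>\sqrt{\alpha M}/b$ together with $M=\Omega(B^2)$, so it scales like $\sqrt{\alpha}$. The footprint is therefore $O(\sqrt{\alpha}\,M/B)$, not $C_1\alpha M/B$ with an $\alpha$-free $C_1$. This still vanishes relative to $M/B$ as $\alpha\to 0$, so your choice of $\alpha$ goes through unchanged.
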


\begin{proof}
By the active-run decomposition proved in \appref{layout}, the view is the
disjoint union of \(ns/c\) contiguous runs of length \(c\).  Therefore,
its block footprint is at most
\begin{align}
O\left(
\frac{ns}{c}
\left(1+\frac{c}{B}\right)
\right)
&=
O\left(
\frac{ns}{c}+\frac{ns}{B}
\right)
\nonumber\\
&=
O\left(1+\frac{ns}{B}\right),
\label{eq:frontier-block-count}
\end{align}
where the last step uses Lemma~\ref{lem:frontier-contiguous-width}.

Every descendant of the frontier view accesses only items in the
frontier view's working set: batch splits select subsets of its runs,
contiguous-field splits subdivide those runs, and transform splits
reinterpret the same addresses through sequential child groups.
Moreover, the view contains at most \(M^\star=\alpha M\) items.
Choosing \(\alpha\) sufficiently small ensures that its block footprint,
together with constant recursion metadata, fits in cache.  Under
depth-first execution, these blocks can remain resident while the
complete descendant computation is performed.  Hence,
\eqref{frontier-block-count} also bounds the descendant cache cost.
\end{proof}

\subsection{Total Frontier Mass}

\begin{lemma}[Total frontier mass]
\label{lem:frontier-total-mass}
Suppose every root-to-frontier path contains at most \(K\) transform
splits.  Then the sum of the masses of all maximal cache-fitting views
is at most
\begin{equation}
ns\,2^K.
\label{eq:frontier-total-mass}
\end{equation}
\end{lemma}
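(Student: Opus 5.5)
The plan is to prove a slightly stronger statement by induction on the structure of the recursion tree. For any view $u$ of mass $\mu(u)$ such that every path from $u$ to the frontier contains at most $k$ transform splits, the total mass of the maximal cache-fitting views below $u$ is at most $\mu(u)\,2^{k}$. Applying this at the root, with $\mu=ns$ and $k=K$, gives the lemma.

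The induction runs over the height of the subtree rooted at $u$, considered down to the frontier. Termination is not an issue: along every path the recursion depth is finite by Lemma~\ref{lem:B-depth}. I will split into three cases according to what $u$ does.

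\emph{$u$ is cache fitting.} Then $u$ is itself the unique frontier view in its subtree. This holds because maximality is defined relative to the parent: either $u$ is the root, or its parent is not cache fitting. So the total frontier mass is $\mu(u)\le \mu(u)\,2^k$.

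\emph{$u$ performs a batch split.} Its $b$ children are data disjoint, and their masses sum to $\mu(u)$. This is item (4) of Theorem~\ref{eq:thm7} together with the mass accounting stated before \eqref{B-massinv}. Each child still has at most $k$ transform splits on its paths, so the induction hypothesis bounds the total by $2^k\sum_r \mu(u_r)=\mu(u)\,2^k$.

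\emph{$u$ performs a transform split.} By \eqref{B-massinv}, both child groups $(n_1,sn_2)$ and $(n_2,sn_1)$ have mass exactly $\mu(u)$. Each path through either child has already used one transform split, so at most $k-1$ remain (if $k=0$ this case cannot occur). The induction hypothesis gives at most $\mu(u)2^{k-1}$ per child, for a total of $2\mu(u)2^{k-1}=\mu(u)2^k$.

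The only step needing care is the bookkeeping. Frontier views in different branches must be counted separately, but never counted twice. This follows because the frontier is defined per node of the recursion tree rather than per address set. The two transform children legitimately reuse the same addresses, and the lemma bounds a sum over tree nodes, which is exactly what the cache-complexity argument in Theorem~\ref{thm:B-cache} consumes. I will also note that the hypothesis ``at most $K$'' is inherited by every subtree, which justifies the decrement in the transform case.
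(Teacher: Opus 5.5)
Your proposal is correct and follows essentially the same argument as the paper. The paper also strengthens the claim to a recursion subtree of mass $\mu$ with at most $k$ remaining transform splits, and runs the identical three-case induction: a cache-fitting root contributes $\mu$, a batch split has child masses summing to $\mu$, and a transform split has two children of mass $\mu$ each with budget $k-1$.
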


\begin{proof}
We prove the more general statement that a recursion subtree rooted at
a view of mass \(\mu\), with at most \(k\) remaining transform splits
on each path, has total frontier mass at most \(\mu 2^k\).

If the root of the subtree is cache fitting, it is itself a frontier
view and contributes mass \(\mu\), which is at most \(\mu 2^k\).

If the root performs a batch split, the child masses sum to \(\mu\).
Applying the induction hypothesis to every child gives total frontier
mass at most
\[
2^k\sum_{\text{batch children }u}\mathrm{mass}(u)
=
\mu 2^k.
\]

If the root performs a transform split, both child groups have mass
\(\mu\) by \eqref{transform-child-mass}, and each has at most \(k-1\)
remaining transform splits.  Their combined frontier mass is therefore
at most
\[
2\mu 2^{k-1}
=
\mu 2^k.
\]
Applying the claim at the original root proves
\eqref{frontier-total-mass}.
\end{proof}

\subsection{Cache Complexity}

\begin{theorem}[Cache complexity of the batched local FFT]
\label{thm:local-cache-complexity}
Let an admissible root view contain \(s\) transforms of length \(n\),
with initial contiguous width \(c\ge B\).  Under the tall-cache
assumption \(M=\Omega(B^2)\), the scheduler incurs
\begin{equation}
Q_{\mathrm{local}}(n,s,c)
=
O\left(
\frac{ns}{B}
\bigl(1+\log_M n\bigr)
\right)
\label{eq:local-cache-complexity}
\end{equation}
cache misses.
\end{theorem}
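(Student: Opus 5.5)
The bound follows by combining the frontier lemmas already established in this appendix. The frontier is the analytical set of maximal cache-fitting views with mass $ns\le M^\star=\alpha M$. Throughout, the root is assumed non-cache-fitting; otherwise the whole execution touches $O(1+ns/B)$ blocks by Lemma~\ref{lem:B-cover}, and since $c\ge B$ gives $ns\ge B$, this is $O(ns/B)$.

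\textbf{Step 1: bound the transform splits on each path.} Let $K$ be the maximum number of transform splits on any root-to-frontier path. Lemma~\ref{lem:B-nsplit} gives $2^K=O(1+\log_M n)$. The ingredients are these:
\begin{itemize}
\item frontier views satisfy $M^\star/b<ns$ (Lemma~\ref{lem:B-mass});
\item after a transform split, $s\le\kappa_b n^3$ (Lemma~\ref{lem:B-tb}), so the frontier transform length is $\Omega(M^{1/4})$;
\item balanced splits shrink the length to $n_K<b\,n^{1/2^K}$ (Lemma~\ref{lem:B-shrink}).
\end{itemize}
I regard this as the main obstacle. In the regime $n=O(M)$ one must argue that only $O(1)$ transform splits occur before the frontier. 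This holds because $n_K=\Omega(M^{1/4})$ and $n_K<b\,n^{1/2^K}$ force $n^{1/2^K}=\Omega(M^{1/4})$, so $2^K\le 4\log_M n+O(1)$ uniformly once $M$ exceeds a constant. I would use this single inequality instead of splitting into cases.

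\textbf{Step 2: bound the total frontier mass.} Lemma~\ref{lem:B-totalmass} gives total frontier mass at most $ns\,2^K$. The argument is an induction: batch splits partition mass among the children, while a transform split doubles the mass but consumes one unit of $K$. Combined with Step 1, the total frontier mass is $O(ns(1+\log_M n))$.

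\textbf{Step 3: charge cache misses to frontier views.} By Lemma~\ref{lem:B-width}, every frontier view has contiguous width $c=\Omega(B)$. This uses the initial width $c_0\ge B$ together with the tall-cache argument that gives $c=\Omega(\sqrt M)$ whenever $C$ was split. By Lemma~\ref{lem:B-cover}, each frontier view of mass $\mu$ then costs $O(1+\mu/B)=O(\mu/B)$ misses, since $\mu\ge c=\Omega(B)$. Its whole depth-first subtree stays resident because $\alpha$ is chosen small.

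\textbf{Step 4: the work above the frontier.} Work above the frontier consists only of descriptor and control operations, since no data is moved there (Theorem~\ref{eq:thm8}). These operations use $O(1)$ stack words per frame (Corollary~\ref{cor:B-stack}), which are kept resident and so contribute no extra misses beyond a constant. Summing the per-view costs over the frontier then yields $Q_{\mathrm{local}}(n,s)=O\bigl(\tfrac{ns}{B}(1+\log_M n)\bigr)$.
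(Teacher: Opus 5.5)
Your proposal is correct and follows the paper's proof essentially step for step: bound $2^K=O(1+\log_M n)$ via the transform-split-count lemma, bound the total frontier mass by $ns\,2^K$, and charge each frontier view $O(1+\mu/B)$ misses, absorbing the additive constant because $\mu\ge c=\Omega(B)$. Your root-fitting case and your Step~4 on control overhead above the frontier are harmless additions that the paper leaves implicit.

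One small slip in Step~1: $c_bM^{1/4}\le n_K<b\,n^{1/2^K}$ only gives $2^K<4\log n/(\log M-O(1))$, i.e.\ $2^K\le 8\log_M n$ once $M$ exceeds a constant. It does not give $2^K\le 4\log_M n+O(1)$ uniformly. This still yields the needed $2^K=O(1+\log_M n)$.
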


\begin{proof}
Let \(K\) be the maximum number of transform splits on a root-to-frontier
path.  Lemma~\ref{lem:frontier-transform-splits} gives
\[
2^K=O(1+\log_M n).
\]
By Lemma~\ref{lem:frontier-total-mass}, the total mass of the maximal
cache-fitting frontier is therefore
\begin{equation}
O\bigl(ns(1+\log_M n)\bigr).
\label{eq:total-frontier-mass-bound}
\end{equation}

By Lemma~\ref{lem:frontier-block-cover}, each frontier view of mass
\(\mu\) costs \(O(1+\mu/B)\) cache misses.  The additive constant is
absorbed in the mass term: by
Lemma~\ref{lem:frontier-contiguous-width}, every frontier view has
contiguous width \(\Omega(B)\), and hence mass \(\mu=\Omega(B)\).
Summing the cache costs over the frontier and using
\eqref{total-frontier-mass-bound} gives
\[
Q_{\mathrm{local}}(n,s,c)
=
O\left(
\frac{ns}{B}
\bigl(1+\log_M n\bigr)
\right).
\]
This proves \eqref{local-cache-complexity}.
\end{proof}

\subsection{Application to the Two PACO Local Phases}

In Phase~I, each processor performs \(N_2/p\) transforms of length
\(N_1\).  Its initial contiguous batch field is precisely the local
input slab, so
\begin{equation}
n=N_1,
\qquad
s=c=\frac{N_2}{p}.
\label{eq:phase-i-local-parameters}
\end{equation}
The slab-width assumption \(N_2/p\ge B\) permits direct application of
Theorem~\ref{thm:local-cache-complexity}, yielding
\begin{equation}
Q_{q,\mathrm{I}}
=
O\left(
\frac{N}{pB}
\bigl(1+\log_M N_1\bigr)
\right).
\label{eq:phase-i-cache-complexity}
\end{equation}

Similarly, in Phase~III, each processor performs \(N_1/p\) transforms
of length \(N_2\), with
\begin{equation}
n=N_2,
\qquad
s=c=\frac{N_1}{p}.
\label{eq:phase-iii-local-parameters}
\end{equation}
Since \(N_1/p\ge B\),
\begin{equation}
Q_{q,\mathrm{III}}
=
O\left(
\frac{N}{pB}
\bigl(1+\log_M N_2\bigr)
\right).
\label{eq:phase-iii-cache-complexity}
\end{equation}

Finally,
\[
\log_M N_1+\log_M N_2=\log_M N.
\]
Adding \eqref{phase-i-cache-complexity} and
\eqref{phase-iii-cache-complexity}, and absorbing the linear cache
cost of the fused middle stage, gives
\begin{equation}
Q_{q,\mathrm{I}}+Q_{q,\mathrm{III}}
=
O\left(
\frac{N}{pB}
\bigl(1+\log_M N\bigr)
\right).
\label{eq:two-local-phases-cache-complexity}
\end{equation}
}

\punt{
\section{Cache-Oblivious Batched Local FFT}
\label{app:local-fft}

This appendix gives the complete scheduler and cache analysis for the
local PACO FFT phases.  The scheduler operates on the admissible
digit-field views defined \appref{layout}.  Its view operations are
therefore understood to inherit the address-map, closure, and
no-physical-transpose properties established there.  In particular,
this appendix analyzes the scheduling and locality consequences of
those operations, rather than repeating the digit-field algebra.

Consider an admissible view
\[
\mathcal{V}=(T,O,C,F,P,\mathit{mode}).
\]
Let
\begin{equation}
n=b^{|T|},
\qquad
s=b^{|O|+|C|},
\qquad
c=b^{|C|},
\label{eq:local-parameters}
\end{equation}
where \(n\) is the transform length, \(s\) is the number of batched
transforms, and \(c\) is the width of the distinguished contiguous
batch field.  The scheduler is independent of the cache parameters
\(M\) and \(B\).

\subsection{Scheduler Interface and Recursion}

The scheduler follows three rules.

\begin{enumerate}
    \item If both the transform length and batch size are constant,
    it invokes the appropriate constant-size natural-order DFT
    kernels.

    \item If \(s\ge n\), it performs a batch split.  It first selects
    an active digit from the outer-batch fields \(O\).  Only when
    \(O=\varnothing\) does it split the most significant active digit
    of the distinguished contiguous field \(C\).

    \item If \(s<n\), it performs a balanced digit-aligned
    Cooley--Tukey transform split.
\end{enumerate}

The outer-batch-first rule delays splitting \(C\) for as long as
possible.  This preserves spatial locality in the distinguished
contiguous field while still allowing the recursion to reduce the
number of transforms in a subproblem.

\begin{figure}[t]
\centering
\begin{codebox}
\Procname{$\proc{CO-Batched-FFT}(\mathcal{V})$}
\li $n \gets b^{|T|}$
\li $s \gets b^{|O|+|C|}$
\li \If $n \le b$ and $s \le b$ \Comment{Base case}
\li \Then invoke the constant-size natural-order DFT kernels \punt{prescribed
by $\mathit{mode}$}
\li \Return
\End
\li \If $s \ge n$ \Comment{Batch Split}
\li \Then
    \If $O \neq \varnothing$
    \li \Then choose an active outer-batch digit $d \in O$
    \li \Else 
    \li choose the most significant active digit $d \in C$
    \End
\li \For $r \gets 0$ \To $b-1$
\li \Do $\proc{CO-Batched-FFT}(\proc{Batch-Child}(\mathcal{V},d,r))$
\End
\li \Return
\End
\zi \Comment{Transform Split}
\li split $T=T_1\mathbin{+\!\!+}T_2$ with
$\bigl||T_1|-|T_2|\bigr|\le 1$
\li $\mathcal{V}_{\mathrm{I}}
\gets \proc{First-Transform-Child}(\mathcal{V},T_1,T_2)$
\li $\proc{CO-Batched-FFT}(\mathcal{V}_{\mathrm{I}})$
\li relabel the completed physical $T_1$-field as output field $A$
\li $\mathcal{V}_{\mathrm{II}}
\gets \proc{Second-Transform-Child}(\mathcal{V},T_1,T_2,A)$
\li fuse layout-aware twiddles into the first accesses of
$\mathcal{V}_{\mathrm{II}}$
\li $\proc{CO-Batched-FFT}(\mathcal{V}_{\mathrm{II}})$
\end{codebox}
\vspace{-1em}
\caption{Cache-oblivious batched local FFT scheduler.  The operations
\(\mathrm{BatchChild}\), \(\mathrm{FirstTransformChild}\), and
\(\mathrm{SecondTransformChild}\) are the admissible view operations
defined in \appref{layout}.}
\label{fig:co-batched-fft}
\end{figure}

The transform split in Figure~\ref{fig:co-batched-fft} is only a
recursive view transformation.  It does not itself execute a DFT or
move data.  The first recursive call computes the first-stage
transforms.  Only after that call returns are the physical slots of
\(T_1\) relabeled as the first-stage output field \(A\); the second
recursive call then treats \(A\) as an outer-batch field.  The
layout-aware twiddle factor is evaluated from the logical coordinates
represented by the current view and is fused into the second child
group.  Thus, the scheduler introduces no separate recursive twiddle
pass or recursive physical transpose.

\subsection{Local Work Bound}

\begin{proposition}[Local work]
\label{prop:local-work}
For \(s\) transforms of length \(n=b^\ell\), the scheduler performs
\begin{equation}
W_{\mathrm{local}}(n,s)
=
O\bigl(ns(1+\log n)\bigr)
\label{eq:local-work-bound}
\end{equation}
local operations.
\end{proposition}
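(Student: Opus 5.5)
The plan is to split the cost of \proc{CO-Batched-FFT} into three parts and bound each: arithmetic at the constant-size leaves, the fused twiddle multiplications performed at transform splits, and the control overhead of batch splits. The central observation is that batch splits only partition the $s$ independent instances and never change the Cooley--Tukey recursion that a single logical transform follows. So I will first fix one root transform and track the radix-$b$ levels it passes through, ignoring batching. After that I will account for the administrative cost of the batch-split nodes separately.

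For a fixed logical transform of length $n=b^\ell$, I will argue by induction on $\ell$ that its transform recursion, a binary tree of balanced digit-aligned splits $\ell=\ell_1+\ell_2$, assigns each of the $\ell$ transform digits to exactly one leaf kernel invocation. Concretely:
\begin{itemize}
  \item the first child group covers the $\ell_1$ digits of $T_1$ and the second covers the $\ell_2$ digits of $T_2$;
  \item every item of the transform is therefore touched by $O(\ell)$ constant-size radix-$b$ butterflies in total;
  \item each transform split contributes one twiddle multiplication per item, fused into the first access of $V_{\mathrm{II}}$, and there are at most $\ell-1$ such splits on the item's path, so twiddles add $O(\ell)$ operations per item.
\end{itemize}
This gives $O(n\log_b n)$ per transform and $O(ns\log n)$ over the batch, treating $b=O(1)$. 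Twiddle evaluation from logical coordinates costs $O(1)$ per item: it is a digit-reversal lookup of $\pi_{\ell_1}(a)$ plus one root-of-unity evaluation, which I assume is a unit-cost primitive in the model.

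For overhead, every recursion node does $O(1)$ descriptor work per child, since views are constant-size tuples of digit-field descriptors. I will bound the total number of recursion nodes by $O(ns)$ plus the number of leaf calls. A batch split has $b\ge 2$ children that partition the parent's mass, and a transform split has two children each carrying the parent's full mass. I will charge each internal node to the leaves below it. Because the tree branches at every internal node with fan-out at least $2$, the number of internal nodes is at most the number of leaves. Each leaf handles $\Theta(1)$ items. It is invoked only when $n\le b$ and $s\le b$, and it is nonempty, so its cost is charged to its $\Theta(1)$ items at that level. Summing over leaves then reproduces the per-item level count, so the total leaf count is $O(ns(1+\log_b n))$.

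The step I expect to need the most care is this leaf count: making sure that $s\ge n$ batch splitting at small $n$ never creates more leaves than items. It suffices to note that the base case is reached once both parameters are at most $b$, so every leaf holds at least one item per level. The $+1$ in $O(ns(1+\log n))$ absorbs the case $\ell=0$, where only batch splits and $O(ns)$ trivial leaves occur. Finally, I will invoke the no-physical-transpose theorem of the layout appendix to confirm that no $\Theta(ns)$ copying pass is hidden inside a transform split, which completes the bound.
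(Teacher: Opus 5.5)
Your proposal is correct and follows the paper's proof: the same three-way split into leaf, fused-twiddle, and batch-control work, the same per-transform count of $\log_b n$ radix-$b$ levels with each split's twiddle charged to its level, and the same closing appeal to the no-physical-transpose theorem. The only real difference is the overhead accounting: you charge internal recursion nodes to leaves and bound the leaves by the $ns\max(1,\ell)$ item--leaf incidences, whereas the paper asserts $O(ns)$ batch-control work. Your count is the more accurate one, since almost all of the $\Theta(ns\log_b n)$ leaves have batch-split parents, so that overhead is really of order $ns\log_b n$; either way it is absorbed into $O(ns(1+\log n))$.
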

\begin{proof}
We separate the scheduler work into three parts:
(i) constant-size leaf DFT work,
(ii) fused twiddle and transform-split control work, and
(iii) batch-split control work.

Fix one of the \(s\) logical transforms at the root.  Ignore batch
splits momentarily: they only partition the collection of independent
transforms and never alter the Cooley--Tukey recursion followed by an
individual transform.  Whenever a transform subproblem of length
\(n'=b^d\) is split into factors \(n_1=b^{d_1}\) and
\(n_2=b^{d_2}\), where \(d_1+d_2=d\), the first child evaluates the
\(d_1\) radix-\(b\) FFT levels associated with the first factor and the
second child evaluates the \(d_2\) levels associated with the second
factor.  Thus, over the complete transform recursion, the transform
dependencies of the fixed logical transform contain exactly
\(\log_b n\) radix-\(b\) levels, up to a constant factor depending only
on the constant-size leaf convention.

At each radix-\(b\) level, every item participates in \(O(1)\)
fixed-radix butterfly and twiddle operations.  The twiddle of a
transform split is fused into the first access of the corresponding
second child group, but it is still charged to that same FFT level.
Hence, one logical transform incurs
\[
O(n\log_b n)
\]
leaf, butterfly, and twiddle work.  Summed over all \(s\) transforms,
this gives
\begin{equation}
W_{\mathrm{FFT}}(n,s)=O(ns\log_b n).
\label{eq:fft-work-level-bound}
\end{equation}

It remains to bound scheduler overhead.  A batch split performs only
\(O(1)\) descriptor and loop-control work per child.  Its \(b\) children
partition the parent batch instances, and \(b=O(1)\).  Therefore, over
all batch-only recursion nodes associated with any fixed transform
subproblem, the total administrative work is linear in the number of
batch instances represented by that subproblem.  Summing over the
transform recursion gives \(O(ns)\) additional work.  The constant-size
leaf calls also contribute \(O(ns)\) work.

Consequently,
\[
W_{\mathrm{local}}(n,s)
=
O(ns\log_b n)+O(ns)
=
O\bigl(ns(1+\log n)\bigr).
\]

Finally, Appendix~A proves that transform splits are view
transformations rather than physical transposes.  Thus, no additional
\(\Theta(ns)\) data-rearrangement pass is incurred at each recursive
transform split.
\end{proof}
\punt{
\begin{proof}
A batch split partitions the \(s\) transforms into \(b\) data-disjoint
subbatches.  Hence, the sum of the working-set masses \(ns\) over its
children equals the parent mass.

Now consider a transform split
\[
n=n_1n_2.
\]
Its first and second child groups have parameters
\((n_1,sn_2)\) and \((n_2,sn_1)\), respectively.  Therefore,
\begin{equation}
n_1(sn_2)=n_2(sn_1)=ns.
\label{eq:transform-child-mass}
\end{equation}
Each transform split consequently creates two child groups, each of
the same mass as the parent.  Its fused twiddle multiplications require
\(O(ns)\) local operations, and the leaves perform only constant work
per item because \(b\) is fixed.

Along each transform dependency path, an item participates in
\(O(\log n)\) fixed-radix FFT levels.  Batch splits only distribute
independent transforms among recursive calls and do not introduce
additional FFT levels.  Equivalently, the total mass expansion caused
by transform recursion is \(O(\log n)\); this is proved more explicitly
below in Lemma~\ref{lem:frontier-total-mass}.  Thus, the total arithmetic,
twiddle, and leaf-kernel work is \(O(ns(1+\log n))\).

No additional \(\Theta(ns)\) rearrangement term is incurred at each
transform split, because Appendix~A proves that transform splits change
only recursive views and do not materialize physical transposes.
\end{proof}
}

\subsection{Recursion Depth and Stack Space}

\begin{lemma}[Recursion depth]
\label{lem:local-recursion-depth}
Along every root-to-leaf path, the scheduler performs
\begin{equation}
O\bigl(1+\log\log_b n\bigr)
\label{eq:transform-split-depth}
\end{equation}
transform splits and
\begin{equation}
O\bigl(\log_b(ns)\bigr)
\label{eq:batch-split-depth}
\end{equation}
batch splits.  Hence, its total recursion depth is
\begin{equation}
O\bigl(\log(ns)\bigr).
\label{eq:total-recursion-depth}
\end{equation}
\end{lemma}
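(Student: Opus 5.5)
We bound the two kinds of splits separately along a fixed root-to-leaf path and then add the two counts. Both counts are potential arguments on the digit fields of the current view $V=(T,O,C,F,P,\mathrm{mode})$. Since $b=O(1)$, the final depth bound is $O(\log(ns))$ in any base.

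\emph{Transform splits.} Take as potential the number $d=|T|=\log_b n$ of active transform digits. A transform split of Figure~\ref{fig:cofft} replaces $T$ by $T_1$ or $T_2$, with $\bigl||T_1|-|T_2|\bigr|\le 1$. Each child therefore keeps at most $\lceil d/2\rceil$ transform digits. Batch splits leave $T$ unchanged. Iterating the map $d\mapsto\lceil d/2\rceil$ reaches $d\le 1$ after $O(1+\log d)$ steps, and $d\le1$ means $n\le b$. Once $n\le b$, a further transform split either does not occur or splits off an empty field; the base-case test also requires $s\le b$, and only batch splits are needed to reach it. Hence the number of transform splits on the path is $O(1+\log\log_b n)$.

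\emph{Batch splits.} Use the set of active batch digits $O\cup C$ as a potential. Each batch split, whether an outer-batch split or a contiguous-field split, moves exactly one active batch digit into $F$. A batch digit that has been fixed never becomes active again, by the closure theorem (Theorem~\ref{eq:thm7}). The root contributes $\log_b s$ active batch digits. The only operation that creates new active batch digits is a transform split, which moves $T_2$ into $O$ (first child) or relabels $T_1$ as $A$ and puts it into $O$ (second child). In either case the new batch digits are digits that were previously transform digits on this path, and they are simultaneously removed from $T$. A root transform digit is therefore converted into a batch digit at most once along the path, because after conversion it is no longer in $T$. The total number of batch digits ever active on the path is thus at most $\log_b s+\log_b n=\log_b(ns)$, and each can be fixed only once, which gives $O(\log_b(ns))$ batch splits.

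The main obstacle is making the charging argument for batch digits rigorous. Relabeling $T_1\to A$ reuses the same physical slots under a new label, so we must ensure no digit slot is counted twice. To handle this, we fix the potential to be the multiset of physical slots under $P$, using invariant (2) of Theorem~\ref{eq:thm7}. Under this potential each slot moves monotonically through the roles transform, then batch, then fixed, and never moves backward. Slots that start in the root batch simply skip the transform role. Each split advances at least one slot, batch splits advance exactly one, and there are $\log_b(ns)$ slots in total. This monotonicity yields both bounds at once, and summing them gives the total recursion depth $O(\log(ns))$.
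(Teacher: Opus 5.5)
Your proposal is correct and follows the paper's argument. Halving the transform-digit count gives $O(1+\log\log_b n)$ transform splits. Charging each batch split to one newly fixed batch digit, with at most $\log_b s+\log_b n$ digits ever active as batch digits along a path, gives $O(\log_b(ns))$ batch splits. Your slot-monotonicity refinement (each slot moves transform $\to$ batch $\to$ fixed along a path) just makes the paper's charging step explicit.

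The ``empty field'' hedge is unnecessary. A transform split occurs only when $s<n$ and $n>b$, since otherwise the base case or a batch split fires. So both halves of every transform split are nonempty.
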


\begin{proof}
Let \(d=\log_b n\) be the number of active transform digits.  A balanced
transform split leaves either \(\lceil d/2\rceil\) or
\(\lfloor d/2\rfloor\) transform digits in a child.  Repeated balanced
splitting therefore reaches a constant-size transform after
\(O(1+\log d)\) transform splits, which gives
\eqref{transform-split-depth}.

Every batch split fixes one active batch digit.  Initially there are
\(\log_b s\) batch digits.  A transform split may move transform digits
into an outer-batch field, but along one root-to-leaf path the total
number of such transferred digits is at most \(\log_b n\).  Thus, no
more than \(\log_b(ns)\) active batch digits can be fixed, proving
\eqref{batch-split-depth}.  Summing the two bounds gives
\eqref{total-recursion-depth}.
\end{proof}

\begin{corollary}[Stack space]
\label{cor:local-stack-space}
The scheduler uses
\begin{equation}
S_{\mathrm{stack}}=O\bigl(\log(ns)\bigr)
\label{eq:local-stack-space}
\end{equation}
words of recursion-stack space.
\end{corollary}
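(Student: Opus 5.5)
The stack bound follows directly from the recursion-depth bound of Lemma~\ref{lem:local-recursion-depth} once we show that each frame needs only $O(1)$ words. The work is in bounding the per-frame storage, so I will first fix what a stack frame of $\proc{CO-Batched-FFT}$ must hold and then multiply by the depth $O(\log(ns))$.

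\textbf{Frame contents.} A frame holds the loop counter $r\in[b]$ of a batch split, a program-counter/phase bit recording whether we are in the first or second transform child, the mode bit, and the information describing the child view. Since $b=O(1)$, the counter and bits take $O(1)$ words.

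\textbf{Incremental descriptors.} The view itself should not be copied into each frame; a naive copy of $(T,O,C,F,P)$ would cost $O(\log_b(ns))$ words per frame. Instead I will argue that each legal view operation from Appendix~\ref{app:layout} changes the parent view by a constant-size delta, which is stored in the frame:
\begin{itemize}
\item An outer-batch split or contiguous-field split adds one entry $d\mapsto r$ to $F$. Equivalently, it adds $r\,\func{stride}_P(d)$ to a base offset, which is one word.
\item A transform split records the split point $|T_1|$ together with the relabeling $T_1\to A$. The physical slots are unchanged by \eqref{relabel}, so these fields are contiguous digit intervals of the parent's digit labels, and each can be described by $O(1)$ integers (its endpoints).
\end{itemize}
The full descriptor of the current view is then the composition of the deltas along the active stack.

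\textbf{Twiddle factors.} Each layout-aware twiddle $\omega_n^{-\pi_{\ell_1}(a)j_2}$ is computed from logical coordinates that can be recovered by walking the stack. These are $O(1)$-word values each, with at most $O(\log(ns))$ cumulative state along the path.

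\textbf{Main obstacle.} The hardest step is justifying that $O$ and $P$ never require more than $O(1)$ new words per frame. Although $O$ can accumulate many fields, each field is introduced by exactly one transform-split frame, so it is already charged to that frame's $O(1)$ words. The digit-to-slot map $P$ restricted to any field is an interval of slots, because every operation only relabels or fixes whole digits and never reorders physical slots; this is item (2) of Theorem~\ref{eq:thm7}. Given that, summing $O(1)$ words over the $O(\log(ns))$ frames yields $S_{\mathrm{stack}}=O(\log(ns))$.
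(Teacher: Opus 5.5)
Your proposal is correct and takes the same route as the paper: it multiplies the $O(\log(ns))$ recursion depth from the recursion-depth lemma by an $O(1)$-word bound per frame. The paper only asserts the per-frame constant; your incremental-delta encoding (one base-offset update per batch split, one split index per transform split) actually justifies it, even though $F$ and $O$ can each grow to $\Theta(\log_b(ns))$ digits.
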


\begin{proof}
Each stack frame stores only a constant number of field descriptors,
fixed-digit assignments, loop counters, and mode bits.  The result
follows directly from Lemma~\ref{lem:local-recursion-depth}.
\end{proof}

\subsection{Analytical Cache-Fitting Frontier}

We now analyze cache complexity.  The following frontier is used only
in the proof; the scheduler never tests \(M\), \(B\), or the
cache-fitting condition.

Fix a sufficiently small constant \(\alpha>0\), and define
\begin{equation}
M^\star=\alpha M.
\label{eq:cache-fitting-capacity}
\end{equation}
A view with parameters \((n,s,c)\) is \emph{cache fitting} if
\begin{equation}
ns\le M^\star.
\label{eq:cache-fitting-condition}
\end{equation}
A maximal cache-fitting view is either the root, when the root is cache
fitting, or a cache-fitting view whose parent is not cache fitting.

The purpose of this frontier is to identify subproblems whose complete
descendant execution can remain resident in the ideal cache.  The
scheduler continues recursively below the frontier; the frontier merely
groups that descendant work for analysis.

\subsection{Frontier Geometry}

\begin{lemma}[Frontier entry is through a batch split]
\label{lem:frontier-entry-batch}
A non-cache-fitting view can first enter the cache-fitting region only
through a batch split.
\end{lemma}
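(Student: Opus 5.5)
The plan is a short argument from two facts recorded just above the statement: each kind of split changes mass in a known way, and a view is cache fitting exactly when its mass $ns$ is at most $M^\star$. I argue by cases on the split that creates the child.

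\emph{Transform split.} Suppose the parent has parameters $(n,s)$ with $ns>M^\star$ and performs a transform split $n=n_1n_2$. By \eqref{transform-child-mass}, the two child groups have parameters $(n_1,sn_2)$ and $(n_2,sn_1)$, and each has mass exactly $ns$. Hence each child also has mass greater than $M^\star$, so neither child is cache fitting. A non-cache-fitting parent therefore never has a cache-fitting child through a transform split.

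\emph{Batch split.} Here I only need to show that this is the one route that can lower mass, so the statement is not vacuous. Both an outer-batch split and a contiguous-field split keep $n$ and divide $s$ by $b$. So each child has mass $ns/b$, which may be at most $M^\star$ even though $ns$ is not. I would also note that the base case creates no children, since it just calls the leaf kernels.

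\emph{Conclusion.} Take any path in the recursion tree and the first view on it that is cache fitting, assuming that view is not the root. Its parent is not cache fitting, and the edge into it must come from one of the legal view operations in the scheduler. By the first case this edge cannot be a transform split, so it is a batch split.

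I expect no real obstacle. The only point needing care is the meaning of ``view'' for a transform split. The scheduler's two child groups are the sequential children $V_{\mathrm{I}}$ and $V_{\mathrm{II}}$, and each carries the full parent working set as a bijective reindexing (closure theorem, item~5). Their mass is therefore exactly $ns$, not merely at most $ns$, and that exact equality is what the argument uses.
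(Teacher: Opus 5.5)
Your proposal is correct and is essentially the paper's own argument: a transform split gives each child group, with parameters $(n_1,sn_2)$ or $(n_2,sn_1)$, exactly the parent's mass $ns$, so it can never carry a view from $ns>M^\star$ to $ns\le M^\star$. That leaves the batch split, which divides $s$ by $b$, as the only way in; your remarks about the first cache-fitting view on a path and about exact mass equality via the closure theorem spell out what the paper's two-sentence proof leaves implicit.
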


\begin{proof}
A transform split preserves the working-set mass of each child group,
as shown in \eqref{transform-child-mass}.  It therefore cannot change a
non-cache-fitting mass \(ns>M^\star\) into a cache-fitting mass.
Conversely, a batch split reduces the batch size of each child by a
factor of \(b\), and can therefore produce a cache-fitting child.
\end{proof}

\begin{lemma}[Volume of a nonroot frontier view]
\label{lem:frontier-volume}
Every nonroot maximal cache-fitting view satisfies
\begin{equation}
\frac{M^\star}{b}<ns\le M^\star.
\label{eq:frontier-volume}
\end{equation}
\end{lemma}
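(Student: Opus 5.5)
The plan is to derive both inequalities from two facts: the recursion only ever reaches a view from its parent, and the mass behaviour of the two split types is recorded in \eqref{transform-child-mass} and the preceding paragraph. Fix a nonroot maximal cache-fitting view with parameters \((n,s)\). By definition it satisfies \(ns\le M^\star\), which is the upper bound, and its parent is not cache fitting. So the task reduces to identifying which kind of split produced this view and bounding the parent's mass in terms of \(ns\).

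\textbf{Step 1: the view was created by a batch split.} Here I would invoke Lemma~\ref{lem:frontier-entry-batch}. Suppose instead the view were a child of a transform split. By \eqref{transform-child-mass} its mass would equal the parent's mass. The parent is not cache fitting, so its mass exceeds \(M^\star\), and the child's mass would then exceed \(M^\star\) as well. That contradicts the view being cache fitting. Hence the creating split is a batch split, either of an outer-batch digit or of the most significant digit of \(C\).

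\textbf{Step 2: compare with the parent.} A batch split, of either kind, fixes exactly one active batch digit and leaves \(T\) unchanged. The parent therefore has transform length \(n\) and batch size \(bs\), so its mass is \(bns\). Since the parent is not cache fitting, \(bns>M^\star\), which gives \(ns>M^\star/b\). Combined with \(ns\le M^\star\), this yields \eqref{frontier-volume}.

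The only point needing care is Step~1: I must make sure that "parent" refers to the immediate recursion parent, and that a transform child's mass equals the full parent mass rather than a fraction of it. The second point is exactly what \eqref{transform-child-mass} states, so I do not expect a real obstacle.
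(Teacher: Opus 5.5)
Your proof is correct and is essentially the paper's own argument. The paper likewise cites the lemma that a non-cache-fitting view can enter the cache-fitting region only through a batch split; you re-derive that lemma inline from the mass invariance $n_1(sn_2)=n_2(sn_1)=ns$ of transform children. It then observes that the parent has the same transform length and $b$ times as many batch instances, so $ns\le M^\star$ and $bns>M^\star$ give the claim.
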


\begin{proof}
By Lemma~\ref{lem:frontier-entry-batch}, a nonroot maximal
cache-fitting view is created by a batch split.  Its parent has the
same transform length and \(b\) times as many batch instances.  Since
the child fits but the parent does not,
\[
ns\le M^\star
\qquad\text{and}\qquad
bn s>M^\star.
\]
These inequalities are equivalent to \eqref{frontier-volume}.
\end{proof}

\subsection{Contiguous Width at the Frontier}

\begin{lemma}[Contiguous width at the frontier]
\label{lem:frontier-contiguous-width}
Suppose the root view has initial contiguous width \(c_0\ge B\).  Then
every maximal cache-fitting view has
\begin{equation}
c=\Omega(B).
\label{eq:frontier-contiguous-width}
\end{equation}
Moreover, if the root-to-frontier path contains a split of the
distinguished field \(C\), then
\begin{equation}
c=\Omega(\sqrt{M}).
\label{eq:frontier-contiguous-width-strong}
\end{equation}
\end{lemma}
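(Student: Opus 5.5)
The plan is a case split on whether the distinguished field \(C\) is ever split on the root-to-frontier path. Throughout I use two earlier facts: the batch-split rule (split \(C\) only when \(O=\varnothing\) and \(s\ge n\)), and the mass characterization of frontier views from Lemma~\ref{lem:frontier-volume}.

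\emph{Case 1: \(C\) is never split.} Outer-batch splits and transform splits never alter \(C\). This follows from the view operations of the layout appendix, since \(C_r=C\), \(C_{\mathrm{I}}=C\) and \(C_{\mathrm{II}}=C\). Hence the frontier width equals the root width, \(c=c_0\ge B\), and \eqref{frontier-contiguous-width} holds.

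\emph{Case 2: \(C\) is split at least once.} Fix the last split of \(C\) on the path, and let \(c\) be the child width right after it, so the parent width was \(bc\). Because \(C\) is split only when \(O=\varnothing\), the parent batch size equals its contiguous width, \(s=bc\). The batch-split condition \(s\ge n\) then gives \(n\le bc\) at that node, where \(n\) is the transform length there.

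There are two subcases, depending on whether this split is the frontier entry.

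\begin{itemize}
\item If the child is already cache fitting, its parent is not. Then \(M^\star<n\cdot bc\le b^2c^2\).
\item If the child is not yet cache fitting, then its own mass exceeds \(M^\star\). So \(M^\star<n c\le bc^2\).
\end{itemize}

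In both subcases \(c=\Omega(\sqrt{M^\star})=\Omega(\sqrt M)\), because \(b\) and \(\alpha\) are constants.

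The main point to check carefully is that \(c\) survives from this last split down to the frontier. No further \(C\)-split occurs by the choice of the last split. Outer-batch splits and transform splits leave \(C\) unchanged by the closure theorem of the layout appendix. One caveat: the transform length may change after this split, but it does not matter, because the bound uses only the parameters at the splitting node. This establishes \eqref{frontier-contiguous-width-strong}.

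Finally, the tall-cache assumption \eqref{tall-cache}, \(M=\Omega(B^2)\), gives \(\sqrt M=\Omega(B)\). Combined with Case 1, this yields \(c=\Omega(B)\) in all cases.
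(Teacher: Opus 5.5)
Your proposal is correct and follows the paper's proof essentially step for step. You use the same case split on whether \(C\) is ever split, the same last-split analysis (\(O=\emptyset\Rightarrow s=bc\Rightarrow n\le bc\)), the same two subcases giving \(M^\star<b^2c^2\) or \(M^\star<bc^2\), and the same closing appeal to the invariance of \(C\) under later outer-batch and transform splits together with the tall-cache assumption. (The frontier-mass lemma you cite is never actually used; maximality of the frontier view already gives that its parent is not cache fitting.)
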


\begin{proof}
If the distinguished field \(C\) is never split on the path, then its
width remains \(c=c_0\ge B\).

Otherwise, consider the last split of \(C\) before the frontier view,
and let \(c\) denote the child width immediately after that split.  The
parent width was \(bc\).  Since the scheduler splits \(C\) only when
\(O=\varnothing\), the parent batch size was also \(bc\).  The
batch-split condition therefore gives
\begin{equation}
n\le bc
\label{eq:contiguous-split-condition}
\end{equation}
at that split.

If the resulting child is already cache fitting, its parent is not.
Using \eqref{contiguous-split-condition},
\begin{equation}
M^\star<nbc\le b^2c^2.
\label{eq:contiguous-width-fitting-child}
\end{equation}
If the child is not yet cache fitting, then
\begin{equation}
M^\star<nc\le bc^2.
\label{eq:contiguous-width-nonfitting-child}
\end{equation}
Both cases imply \(c=\Omega(\sqrt{M^\star})=\Omega(\sqrt{M})\).
\punt{All subsequent operations before the frontier are outer-batch or
transform splits, neither of which changes \(c\).} This proves
\eqref{frontier-contiguous-width-strong}.  The tall-cache assumption
\(M=\Omega(B^2)\) then implies \eqref{frontier-contiguous-width}.
\end{proof}

\subsection{Reachable States after Transform Splitting}

\begin{lemma}[Transform--batch relation]
\label{lem:reachable-state}
There is a constant \(\kappa_b=O(1)\), depending only on the fixed
radix \(b\), such that every view reached after at least one transform
split satisfies
\begin{equation}
s\le \kappa_b n^3.
\label{eq:reachable-state}
\end{equation}
\end{lemma}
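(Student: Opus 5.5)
The plan is an induction along a root-to-node path of the recursion tree: I will show that once at least one transform split has occurred, every later view satisfies $s\le \kappa_b n^3$ with $\kappa_b=b^2$. There are three cases, one for each kind of operation that can be applied to a view satisfying the invariant (or to the parent of the first transform split).

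\emph{Base step: the first transform split.} The scheduler performs a transform split only when $s<n$. It writes $n=n_1n_2$ with $n_1=b^{\lceil\ell/2\rceil}$ and $n_2=b^{\lfloor\ell/2\rfloor}$, so $1\le n_1/n_2\le b$ and in particular $\frac1b\le n_1/n_2\le b$.
\begin{itemize}
  \item The first child has parameters $(n_1,sn_2)$. Using $s<n_1n_2$ and $n_2\le b n_1$, we get $sn_2<n_1n_2^2\le b^2n_1^3$.
  \item The second child has parameters $(n_2,sn_1)$. Using $n_1\le bn_2$, we get $sn_1<n_2n_1^2\le b^2n_2^3$.
\end{itemize}
Both children therefore satisfy the invariant, regardless of which view the split was applied to.

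\emph{Inductive step.} Suppose a view $(n,s)$ already satisfies $s\le\kappa_b n^3$.
\begin{itemize}
  \item An outer-batch split or a contiguous-field split produces $(n,s/b)$. The transform length is unchanged and $s$ decreases, so the inequality persists.
  \item A transform split can only be applied when $s<n$. The base-step computation applies verbatim and re-establishes the invariant for both children. The previous invariant is not even needed here; only the trigger condition $s<n$ matters.
\end{itemize}
Since every view reached after at least one transform split descends from such a split through a finite sequence of these operations, induction on path length gives the claim.

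The only delicate point is ensuring the balance ratio is taken in both directions, because the second child needs $n_1\le bn_2$. This follows from $\lfloor\ell/2\rfloor\ge\lceil\ell/2\rceil-1$. I would also note that the invariant holds trivially at leaves, and that it is not claimed for views preceding any transform split. Those views may have $s$ arbitrarily large relative to $n$, which is exactly why the lemma is stated only after a transform split.
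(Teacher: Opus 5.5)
Your proposal is correct and follows the paper's proof essentially step for step. It uses the same constant $\kappa_b=b^2$ and the same two inequalities $sn_2<n_1n_2^2\le b^2n_1^3$ and $sn_1<n_2n_1^2\le b^2n_2^3$, obtained from the trigger $s<n$ and the two-sided balance $\frac1b\le n_1/n_2\le b$. It also uses the same path induction, in which batch splits only shrink $s$ and every later transform split re-establishes the bound from $s<n$ alone.
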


\begin{proof}
Suppose a transform split is applied to a view with
\(n=n_1n_2\) and \(s<n\).  The balanced digit-aligned split satisfies
\[
\frac{1}{b}\le \frac{n_1}{n_2}\le b.
\]

For the first child, the transform length is \(n_1\) and the batch size
is \(sn_2\).  Since \(s<n_1n_2\),
\[
sn_2<n_1n_2^2\le b^2n_1^3.
\]
The second child is symmetric:
\[
sn_1<n_2n_1^2\le b^2n_2^3.
\]
Thus, both immediate children satisfy \eqref{reachable-state} with
\(\kappa_b=b^2\).  A later batch split decreases \(s\) without changing
\(n\), while a later transform split establishes the same inequality
for its children.  The claim follows by induction along the recursion
path.
\end{proof}

\begin{lemma}[Transform length at the frontier]
\label{lem:frontier-transform-length}
Let a nonroot maximal cache-fitting view be reached along a path
containing at least one transform split.  Then
\begin{equation}
n=\Omega(M^{1/4}).
\label{eq:frontier-transform-length}
\end{equation}
\end{lemma}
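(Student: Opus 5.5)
The statement to prove is the last lemma shown (transform length at the frontier, in the punted duplicate appendix): every nonroot maximal cache-fitting view reached along a path containing at least one transform split has $n=\Omega(M^{1/4})$. The plan is to sandwich the frontier mass $ns$ between a lower bound coming from cache fitting and an upper bound coming from the transform--batch relation.

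The first step is a lower bound on the mass. I will invoke Lemma~\ref{lem:frontier-volume}, which applies because the view is a nonroot maximal cache-fitting view. It was created by a batch split, since by Lemma~\ref{lem:frontier-entry-batch} transform splits preserve mass. Its parent therefore has the same $n$ and batch size $bs$, and the parent fails to fit. This gives
\[
\frac{M^\star}{b}<ns .
\]

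The second step is an upper bound on $s$ in terms of $n$. Because the path contains at least one transform split, Lemma~\ref{lem:reachable-state} gives $s\le\kappa_b n^3$ with $\kappa_b=b^2$. This invariant is established at the first transform split, using $s<n$ and the balanced ratio $n_1/n_2\in[1/b,b]$. It survives later batch splits, which shrink $s$ without changing $n$, and it is re-established at each further transform split. Substituting this bound into the first step yields
\[
\frac{\alpha M}{b}<\kappa_b n^4,
\qquad\text{hence}\qquad
n>\left(\frac{\alpha}{b^3}\right)^{1/4}M^{1/4}.
\]
Since $\alpha$ and $b$ are constants, this is $n=\Omega(M^{1/4})$.

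I expect the only real obstacle to be hypothesis bookkeeping rather than any estimate. The invariant of Lemma~\ref{lem:reachable-state} must hold at the frontier view itself, not only at some ancestor. This holds because every operation after the first transform split is either a batch split, which only lowers $s$ at fixed $n$, or another transform split, which resets the bound. The nonroot hypothesis is also needed, because the root frontier view has no parent and so no mass lower bound from the first step.
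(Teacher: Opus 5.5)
Your proposal is correct and follows essentially the same route as the paper: combine the nonroot frontier-mass bound $M^\star/b<ns$ with the transform--batch invariant $s\le\kappa_b n^3$ to obtain $M^\star/b<\kappa_b n^4$, which with $\kappa_b=b^2$ gives $n>(\alpha/b^{3})^{1/4}M^{1/4}$. Your bookkeeping checks are the content of the two lemmas the paper cites: the invariant holds at the frontier view itself, and the nonroot hypothesis is what supplies the mass lower bound.
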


\begin{proof}
By Lemma~\ref{lem:frontier-volume},
\[
\frac{M^\star}{b}<ns.
\]
By Lemma~\ref{lem:reachable-state},
\[
s\le \kappa_b n^3.
\]
Consequently,
\[
\frac{M^\star}{b}<\kappa_b n^4,
\]
which proves \eqref{frontier-transform-length}.
\end{proof}

\subsection{Transform Splits before the Frontier}

\begin{lemma}[Transform-size shrinkage]
\label{lem:transform-size-shrinkage}
Let \(n_j\) be the transform length before the \(j\)-th transform split
on a root-to-frontier path.  The larger child satisfies
\begin{equation}
n_{j+1}\le \sqrt{b n_j}.
\label{eq:transform-size-shrinkage}
\end{equation}
Hence, after \(K\) transform splits,
\begin{equation}
n_K
\le
b^{1-1/2^K}n^{1/2^K}
<
b n^{1/2^K}.
\label{eq:transform-size-after-k-splits}
\end{equation}
\end{lemma}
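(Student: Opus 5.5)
The statement is Lemma~\ref{lem:transform-size-shrinkage}: along a root-to-frontier path, each transform split shrinks the active length to at most $\sqrt{b\,n_j}$, and after $K$ splits the length is below $b\,n^{1/2^K}$. The plan is a one-step digit count followed by an induction on the number of splits.

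\textbf{One-step bound.} Write the transform length before the split as $n_j=b^{d_j}$. This is legitimate because all transform fields are digit-aligned, with $n=b^{\ell}$ and children $b^{|T_1|}$, $b^{|T_2|}$. The scheduler of Figure~\ref{fig:cofft} splits $T=T_1\pplus T_2$ with $\bigl|\,|T_1|-|T_2|\,\bigr|\le1$. Both child lengths are therefore at most $b^{\lceil d_j/2\rceil}$, and the active child on the path is no larger than this. Since $\lceil d_j/2\rceil\le (d_j+1)/2$, we get
\[
b^{\lceil d_j/2\rceil}\le b^{(d_j+1)/2}=\sqrt{b\,n_j}.
\]
Batch splits between consecutive transform splits do not change $n$. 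Hence the length just before split $j+1$ equals the length just after split $j$, and the recursion $n_{j+1}\le\sqrt{b\,n_j}$ holds along the path.

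\textbf{Iteration.} I will prove $n_k\le b^{1-1/2^k}n^{1/2^k}$ by induction on $k$, starting from the root length $n_0=n$. The case $k=0$ is the equality $n_0=n$. For the inductive step, substitute the hypothesis into the one-step bound:
\[
n_{k+1}\le b^{1/2}\bigl(b^{1-1/2^k}n^{1/2^k}\bigr)^{1/2}=b^{1/2+1/2-1/2^{k+1}}\,n^{1/2^{k+1}}=b^{1-1/2^{k+1}}\,n^{1/2^{k+1}}.
\]
The strict inequality $b^{1-1/2^K}<b$ holds because $b\ge2$ and the exponent satisfies $1-1/2^K<1$. This gives \eqref{transform-size-after-k-splits}.

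\textbf{Main obstacle.} The delicate point is the indexing convention. The statement's wording ("after the $j$-th split" versus "before the $j$-th split") must be reconciled so that the base case reads $n_0=n$ at the root. It must also be clear that the bound applies to whichever child lies on the path, not only to the larger child. Both issues are handled by bounding every child by $b^{\lceil d/2\rceil}$.
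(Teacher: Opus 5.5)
Your proof is correct and follows the paper's argument: bound the larger child by $b^{\lceil d_j/2\rceil}\le\sqrt{b\,n_j}$, then iterate; your induction just makes explicit the paper's ``repeated substitution.'' Your resolution of the indexing, with $n_0=n$ at the root and each $n_j$ taken after the $j$-th split, matches the convention in the paper's other version of this lemma (Lemma~\ref{lem:B-shrink}). The ``before the $j$-th split'' wording in this statement cannot be read with $n_1=n$, since that would make the bound fail already at $K=1$.
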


\begin{proof}
If \(n_j=b^{d_j}\), the larger child contains at most
\(\lceil d_j/2\rceil\) transform digits.  Therefore,
\[
n_{j+1}
\le
b^{\lceil d_j/2\rceil}
\le
b^{1/2}n_j^{1/2},
\]
which proves \eqref{transform-size-shrinkage}.  Repeated substitution
gives \eqref{transform-size-after-k-splits}.
\end{proof}

\begin{lemma}[Number of transform splits before the frontier]
\label{lem:frontier-transform-splits}
If \(K\) is the number of transform splits on a root-to-frontier path,
then
\begin{equation}
2^K=O\bigl(1+\log_M n\bigr).
\label{eq:frontier-transform-splits}
\end{equation}
\end{lemma}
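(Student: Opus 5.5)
The plan is to combine the two lemmas just established: the frontier transform-length lower bound (Lemma~\ref{lem:frontier-transform-length}) and the shrinkage bound \eqref{transform-size-after-k-splits}. The case $K=0$ is trivial, since $2^0=1$. So assume $K\ge1$. Then the path contains a transform split, and the frontier view is nonroot. This second point holds because a transform split preserves mass, so if the root were cache fitting there would be no split before the frontier. Lemma~\ref{lem:frontier-transform-length} then gives $n_K\ge c_b M^{1/4}$ for a constant $c_b>0$ depending only on $b$ and $\alpha$. Here I index $n_K$ as the transform length after the $K$-th split, which is the frontier's transform length because later batch splits leave $n$ unchanged.

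Chaining the two bounds gives $c_bM^{1/4}\le n_K<b\,n^{1/2^K}$. Taking logarithms yields
\[
\frac{\log_2 n}{2^K}>\tfrac14\log_2 M-\log_2 b-\log_2(1/c_b).
\]
I then split into two cases according to the size of $M$ relative to these constants.

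In the first case, $M$ exceeds a constant $M_0$ chosen so that the right-hand side is at least $\tfrac18\log_2 M$. Then $2^K<8\log_2 n/\log_2 M=8\log_M n$, which is the claimed bound.

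In the second case, $M\le M_0$. Every transform split halves the digit count up to rounding, and the recursion does not transform-split once $n\le b$. Hence along any path $2^K\le 2^{O(1+\log\log_b n)}=O(1+\log_b n)$. Since $M\le M_0=O(1)$, this is $O(1+\log_M n)$ with constants depending on $M_0$. (For $M\ge 2$ we have $\log_b n\le \log_2 M_0\cdot\log_M n$.)

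The main obstacle is making the first case airtight. The inequality $n_K\ge c_bM^{1/4}$ only applies at a frontier view reached after a transform split, so I must check that the view at which $K$ is counted really is a nonroot maximal cache-fitting view. I must also check that $n_K$ is measured there and not at some intermediate node. If the path ends at a leaf before entering the cache-fitting region, then the whole subtree is above the frontier and that path is not a root-to-frontier path. I would state explicitly that $K$ counts only transform splits strictly before the frontier view. For the nonroot-after-split requirement, I would invoke Lemma~\ref{lem:frontier-entry-batch}: entry into the frontier is through a batch split, so Lemma~\ref{lem:frontier-volume} applies, and Lemma~\ref{lem:reachable-state} holds at the frontier view because it persists under the intervening batch splits.
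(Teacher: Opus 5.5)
Your proposal is correct. For $M$ above a constant it is exactly the paper's argument: chain $c_bM^{1/4}\le n_K<b\,n^{1/2^K}$, take logarithms, and absorb the additive constants into $\tfrac18\log_2 M$ to get $2^K<8\log_M n$.

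You differ from the paper only in the residual case. The paper closes with ``in the remaining regime $n=O(M)$ \ldots{} only $O(1)$ balanced transform splits can occur.'' That step has two problems:
\begin{itemize}
\item It does not match the complement of the paper's own case split, which is $M=O(1)$, not $n=O(M)$.
\item As stated it is false: for $n\approx M$, $\Theta(\log\log M)$ splits occur before the length becomes constant.
\end{itemize}
Your treatment is the correct complement and repairs that step. You take $M\le M_0$, so $\log_M n=\Theta(\log n)$, and combine this with the unconditional recursion-depth bound.

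Two small points to tighten.
\begin{itemize}
\item State the depth bound precisely rather than as $2^{O(1+\log\log_b n)}$. A constant factor in that exponent would only give a polynomial in $\log n$. The precise version: a transform split requires $n>b$ and leaves at most $\lceil d/2\rceil$ digits. Hence $2^{K-1}<\log_b n$, that is, $2^K<2\log_b n$.
\item Your nonroot claim does not need mass preservation. A cache-fitting root is itself the unique frontier view, so $K\ge1$ already forces the frontier view to be nonroot.
\end{itemize}
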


\begin{proof}
The claim is immediate when \(K=0\).  Suppose \(K\ge1\).  By
Lemma~\ref{lem:frontier-transform-length}, the frontier transform
length satisfies \(n_K=\Omega(M^{1/4})\).  Combining this with
\eqref{transform-size-after-k-splits} gives
\[
\Omega(M^{1/4})
\le
b n^{1/2^K}.
\]
Taking logarithms yields
\[
\frac{\log n}{2^K}
=
\Omega(\log M),
\]
whenever \(n\) is asymptotically larger than \(M\).  Thus,
\[
2^K=O(\log_M n)
\]
in that case.

If \(n=O(M)\), only \(O(1)\) balanced transform splits can occur before
the transform length becomes constant or the path reaches the
cache-fitting frontier.  Since \(1+\log_M n=O(1)\) in this case, the
same asymptotic bound holds.  Combining the two cases proves
\eqref{frontier-transform-splits}.
\end{proof}

\subsection{Block Cover of a Frontier View}

\begin{lemma}[Block cover of one frontier view]
\label{lem:frontier-block-cover}
A maximal cache-fitting view of mass \(ns\) intersects
\begin{equation}
O\left(1+\frac{ns}{B}\right)
\label{eq:frontier-block-cover}
\end{equation}
cache blocks.  Its complete depth-first descendant execution incurs
the same asymptotic number of cache misses.
\end{lemma}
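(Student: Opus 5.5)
The plan is to reuse the two earlier ingredients: the active-run decomposition (Lemma~\ref{eq:runs}) bounds the footprint of one frontier view, and Lemma~\ref{lem:frontier-contiguous-width} converts the number of runs into a block count. First I will write the view of mass $ns$ as the disjoint union of $ns/c$ physically contiguous runs of length $c$, one per pair $(\tau,\eta)\in[n]\times[o]$. A run of length $c$ meets at most $1+\lceil c/B\rceil$ blocks, so the view meets at most
\[
\frac{ns}{c}\Bigl(2+\frac{c}{B}\Bigr)=O\!\Bigl(\frac{ns}{c}+\frac{ns}{B}\Bigr)
\]
blocks. Lemma~\ref{lem:frontier-contiguous-width} gives $c=\Omega(B)$, which collapses this to $O(ns/B)\subseteq O(1+ns/B)$. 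The ceiling is what requires the additive $1$ when $ns<B$.

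Next I will show that every descendant of the frontier view touches only addresses inside this footprint. This is an induction over the legal operations, using the closure theorem (Theorem~\ref{eq:thm7}). Outer-batch and contiguous-field splits fix digits, so each child address set is a subset of the parent's. Transform splits produce child address maps \eqref{firstaddr} and \eqref{secondaddr} that are bijective reindexings of the parent address set. The fused layout-aware twiddles are computed from view coordinates and need only $O(1)$ extra words, with no table lookups. Leaf kernels act in place on the same items.

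Finally I will argue residency. The footprint has at most $ns/c+ns/B+ns/c$ blocks. With $ns\le M^\star=\alpha M$ and $c=\Omega(B)$, this is at most $\alpha' M/B$ blocks. The recursion metadata and stack are $O(\log(ns))$ words, and I treat them as held in registers or in $O(1)$ blocks. Choosing $\alpha$ small enough makes the whole footprint fit in the $M/B$ cache lines. Under depth-first execution the frontier subtree runs contiguously in time. An ideal (optimal-replacement) cache therefore does no worse than the policy that loads each footprint block once and never evicts it during the subtree. So the descendant execution incurs at most the footprint count, $O(1+ns/B)$, misses.

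The main obstacle is the run-count step. It is tempting to bound the footprint by $O(ns/B)$ directly, which fails if runs were short. The fix is to rely on Lemma~\ref{lem:frontier-contiguous-width} and, when the width $c$ may be smaller than $B$ only by a constant factor, to keep the $ns/c$ term explicitly before absorbing it.
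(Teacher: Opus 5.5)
Your proposal is correct and follows the paper's proof essentially step for step: the active-run decomposition together with $c=\Omega(B)$ bounds the footprint, every descendant (batch, contiguous-field, and transform splits) touches only that footprint, and for small enough $\alpha$ the footprint stays resident under depth-first execution. Your explicit load-once comparison and stack accounting only spell out what the paper leaves implicit; one point worth stating is that when the contiguous field has been split, the constant in $c=\Omega(B)$ from the width lemma itself scales like $\sqrt{\alpha}$, so your footprint is $O(\sqrt{\alpha})\,M/B$ blocks rather than $O(\alpha)\,M/B$, which still fits once $\alpha$ is small.
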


\begin{proof}
By the active-run decomposition proved in Appendix~A, the view is the
disjoint union of \(ns/c\) contiguous runs of length \(c\).  Therefore,
its block footprint is at most
\begin{align}
O\left(
\frac{ns}{c}
\left(1+\frac{c}{B}\right)
\right)
&=
O\left(
\frac{ns}{c}+\frac{ns}{B}
\right)
\nonumber\\
&=
O\left(1+\frac{ns}{B}\right),
\label{eq:frontier-block-count}
\end{align}
where the last step uses Lemma~\ref{lem:frontier-contiguous-width}.

Every descendant of the frontier view accesses only items in the
frontier view's working set: batch splits select subsets of its runs,
contiguous-field splits subdivide those runs, and transform splits
reinterpret the same addresses through sequential child groups.
Moreover, the view contains at most \(M^\star=\alpha M\) items.
Choosing \(\alpha\) sufficiently small ensures that its block footprint,
together with constant recursion metadata, fits in cache.  Under
depth-first execution, these blocks can remain resident while the
complete descendant computation is performed.  Hence,
\eqref{frontier-block-count} also bounds the descendant cache cost.
\end{proof}

\subsection{Total Frontier Mass}

\begin{lemma}[Total frontier mass]
\label{lem:frontier-total-mass}
Suppose every root-to-frontier path contains at most \(K\) transform
splits.  Then the sum of the masses of all maximal cache-fitting views
is at most
\begin{equation}
ns\,2^K.
\label{eq:frontier-total-mass}
\end{equation}
\end{lemma}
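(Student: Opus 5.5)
I plan to prove the slightly stronger claim by induction on the height of the recursion subtree. The claim is that for every view $V$ of mass $\mu(V)$ in the recursion tree, if every path from $V$ down to the cache-fitting frontier contains at most $k$ transform splits, then the masses of the maximal cache-fitting views in the subtree of $V$ sum to at most $\mu(V)\,2^{k}$. The lemma is this claim applied at the root, with $\mu=ns$ and $k=K$. The induction is well founded because, by Lemma~\ref{lem:local-recursion-depth}, every root-to-leaf path is finite.

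\textbf{Base case.} Suppose $V$ is cache fitting, that is, $\mu(V)\le M^\star$. Then $V$ is either the root or has a non-fitting parent in the recursion we are analyzing, so it is itself a maximal cache-fitting view. Its subtree contributes exactly $\mu(V)\le\mu(V)2^k$. Since we stop at the first fitting view, descendants of $V$ are never counted twice. A leaf view of constant mass is covered by this case as well, because constant mass is at most $M^\star$ once $M$ exceeds a constant.

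\textbf{Inductive step.} If $V$ is not cache fitting, the scheduler performs one of two splits.
\begin{itemize}
\item \emph{Batch split} (outer-batch or contiguous-field). By Theorem~\ref{eq:thm7}(4), the $b$ children are data disjoint and partition the parent's instances, so their masses sum to $\mu(V)$. Each child still has at most $k$ transform splits on its paths. The hypothesis applied to each child and summed gives at most $2^k\mu(V)$.
\item \emph{Transform split.} By \eqref{B-massinv}, the two child groups $(n_1,sn_2)$ and $(n_2,sn_1)$ each have mass exactly $\mu(V)$. Every frontier path through a child has already used one transform split, so at most $k-1$ remain. The hypothesis gives at most $\mu(V)2^{k-1}$ per child, hence $\mu(V)2^k$ in total.
\end{itemize}

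The only delicate point is the bookkeeping convention for $k$. The bound must be stated per subtree as a bound on the number of transform splits \emph{remaining} below it, rather than as a global $K$. Otherwise the transform-split case does not reduce the exponent. Once stated this way, the induction is mechanical and needs no geometry from Lemmas~\ref{lem:B-tb}--\ref{lem:B-nsplit}.
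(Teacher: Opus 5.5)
Your proposal is correct and takes essentially the same approach as the paper. The paper also strengthens the claim to a subtree rooted at a view of mass $\mu$ with at most $k$ \emph{remaining} transform splits on each path, and it handles the same three cases in the same way: a cache-fitting root contributes $\mu$; a batch split has child masses summing to $\mu$; a transform split has two children of mass $\mu$, each with at most $k-1$ remaining splits, giving $2\mu\,2^{k-1}=\mu\,2^k$.
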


\begin{proof}
We prove the more general statement that a recursion subtree rooted at
a view of mass \(\mu\), with at most \(k\) remaining transform splits
on each path, has total frontier mass at most \(\mu 2^k\).

If the root of the subtree is cache fitting, it is itself a frontier
view and contributes mass \(\mu\), which is at most \(\mu 2^k\).

If the root performs a batch split, the child masses sum to \(\mu\).
Applying the induction hypothesis to every child gives total frontier
mass at most
\[
2^k\sum_{\text{batch children }u}\mathrm{mass}(u)
=
\mu 2^k.
\]

If the root performs a transform split, both child groups have mass
\(\mu\) by \eqref{transform-child-mass}, and each has at most \(k-1\)
remaining transform splits.  Their combined frontier mass is therefore
at most
\[
2\mu 2^{k-1}
=
\mu 2^k.
\]
Applying the claim at the original root proves
\eqref{frontier-total-mass}.
\end{proof}

\subsection{Cache Complexity}

\begin{theorem}[Cache complexity of the batched local FFT]
\label{thm:local-cache-complexity}
Let an admissible root view contain \(s\) transforms of length \(n\),
with initial contiguous width \(c\ge B\).  Under the tall-cache
assumption \(M=\Omega(B^2)\), the scheduler incurs
\begin{equation}
Q_{\mathrm{local}}(n,s,c)
=
O\left(
\frac{ns}{B}
\bigl(1+\log_M n\bigr)
\right)
\label{eq:local-cache-complexity}
\end{equation}
cache misses.
\end{theorem}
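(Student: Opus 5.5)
The proof assembles the frontier lemmas already established in this appendix. I would not argue about the recursion directly. Instead I would charge every cache miss to the analytical maximal cache-fitting frontier with $M^\star=\alpha M$, exactly as in Lemmas~\ref{lem:frontier-entry-batch}--\ref{lem:frontier-total-mass}. The argument then has three steps: bound how many transform splits occur above the frontier, bound the total frontier mass, and bound the cost per unit of frontier mass.

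\emph{Step 1: transform splits above the frontier.} Let $K$ be the maximum number of transform splits on any root-to-frontier path. Lemma~\ref{lem:frontier-transform-splits} gives $2^K=O(1+\log_M n)$. That lemma itself combines the reachable-state invariant $s\le\kappa_b n^3$ with the lower bound on frontier volume, which yields a transform length of $\Omega(M^{1/4})$ at the frontier. It then uses the shrinkage $n_K<b\,n^{1/2^K}$.

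\emph{Step 2: total frontier mass.} Feeding Step~1 into Lemma~\ref{lem:frontier-total-mass} bounds the total mass at most $ns\,2^K=O(ns(1+\log_M n))$. The mass is tracked by an induction on the split type:
\begin{itemize}
\item a batch split partitions the mass among its children;
\item a transform split at most doubles the mass while consuming one unit of the budget $K$.
\end{itemize}

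\emph{Step 3: cost per frontier view.} Lemma~\ref{lem:frontier-block-cover} shows that a frontier view of mass $\mu$ costs $O(1+\mu/B)$ misses over its entire depth-first descendant execution. Its footprint consists of $\mu/c$ contiguous runs, and Lemma~\ref{lem:frontier-contiguous-width} gives $c=\Omega(B)$. This is where two hypotheses enter:
\begin{itemize}
\item the initial width $c\ge B$ covers paths on which $C$ is never split;
\item the tall-cache assumption $M=\Omega(B^2)$ turns the bound $c=\Omega(\sqrt M)$ into $c=\Omega(B)$ when $C$ is split.
\end{itemize}

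\textbf{Main obstacle: absorbing the additive $O(1)$ per frontier view.} A priori the frontier could contain many tiny views. I would handle this with the same width lemma: every frontier view contains at least one full run of length $c=\Omega(B)$, so $\mu=\Omega(B)$ and $1=O(\mu/B)$. Summing over the frontier then gives $O\bigl(\tfrac{ns}{B}(1+\log_M n)\bigr)$.

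Two residual details remain. First, the root may itself be cache-fitting. Then the frontier is the root alone, and the bound is $O(1+ns/B)=O(ns/B)$ because $ns\ge c\ge B$. Second, work above the frontier incurs no misses beyond the frontier's own accesses. Non-fitting nodes perform only descriptor and control work, with $O(1)$ state per stack frame. The fused twiddles are executed inside the second child group's leaves, so they are charged to frontier descendants.
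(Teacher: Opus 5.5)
Your proposal is correct and follows the paper's proof essentially step for step. The paper bounds $2^K=O(1+\log_M n)$, feeds this into the total-frontier-mass lemma, charges $O(1+\mu/B)$ misses per frontier view via the block-cover lemma, and absorbs the additive constant using $\mu\ge c=\Omega(B)$; your two residual remarks (a cache-fitting root, and no data accesses outside frontier subtrees because twiddles are fused into leaf accesses) are details the paper leaves implicit, and you handle them correctly.
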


\begin{proof}
Let \(K\) be the maximum number of transform splits on a root-to-frontier
path.  Lemma~\ref{lem:frontier-transform-splits} gives
\[
2^K=O(1+\log_M n).
\]
By Lemma~\ref{lem:frontier-total-mass}, the total mass of the maximal
cache-fitting frontier is therefore
\begin{equation}
O\bigl(ns(1+\log_M n)\bigr).
\label{eq:total-frontier-mass-bound}
\end{equation}

By Lemma~\ref{lem:frontier-block-cover}, each frontier view of mass
\(\mu\) costs \(O(1+\mu/B)\) cache misses.  The additive constant is
absorbed in the mass term: by
Lemma~\ref{lem:frontier-contiguous-width}, every frontier view has
contiguous width \(\Omega(B)\), and hence mass \(\mu=\Omega(B)\).
Summing the cache costs over the frontier and using
\eqref{total-frontier-mass-bound} gives
\[
Q_{\mathrm{local}}(n,s,c)
=
O\left(
\frac{ns}{B}
\bigl(1+\log_M n\bigr)
\right).
\]
This proves \eqref{local-cache-complexity}.
\end{proof}

\subsection{Application to the Two PACO Local Phases}

In Phase~I, each processor performs \(N_2/p\) transforms of length
\(N_1\).  Its initial contiguous batch field is precisely the local
input slab, so
\begin{equation}
n=N_1,
\qquad
s=c=\frac{N_2}{p}.
\label{eq:phase-i-local-parameters}
\end{equation}
The slab-width assumption \(N_2/p\ge B\) permits direct application of
Theorem~\ref{thm:local-cache-complexity}, yielding
\begin{equation}
Q_{q,\mathrm{I}}
=
O\left(
\frac{N}{pB}
\bigl(1+\log_M N_1\bigr)
\right).
\label{eq:phase-i-cache-complexity}
\end{equation}

Similarly, in Phase~III, each processor performs \(N_1/p\) transforms
of length \(N_2\), with
\begin{equation}
n=N_2,
\qquad
s=c=\frac{N_1}{p}.
\label{eq:phase-iii-local-parameters}
\end{equation}
Since \(N_1/p\ge B\),
\begin{equation}
Q_{q,\mathrm{III}}
=
O\left(
\frac{N}{pB}
\bigl(1+\log_M N_2\bigr)
\right).
\label{eq:phase-iii-cache-complexity}
\end{equation}

Finally,
\[
\log_M N_1+\log_M N_2=\log_M N.
\]
Adding \eqref{phase-i-cache-complexity} and
\eqref{phase-iii-cache-complexity}, and absorbing the linear cache cost
of the fused middle stage, gives
\begin{equation}
Q_{q,\mathrm{I}}+Q_{q,\mathrm{III}}
=
O\left(
\frac{N}{pB}
\bigl(1+\log_M N\bigr)
\right).
\label{eq:two-local-phases-cache-complexity}
\end{equation}
}
\section{Data Movement: Local Primitives and the Fused Redistribution}
\label{app:local}\label{app:colocal}

This appendix specifies the data-movement machinery used in Phases~I--III. It
first defines the local primitives---macro-element digit reversal, its
destination-indexed fusion, and cache-oblivious rectangular
transposition---and then composes them into the explicit per-processor
procedure that realizes PACO's fused middle redistribution. The relevant
permutations operate on contiguous \emph{macro-elements}, not on individual
scalars. The cache line size is $B$ entries. In both local applications every
macro-element spans at least one line: by the slab-width conditions~\eqref{slab-width-cache-block} the two
uses below have macro-element length $N_2/p$ and $N_1/p$, respectively, and both
are $\ge B$.

Sections~3.3 and~4.2 already established the fused middle stage at the level
needed for the main result: its coordinate trace~(18), its end-to-end
correctness (Theorem~3), its balanced migration counts, and its per-processor
local cost (Theorem~4). The second half of this appendix supplies the single
deferred component, namely the explicit cache-oblivious procedure that realizes
the mapping, together with (i) a line-by-line correctness argument and (ii) the
workspace and transpose-aggregation accounting promised in Sect.~4.2. It reuses
the source ownership~(1), the factor-swapped target ownership~(3), the Phase-I
identity $\widetilde A[a,c]=A[\pi_{m_1}(a),c]$, the digit-reversal notation of
Appendix~\ref{app:conv}, and the slab-width conditions~\eqref{slab-width-cache-block}.

\subsection{Macro-Element Digit Reversal}
\label{app:mdr}

Let the array hold $n=b^{\ell}$ contiguous macro-elements, each of length $s$;
write $\ell=\log_b n$ (the count of macro-elements, distinct from the global
$m=\log_b N$). For $i\in[n]$ with base-$b$ digits $d_{\ell-1}\dots d_0$, the
digit-reversal permutation is
$\pi_\ell(i)=d_0b^{\ell-1}+\cdots+d_{\ell-1}$, an involution, so every orbit is
a fixed point or a two-cycle. For the array
$X[0\twodots n-1][0\twodots s-1]$, the desired rearrangement is
\begin{equation}\label{eq:mdr}
  X'[i,j]=X[\pi_\ell(i),j].
\end{equation}
Since $\pi_\ell$ is an involution, \eqref{mdr} is realized in place by
exchanging the two macro-elements of each nontrivial two-cycle; the condition
$i<\pi_\ell(i)$ selects one representative per two-cycle.

\begin{codebox}
\Procname{$\proc{Macro-Digit-Reversal}(X, n, s)$}
\li \Comment $\ell = \log_b n$; reverse the $n$ contiguous macro-elements
\li \For $i \gets 0$ \To $n - 1$ \Do
\li     \If $i < \pi_\ell(i)$ \Then 
\li         \For $j \gets 0$ \To $s - 1$ \Do
\li             \(X[i,j] \leftrightarrow X[\pi_\ell(i),j]\)
            \End
        \End
    \End
\end{codebox}

\begin{lemma}[Correctness]\label{lem:mdrcorr}
The procedure \proc{Macro-Digit-Reversal} implements \eqref{mdr}.
\end{lemma}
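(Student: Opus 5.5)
We prove that \proc{Macro-Digit-Reversal} leaves $X'[i,j]=X[\pi_\ell(i),j]$ for every $i\in[n]$, $j\in[s]$. The argument uses the orbit structure of $\pi_\ell$ and the fact that distinct orbits touch disjoint macro-elements.

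\textbf{Step 1: orbit structure.} Since $\pi_\ell$ is an involution (Appendix~\ref{app:conv}), $[n]$ partitions into fixed points $\{i:\pi_\ell(i)=i\}$ and two-cycles $\{i,\pi_\ell(i)\}$ with $i\ne\pi_\ell(i)$. The loop guard $i<\pi_\ell(i)$ holds for exactly one element of each two-cycle, namely the smaller one. It holds for no fixed point. Hence each two-cycle $\{i,i'\}$ with $i<i'$ is processed in exactly one outer iteration, and no other iteration reads or writes rows $i$ or $i'$.

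\textbf{Step 2: effect of one iteration.} In the iteration for a two-cycle $\{i,i'\}$, the inner loop swaps $X[i,j]$ and $X[i',j]$ for each $j\in[s]$. Each swap touches a distinct pair of cells, so after the inner loop we have $X'[i,j]=X[i',j]=X[\pi_\ell(i),j]$ and $X'[i',j]=X[i,j]=X[\pi_\ell(i'),j]$.

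\textbf{Step 3: conclusion.} Because different two-cycles occupy disjoint rows, the order of iterations does not matter. The value each iteration reads in row $i$ or $i'$ is therefore still the original one. Rows that are fixed points are never touched, so $X'[i,j]=X[i,j]=X[\pi_\ell(i),j]$ holds for them as well. Combining the two cases gives \eqref{mdr} for all $i$.

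The only point that needs care is Step~1's exclusivity claim: we must check that no row is read after an earlier iteration has already modified it. This follows directly from the disjointness of orbits, so no real obstacle remains.
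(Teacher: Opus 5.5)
Your proof is correct and follows the same route as the paper: the involution splits $[n]$ into fixed points and two-cycles, the guard $i<\pi_\ell(i)$ picks exactly one representative per two-cycle, and each nontrivial pair of macro-elements is swapped exactly once while fixed points are left untouched. Your remark that disjoint orbits guarantee every swap reads original values spells out what the paper compresses into ``every non-fixed macro-element is exchanged exactly once.''
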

\begin{proof}
If $i=\pi_\ell(i)$, the macro-element is a fixed point and left unchanged,
agreeing with \eqref{mdr}. Otherwise $i$ lies in the two-cycle
$\{i,\pi_\ell(i)\}$, and exactly one member satisfies $i<\pi_\ell(i)$; when
processed, the two macro-elements are exchanged entry by entry, so afterward
$X[i,j]=X_{\mathrm{old}}[\pi_\ell(i),j]$ for both members. Every non-fixed
macro-element is exchanged exactly once.
\end{proof}

\begin{lemma}[Cost]\label{lem:mdrcost}
If $s\ge B$, the procedure \proc{Macro-Digit-Reversal} has local work $O(ns)$ and cache complexity
$O(ns/B)$, using $O(1)$ auxiliary scalar storage.
\end{lemma}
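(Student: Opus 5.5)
The proof splits into a count of the swaps, a work bound, and a block-counting argument for cache misses.

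First, the orbit structure. Since $\pi_\ell$ is an involution, the guard $i<\pi_\ell(i)$ selects exactly one representative per nontrivial two-cycle, as in Lemma~\ref{lem:mdrcorr}. Hence at most $n/2$ macro-element exchanges occur, and each one performs $s$ scalar swaps.

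Second, the work. The scalar swaps cost $O(ns)$ in total. The outer loop runs $n$ times. For each $i$, computing $\pi_\ell(i)$ costs $O(\ell)$ digit operations, which is $O(\log n)$. That gives $O(n\log n)$ overhead, and we need it to be $O(ns)$. One fix is to maintain $\pi_\ell(i)$ incrementally as a reversed-carry counter, which costs $O(1)$ amortized per increment. The other fix uses $s\ge B$ together with word-RAM digit reversal in $O(1)$ per index, or absorbs the cost when $\ell=O(s)$. I would adopt the reversed-increment counter, whose amortized cost is $O(1)$ by the standard binary-counter argument applied in base $b$. This gives $O(n+ns)=O(ns)$ work. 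The only storage is the loop indices, the counter, and one temporary scalar for the swap, so auxiliary space is $O(1)$.

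Third, the cache misses, which I expect to be the main point to get right. Each exchange scans two contiguous runs of length $s$ sequentially. Even in the worst alignment, a run of length $s$ touches at most $\lceil s/B\rceil+1$ blocks. The key fact is that the exchange loop needs only $O(1)$ blocks resident at a time: the current block of row $i$ and the current block of row $\pi_\ell(i)$. So one exchange incurs $O(1+s/B)$ misses. Because $s\ge B$, this is $O(s/B)$. Summing over at most $n/2$ exchanges gives $O(ns/B)$. The loop control and the reversal counter live in $O(1)$ registers or cache lines and add no per-iteration misses. The iterations with $i\ge\pi_\ell(i)$ touch no data, so they cost nothing in misses. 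Their $O(n)$ control work was already counted in the work bound.

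The hypothesis $s\ge B$ is exactly what makes the per-run additive $+1$ harmless. Without it, the bound would degrade to $O(n+ns/B)$.
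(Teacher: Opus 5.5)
Your proposal is correct and follows the paper's argument. Each nontrivial two-cycle is exchanged once, each contiguous macro-element of length $s\ge B$ is scanned once at a cost of $O(s/B)$ misses, and this sums to $O(ns)$ work and $O(ns/B)$ misses with one temporary scalar. Your amortized reversed-digit counter for evaluating $\pi_\ell(i)$ is a sound refinement that the paper leaves implicit: its proof counts only scalar reads and writes, which hides an $O(n\log n)$ index-computation term that need not be $O(ns)$ when $s$ is small.
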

\begin{proof}
Each macro-element is scanned at most once, giving $O(ns)$ scalar reads and
writes. Each contiguous segment of $s$ entries costs $O(s/B)$ misses, and
neither member of a two-cycle is revisited after its exchange; summing over
$n$ macro-elements gives $O(ns/B)$. One temporary scalar suffices.
\end{proof}

\subsection{Fusing Destination-Indexed Unary Operations}
\label{app:fused}

A macro-element rearrangement may be fused with a unary operation whose
coefficient depends on the destination. Let $g(i,j)$ be the coefficient
applied to the entry written to $(i,j)$; the fused transformation is
\begin{equation}\label{eq:fused}
  X'[i,j]=g(i,j)\,X[\pi_\ell(i),j].
\end{equation}
For a nontrivial two-cycle $\{i,k\}$, $k=\pi_\ell(i)$, $i<k$, the writes are
$X[i,j]=g(i,j)X_{\mathrm{old}}[k,j]$ and
$X[k,j]=g(k,j)X_{\mathrm{old}}[i,j]$; for a fixed point,
$X[i,j]=g(i,j)X[i,j]$.

\begin{codebox}
\Procname{$\proc{Fused-Macro-Digit-Reversal}(X, n, s, g)$}
\li \For $i \gets 0$ \To $n - 1$ \Do
\li     \If $i < \pi_\ell(i)$ \Then
\li         \For $j \gets 0$ \To $s - 1$ \Do
\li             $x \gets X[i,j]$
\li             $y \gets X[\pi_\ell(i),j]$
\li             $X[i,j] \gets g(i,j)\cdot y$
\li             $X[\pi_\ell(i),j] \gets g(\pi_\ell(i),j)\cdot x$
            \End
\li     \ElseIf $i \isequal \pi_\ell(i)$ \Then
\li         \For $j \gets 0$ \To $s - 1$ \Do
\li             $X[i,j] \gets g(i,j)\cdot X[i,j]$
            \End
        \End
    \End
\end{codebox}

\begin{lemma}[Correctness]\label{lem:fusedcorr}
The procedure \proc{Fused-Macro-Digit-Reversal} implements \eqref{fused}.
\end{lemma}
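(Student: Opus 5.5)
The plan is a case split on the orbit type of each row index $i$ under the involution $\pi_\ell$. For each class we compare the final contents of every entry with the right-hand side of \eqref{fused}. As in Lemma~\ref{lem:mdrcorr}, every index is a fixed point or lies in exactly one two-cycle $\{i,k\}$ with $k=\pi_\ell(i)\neq i$. The loop visits each $i$ once. Rows in a two-cycle are written only at the iteration of the smaller member. Fixed-point rows are written only at their own iteration. Hence every row is written exactly once, and I will argue each case using pre-procedure values $X_{\mathrm{old}}$.

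For a fixed point $i=\pi_\ell(i)$, the \textbf{elseif} branch sets $X[i,j]\gets g(i,j)X[i,j]$. Row $i$ has not been touched before: no earlier iteration writes it, since it belongs to no two-cycle. The right-hand side therefore uses $X_{\mathrm{old}}[i,j]=X_{\mathrm{old}}[\pi_\ell(i),j]$, which matches \eqref{fused}.

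For a two-cycle with $i<k$, the key point is that when iteration $i$ runs, both rows $i$ and $k$ still hold their original values. Earlier iterations $i'<i$ touch only rows $i'$ and $\pi_\ell(i')$. If either equaled $i$ or $k$, then $i'$ would lie in the same orbit $\{i,k\}$, forcing $i'\in\{i,k\}$, which contradicts $i'<i<k$. Within iteration $i$, for each column $j$ the procedure saves $x=X_{\mathrm{old}}[i,j]$ and $y=X_{\mathrm{old}}[k,j]$ into temporaries before either write. This gives $X[i,j]=g(i,j)X_{\mathrm{old}}[\pi_\ell(i),j]$ and $X[k,j]=g(k,j)X_{\mathrm{old}}[\pi_\ell(k),j]$, using $\pi_\ell(k)=i$. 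Distinct columns $j$ do not interact, because each write affects only column $j$. Later, iteration $k$ does nothing, since $k>\pi_\ell(k)=i$ and $k\neq\pi_\ell(k)$; so these values are final.

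Combining the two cases shows that every entry ends equal to $g(i,j)X_{\mathrm{old}}[\pi_\ell(i),j]$, which is \eqref{fused}. The only step needing care is the untouched-rows argument in the two-cycle case. It rests entirely on orbits being disjoint and on the strict inequality $i<\pi_\ell(i)$ selecting exactly one representative per orbit.
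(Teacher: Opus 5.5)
Your proof is correct and follows the paper's argument: split into fixed points and nontrivial two-cycles, with each two-cycle processed exactly once at its smaller member. The only difference is that you spell out the orbit-disjointness argument showing both rows $i$ and $\pi_\ell(i)$ still hold their original values when that iteration runs, which the paper leaves implicit in the phrase ``every two-cycle is processed once.''
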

\begin{proof}
For a nontrivial two-cycle the two writes place each old macro-element at its
digit-reversed destination and scale by the destination-indexed coefficient;
for a fixed point the update is exactly \eqref{fused}. As in
\lemref{mdrcorr}, every two-cycle is processed once.
\end{proof}

\begin{corollary}[Cost of fused rearrangement]\label{cor:fusedcost}
If $s\ge B$ and evaluating $g$ takes $O(1)$ work, the procedure \proc{Fused-Macro-Digit-Reversal} has local work
$O(ns)$ and cache complexity $O(ns/B)$.
\end{corollary}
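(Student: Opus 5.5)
The corollary follows by rerunning the argument of Lemma~\ref{lem:mdrcost} with the fused updates of \proc{Fused-Macro-Digit-Reversal} in place of the plain swaps; correctness is already Lemma~\ref{lem:fusedcorr}. I will count work and cache misses separately, treating the loop over $i\in[n]$ as a sequence of per-index actions.

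\emph{Work.} For each $i$, computing $\pi_\ell(i)$ costs $O(\ell)=O(\log n)$ digit operations if done naively. The $n$ outer iterations therefore add $O(n\log n)$ control work. To stay within $O(ns)$, I will charge this term against the inner loops. If $s\ge\log n$, it is absorbed directly. Otherwise I will instead maintain $\pi_\ell(i)$ incrementally with a reversed-counter increment, which has $O(1)$ amortized cost over the whole loop. This is the standard bit-reversed counter: carries in reversed order amortize to $O(1)$ per step, and the same holds for base-$b$ digits with $b=O(1)$. Inside the loops, each nontrivial two-cycle $\{i,\pi_\ell(i)\}$ is processed once, with $s$ iterations of two reads, two $g$-evaluations, two multiplications and two writes. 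Each fixed point costs $s$ iterations of one $g$-evaluation and one multiply. Since $g$ costs $O(1)$, every macro-element contributes $O(s)$ work, and the total is $O(ns)$.

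\emph{Cache.} Every row $X[i,\cdot]$ is a contiguous segment of length $s\ge B$, so it spans at most $\lceil s/B\rceil+1=O(s/B)$ blocks. A two-cycle iteration interleaves accesses to two segments, $X[i,\cdot]$ and $X[\pi_\ell(i),\cdot]$. With $M\ge 2B+O(1)$, the ideal cache (or even LRU, up to constant factors by resource augmentation) holds the current block of each segment plus the $O(1)$ scalars. Each block of each segment is therefore loaded $O(1)$ times during that iteration. Evaluating $g$ uses only $O(1)$ registers or words: for PACO's twiddle $g(v,c)=\omega_N^{-vc}$ it is computed from coordinates, not from a table, so it adds no misses. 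As in Lemma~\ref{lem:mdrcost}, no macro-element is touched again after its two-cycle or fixed-point step. Summing $O(s/B)$ over the $n$ macro-elements gives $O(ns/B)$, and the extra outer-loop bookkeeping fits in $O(1)$ resident words.

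The only step needing care is the assumption that $g$ is $O(1)$ and cache-free; the hypothesis grants the $O(1)$ work. I will remark that if $g$ were table-driven, its accesses along $j$ would be sequential, so they would contribute another $O(s/B)$ per macro-element and leave the bound unchanged.
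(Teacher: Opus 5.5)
Your proposal is correct and follows the same route as the paper, whose proof observes only that the fused variant adds $O(1)$ arithmetic per entry to the single scan of each macro-element already costed in Lemma~\ref{lem:mdrcost}, with no extra traversal or storage. Your extra details are not in the paper's one-line proof: maintaining $\pi_\ell(i)$ with a reversed-digit counter (so the outer loop costs $O(n)$ rather than $O(n\log n)$), and arguing that two resident blocks suffice for the interleaved segments.
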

\begin{proof}
The fused operation adds $O(1)$ arithmetic per entry to the scans of
~\lemref{mdrcost}, with neither an extra traversal nor extra storage.
\end{proof}

\paragraph{Application to twiddle multiplication.}
In the middle stage the destination-indexed operation is twiddle
multiplication: with destination macro-element index $v$ and within-element (local) 
coordinate $c_{\text{loc}}$, $g(v,c)=\omega_N^{-vc}$, where \(c = q \cdot (N_2/p) + c_{\text{loc}}\) is the global column. Hence twiddles are applied when an
entry is written to its destination; no separate pass is required.

\subsection{Use in Phases I and III}
\label{app:mdruse}

Both local reversals are covered by \lemref{mdrcost} and
\corref{fusedcost} with a single parameterized statement. In Phase~I and III each
processor reverses $n$ macro-elements of length $s\in\{N_2/p,\,N_1/p\}$; by
condition~\eqref{slab-width-cache-block}, $s\ge B$, so rearranging the $n$ macro-elements costs
$O(ns)$ work and $O(ns/B)$ misses. Instantiating $s=N_2/p$ (Phase~I) and
$s=N_1/p$ (Phase~III), with $n=N_1$ and $n=N_2$ respectively, gives
$O(N_1N_2/p)=O(N/p)$ work and $O(N/(pB))$ misses per reversal. The
macro-element length lower bound $s\ge B$ is exactly what ensures spatial
locality, so no cache-oblivious traversal of a scalar digit-reversal
permutation is needed.

\subsection{Local Rectangular Transposition}
\label{app:transpose}

After communication, each received block is a rectangle of dimensions
\begin{equation}\label{eq:rect}
  r\times c=\frac{N_1}{p}\times\frac{N_2}{p}.
\end{equation}
PACO transposes it with the cache-oblivious rectangular-transpose algorithm of
Frigo et al.~\cite{FrigoLePr99}, whose standard bound is: transposing an
$r\times c$ rectangle takes $O(rc)$ work and $O(1+rc/B)$ misses. For a block
of dimensions~\eqref{rect} this is
\begin{equation}\label{eq:blockwork}
  O\!\Bigl(\tfrac{N_1N_2}{p^2}\Bigr)=O\!\Bigl(\tfrac{N}{p^2}\Bigr)
  \quad\text{work,}\qquad
  O\!\Bigl(1+\tfrac{N}{p^2B}\Bigr)\ \text{misses,}
\end{equation}
and the reusable transpose workspace is $(N_1/p)(N_2/p)=N/p^2$ entries.

This subsection is the single source for the per-block transpose accounting;
its aggregation over the $p$ blocks of a processor is carried out in
Appendix~\ref{app:acct} (and used in Theorem~4).

\subsection{The Per-Processor Fused Redistribution Procedure}
\label{app:global}\label{app:proc}

Processor $q$ enters the middle stage owning an $N_1\times(N_2/p)$ Phase-I slab
$X_q$. The procedure $\proc{Para-Co-Drp}_b$ composes the three primitives
above: it packs the slab by reversing the $N_1$-field
(Appendix~\ref{app:mdr}) while fusing the top-level twiddle
(Appendix~\ref{app:fused}), exchanges one balanced set of blocks in a single
BSP $h$-relation, transposes each received block cache-obliviously
(Appendix~\ref{app:transpose}), and unpacks by reversing the $N_2$-field. Each
field reversal is an in-place sequence of contiguous macro-element swaps; the
fused twiddle multiplication is folded into the packing scan, costing no
separate pass.

\begin{codebox}
\Procname{$\proc{Para-Co-Drp}_b(X_q, Y_q, N_1, N_2, p, q)$}
\zi \Comment \(X_q\) is the local input array, \(Y_q\) is the output
\zi \Comment \(q\) is the local processor id
\zi \Comment Local pack: $N_1$-field reversal with the top-level twiddle
\li view $X_q$ as $N_1$ row macro-elements of length $N_2/p$
\zi \Comment \(g(v, c_{\text{loc}}) = \omega_N^{-v c}, \text{ where } c = q\cdot (N_2/p) + c_{\text{loc}}  \text{ (global column)}\)
\zi \qquad \(, N = N1 N2\)
\li \(\proc{Fused-Macro-Digit-Reversal}(X_q, N_1, N_2/p, g)\) 
\punt{
\li \For each two-cycle $\{a,\pi_{m_1}(a)\}$ of $\pi_{m_1}$ with $a<\pi_{m_1}(a)$ \Do
\li     swap row macro-elements $a$ and $\pi_{m_1}(a)$
        \End
\li \For every packed coordinate $(v,c)$, where $v=\pi_{m_1}(a)$ \Do
\li     $Z[v,c] \gets A[v,c]\cdot \omega_N^{-vc}$
        \End
}
\zi \Comment One balanced BSP all-to-all exchange
\li partition the packed slab into blocks $S_{q,q'}$ of 
\zi \qquad shape $(N_1/p)\times(N_2/p)$, $q'\in[p]$
\li place the self block $S_{q,q}$ in its raw output region $R_{q,q}$
\li \For $q' \gets 0$ \To $p - 1$ \Do
\li     \If $q' \neq q$ \Then
\li         send $S_{q,q'}$ to raw region $R_{q',q}$ on processor $q'$
        \End
    \End
\zi \Comment Local rectangular transpose of each received block
\li allocate or reuse one scratch array $T$ of $N/p^2$ items
\li \For $r \gets 0$ \To $p - 1$ \Do
\li     $\proc{CO-Transpose}(R_{q,r}, T, N_1/p, N_2/p)$
\li     copy $T$ to the output region $Y_{q,r}$
        \End
\zi \Comment Local unpack: $N_2$-field digit reversal
\li view the local array as $N_2$ row macro-elements of length $N_1/p$
\li \(\proc{Macro-Digit-Reversal}(Y_{q}, N_2, N_1/p)\)
\punt{
\li \For each two-cycle $\{c,\pi_{m_2}(c)\}$ of $\pi_{m_2}$ with $c<\pi_{m_2}(c)$ \Do
\li     swap row macro-elements $c$ and $\pi_{m_2}(c)$
        \End
}
\li \Return the local $N_2\times(N_1/p)$ target slab
\end{codebox}

The $p$ raw incoming regions $R_{q,\cdot}$ together form the mandatory target
slab; the only auxiliary storage is the single transpose scratch $T$ of
$N/p^2$ items.

\subsection{Line-by-Line Correctness}
\label{app:line-by-line}

We verify that the procedure \(\proc{Para-Co-Drp}_b\) realizes exactly the coordinate trace~(18); the
algebraic correctness of the whole pipeline then follows from Theorem~3.

\begin{lemma}[The procedure implements $\pi_m$]\label{lem:trace}
For an item stored at Phase-I physical coordinate $(a,c)$ on its source
processor $q$, write $c=q(N_2/p)+c_{\mathrm{loc}}$. Then
$\proc{Para-Co-Drp}_b$ places it at
$(u,v)=\bigl(\pi_{m_2}(c),\pi_{m_1}(a)\bigr)$, i.e.
$\pi_m(a\pplus c)=\pi_{m_2}(c)\pplus\pi_{m_1}(a)$, owned by
$\lfloor\pi_{m_1}(a)/(N_1/p)\rfloor$.
\end{lemma}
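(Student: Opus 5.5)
I will trace a single item through the four stages of $\proc{Para-Co-Drp}_b$, one stage at a time. At each stage I invoke the correctness lemma of the primitive that stage uses. Throughout I track the pair (processor, local coordinate) rather than global coordinates. Item locations are determined only by the permutation, not by the fused scaling, so the twiddle factors can be ignored in this trace.

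\emph{Pack.} Processor $q$ holds $X_q$ as $N_1$ row macro-elements, each of length $N_2/p$. The item sits in row $a$, at within-row offset $c_{\mathrm{loc}}$. By \lemref{fusedcorr}, with $\ell=m_1$, the call $\proc{Fused-Macro-Digit-Reversal}(X_q,N_1,N_2/p,g)$ realizes $X'[i,j]=g(i,j)X[\pi_{m_1}(i),j]$. Since $\pi_{m_1}$ is an involution, the entry formerly in row $a$ ends up in row $v=\pi_{m_1}(a)$, still at offset $c_{\mathrm{loc}}$.

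\emph{Route.} Write $v=q'(N_1/p)+v_{\mathrm{loc}}$. The partition into blocks $S_{q,q'}$ of shape $(N_1/p)\times(N_2/p)$ takes consecutive row groups, so the item lies in $S_{q,q'}$ at block-local coordinate $(v_{\mathrm{loc}},c_{\mathrm{loc}})$. The exchange (or the self-placement when $q'=q$) copies this block verbatim into region $R_{q',q}$ on processor $q'$. The destination is therefore $q'=\lfloor v/(N_1/p)\rfloor=\lfloor\pi_{m_1}(a)/(N_1/p)\rfloor$, which is the owner claimed in the lemma.

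\emph{Transpose.} On $q'$, the loop index $r=q$ selects $R_{q',q}$. $\proc{CO-Transpose}$ followed by the copy places the item at $(c_{\mathrm{loc}},v_{\mathrm{loc}})$ within output region $Y_{q',q}$.

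The step I expect to be the main obstacle is fixing the layout convention of the output regions $Y_{q',r}$ inside the local $N_2\times(N_1/p)$ array. The lemma depends on stacking the regions $Y_{q',0},\dots,Y_{q',p-1}$ vertically: $Y_{q',r}$ must occupy local rows $r(N_2/p)$ through $(r+1)(N_2/p)-1$, each of width $N_1/p$. I will state this explicitly as the intended meaning of ``copy $T$ to the output region $Y_{q,r}$''. With that convention, the item sits in local row $q(N_2/p)+c_{\mathrm{loc}}=c$ and column $v_{\mathrm{loc}}$. In other words, source id and local column reconstruct the global column $c$, exactly as in the coordinate trace \eqref{para-co-drp-coord-trace}.

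\emph{Unpack.} Apply \lemref{mdrcorr} with $\ell=m_2$, $n=N_2$, $s=N_1/p$. Row $c$ moves to row $u=\pi_{m_2}(c)$, and the column $v_{\mathrm{loc}}$ is unchanged. Since column $v_{\mathrm{loc}}$ on processor $q'$ corresponds to global column $q'(N_1/p)+v_{\mathrm{loc}}=v=\pi_{m_1}(a)$ under the target slab ownership \eqref{output-slab-owner}, the final global coordinate is $(u,v)=(\pi_{m_2}(c),\pi_{m_1}(a))$.

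It remains to identify this with $\pi_m(a\pplus c)$. Reversing the $m$-digit string $a\pplus c$ gives $\pi_{m_2}(c)\pplus\pi_{m_1}(a)$ by the definition of digit reversal. This is the same splitting identity already used in the proof of \thmref{paco-correctness} and in \eqref{full-middle-digit-reversal}, which completes the proof.
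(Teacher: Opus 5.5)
Your proposal is correct and follows essentially the same route as the paper's proof: a stage-by-stage coordinate trace through packing (row reversal by $\pi_{m_1}$), routing to $q'=\lfloor v/(N_1/p)\rfloor$ at block-local coordinate $(v_{\mathrm{loc}},c_{\mathrm{loc}})$, the block transpose, and the unpacking reversal by $\pi_{m_2}$. Your explicit convention that $Y_{q',r}$ occupies local rows $r(N_2/p)$ through $(r+1)(N_2/p)-1$ is what the paper's proof states more briefly as ``raw regions are ordered by source processor,'' which is how $(q,c_{\mathrm{loc}})$ recovers $c$.
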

\begin{proof}
The row reversal moves the item to packed row $v=\pi_{m_1}(a)$ with column $c$
unchanged; the fused scan forms $Z[v,c]=A[v,c]\,\omega_N^{-vc}$ (the local-pack
step of Sect.~4.2). Decomposing $v=q'(N_1/p)+v_{\mathrm{loc}}$ identifies the
unique destination $q'=\lfloor v/(N_1/p)\rfloor$, which is the owner
under~(3), so the item is assigned to block $S_{q,q'}$ at block-local
coordinate $(v_{\mathrm{loc}},c_{\mathrm{loc}})$ and routed to region
$R_{q',q}$ on $q'$. Because raw regions are ordered by source processor, the
pair $(q,c_{\mathrm{loc}})$ recovers $c=q(N_2/p)+c_{\mathrm{loc}}$. The block
transpose sends $(v_{\mathrm{loc}},c_{\mathrm{loc}})\mapsto
(c_{\mathrm{loc}},v_{\mathrm{loc}})$, and the $N_2$-field reversal maps the
reconstructed source row $c$ to $u=\pi_{m_2}(c)$, leaving
$v=\pi_{m_1}(a)$ intact. The result $(u,v)=(\pi_{m_2}(c),\pi_{m_1}(a))$ is
precisely the trace~(18); its owner is $\lfloor v/(N_1/p)\rfloor$ by~(3).
\end{proof}

By \lemref{trace} the procedure \(\proc{Para-Co-Drp}_b\) produces, for each fixed $v=k_1$, the
local layout $\widehat Z[u,v]=Z[v,\pi_{m_2}(u)]$ of~(17): exactly the
digit-reversed input layout consumed by $\mathcal{B}_{m_2}=\Omega_{N_2}
P_{\pi_{m_2}}$. Hence the composition~(10) of Theorem~3 applies verbatim, and
PACO returns the canonical coefficients $Y[k_2,k_1]=y[k_1+k_2N_1]$ under the
target ownership~(3).

\subsection{Cost and Workspace}\label{app:acct}
Each primitive was costed in Appendices~C.1--C.4. The two fused field reversals
(packing $N_1$ macro-elements of length $N_2/p$; unpacking $N_2$ of length
$N_1/p$) each cost $O(N/p)$ work and $O(N/(pB))$ misses by Corollary~27, since
both macro-element widths are $\ge B$ by~(5); the fused twiddle adds
$\Theta(N/p)$ arithmetic with no extra cache pass. The $p$ block transposes cost
$O(N/p)$ work and, by~(74) together with $p\le N/(pB^2)=O(N/(pB))$, aggregate to
$O(N/(pB))$ misses. Summing reproduces~(22)--(24). The only non-output workspace
is one reusable $N/p^2$-item transpose buffer plus $O(\log N)$ recursion-stack
words, giving~(28); the balanced counts $\mu=N(1-1/p)$ and $h=(N/p)(1-1/p)$ are
as in Theorem~4.

\punt{
\subsection{Cost, Workspace, and Transpose Aggregation}
\label{app:acct}

We now supply the accounting behind the local-cost claims (22)--(24) of
Theorem~4.

\paragraph{Field reversals.}
Packing reverses $N_1$ macro-elements of length $N_2/p$ and unpacking reverses
$N_2$ macro-elements of length $N_1/p$. By~\eqref{slab-width-cache-block} every macro-element spans at
least $B$ items, so the per-reversal instantiation of
Appendix~\ref{app:mdruse}, via the fused-cost ~\corref{fusedcost},
gives $O(N/p)$ work and $O(N/(pB))$ misses for each reversal. The fused twiddle
contributes one multiplication per packed item, i.e. $\Theta(N/p)$ arithmetic
folded into the packing scan, with no additional asymptotic cache pass.

\paragraph{Transpose aggregation.}
Each processor transposes its $p$ received blocks. Applying the per-block bound
of Appendix~\ref{app:transpose} (\eqref{blockwork}) and summing over the
$p$ blocks gives $O(p+N/(pB))$ misses and $p\cdot O(N/p^2)=O(N/p)$ work. Since
$N/p^2=(N_1/p)(N_2/p)\ge B^2$ by~\eqref{slab-width-cache-block}, we have $p\le N/(pB^2)=O(N/(pB))$, so the
additive $O(p)$ term is absorbed and the transpose cost is $O(N/(pB))$.

\paragraph{Totals.}
Adding the two reversals, the fused twiddle scan, and the transposes yields
\[
  W^{\mathrm{arith}}_{q,\mathrm{twiddle}}=\Theta\!\Bigl(\tfrac{N}{p}\Bigr),\qquad
  W^{\mathrm{rearr}}_{q,\mathrm{perm}}=O\!\Bigl(\tfrac{N}{p}\Bigr),\qquad
  Q_{q,\mathrm{perm}}=O\!\Bigl(\tfrac{N}{pB}\Bigr),
\]
matching (22)--(24).

\paragraph{Workspace.}
The $p$ raw incoming regions occupy $N/p$ items in aggregate, but they are the
mandatory target slab, not auxiliary storage. The macro-element swaps are in
place (one temporary item), and the only non-output array is the reusable
transpose scratch $T$ of $N/p^2$ items. With the $O(\log N)$ recursion-stack
words of the cache-oblivious transpose, the auxiliary workspace is
$O(N/p^2+\log N)$, as claimed in~(28). The aggregate balance counts
$\mu=N(1-1/p)$ and $h=(N/p)(1-1/p)$ are the block-count bookkeeping of
Theorem~4 and are not repeated here.
}
\section{Lower Bounds: Deferred Proofs}
\label{app:lb}

The main text establishes three of \algName{}'s four lower bounds in full: the
one-round redistribution bound (Proposition~3), the exact migration bound for
the prescribed fused permutation (Proposition~4), and the maximum per-processor work
bound (Proposition~5). The maximum per-processor cache lower bound (Proposition~6) was stated
there with only a proof roadmap; its detailed segment argument is deferred to
this appendix.

We import the FFT-DAG input-boundary theorem \cite{RanjanSaZu11}, convert boundary growth into
block-cache misses, obtain a finite block-aware bound for a single radix-2 FFT
DAG, extend it to disjoint batches, and specialize it to \algName{}'s two local
phases, completing Proposition~6. As a byproduct we also record a stronger
constant-one asymptotic miss bound. Throughout, arithmetic vertices of the
fixed-radix Cooley--Tukey DAG are never recomputed.

\begin{remark}[Fixed-radix scope]\label{rmk:fixed-radix-scope}
Let $b=2^t$, $t=O(1)$. If every radix-$b$ butterfly is expanded into its $t$
constituent radix-2 layers, the resulting dependency graph is the standard
radix-2 FFT DAG. Hence all lower bounds of this appendix---the single-transform
bounds of Appendices~\ref{app:finite} and~\ref{app:strong}, the batched bounds
of Appendix~\ref{app:batch}, and the \algName{} local bound of Appendix~\ref{app:PACOlb}---apply
to the expanded fixed-radix DAG. If a complete radix-$b$ DFT is instead treated
as one indivisible atomic vertex, the graph structure and leading constants
differ; that atomic model is outside the scope of these claims.
\end{remark}

\subsection{Imported FFT-DAG Input-Boundary Theorem}
\label{app:import}

Let $G_2(N)$ denote the radix-2 FFT DAG on $N=2^m$ inputs; it has $m+1$ levels
of $N$ vertices and $N\log_2 N$ noninput computation vertices. For a set $U$ of
noninput vertices, its input boundary is
\begin{equation}\label{e:inU}
  \func{in}(U)=\{v\notin U:\exists\,u\in U\text{ with }(v,u)\in E\}.
\end{equation}

\begin{theorem}[FFT input-boundary growth]\label{thm:29}
For every integer $k\ge1$, every set $U$ of noninput vertices of $G_2(N)$ with
$|U|\ge k\log_2(4k)$ satisfies $|\func{in}(U)|\ge k$.
\end{theorem}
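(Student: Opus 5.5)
This result is imported from Ranjan, Savage, and Zubair, so the plan is to reconstruct the standard argument: an induction on $m=\log_2 N$ that uses the recursive structure of $G_2(N)$. The induction targets the equivalent contrapositive form. Writing $S(k)$ for the maximum number of noninput vertices a set $U$ can contain when $|\func{in}(U)|\le k$, it suffices to show $S(k)<(k+1)\log_2(4(k+1))$, or more cleanly $S(k)\le k\log_2(2k)$ for $k\ge1$. That gives the claim with room to spare, since $|U|\ge k\log_2(4k)>(k-1)\log_2(2(k-1))$ forces $|\func{in}(U)|\ge k$.

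The first structural step is to decompose $G_2(N)$ by removing its last butterfly level. The first $m-1$ noninput levels then form two disjoint copies of $G_2(N/2)$, on the even- and odd-indexed inputs under the decimation convention. The final level pairs vertex $i$ of copy~0 with vertex $i$ of copy~1, and both outputs of that butterfly read both of these vertices.

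Given $U$, split it as $U=U_0\cup U_1\cup U_{\mathrm{top}}$, where $U_{\mathrm{top}}$ is the part of $U$ on the last level. The boundary of $U$ restricted to each half contains $\func{in}(U_j)$ minus nothing, because vertices inside a half have predecessors only inside that half. In addition, each butterfly touched by $U_{\mathrm{top}}$ contributes its two predecessors, except for those already lying in $U_0\cup U_1$. Let $t$ be the number of touched butterflies, so $|U_{\mathrm{top}}|\le 2t$. Of the $2t$ predecessor vertices, those not in $U$ are boundary vertices. Those that are in $U$ are top-level vertices of $U_0$ or $U_1$, and each such vertex forces that half to contain a full cone: at least one vertex per level above it within the half, since in an FFT every vertex has two predecessors one level up.

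This yields the recurrence
\[
|\func{in}(U)|\ge |\func{in}(U_0)|+|\func{in}(U_1)|+(\text{predecessors outside }U),
\]
together with the bound $|U|\le S(k_0)+S(k_1)+2t$ with $k_0+k_1\le k$. Concavity of $x\log x$-type bounds in the split, that is, convexity of $k\log k$, then gives $S(k)\le \max_{k_0+k_1\le k}\bigl(S(k_0)+S(k_1)\bigr)+O(k)$. Solving this recurrence, the $O(k)$ term per halving over $\log k$ halvings, yields $k\log_2 k+O(k)$.

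The main obstacle is charging the top-level vertices whose predecessors lie inside $U$. Such vertices add to $|U|$ without adding boundary directly. The fix I would try first is to strengthen the induction hypothesis to count the number $r$ of $U$-vertices on the top level of each half separately, in the form $S(k,r)$. Then one can argue that a half with $r$ top vertices in $U$ needs boundary at least roughly $r$: the predecessors of the top level form a matching structure, and a set of $r$ output vertices of $G_2(n)$ has at least $r$ distinct ancestors on each level, so somewhere the chain must leave $U$. This shows $t\le k$ and keeps the additive term $O(k)$. The constant inside the logarithm, $4k$, should absorb the floor and ceiling slack of the base cases $k\le 2$, which I would check by hand.
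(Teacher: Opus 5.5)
The paper does not prove this statement; it imports it from Ranjan, Savage, and Zubair, so your reconstruction has to stand on its own. Your decomposition is sound: if $U_j$ is the part of $U$ in the two copies of $G_2(N/2)$, $k_j=|\mathrm{in}(U_j)|$, and $e$ is the number of level-$(m-1)$ predecessors of $U_{\mathrm{top}}$ outside $U$, then exactly $|\mathrm{in}(U)|=k_0+k_1+e$.

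The gap is in closing the recurrence. Your induction runs over $m=\log_2 N$, not over $k$, and nothing forces the boundary to split evenly between the halves. So there are no ``$\log k$ halvings''. With an additive term of size $O(k)$, the recurrence $S(k)\le\max_{k_0+k_1\le k}(S(k_0)+S(k_1))+O(k)$ is vacuous. The split $k_0=k$, $k_1=0$ already makes it read $S(k)\le S(k)+O(k)$, and convexity of $k\log k$ makes that lopsided split the maximizer, not the balanced one. If you make it well-founded by indexing on $m$, you get $S_m(k)\le S_{m-1}(k)+O(k)$, a bound of order $km$ that grows with $N$. The slack $F(k)-F(k_0)-F(k_1)$ of $F(k)=k\log_2 k$ is only about $2\min(k_0,k_1)$. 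A top-level charge of order $k$ (your ``$t\le k$'') therefore cannot be absorbed.

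The missing idea is that every last-level butterfly has exactly one predecessor in each half. Let $e_j$ count the last-level boundary vertices in half $j$, and $r_j$ the level-$(m-1)$ vertices of $U_j$. Then $t\le r_j+e_j$ holds for \emph{both} $j$.

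A layer-by-layer count gives $r_j\le k_j$:
\begin{itemize}
\item predecessor sets in a butterfly never shrink;
\item at each level, the part outside $U$ is boundary of $U_j$, and the part inside is propagated one level down;
\item inputs are never in $U$, so the boundary collected over all levels totals at least $r_j$.
\end{itemize}
Your version (``somewhere the chain must leave $U$'') yields only one boundary vertex per chain. Also, a single vertex of $U$ does not force a cone inside $U$.

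Hence $|U_{\mathrm{top}}|\le 2t\le 2\min(a,b)$ with $a=k_0+e_0$, $b=k_1+e_1$, and $a+b=|\mathrm{in}(U)|$. The induction on $m$ then closes with $F(k)=k\log_2 k$, because $F(a+b)-F(a)-F(b)=a\log_2(1+b/a)+b\log_2(1+a/b)\ge 2\min(a,b)$. This gives $|U|\le|\mathrm{in}(U)|\log_2|\mathrm{in}(U)|$, which implies the stated threshold $k\log_2(4k)$ with room to spare.
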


Theorem~\ref{thm:29} is imported from the boundary-growth
literature~\cite{RanjanSaZu11}; we use only its graph-theoretic consequence.

\subsection{From Input Boundaries to Block Cache Misses}
\label{app:boundtomiss}

\begin{lemma}[Input boundary to block-cache misses]\label{e:lem30}
Consider one contiguous segment of a legal schedule, and let $U$ be the set of
previously uncomputed FFT vertices evaluated during it. If $|\func{in}(U)|\ge k$,
the segment incurs at least $\bigl\lceil(k-M)/B\bigr\rceil_{+}$ cache misses,
where $\lceil x\rceil_{+}:=\max\{0,\lceil x\rceil\}$.
\end{lemma}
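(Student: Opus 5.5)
The plan is a standard red--blue pebbling counting argument, adapted to the block-transfer ideal cache. Fix the segment and let $S$ be the set of input-boundary vertices $\func{in}(U)$; by hypothesis $|S|\ge k$. Every $v\in S$ lies outside $U$ but is an immediate predecessor of some $u\in U$. Since $u$ is evaluated during the segment and arithmetic vertices are never recomputed, the value of $v$ must be present in cache at the moment $u$ is evaluated. By definition of $U$, $v\notin U$, so $v$ is either a DAG input or was computed before the segment began. In either case $v$ was not produced inside the segment. The only ways its value can be in cache during the segment are therefore two. Either it was already resident at the start of the segment, or it was brought in by a block transfer (a miss) during the segment.

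The counting step then proceeds as follows. At the start of the segment the cache holds at most $M$ items, so at most $M$ boundary values can be served from the initial cache contents. The remaining at least $k-M$ boundary values must each arrive via some miss during the segment. Each miss loads one block of $B$ items, so it can supply at most $B$ distinct boundary values. Consequently the number of misses is at least $(k-M)/B$. Because misses are integral and nonnegative, this gives $\lceil (k-M)/B\rceil_{+}$.

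One subtlety to address is that a value might be evicted and then reloaded several times, or might be brought in before it is needed and later dropped. This only increases the miss count, since we are lower-bounding misses by distinct values that must each appear in at least one loaded block or in the initial contents. I will state this explicitly by charging each boundary value to the first event (initial residency or first miss during the segment) that places it in cache.

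The main obstacle is the model assumption that a value living in cache is a copy of a memory item, that is, one item per vertex value. Under the owner-computes, no-recomputation model of \secref{model}, a boundary value cannot be regenerated inside the segment, because regenerating it would require recomputing a vertex outside $U$. It also cannot be synthesized from other cached data without violating the DAG model. I will cite the no-recomputation assumption here to close this gap.
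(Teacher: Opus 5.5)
Your argument is correct and is essentially the paper's proof: at most $M$ boundary values can be resident at the start of the segment, so at least $k-M$ must be supplied by misses during it, and each miss brings in at most $B$ items. You also make explicit a point the paper leaves implicit, namely that no-recomputation rules out a boundary value being regenerated inside the segment; what forces operands to be in cache is the ideal-cache model itself, not no-recomputation.
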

\begin{proof}
At the start of the segment at most $M$ items reside in cache, so at most $M$
of the required boundary values are resident; at least $k-M$ are absent when
$k>M$. Each miss loads one block of at most $B$ items and thus supplies at most
$B$ missing values, so at least $\lceil(k-M)/B\rceil$ misses are required when
$k>M$, and zero otherwise.
\end{proof}

Lemma~\ref{e:lem30} permits a strong algorithmic model: the algorithm may know
$M$ and $B$, choose arbitrary slow-memory layouts, pack $B$ useful values per
loaded block, keep the most useful $M$ values in cache, and maintain multiple
slow-memory copies. None of these eliminates the need to load a block when a
required boundary value is absent.

\subsection{Finite Block-Aware Lower Bound for One FFT DAG}
\label{app:finite}

\begin{theorem}[Finite block-aware cache lower bound]\label{thm:31}
For every integer $k>M$, every no-recomputation topological schedule of
$G_2(N)$ satisfies
\begin{equation}\label{e:finite}
  Q_{\mathrm{miss}}(N,M,B)\ge
  \Bigl\lfloor\tfrac{N\log_2 N}{k\log_2(4k)}\Bigr\rfloor
  \Bigl\lceil\tfrac{k-M}{B}\Bigr\rceil.
\end{equation}
\end{theorem}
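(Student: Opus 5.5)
We plan to prove Theorem~\ref{thm:31} by a segmentation argument in the style of Hong--Kung: cut the schedule into consecutive segments, each of which must pay misses by Theorem~\ref{thm:29} and Lemma~\ref{e:lem30}.

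Fix an integer $k>M$ and set
\[
  S := \bigl\lceil k\log_2(4k)\bigr\rceil .
\]
Take any no-recomputation topological schedule of $G_2(N)$. Since there is no recomputation, it evaluates each of the $N\log_2 N$ noninput vertices exactly once.

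\textbf{Step 1: segment the schedule.} List the noninput vertices in the order they are computed. Group them greedily into consecutive segments, each closing as soon as it contains $S$ newly computed vertices, and merge any leftover tail into the last full segment. This yields at least
\[
  \Bigl\lfloor \frac{N\log_2 N}{k\log_2(4k)} \Bigr\rfloor
\]
segments, each containing at least $k\log_2(4k)$ previously uncomputed vertices. If that floor is $0$, the bound is trivial. The only bookkeeping point is that $S$ is an integer ceiling, so the count must be checked against $\lfloor N\log_2 N/(k\log_2(4k))\rfloor$ rather than $\lfloor N\log_2 N/S\rfloor$. I would either let segments have size at least $k\log_2(4k)$ and argue the count directly, or note that rounding only helps.

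\textbf{Step 2: apply the boundary theorem to each segment.} Let $U$ be the set of vertices computed in one segment. Then $|U|\ge k\log_2(4k)$, so Theorem~\ref{thm:29} gives $|\func{in}(U)|\ge k$.

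\textbf{Step 3: convert the boundary into misses.} Lemma~\ref{e:lem30} turns $|\func{in}(U)|\ge k$ into at least $\lceil (k-M)/B\rceil$ misses within that segment. This is positive because $k>M$. Segments are disjoint time intervals, so their miss counts add, and multiplying by the number of segments gives \eqref{e:finite}.

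The main obstacle is justifying Lemma~\ref{e:lem30}'s premise inside a segment. Each vertex of $\func{in}(U)$ is either an input or a vertex computed earlier. It is not computed in this segment, because it lies outside $U$, and by no-recomputation it cannot be produced again. Its value must therefore be read during the segment, either from the at most $M$ cache-resident items at the segment's start or via a block load. This is where the no-recomputation assumption is essential. Spelling it out carefully, including that boundary values need not be distinct items in slow memory but each miss supplies at most $B$ of them, completes the argument.
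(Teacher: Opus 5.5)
Your Steps~1--3 follow the paper's proof. Both partition the schedule into consecutive segments of $k\log_2(4k)$ newly computed vertices, apply Theorem~\ref{thm:29} to each segment's vertex set, convert the boundary into $\lceil (k-M)/B\rceil$ misses via Lemma~\ref{e:lem30}, and add over disjoint time intervals. Your closing discussion of why every vertex of $\mathrm{in}(U)$ must come from the initial cache contents or from block loads is exactly the content of Lemma~\ref{e:lem30}, so you can simply cite it.

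The one place you go beyond the paper is the integrality of $k\log_2(4k)$, and you resolve it in the wrong direction. Segment sizes are integers, so ``at least $k\log_2(4k)$ vertices'' means the same as ``at least $S=\lceil k\log_2(4k)\rceil$ vertices.'' Any family of disjoint such segments therefore has at most $\lfloor N\log_2 N/S\rfloor$ members. Rounding up can only lose segments, so neither of your two proposed fixes works, and the Step~1 claim is false in general.

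Concretely, take $k=3$ and $N=64$. Then $N\log_2 N=384$ and $k\log_2(4k)=3\log_2 12\approx 10.75$, so the target count is $35$. But $35$ disjoint segments of at least $11$ vertices would need $385>384$ vertices. What your argument actually proves is the bound with $\lceil k\log_2(4k)\rceil$ in place of $k\log_2(4k)$ in the denominator.

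The paper's proof has the same hole, hidden in the phrase ``exactly $k\log_2(4k)$ previously uncomputed vertices.'' That phrase only makes sense when $k$ is a power of two, because then $k\log_2(4k)=k(2+\log_2 k)$ is an integer; otherwise it is irrational. There are two clean repairs: restrict $k$ to powers of two, for which your argument is complete as written, or state the bound with the ceiling. Either version suffices for the downstream asymptotic uses (Theorem~\ref{thm:32}, and the batched analogue Theorem~\ref{thm:34} behind Corollary~\ref{e:cor35}), where floors and ceilings are absorbed anyway.
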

\begin{proof}
Partition the schedule into consecutive complete segments, each computing
exactly $k\log_2(4k)$ previously uncomputed vertices (ignore a final
incomplete segment). Since $G_2(N)$ has $N\log_2 N$ noninput vertices, there
are at least $\lfloor N\log_2 N/(k\log_2(4k))\rfloor$ complete segments. Each
has $|\func{in}(U)|\ge k$ by Theorem~\ref{thm:29} and thus at least
$\lceil(k-M)/B\rceil$ misses by Lemma~\ref{e:lem30}; disjoint segments add.
\end{proof}

\subsection{A Stronger Constant-One Asymptotic Bound}
\label{app:strong}

This subsection is not needed for \algName{}'s upper-bound proof; it records the
stronger miss-only bound noted in the main text. Let $\phi(M)$ satisfy
\begin{equation}\label{eq:phi}
  \phi(M)\to\infty,\qquad \log\phi(M)=o(\log M),
\end{equation}
choose $k=\lceil M\phi(M)\rceil$, and assume additionally
\begin{equation}\label{eq:phi2}
  M\phi(M)\log_2\!\bigl(4M\phi(M)\bigr)=o(N\log_2 N).
\end{equation}

\begin{theorem}[Asymptotic constant-one cache lower bound]\label{thm:32}
Under \eqref{phi}--\eqref{phi2}, every no-recomputation schedule of
$G_2(N)$ satisfies
$Q_{\mathrm{miss}}(N,M,B)\ge(1-o(1))\tfrac{N}{B}\log_M N$.
\end{theorem}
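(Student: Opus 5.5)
We instantiate the finite block-aware bound of Theorem~\ref{thm:31} with $k=\lceil M\phi(M)\rceil$ and show that each of its two factors is asymptotically as large as needed. Write $k\approx M\phi(M)$. Since $k>M$ once $\phi(M)>1$, Theorem~\ref{thm:31} applies and gives
\[
Q_{\mathrm{miss}}\ge\Bigl\lfloor\tfrac{N\log_2 N}{k\log_2(4k)}\Bigr\rfloor\Bigl\lceil\tfrac{k-M}{B}\Bigr\rceil .
\]

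We handle the two factors separately.

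\textbf{First factor.} Assumption~\eqref{phi2} says exactly that $k\log_2(4k)=o(N\log_2N)$. Hence the quotient tends to infinity, and the floor costs only a factor $(1-o(1))$:
\[
\Bigl\lfloor\tfrac{N\log_2 N}{k\log_2(4k)}\Bigr\rfloor\ge(1-o(1))\tfrac{N\log_2N}{k\log_2(4k)}.
\]

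\textbf{Second factor.} We drop the ceiling and write $\lceil(k-M)/B\rceil\ge (k-M)/B=\tfrac{k}{B}(1-1/\phi(M))-O(1/B)$. Since $\phi\to\infty$, this is $(1-o(1))k/B$.

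\textbf{Combining.} Multiplying the two factors, the $k$ cancels:
\[
Q_{\mathrm{miss}}\ge(1-o(1))\frac{N\log_2N}{B\log_2(4k)}.
\]
It remains to show $\log_2(4k)=(1+o(1))\log_2M$. We expand $\log_2(4k)=\log_2M+\log_2\phi(M)+O(1)$. By~\eqref{phi}, $\log\phi(M)=o(\log M)$, and $M\to\infty$ is implicit in $\phi(M)\to\infty$, so the $O(1)$ term is also $o(\log M)$. Therefore $\frac{\log_2N}{\log_2(4k)}=(1-o(1))\log_MN$, which yields the claim.

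\textbf{Main obstacle.} The delicate step is the bookkeeping of the floor, the ceiling, and the $o(1)$ terms, so that they compose into a single $(1-o(1))$ factor. This needs the regime of~\eqref{phi2} to guarantee that there are many complete segments, and it needs $M\to\infty$. I would state explicitly that the asymptotics are taken as $M\to\infty$ with $N$ satisfying~\eqref{phi2}, and treat each error factor uniformly under that limit.
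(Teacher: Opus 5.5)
Your proof is correct and follows the paper's argument: substitute $k=\lceil M\phi(M)\rceil$ into Theorem~\ref{thm:31}, use \eqref{phi2} to absorb the floor, and use \eqref{phi} to get $1-M/k=1-o(1)$ and $\log_2 M/\log_2(4k)=1-o(1)$. The only difference is presentational: you spell out the floor, ceiling, and $o(1)$ bookkeeping, and cancel $k$ before comparing $\log_2(4k)$ with $\log_2 M$, whereas the paper does all of this in a single displayed line.
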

\begin{proof}
Substitute $k=\lceil M\phi(M)\rceil$ into Theorem~\ref{thm:31}; by
\eqref{phi2} the floor, ceiling, and final incomplete segment are
lower-order, so
\[
  Q_{\mathrm{miss}}\ge(1-o(1))\frac{N}{B}\cdot
  \frac{\log_2 N}{\log_2 M}\cdot\Bigl(1-\frac{M}{k}\Bigr)\cdot
  \frac{\log_2 M}{\log_2(4k)}.
\]
The choice of $k$ gives $1-M/k=1-1/\phi(M)+o(1)=1-o(1)$, while
$\log\phi(M)=o(\log M)$ gives $\log_2 M/\log_2(4k)=1-o(1)$; finally
$\log_2 N/\log_2 M=\log_M N$.
\end{proof}

Thus the normalized miss-only lower-bound constant is at least $1-o(1)$; this
does not assert that \algName{} or any known schedule attains an upper bound of the
form $(1+o(1))\tfrac{N}{B}\log_M N$.

\subsection{Disjoint Unions of FFT DAGs}
\label{app:disjoint}

Let $G_2^{(s)}(n)=\bigsqcup_{i=1}^{s}G_2(n)^{(i)}$ be the disjoint union of $s$
radix-2 FFT DAGs of length $n$.

\begin{lemma}[Boundary growth for disjoint batches]\label{e:lem33}
For every integer $k\ge1$, every set $U$ of noninput vertices of
$G_2^{(s)}(n)$ with $|U|\ge 4k\log_2(4k)$ satisfies $|\func{in}(U)|\ge k$.
\end{lemma}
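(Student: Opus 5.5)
The plan is to reduce the batched statement to the single-DAG boundary theorem (Theorem~\ref{thm:29}) by splitting $U$ along the $s$ connected components. Write $U=\bigsqcup_{i=1}^{s}U_i$, where $U_i$ is the set of noninput vertices of $U$ lying in copy $G_2(n)^{(i)}$. There are no edges between distinct copies, so every boundary vertex of $U_i$ lies in copy $i$. This gives
\[
  \func{in}(U)=\bigsqcup_{i=1}^{s}\func{in}(U_i),
  \qquad
  |\func{in}(U)|=\sum_{i=1}^{s}|\func{in}(U_i)|.
\]
It therefore suffices to lower-bound each summand using only $|U_i|$ and then sum.

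Next I would turn Theorem~\ref{thm:29} into a per-component inequality. Define $f(k)=k\log_2(4k)$, which is increasing for $k\ge1$. For each $i$, let $k_i\ge0$ be the largest integer with $f(k_i)\le|U_i|$, taking $k_i=0$ when $|U_i|<f(1)=2$. Theorem~\ref{thm:29} then gives $|\func{in}(U_i)|\ge k_i$. By maximality $|U_i|<f(k_i+1)$. Since $f(k+1)\le 4f(k)$ for $k\ge1$ (indeed $(k+1)\log_2(4k+4)\le 2k\cdot 2\log_2(4k)$ holds because $\log_2(4k+4)\le \log_2(8k)\le 2\log_2(4k)$), we get $|U_i|<4f(k_i)$ whenever $k_i\ge1$. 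When $k_i=0$ the bound is simply $|U_i|<2$.

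The main obstacle is the aggregation step. A concave-type bound of the form $|U_i|\le g(|\func{in}(U_i)|)$ cannot simply be summed, because many tiny components, each with $|U_i|\le 1$ and $k_i=0$, could in principle carry mass without contributing boundary. I would close this gap directly. Any single noninput vertex already has two predecessors, so a nonempty $U_i$ has $|\func{in}(U_i)|\ge 1$; more usefully, $|\func{in}(U_i)|\ge\max(1,k_i)$ whenever $U_i\neq\emptyset$. Writing $k_i'=|\func{in}(U_i)|\ge1$ for nonempty $U_i$, one obtains $|U_i|\le 4f(k_i')$ in all cases, using $|U_i|<2\le 4f(1)$ for the small ones. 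Suppose for contradiction that $K:=\sum_i k_i'<k$. Then, by monotonicity of $\log_2(4\cdot)$,
\[
  |U|=\sum_i|U_i|\le 4\sum_i k_i'\log_2(4k_i')\le 4\log_2(4K)\sum_i k_i'=4f(K)<4f(k).
\]
This contradicts $|U|\ge 4k\log_2(4k)$, so $|\func{in}(U)|=K\ge k$. The only point needing care is the inequality $f(k+1)\le4f(k)$ together with the treatment of components having $k_i=0$, and I would verify both explicitly. The rest is bookkeeping.
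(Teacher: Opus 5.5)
Your proposal is correct and follows essentially the same route as the paper. Both arguments split $U$ by components so that $|\mathrm{in}(U)|=\sum_i|\mathrm{in}(U_i)|$, use $r_i=|\mathrm{in}(U_i)|\ge1$ for nonempty $U_i$ together with the contrapositive of Theorem~\ref{thm:29} to bound $|U_i|$ by a function of $r_i$, and sum using monotonicity of the logarithm. The paper applies the contrapositive directly with parameter $r_i+1$, which makes your auxiliary $k_i$ unnecessary, and your justification of $r_i\ge1$ is cleanest when you take a vertex of $U_i$ at minimal level, since its predecessors cannot lie in $U_i$.
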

\begin{proof}
Contrapositive. Let $U_i=U\cap V(G_2(n)^{(i)})$ and $r_i=|\func{in}(U_i)|$, so
$|\func{in}(U)|=\sum_i r_i$. Suppose $|\func{in}(U)|<k$. For every nonempty $U_i$,
$r_i\ge1$, and the contrapositive of Theorem~\ref{thm:29} with parameter
$r_i+1$ gives $|U_i|<(r_i+1)\log_2(4(r_i+1))$. Since $r_i\ge1$,
$r_i+1\le2r_i$ and $4(r_i+1)\le8r_i$, so $|U_i|<2r_i\log_2(8r_i)\le
2r_i\log_2(8k)$. Summing, $|U|<2(\sum_i r_i)\log_2(8k)<2k\log_2(8k)$. As
$2\log_2(8k)\le4\log_2(4k)$ for $k\ge1$, $|U|<4k\log_2(4k)$.
\end{proof}

The constant $4$ is chosen for a uniform statement and is not claimed tight.

\subsection{Cache Lower Bound for a Batch of Local FFTs}
\label{app:batch}

\begin{theorem}[Finite lower bound for a batch of FFTs]\label{thm:34}
Let one processor evaluate $s$ data-disjoint length-$n$ radix-2 FFT DAGs
without recomputation. For every integer $k>M$,
\begin{equation}\label{eq:batchfinite}
  Q_{\mathrm{batch}}(n,s)\ge
  \Bigl\lfloor\tfrac{sn\log_2 n}{4k\log_2(4k)}\Bigr\rfloor
  \Bigl\lceil\tfrac{k-M}{B}\Bigr\rceil.
\end{equation}
\end{theorem}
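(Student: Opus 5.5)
This is the single-DAG segment argument of Theorem~\ref{thm:31} with Theorem~\ref{thm:29} replaced by the batched boundary-growth Lemma~\ref{e:lem33}. Fix an integer $k>M$ and an arbitrary no-recomputation topological schedule of $G_2^{(s)}(n)$ on one processor. Its vertex set is a disjoint union of $s$ copies of $G_2(n)$, so it has exactly $sn\log_2 n$ noninput computation vertices. Every such vertex is evaluated exactly once, because the schedule does no recomputation.

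\emph{Step 1 (segmentation).} Scan the schedule in order and cut it into consecutive segments. Each complete segment is chosen to evaluate exactly $4k\log_2(4k)$ previously uncomputed vertices; a final incomplete segment is discarded. If $4k\log_2(4k)$ is not an integer, segments take its ceiling. This changes only the floor argument and can be absorbed by cutting at the first point where the count reaches the threshold. The counting then gives at least $\lfloor sn\log_2 n/(4k\log_2(4k))\rfloor$ complete segments.

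\emph{Step 2 (boundary per segment).} Let $U$ be the set of vertices evaluated in one complete segment. Then $|U|\ge 4k\log_2(4k)$, and Lemma~\ref{e:lem33} gives $|\func{in}(U)|\ge k$. Since $k>M$, Lemma~\ref{e:lem30} yields at least $\lceil (k-M)/B\rceil$ cache misses inside that segment. The lemma applies because the boundary values are either inputs or vertices computed before the segment, and in both cases they must be present when their successors in $U$ are evaluated.

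\emph{Step 3 (summation).} The segments are disjoint time intervals of the schedule, so their misses are distinct events and add up. Multiplying the segment count by the per-segment bound gives \eqref{batchfinite}.

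The only point needing care is Step~2's use of Lemma~\ref{e:lem30}. Data-disjointness of the batch must not let the processor evade the boundary by holding values from other transforms. This is already handled, since Lemma~\ref{e:lem33} counts the boundary over the whole union and Lemma~\ref{e:lem30} allows any cache contents at the start of a segment. The argument therefore reduces directly to the two earlier lemmas.
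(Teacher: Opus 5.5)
Your proof is the paper's argument: cut the schedule into consecutive segments of $4k\log_2(4k)$ previously uncomputed vertices, apply Lemma~\ref{e:lem33} and then Lemma~\ref{e:lem30} to each complete segment, and add the misses over the disjoint segments. One correction to your aside on non-integer thresholds: since $|U|$ is an integer, each segment needs $\lceil 4k\log_2(4k)\rceil$ vertices, and $\lfloor sn\log_2 n/\lceil 4k\log_2(4k)\rceil\rfloor$ can be strictly smaller than the stated floor (e.g.\ $k=3$, $n=2$, $s=65$ gives $2$ rather than $3$), so this is not absorbed for free. The paper's proof silently has the same issue, and it disappears whenever $k$ is a power of two.
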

\begin{proof}
The disjoint union has $sn\log_2 n$ noninput vertices. Partition into complete
segments of $4k\log_2(4k)$ previously uncomputed vertices; Lemma~\ref{e:lem33}
gives $|\func{in}(U)|\ge k$ per segment, and Lemma~\ref{e:lem30} gives
$\lceil(k-M)/B\rceil$ misses each. Sum over segments.
\end{proof}

\begin{corollary}[Asymptotic batched lower bound]\label{e:cor35}
There are universal constants $c,c'>0$ with
\begin{equation}\label{eq:batchasym}
  Q_{\mathrm{batch}}(n,s)\ge c\,\frac{ns}{B}\bigl(1+\log_M n\bigr)-c'\frac{M}{B}.
\end{equation}
If the cache is initially empty, or if $ns\ge\kappa M$ for a sufficiently large
universal constant $\kappa$, then
$Q_{\mathrm{batch}}(n,s)=\Omega\!\bigl(\tfrac{ns}{B}(1+\log_M n)\bigr)$.
\end{corollary}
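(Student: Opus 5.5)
The corollary follows from Theorem~\ref{thm:34} once the free parameter $k$ is fixed, followed by a case split on the size of $n$ relative to $M$. Take $k=\lceil 2M\rceil$, so that $k>M$ and $\lceil (k-M)/B\rceil\ge M/B$. Substituting into \eqref{batchfinite} gives
\[
  Q_{\mathrm{batch}}(n,s)\ \ge\ \Bigl\lfloor \frac{sn\log_2 n}{8M\log_2(8M+4)}\Bigr\rfloor\cdot\frac{M}{B}.
\]
Using $\lfloor x\rfloor\ge x-1$, the right-hand side is at least
\[
  c_1\,\frac{ns}{B}\,\frac{\log_2 n}{\log_2 M}\;-\;\frac{M}{B}
\]
for a universal constant $c_1>0$, valid once $M$ exceeds a fixed constant; this uses $\log_2(8M+4)=O(\log_2 M)$. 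This yields the $\log_M n$ part of \eqref{batchasym}, with additive loss $c'M/B$.

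The constant term $1$ in $(1+\log_M n)$ comes from a separate compulsory-miss argument. Every one of the $ns$ input items must be read at least once, and each miss delivers at most $B$ items. At most $M$ items can be resident initially, so at least $(ns-M)/B$ misses occur. This is again a bound of the form $ns/B-M/B$, and averaging the two bounds gives \eqref{batchasym} with suitable $c,c'$. Small $M$, below the fixed constant used above, is absorbed into the constants. When $n$ is constant, $\log_M n=O(1)$, and the compulsory bound suffices on its own.

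For the second sentence there are two cases. If the cache starts empty, the initial-residency term disappears from both arguments. In Lemma~\ref{e:lem30} this applies to the first segment only, and later segments can be handled by redoing the segmentation argument. The cleaner route is the compulsory bound $ns/B$ combined with the boundary bound. When $ns<\kappa M$ and $n\le ns$, we get $\log_M n=O(1)$, so $ns/B$ already has the required order. When $ns\ge\kappa M$ with $\kappa$ large, the term $c'M/B\le (c'/\kappa)\,ns/B$ is absorbed into half of the main term $c\,ns/B(1+\log_M n)$.

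The main obstacle is bookkeeping in the regime where $sn\log_2 n$ is smaller than $8M\log_2(8M)$. There the floor in \eqref{batchfinite} vanishes and the boundary bound gives nothing. I would handle this regime uniformly by noting that $1+\log_M n=O(1)$ whenever $n\le M^{O(1)}$. In every subcase the compulsory term then carries the bound, so no fine control of the floor is needed beyond $\lfloor x\rfloor\ge x-1$.
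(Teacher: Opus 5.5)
Your proof is correct and follows the paper's route: set $k=2M$ in Theorem~\ref{thm:34}, use $\lfloor x\rfloor\ge x-1$ and $\log_2(4k)=O(\log_2 M)$, add the compulsory bound $(ns-M)_+/B$, and combine the two via $\max\{x,y\}\ge(x+y)/2$. Your explicit case split for the empty-cache claim fills in a step the paper leaves implicit: $ns<\kappa M$ forces $n\le ns<\kappa M$, hence $\log_M n=O(1)$. Note, though, that an empty initial cache removes the $M$ term only from the compulsory bound and the first segment, not from later boundary segments, which is why the ``cleaner route'' you settled on is the right one.
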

\begin{proof}
Take $k=2M$ in Theorem~\ref{thm:34}. Then $k>M$, and
$k-M=M$, $4k=8M$, so $\log_2(4k)=3+\log_2 M$. Hence
\[
Q_{\mathrm{batch}}(n,s)\ \ge\
\left\lfloor \frac{sn\log_2 n}{8M\,(3+\log_2 M)}\right\rfloor
\left\lceil \frac{M}{B}\right\rceil .
\]
Write $A:=\dfrac{sn\log_2 n}{8M(3+\log_2 M)}$.
Using $\lceil M/B\rceil\ge M/B\ge 0$ and $\lfloor A\rfloor\ge A-1$
(the product bound holds trivially when $A<1$, since the left side is
nonnegative),
\[
Q_{\mathrm{batch}}\ \ge\ (A-1)\frac{M}{B}
\ =\ A\,\frac{M}{B}-\frac{M}{B}.
\]
For $M\ge 2$ we have $3+\log_2 M\le 4\log_2 M$, so by the change of base
$\log_M n=\log_2 n/\log_2 M$,
\[
A\,\frac{M}{B}
=\frac{sn}{8B}\cdot\frac{\log_2 n}{3+\log_2 M}
\ \ge\ \frac{sn}{8B}\cdot\frac{\log_M n}{4}
=\frac{1}{32}\,\frac{sn}{B}\log_M n .
\]
This gives the stated bound with $c_1=\tfrac1{32}$ and $c_2=1$:
\[
Q_{\mathrm{batch}}\ \ge\ c_1\frac{ns}{B}\log_M n-c_2\frac{M}{B}.
\]
Independently, the $ns$ inputs are distinct, so with at most $M$ initially
resident, $Q_{\mathrm{batch}}\ge (ns-M)_+/B$. Combining via
$\max\{x,y\}\ge (x+y)/2$ gives \eqref{batchasym}. If the cache is
empty the compulsory scan contributes $\Omega(ns/B)$; if $ns\ge\kappa M$
the linear term absorbs the $M/B$ term.
\end{proof}
\punt{
\begin{proof}
Take $k=2M$ in Theorem~\ref{thm:34}: there are $c_1,c_2>0$ with
$Q_{\mathrm{batch}}\ge c_1\tfrac{ns}{B}\log_M n-c_2\tfrac{M}{B}$.
Independently, the $ns$ inputs are distinct, so with at most $M$ initially
resident, $Q_{\mathrm{batch}}\ge(ns-M)_{+}/B$. Combining via
$\max\{x,y\}\ge(x+y)/2$ gives \eqref{batchasym}. If the cache is empty the
compulsory scan contributes $\Omega(ns/B)$; if $ns\ge\kappa M$ the linear term
absorbs the $M/B$ term.
\end{proof}
}

\subsection{Application to \algName{} Local FFT Phases}
\label{app:PACOlb}

In Phase~I every processor evaluates $s_1=N_2/p$ data-disjoint transforms of
length $n_1=N_1$; in Phase~III it evaluates $s_2=N_1/p$ of length $n_2=N_2$.
Both batches have size $n_1s_1=n_2s_2=N/p$.

\begin{theorem}[\algName{} local cache lower bound]\label{thm:36}
Under the exact slab decomposition and no arithmetic recomputation, \algName{}'s two
local FFT phases satisfy
$Q^{\mathrm{local}}_{\max}=\Omega\!\bigl(\tfrac{N}{pB}(1+\log_M N)\bigr)$,
and hence so does the complete \algName{} execution.
\end{theorem}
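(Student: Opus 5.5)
The plan is to apply Corollary~\ref{e:cor35} separately to the two batches each processor evaluates, and then handle the small-$N$ regime with a compulsory-miss argument, following the roadmap of Proposition~\ref{prop:max-per-processor-cache-lower-bound}. By Remark~\ref{rmk:fixed-radix-scope}, the radix-$b$ Phase-I and Phase-III computations expand into disjoint unions of radix-2 FFT DAGs. On every processor, Phase~I evaluates $s_1=N_2/p$ data-disjoint DAGs $G_2(N_1)$ and Phase~III evaluates $s_2=N_1/p$ DAGs $G_2(N_2)$, both without recomputation. These two executions occupy disjoint time intervals of the schedule, so their miss counts add, and it suffices to lower-bound each one.

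\emph{Case $N/p\ge\kappa M$,} with $\kappa$ the universal constant from Corollary~\ref{e:cor35}. Since $n_1s_1=n_2s_2=N/p$, the corollary gives
\[
\Omega\!\Bigl(\tfrac{N}{pB}\bigl(1+\log_M N_1\bigr)\Bigr)\quad\text{and}\quad\Omega\!\Bigl(\tfrac{N}{pB}\bigl(1+\log_M N_2\bigr)\Bigr)
\]
misses, for Phase~I and Phase~III respectively. Phase~III does not start with an empty cache, but the absorption clause ($ns\ge\kappa M$) does not need an empty cache. Summing the two bounds and using $\log_M N_1+\log_M N_2=\log_M N$ gives $\Omega\bigl(\tfrac{N}{pB}(1+\log_M N)\bigr)$. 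This holds on every processor, so it holds for the maximum.

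\emph{Case $N/p<\kappa M$.} From $p\mid N_1$ and $p\mid N_2$ we get $N_1,N_2\le N/p$, so $N\le (N/p)^2<\kappa^2M^2$. Hence $\log_M N<2+\log_M\kappa^2=O(1)$ once $M$ exceeds a constant, and the target bound reduces to $\Omega(N/(pB))$. This follows from the initially empty cache: Phase~I reads all $N/p$ distinct local inputs, and each miss delivers at most $B$ of them. (Alternatively, the empty-cache clause of Corollary~\ref{e:cor35} applied to Phase~I gives the same conclusion.)

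The main obstacle is the boundary term $c'M/B$ in Corollary~\ref{e:cor35}. This term is exactly why the argument splits on $N/p$ versus $\kappa M$, and the step that needs care is showing that in the small regime $\log_M N$ really is $O(1)$. That step rests on the divisibility assumption~\eqref{eq:exact-divisibility}, which forces $N\le(N/p)^2$. A second point to watch is that each phase's vertex set must be exactly a disjoint union of FFT DAGs, so that Lemma~\ref{e:lem33} applies. The top-level twiddles belong to the middle stage and are not FFT-DAG vertices, so excluding them only weakens the bound we need to prove, which is harmless.
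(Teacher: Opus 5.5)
Your proposal is correct and follows the paper's proof essentially step for step. It uses the same split on $N/p$ versus the constant $\kappa$ of Corollary~\ref{e:cor35}. In the large case it applies that corollary to each phase and sums using $\log_M N_1+\log_M N_2=\log_M N$. In the small case it uses the divisibility bound $N\le (N/p)^2<\kappa^2M^2$ together with Phase-I compulsory misses from an empty cache. Your remark that Phase~III starts with a nonempty cache, and that this is harmless because the $ns\ge\kappa M$ clause does not require an empty cache, is a point the paper leaves implicit, and you handle it correctly.
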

\begin{proof}
Fix the constant $\kappa$ of Corollary~\ref{e:cor35} (playing the role of the
constant $\epsilon$ in the Proposition~6 roadmap).

\emph{Large-cache case $N/p\ge\kappa M$.} Applying Corollary~\ref{e:cor35} to
each phase gives $\Omega\!\bigl(\tfrac{N}{pB}(1+\log_M N_1)\bigr)$ and
$\Omega\!\bigl(\tfrac{N}{pB}(1+\log_M N_2)\bigr)$; the phases run sequentially
and $\log_M N_1+\log_M N_2=\log_M N$, so their bounds sum to the claim.

\emph{Small-cache case $N/p<\kappa M$.} This is exactly the elementary case of
the Proposition~6 roadmap: $p\mid N_1$ and $p\mid N_2$ give $N_1,N_2\le N/p$,
hence $N\le(N/p)^2<\kappa^2M^2$ and $1+\log_M N=O(1)$; Phase~I begins with an
empty cache and touches $\Theta(N/p)$ distinct inputs, incurring
$\Omega(N/(pB))=\Omega\!\bigl(\tfrac{N}{pB}(1+\log_M N)\bigr)$.

The exact slab decomposition assigns identical batch sizes to every processor,
so the bound holds for the maximum per-processor cost; the fused middle stage can only add misses.
\end{proof}

\punt{
\section{Cache Lower Bounds: Deferred Proofs}
\label{app:lower-bounds}

The main text establishes three of \algName{}'s four lower bounds in full:
the one-round redistribution bound (Proposition~3), the exact migration
bound for the prescribed fused permutation (Proposition~4), and the
critical-path work bound (Proposition~5).  The local cache lower bound
(Proposition~6) was stated there with only a proof roadmap; its detailed
segment argument was deferred to this appendix.

This appendix supplies that argument.  We import the FFT-DAG
input-boundary theorem, convert boundary growth into block-cache misses,
and obtain a finite, block-aware cache lower bound for a single radix-2
FFT DAG.  We then extend the bound to disjoint batches of FFT DAGs and
specialize it to \algName{}'s two local phases, thereby completing the proof of
Proposition~6.  As a byproduct---mentioned but not needed in the main
upper-bound analysis---we also record a stronger constant-one asymptotic
miss bound for arbitrary no-recomputation schedules of the radix-2 FFT
DAG.

Throughout this appendix, arithmetic vertices of the fixed-radix
Cooley--Tukey DAG are not recomputed: every DAG vertex is evaluated at
least once and never re-evaluated.

\subsection{Imported FFT-DAG Input-Boundary Theorem}
\label{app:lower-bounds-input-boundary}

The cache lower bound uses a graph-theoretic boundary theorem for the
standard radix-2 FFT DAG.  Let \(G_2(N)\) denote the radix-2 FFT DAG on
\(N=2^m\) inputs.  It has \(m+1\) levels of \(N\) vertices and
\[
N\log_2 N
\]
noninput computation vertices.

For a set \(U\) of noninput vertices, define its input boundary by
\begin{equation}
\operatorname{in}(U)
=
\left\{
v\notin U:
\exists u\in U
\text{ such that }
(v,u)\in E
\right\}.
\label{eq:lower-bound-input-boundary-definition}
\end{equation}
Thus, \(\operatorname{in}(U)\) consists of values computed before the
execution of \(U\) that are required as direct predecessors of
vertices in \(U\).

\begin{theorem}[FFT input-boundary growth]
\label{thm:lower-bound-imported-boundary}
For every integer \(k\geq 1\), every set \(U\) of noninput vertices in
\(G_2(N)\) satisfying
\begin{equation}
|U|
\geq
k\log_2(4k)
\label{eq:lower-bound-boundary-premise}
\end{equation}
also satisfies
\begin{equation}
|\operatorname{in}(U)|
\geq
k.
\label{eq:lower-bound-boundary-conclusion}
\end{equation}
\end{theorem}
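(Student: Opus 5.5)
We prove the contrapositive. Let $f(r)$ be the largest $|U|$ over sets $U$ of noninput vertices of $G_2(N)$, over all $N$, with $|\func{in}(U)|\le r$. It suffices to show $f(r)<(r+1)\log_2(4(r+1))$. Indeed, if $|\func{in}(U)|\le k-1$, this gives $|U|<k\log_2(4k)$. The target is therefore $f(r)\le r\log_2 r+O(r)$, with the slack in $\log_2(4k)$ absorbing the additive terms. Following Ranjan, Savage, and Zubair~\cite{RanjanSaZu11}, we proceed by induction on the number of levels $m=\log_2 N$, using the recursive structure of the FFT DAG.

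\emph{Step 1 (closure and decomposition).} First we may assume $U$ is ancestor-closed relative to $\func{in}(U)$. Every vertex of $U$ is computed from $\func{in}(U)$ through vertices of $U$, so removing dangling parts only lowers $|U|$ at fixed boundary. Next we cut off the last butterfly level. Levels $0,\dots,m-1$ of $G_2(N)$ form two disjoint copies $G_a,G_b$ of $G_2(N/2)$, and level $m$ consists of $N/2$ butterflies. Each butterfly reads one vertex from $G_a$ and one from $G_b$ and writes two vertices. Write $U=U_a\sqcup U_b\sqcup U_m$ accordingly. The boundaries satisfy $|\func{in}(U)|\ge|\func{in}(U_a)|+|\func{in}(U_b)|+e$, where $e$ counts the level-$(m-1)$ predecessors of $U_m$ that are not in $U$.

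\emph{Step 2 (recurrence).} Let $t$ be the number of butterflies touched by $U_m$, so $|U_m|\le 2t$. Each touched butterfly needs both predecessors, each either in $U_a\cup U_b$ or charged to $e$. Let $o_a,o_b$ be the numbers of level-$(m-1)$ vertices of $U_a,U_b$ used as predecessors. Then $t\le\min(o_a,o_b)+e$. The key sub-lemma is that the number of top-level (output) vertices of a closed set in $G_2(n)$ is at most its boundary size. This holds because the FFT is a product of invertible butterflies, so $j$ outputs cannot be determined by fewer than $j$ free inputs; equivalently, one uses $j$ vertex-disjoint paths to the boundary. Hence $o_a\le r_a$ and $o_b\le r_b$, where $r_a=|\func{in}(U_a)|$ and $r_b=|\func{in}(U_b)|$. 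With $r=r_a+r_b+e$ this gives
\[
  f(r)\le\max_{r_a+r_b+e\le r}\Bigl(f(r_a)+f(r_b)+2\bigl(\min(r_a,r_b)+e\bigr)\Bigr).
\]

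\emph{Step 3 (solving).} We then verify that $g(r)=r\log_2 r+c\,r$ satisfies this recurrence for a small constant $c$. The extreme case is the balanced split $r_a=r_b=r/2$, $e=0$, where $2\cdot\frac r2\log_2\frac r2+r=r\log_2 r$ gives exactly the required increment. Cases with $e>0$ are dominated because $e$ contributes only linearly while it also reduces $r_a+r_b$. Convexity of $x\log x$ handles unbalanced splits. The base case $r=1$ is trivial: a single boundary vertex supports no butterfly, so $f(1)=0$. Finally, $r\log_2 r+c\,r<(r+1)\log_2(4(r+1))$ for $c\le2$ completes the proof.

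\emph{Main obstacle.} We expect the main difficulty in Step 2. The sub-lemma bounding output vertices by the boundary must be stated for closed sets that are not full sub-FFTs. Any slack there (for example a factor $2$ on $t$) changes the leading constant and could break the $\log_2(4k)$ form. We would first try to prove it via Menger's theorem, constructing vertex-disjoint paths in the butterfly network from outputs back to boundary vertices. If the constant does not close, we would fall back to citing~\cite{RanjanSaZu11} directly, as the paper does, since only the graph-theoretic statement is used downstream.
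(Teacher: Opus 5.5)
Your proof is correct in substance and takes a genuinely different route from the paper. The paper does not prove this statement at all; it imports it from \cite{RanjanSaZu11}, whose argument is a mass-propagation (boundary-flow) construction. You instead peel off the last butterfly level, bound the crossing butterflies by a rank lemma, and solve the recurrence $f(r)\le f(r_a)+f(r_b)+2(\min(r_a,r_b)+e)$. This gives a short, self-contained proof of the explicit estimate $|U|\le r\log_2 r+2r$ with $r=|\operatorname{in}(U)|$. That estimate implies the theorem, since $(k-1)\log_2(k-1)+2(k-1)<k\log_2 k+2k=k\log_2(4k)$. The boundary-flow technique is heavier but applies beyond the FFT, for example to binomial graphs. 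The paper needs only its graph-theoretic consequence and so treats it as a black box.

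Three points need tightening.

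\emph{The sub-lemma.} The step you flag as the main obstacle closes exactly and needs no closedness. Realize the DAG with invertible butterflies such as $(x,y)\mapsto(x+y,x-y)$. Then the vertex functionals on any one level are linearly independent. By topological induction, every vertex of an arbitrary $U$ lies in the span of the functionals at $\operatorname{in}(U)$. Hence each level contains at most $|\operatorname{in}(U)|$ vertices of $U$, which gives $o_a\le r_a$ and $o_b\le r_b$. Your Step~1 reduction should be dropped. As written it runs the wrong way: shrinking $U$ cannot help an upper bound on $|U|$, and it may enlarge $\operatorname{in}(U)$.

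\emph{The induction variable.} The induction must be on $m$, via level-indexed maxima $f_m$, with base case $m=0$. The recurrence permits $r_a=r$ with $r_b=e=0$, so it is circular as an induction on $r$.

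\emph{The constant.} You need exactly $c=2$, not a ``small'' $c$. Because $t\le\min(o_a,o_b)+e$ is lossy, the case $r_a=r_b=0$, $e=1$ forces $g(1)\ge 2$; the final comparison forces $c\le 2$. With $g(r)=r\log_2 r+2r$, set $\phi(x)=g(x)+g(s-x)+2x$. It is convex on $[0,s/2]$ with $\phi(0)=\phi(s/2)=g(s)$, which handles unbalanced splits. The $e>0$ case then reduces to $g(s)+2e\le g(s+e)$, which is immediate.
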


Theorem~\ref{thm:lower-bound-imported-boundary} is an imported FFT-DAG
result from the boundary-growth literature \cite{RanjanSaZu11}.  We use
only its graph-theoretic consequence and do not reproduce its
mass-propagation proof here.

\subsection{From Input Boundaries to Block Cache Misses}
\label{app:lower-bounds-boundary-to-miss}

Consider one contiguous segment of a legal topological schedule.  Let
\(U\) be the set of previously uncomputed FFT vertices evaluated during
that segment.

\begin{lemma}[Input boundary to block-cache misses]
\label{lem:lower-bound-boundary-to-miss}
If
\[
|\operatorname{in}(U)|\geq k,
\]
then the execution segment incurs at least
\begin{equation}
\left\lceil
\frac{k-M}{B}
\right\rceil_+
\label{eq:lower-bound-boundary-to-miss}
\end{equation}
cache misses, where
\[
\lceil x\rceil_+
:=
\max\{0,\lceil x\rceil\}.
\]
\end{lemma}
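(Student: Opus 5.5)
The plan is a direct counting argument on the segment. Write the required set as $I=\func{in}(U)$, with $|I|\ge k$. The first step is to observe that every $v\in I$ is, by definition, a direct predecessor of some vertex $u\in U$. Because $v\notin U$ and $U$ consists exactly of the vertices first computed during the segment, the value of $v$ was produced before the segment began; if $v$ is an input vertex, it was supplied initially. The no-recomputation assumption then forbids the segment from regenerating $v$. So whenever some $u\in U$ that depends on $v$ is evaluated, the value of $v$ must be in cache, and it can get there in only two ways: it was resident when the segment started, or it was brought in by a block transfer during the segment.

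The second step bounds the first source. At the start of the segment the cache holds at most $M$ items, so at most $M$ elements of $I$ are already resident. Hence at least $k-M$ elements of $I$ must be loaded during the segment. If $k\le M$, the claimed bound $\lceil (k-M)/B\rceil_+$ is zero and there is nothing to prove.

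The third step bounds what each miss can supply. A miss transfers a single block of $B$ items, so it makes at most $B$ previously absent values resident. The segment therefore needs at least $(k-M)/B$ misses, and since the number of misses is an integer, at least $\lceil (k-M)/B\rceil$. This holds regardless of layout choices, replicated copies in slow memory, or knowledge of $M$ and $B$, since the count refers only to which values must be present at use.

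The one delicate point is confirming that an element of $I$ cannot enter the cache without a miss. Under no-recomputation, $v$ is not computed inside the segment, so the only way for it to appear is a load from slow memory, and in the ideal-cache model every such load is a miss. Distinct values of $I$ occupy distinct cache slots, so the $M$-capacity bound applies to them directly.
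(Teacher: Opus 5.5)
Your proposal is correct and matches the paper's argument: at most $M$ boundary values can be cache-resident when the segment begins, and each miss supplies at most $B$ of the missing ones, so at least $\lceil (k-M)/B\rceil$ misses are needed when $k>M$, and the bound is trivially zero otherwise. Your first step uses the topological order and no-recomputation to show that every boundary value was produced before the segment and must be loaded rather than regenerated; this only makes explicit what the paper's short proof leaves implicit.
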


\begin{proof}
At the beginning of the segment, the cache contains at most \(M\)
items.  Thus, even under the most favorable possible cache state, at
most \(M\) distinct input-boundary values can already reside in cache.
At least \(k-M\) required boundary values are therefore absent whenever
\(k>M\).

One cache miss loads at most one cache block containing \(B\) items.
Even if all \(B\) loaded items are distinct missing boundary values,
one miss can supply at most \(B\) such values.  Hence at least
\[
\left\lceil
\frac{k-M}{B}
\right\rceil
\]
misses are required for \(k>M\), while the trivial lower bound is zero
for \(k\leq M\).  This yields \eqref{lower-bound-boundary-to-miss}.
\end{proof}

Lemma~\ref{lem:lower-bound-boundary-to-miss} permits a strong
algorithmic model.  In particular, the algorithm may know \(M\) and
\(B\), choose arbitrary slow-memory layouts, pack \(B\) useful boundary
values into every loaded block, retain the most useful \(M\) values in
cache, and maintain multiple slow-memory copies of previously computed
values.  None of these capabilities eliminates the need to load a
block when a required boundary value is not resident in cache.

\subsection{Finite Block-Aware Lower Bound for One FFT DAG}
\label{app:lower-bounds-finite-single}

\begin{theorem}[Finite block-aware cache lower bound]
\label{thm:lower-bound-finite-single}
For every integer \(k>M\), every no-recomputation topological schedule
of \(G_2(N)\) satisfies
\begin{equation}
Q_{\mathrm{miss}}(N,M,B)
\geq
\left\lfloor
\frac{N\log_2 N}
{k\log_2(4k)}
\right\rfloor
\left\lceil
\frac{k-M}{B}
\right\rceil.
\label{eq:lower-bound-finite-single}
\end{equation}
\end{theorem}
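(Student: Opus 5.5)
The plan is a segment-counting argument that combines the imported boundary-growth Theorem~\ref{thm:29} with Lemma~\ref{e:lem30}. First fix any no-recomputation topological schedule of $G_2(N)$. Because vertices are never recomputed, the schedule evaluates each of the $N\log_2 N$ noninput vertices exactly once. So it can be read as a linear sequence of first evaluations.

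Cut this sequence greedily into consecutive contiguous segments, so that every complete segment contains exactly $\lceil k\log_2(4k)\rceil$ previously uncomputed vertices. Any leftover final segment is discarded. Since consecutive complete segments are disjoint and together use at most all $N\log_2 N$ vertices, there are at least
\[
\Bigl\lfloor \frac{N\log_2 N}{k\log_2(4k)}\Bigr\rfloor
\]
complete segments. If $k\log_2(4k)$ is not an integer, round it up and check that the floor bound still holds. This needs a small care step: either absorb the rounding, or note that counting with the real threshold only requires $|U|\ge k\log_2(4k)$.

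Next, for each complete segment let $U$ be its set of newly computed vertices. All of them are noninput and $|U|\ge k\log_2(4k)$, so Theorem~\ref{thm:29} gives $|\func{in}(U)|\ge k$. Since $k>M$, Lemma~\ref{e:lem30} then gives at least $\lceil (k-M)/B\rceil$ misses charged to that segment.

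The segments are disjoint time intervals of the schedule, so their misses are distinct events and the counts add. Multiplying the number of segments by the per-segment bound yields~\eqref{e:finite}.

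The main obstacle is justifying the per-segment charging in Lemma~\ref{e:lem30}. A boundary value $v\in\func{in}(U)$ must be present in cache when its successor in $U$ fires, and $v$ was computed before the segment (it lies outside $U$ and is not recomputed). So $v$ is either already resident at the segment start, which accounts for at most $M$ such values, or it must be brought in by a miss during the segment. A value computed inside the segment cannot be in $\func{in}(U)$, because no recomputation means it belongs to $U$. This is exactly where the no-recomputation hypothesis enters, and I would spell it out explicitly before summing over segments.
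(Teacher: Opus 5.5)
Your plan is the paper's own argument. You segment the schedule by newly computed vertices, apply Theorem~\ref{thm:29} to each complete segment, convert $|\mathrm{in}(U)|\ge k$ into $\lceil (k-M)/B\rceil$ misses via Lemma~\ref{e:lem30}, and add over time-disjoint segments. Your justification of the per-segment charge is also correct.

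The one step where you go beyond the paper, the rounding fix, does not work as claimed. Since $|U|$ is an integer, the hypothesis $|U|\ge k\log_2(4k)$ means $|U|\ge\lceil k\log_2(4k)\rceil$. Complete segments partition the $N\log_2 N$ noninput vertices, so \emph{no} segmentation yields more than $\lfloor N\log_2 N/\lceil k\log_2(4k)\rceil\rfloor$ qualifying segments. This can be strictly below the floor in the statement. Take $N=32$, $k=10$ and any $M\le 9$: then $N\log_2 N=160$ and $k\log_2(4k)=10\log_2 40\approx 53.22$, so the statement's floor is $3$. But $\lceil 53.22\rceil=54$ and $3\cdot 54=162>160$, so only two segments exist. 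Neither ``absorbing the rounding'' nor ``counting with the real threshold'' produces the third one.

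The paper's proof has the same defect, silently. It asks for segments of \emph{exactly} $k\log_2(4k)$ vertices, which is impossible unless $k$ is a power of two; otherwise $\log_2 k$, and hence $k\log_2(4k)=2k+k\log_2 k$, is irrational. What the argument actually proves is
\[
Q_{\mathrm{miss}}(N,M,B)\ \ge\ \Bigl\lfloor\frac{N\log_2 N}{\lceil k\log_2(4k)\rceil}\Bigr\rfloor\Bigl\lceil\frac{k-M}{B}\Bigr\rceil ,
\]
which coincides with the stated bound exactly when $k$ is a power of two. So either state the theorem this way or restrict $k$ to powers of two. The inequality as literally stated may still be true, but this method does not show it.

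Nothing downstream is affected. Replacing $k\log_2(4k)$ by its ceiling costs only a factor $k\log_2(4k)/\lceil k\log_2(4k)\rceil\ge 1-1/(k\log_2(4k))$. That factor disappears into the $1-o(1)$ of Theorem~\ref{thm:32} and into the constants of Corollary~\ref{e:cor35}. The same remark applies to the threshold $4k\log_2(4k)$ in Theorem~\ref{thm:34}.
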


\begin{proof}
Partition the schedule into consecutive complete segments, each of
which computes exactly
\[
k\log_2(4k)
\]
previously uncomputed FFT vertices.  The final incomplete segment, if
any, is ignored.

The radix-2 FFT DAG contains \(N\log_2 N\) noninput computation
vertices, so the number of complete segments is at least
\[
\left\lfloor
\frac{N\log_2 N}
{k\log_2(4k)}
\right\rfloor.
\]
For every complete segment,
Theorem~\ref{thm:lower-bound-imported-boundary} gives an input boundary
of size at least \(k\).  By
Lemma~\ref{lem:lower-bound-boundary-to-miss}, each such segment incurs
at least
\[
\left\lceil
\frac{k-M}{B}
\right\rceil
\]
cache misses.  The segments are disjoint temporal intervals, so their
cache misses add, proving \eqref{lower-bound-finite-single}.
\end{proof}

\subsection{A Stronger Constant-One Asymptotic Bound}
\label{app:lower-bounds-constant-one}

This subsection is not needed for \algName{}'s main upper-bound proof.  It
records the stronger miss-only lower bound noted in the main text,
which applies to arbitrary no-recomputation schedules of the radix-2
FFT DAG.

Let \(\phi(M)\) be a positive function satisfying
\begin{equation}
\phi(M)\longrightarrow\infty,
\qquad
\log\phi(M)=o(\log M),
\label{eq:lower-bound-phi-conditions}
\end{equation}
and choose
\begin{equation}
k
=
\left\lceil
M\phi(M)
\right\rceil.
\label{eq:lower-bound-constant-one-k}
\end{equation}
Assume additionally that
\begin{equation}
M\phi(M)\log_2\!\bigl(4M\phi(M)\bigr)
=
o(N\log_2 N).
\label{eq:lower-bound-constant-one-growth}
\end{equation}

\begin{theorem}[Asymptotic constant-one cache lower bound]
\label{thm:lower-bound-constant-one}
Under \eqref{lower-bound-phi-conditions} and
\eqref{lower-bound-constant-one-growth}, every no-recomputation
topological schedule of \(G_2(N)\) satisfies
\begin{equation}
Q_{\mathrm{miss}}(N,M,B)
\geq
(1-o(1))
\frac{N}{B}\log_M N.
\label{eq:lower-bound-constant-one}
\end{equation}
\end{theorem}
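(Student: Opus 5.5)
The statement is Theorem~\ref{thm:lower-bound-constant-one}. I will not build a new argument; I will specialize the finite block-aware bound of Theorem~\ref{thm:lower-bound-finite-single} (equivalently Theorem~\ref{thm:31}) with the parameter $k=\lceil M\phi(M)\rceil$. Since $\phi(M)\to\infty$, we have $k>M$ for all sufficiently large $M$, so that theorem applies. It gives
\[
Q_{\mathrm{miss}}\ge \Bigl\lfloor \tfrac{N\log_2N}{k\log_2(4k)}\Bigr\rfloor\Bigl\lceil\tfrac{k-M}{B}\Bigr\rceil .
\]
The plan is to show that each factor equals its unrounded value up to a factor $1-o(1)$, and then to rewrite the product.

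\textbf{Step 1: the floor.} Because $k\le M\phi(M)+1$, the growth assumption \eqref{lower-bound-constant-one-growth} gives $k\log_2(4k)=o(N\log_2N)$. The quotient inside the floor therefore tends to infinity, and $\lfloor x\rfloor\ge(1-o(1))x$.

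\textbf{Step 2: the ceiling.} I drop the ceiling and use $\lceil (k-M)/B\rceil\ge (k-M)/B=(k/B)(1-M/k)$. Since $k\ge M\phi(M)$, we get $M/k\le1/\phi(M)\to0$.

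\textbf{Step 3: combine.} Multiplying the two bounds, the factor $k$ cancels:
\[
Q_{\mathrm{miss}}\ge(1-o(1))\,\frac{N}{B}\cdot\frac{\log_2N}{\log_2(4k)}.
\]
I then write $\log_2 N/\log_2(4k)=\log_MN\cdot\log_2M/\log_2(4k)$.

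\textbf{Step 4: the logarithm ratio.} The main obstacle is showing $\log_2M/\log_2(4k)=1-o(1)$. Expanding,
\[
\log_2(4k)\le 2+\log_2M+\log_2\phi(M)+\log_2\bigl(1+1/(M\phi(M))\bigr).
\]
The hypothesis $\log\phi(M)=o(\log M)$ makes every term other than $\log_2M$ of lower order. This yields the ratio $1-o(1)$ and finishes the proof.

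The only care needed is that all $o(1)$ terms are asymptotic in $M$, and in $N$ through the growth condition, so the three $(1-o(1))$ factors multiply to $1-o(1)$. I would note this explicitly and remark that only a lower bound, not attainment by any schedule, is claimed.
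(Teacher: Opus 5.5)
Your proposal is correct and follows essentially the same route as the paper's proof. Both substitute $k=\lceil M\phi(M)\rceil$ into the finite block-aware bound, use the growth condition to make the floor a $(1-o(1))$ loss, use $1-M/k\ge 1-1/\phi(M)$, and obtain $\log_2 M/\log_2(4k)=1-o(1)$ from $\log\phi(M)=o(\log M)$. Your version is somewhat more explicit than the paper's: you check that $k>M$ eventually, you drop the ceiling via $\lceil y\rceil\ge y$, and you expand $\log_2(4k)$ term by term.
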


\begin{proof}
Substitute \eqref{lower-bound-constant-one-k} into
Theorem~\ref{thm:lower-bound-finite-single}.  Under
\eqref{lower-bound-constant-one-growth}, the floor, ceiling, and
final incomplete segment contribute only lower-order terms.  Therefore,
\begin{align*}
Q_{\mathrm{miss}}
&\geq
(1-o(1))
\frac{N\log_2 N}
{k\log_2(4k)}
\cdot
\frac{k-M}{B}
\\
&=
(1-o(1))
\frac{N}{B}
\cdot
\frac{\log_2 N}{\log_2 M}
\cdot
\left(1-\frac{M}{k}\right)
\cdot
\frac{\log_2 M}{\log_2(4k)}.
\end{align*}
The choice \(k=\lceil M\phi(M)\rceil\) gives
\[
1-\frac{M}{k}
=
1-\frac{1}{\phi(M)}+o(1)
=
1-o(1).
\]
Moreover, \(\log\phi(M)=o(\log M)\) implies
\[
\frac{\log_2 M}{\log_2(4k)}
=
1-o(1).
\]
Finally,
\[
\frac{\log_2 N}{\log_2 M}
=
\log_M N.
\]
Substituting these estimates proves
\eqref{lower-bound-constant-one}.
\end{proof}

Theorem~\ref{thm:lower-bound-constant-one} shows that the normalized
miss-only cache lower-bound constant is at least \(1-o(1)\).  It does
not prove that \algName{}, or any currently known schedule, attains an upper
bound of the form
\[
(1+o(1))
\frac{N}{B}\log_M N.
\]

\subsection{Disjoint Unions of FFT DAGs}
\label{app:lower-bounds-disjoint-unions}

The local \algName{} phases assign each processor a batch of data-disjoint
FFTs.  Let
\begin{equation}
G_2^{(s)}(n)
=
\bigsqcup_{i=1}^{s} G_2(n)^{(i)}
\label{eq:lower-bound-disjoint-union}
\end{equation}
denote the disjoint union of \(s\) radix-2 FFT DAGs of length \(n\).

\begin{lemma}[Input-boundary growth for disjoint FFT batches]
\label{lem:lower-bound-disjoint-boundary}
For every integer \(k\geq 1\), every set \(U\) of noninput vertices in
\(G_2^{(s)}(n)\) satisfying
\begin{equation}
|U|
\geq
4k\log_2(4k)
\label{eq:lower-bound-disjoint-premise}
\end{equation}
also satisfies
\begin{equation}
|\operatorname{in}(U)|
\geq
k.
\label{eq:lower-bound-disjoint-conclusion}
\end{equation}
\end{lemma}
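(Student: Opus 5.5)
The plan is to argue by contraposition, decomposing $U$ across the $s$ disjoint copies and applying Theorem~\ref{thm:29} to each copy separately. Write $U_i=U\cap V(G_2(n)^{(i)})$ and $r_i=|\func{in}(U_i)|$.

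The first step is to record that there are no edges between distinct copies. Hence the input boundary of $U$ splits as a disjoint union of the per-copy boundaries, and $|\func{in}(U)|=\sum_i r_i$.

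Now suppose $|\func{in}(U)|<k$. For each nonempty $U_i$, the set $U_i$ consists of noninput vertices, so it has at least one predecessor outside $U_i$. A minimal-level vertex of $U_i$ has both parents outside $U_i$, which gives $r_i\ge1$.

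Next, apply the contrapositive of Theorem~\ref{thm:29} with parameter $r_i+1$. Since $|\func{in}(U_i)|=r_i<r_i+1$, we must have
\[
|U_i|<(r_i+1)\log_2\bigl(4(r_i+1)\bigr).
\]
Using $r_i\ge1$, we get $r_i+1\le 2r_i$ and $4(r_i+1)\le 8r_i\le 8k$. This yields $|U_i|<2r_i\log_2(8k)$.

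Summing over the nonempty $U_i$ gives
\[
|U|<2\Bigl(\sum_i r_i\Bigr)\log_2(8k)<2k\log_2(8k).
\]
The last step is the numeric inequality $2\log_2(8k)\le 4\log_2(4k)$, which is equivalent to $8k\le 16k^2$ and so holds for all $k\ge1$. This gives $|U|<4k\log_2(4k)$, contradicting the hypothesis.

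The main obstacle is the uniformity issue. Theorem~\ref{thm:29} is a threshold statement, and it cannot be applied to each $U_i$ with a common $k$, since some copies may have tiny boundaries. Choosing the per-copy parameter $r_i+1$ and absorbing the resulting constant-factor losses through $r_i\ge1$ resolves this. The only other point needing care is the justification that $r_i\ge1$ for each nonempty $U_i$, which is what makes the bound $r_i+1\le 2r_i$ available.
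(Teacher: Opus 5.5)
Your proposal is correct and follows essentially the same route as the paper's proof: the same contrapositive, the per-copy decomposition $|\mathrm{in}(U)|=\sum_i r_i$, the application of the imported boundary theorem with parameter $r_i+1$, the same constant-factor absorption using $r_i\ge 1$, and the same final inequality $2\log_2(8k)\le 4\log_2(4k)$. You also justify $r_i\ge1$ explicitly via a minimal-level vertex of $U_i$, which the paper simply asserts. The only unaddressed case is $U=\emptyset$, which is trivial.
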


\begin{proof}
We prove the contrapositive.  Let
\[
U_i
=
U\cap V\bigl(G_2(n)^{(i)}\bigr),
\qquad
r_i
=
|\operatorname{in}(U_i)|.
\]
Since the component DAGs are disjoint,
\[
|\operatorname{in}(U)|
=
\sum_i r_i.
\]
Suppose
\[
|\operatorname{in}(U)|<k.
\]
For every nonempty \(U_i\), we have \(r_i\geq 1\).  Applying the
contrapositive of Theorem~\ref{thm:lower-bound-imported-boundary} with
parameter \(r_i+1\) gives
\[
|U_i|
<
(r_i+1)\log_2\!\bigl(4(r_i+1)\bigr).
\]
Because \(r_i\geq 1\),
\[
r_i+1\leq 2r_i,
\qquad
4(r_i+1)\leq 8r_i.
\]
Thus,
\[
|U_i|
<
2r_i\log_2(8r_i)
\leq
2r_i\log_2(8k).
\]
Summing over all nonempty components yields
\[
|U|
<
2\left(\sum_i r_i\right)\log_2(8k)
<
2k\log_2(8k).
\]
For every \(k\geq 1\),
\[
2\log_2(8k)
\leq
4\log_2(4k).
\]
Hence,
\[
|U|
<
4k\log_2(4k),
\]
which proves the contrapositive of
\eqref{lower-bound-disjoint-premise}--\eqref{lower-bound-disjoint-conclusion}.
\end{proof}

The constant \(4\) in Lemma~\ref{lem:lower-bound-disjoint-boundary} is
chosen for a simple uniform disjoint-union statement and is not claimed
to be tight.

\subsection{Cache Lower Bound for a Batch of Local FFTs}
\label{app:lower-bounds-batches}

\begin{theorem}[Finite lower bound for a batch of FFTs]
\label{thm:lower-bound-finite-batch}
Let one processor evaluate \(s\) data-disjoint length-\(n\) radix-2 FFT
DAGs without arithmetic recomputation.  For every integer \(k>M\),
\begin{equation}
Q_{\mathrm{batch}}(n,s)
\geq
\left\lfloor
\frac{sn\log_2 n}
{4k\log_2(4k)}
\right\rfloor
\left\lceil
\frac{k-M}{B}
\right\rceil.
\label{eq:lower-bound-finite-batch}
\end{equation}
\end{theorem}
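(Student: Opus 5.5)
The proof mirrors Theorem~\ref{thm:31}, with the single-DAG boundary theorem replaced by its batched analogue, Lemma~\ref{e:lem33}. First, the disjoint union $G_2^{(s)}(n)$ of the $s$ length-$n$ radix-2 FFT DAGs has exactly $sn\log_2 n$ noninput computation vertices. Under the no-recomputation assumption, each of them is evaluated exactly once in any legal schedule. By Remark~\ref{rmk:fixed-radix-scope}, the same vertex count and schedule apply to the expanded fixed-radix DAG used by the local phases.

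Next, we cut the processor's schedule into consecutive, temporally disjoint segments. Each complete segment evaluates exactly $4k\log_2(4k)$ previously uncomputed vertices; any final incomplete segment is discarded. This yields at least
\[
\left\lfloor\frac{sn\log_2 n}{4k\log_2(4k)}\right\rfloor
\]
complete segments. One integrality point must be checked here. The quantity $4k\log_2(4k)$ need not be an integer, so we take segments of size $\lceil 4k\log_2(4k)\rceil$. The premise $|U|\ge 4k\log_2(4k)$ of Lemma~\ref{e:lem33} still holds, and the segment count stays at least the stated floor whenever the floor is positive. When the floor is zero, the bound is trivial.

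For the set $U$ of vertices evaluated in one complete segment, Lemma~\ref{e:lem33} gives $|\func{in}(U)|\ge k$. Since $k>M$, Lemma~\ref{e:lem30} then says the segment incurs at least $\lceil (k-M)/B\rceil$ misses. Lemma~\ref{e:lem30} applies to any contiguous segment of a legal schedule and allows an arbitrary cache state at the segment's start, so it holds uniformly across all segments. The batch structure and the interleaving of different transforms in the schedule do not matter, because $U$ is just a vertex set of the disjoint union. Misses in disjoint time intervals add, so multiplying the segment count by the per-segment miss bound gives the claimed inequality.

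The only delicate step is the rounding issue above; the rest is bookkeeping. It is handled by the ceiling choice of segment size, because Lemma~\ref{e:lem33} needs only a lower bound on $|U|$.
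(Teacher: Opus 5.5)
Your skeleton matches the paper's proof: cut the no-recomputation schedule of the disjoint union into consecutive blocks of about $S:=4k\log_2(4k)$ fresh vertices, get $|\mathrm{in}(U)|\ge k$ per block from Lemma~\ref{e:lem33}, convert that to $\lceil (k-M)/B\rceil$ misses via Lemma~\ref{e:lem30}, and add over disjoint time intervals. The paper's proof never addresses the fact that $S$ is irrational unless $k$ is a power of $2$. You do raise it, but your resolution is false.

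With blocks of size $\lceil S\rceil\ge S$, the number of complete blocks is $\lfloor V/\lceil S\rceil\rfloor$, where $V=sn\log_2 n$. That is \emph{at most} $\lfloor V/S\rfloor$, not at least, and it can be strictly smaller. For $k=3$ one has $S\approx 43.02$ and $\lceil S\rceil=44$. Then $V=432$ (e.g.\ $n=4$, $s=54$) yields only $9$ blocks, whereas the theorem asserts $\lfloor 432/43.02\rfloor=10$. The deficit grows linearly in $V$, so this is not a harmless off-by-one. As written, you only prove the weaker bound with $\lceil 4k\log_2(4k)\rceil$ in the denominator.

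The fix is to round the block size \emph{down}. The statement of Lemma~\ref{e:lem33} alone does not permit this, but its proof does. That proof shows that $|\mathrm{in}(U)|<k$ forces $|U|<2k\log_2(8k)$. Moreover, $4k\log_2(4k)-2k\log_2(8k)=2k+2k\log_2 k\ge 2$, so $2k\log_2(8k)\le S-2<\lfloor S\rfloor$. Hence blocks of exactly $\lfloor S\rfloor$ fresh vertices already satisfy $|\mathrm{in}(U)|\ge k$. There are $\lfloor V/\lfloor S\rfloor\rfloor\ge\lfloor V/S\rfloor$ such blocks, and the rest of your argument then goes through unchanged.
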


\begin{proof}
The disjoint union contains \(sn\log_2 n\) noninput computation
vertices.  Partition the execution into complete segments, each
computing
\[
4k\log_2(4k)
\]
previously uncomputed vertices.
Lemma~\ref{lem:lower-bound-disjoint-boundary} gives an input boundary
of size at least \(k\) for every complete segment.
Lemma~\ref{lem:lower-bound-boundary-to-miss} then gives at least
\[
\left\lceil
\frac{k-M}{B}
\right\rceil
\]
cache misses per complete segment.  Summing the segment lower bounds
proves \eqref{lower-bound-finite-batch}.
\end{proof}

\begin{corollary}[Asymptotic lower bound for batched local FFTs]
\label{cor:lower-bound-batch-asymptotic}
There exist universal constants \(c,c'>0\) such that
\begin{equation}
Q_{\mathrm{batch}}(n,s)
\geq
c\frac{ns}{B}
\bigl(1+\log_M n\bigr)
-
c'\frac{M}{B}.
\label{eq:lower-bound-batch-asymptotic}
\end{equation}
If the cache is initially empty, or if
\begin{equation}
ns\geq \kappa M
\label{eq:lower-bound-batch-large-instance}
\end{equation}
for a sufficiently large universal constant \(\kappa\), then
\begin{equation}
Q_{\mathrm{batch}}(n,s)
=
\Omega\!\left(
\frac{ns}{B}
\bigl(1+\log_M n\bigr)
\right).
\label{eq:lower-bound-batch-clean}
\end{equation}
\end{corollary}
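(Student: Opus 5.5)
The plan is to instantiate Theorem~\ref{thm:34} with $k=2M$ and convert its floor/ceiling product into the linear-minus-boundary form. With $k=2M$ we have $k>M$, $k-M=M$ and $\log_2(4k)=3+\log_2 M$, so
\[
Q_{\mathrm{batch}}(n,s)\ \ge\ \Bigl\lfloor A\Bigr\rfloor\Bigl\lceil \tfrac{M}{B}\Bigr\rceil,\qquad A:=\frac{sn\log_2 n}{8M(3+\log_2 M)}.
\]
Next I would use $\lfloor A\rfloor\ge A-1$ and $\lceil M/B\rceil\ge M/B$. This gives $Q_{\mathrm{batch}}\ge A\,M/B-M/B$; the bound is trivially true when $A<1$, since the left side is nonnegative. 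For $M\ge 2$ we have $3+\log_2 M\le 4\log_2 M$, so $A\,M/B\ge \tfrac1{32}\tfrac{ns}{B}\log_M n$. This yields the logarithmic part of \eqref{batchasym}, with additive loss $M/B$.

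The ``$1+$'' part is where the segment argument says nothing: when $\log_M n$ is small the bound above is vacuous. Here I would add the elementary compulsory bound. The $ns$ input vertices are distinct items, and at most $M$ can be resident initially, so $Q_{\mathrm{batch}}\ge (ns-M)_+/B\ge ns/B-M/B$. Averaging the two bounds via $\max\{x,y\}\ge (x+y)/2$ gives $Q_{\mathrm{batch}}\ge c\,\tfrac{ns}{B}(1+\log_M n)-c'\tfrac{M}{B}$, with for example $c=1/64$ and $c'=1$. This proves \eqref{batchasym}.

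For the clean statement I would treat the two hypotheses separately. If the cache starts empty, the compulsory term is $ns/B$ with no $M$ subtracted, so no boundary loss arises there. For the log term, however, Lemma~\ref{e:lem30} still subtracts $M$ per segment, and this is already absorbed in the per-segment count, so the global $-M/B$ comes only from $\lfloor A\rfloor\ge A-1$. I expect this to be the main obstacle. If $A\ge 2$, then $A-1\ge A/2$ and nothing is lost. If $A<2$, then $\tfrac{ns}{B}\log_M n=O(ns/B)$, which the empty-cache compulsory term dominates. If instead $ns\ge\kappa M$, the term $c'M/B\le (c'/\kappa)\,ns/B$ is at most half of $c\,ns/B$ once $\kappa\ge 2c'/c$, and is therefore absorbed. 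In both cases $Q_{\mathrm{batch}}=\Omega\bigl(\tfrac{ns}{B}(1+\log_M n)\bigr)$.
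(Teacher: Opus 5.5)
Your proposal is correct and follows the paper's proof essentially step for step: you set $k=2M$ in Theorem~\ref{thm:34}, use $\lfloor A\rfloor\ge A-1$ and $3+\log_2 M\le 4\log_2 M$ to get $\tfrac{1}{32}\tfrac{ns}{B}\log_M n-\tfrac{M}{B}$, average with the compulsory bound $(ns-M)_+/B$, and absorb $M/B$ when $ns\ge\kappa M$. Your case split on $A$ for the empty-cache case is a more careful version of the paper's one-line remark, but the implication from $A<2$ to $\log_M n=O(1)$ needs its short justification: $A<2$ gives $\log_M n<64M/(ns)$, which is at most $64$ when $ns\ge M$, while if $ns<M$ then $n\le ns<M$ gives $\log_M n<1$.
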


\begin{proof}
Choose \(k=2M\) in Theorem~\ref{thm:lower-bound-finite-batch}.  This
gives constants \(c_1,c_2>0\) such that
\begin{equation}
Q_{\mathrm{batch}}(n,s)
\geq
c_1\frac{ns}{B}\log_M n
-
c_2\frac{M}{B}.
\label{eq:lower-bound-batch-logarithmic}
\end{equation}

Independently, the \(sn\) inputs of the batch are distinct.  Therefore,
if at most \(M\) useful input values are initially resident in cache,
the execution must incur at least
\begin{equation}
Q_{\mathrm{batch}}(n,s)
\geq
\frac{(ns-M)_+}{B}
\label{eq:lower-bound-batch-compulsory}
\end{equation}
cache misses to access the remaining input values.  Combining
\eqref{lower-bound-batch-logarithmic} and
\eqref{lower-bound-batch-compulsory}, and using
\[
\max\{x,y\}\geq\frac{x+y}{2},
\]
gives \eqref{lower-bound-batch-asymptotic} after absorbing constants.

If the cache is initially empty, the compulsory input scan contributes
\(\Omega(ns/B)\) directly.  If instead
\eqref{lower-bound-batch-large-instance} holds for sufficiently
large \(\kappa\), then the linear term \(ns/B\) absorbs the negative
\(M/B\) term in \eqref{lower-bound-batch-asymptotic}.  In either
case, \eqref{lower-bound-batch-clean} follows.
\end{proof}

\subsection{Application to \algName{} Local FFT Phases}
\label{app:lower-bounds-\algName{}-local}

We now complete the deferred proof of the local cache lower bound
stated as Proposition~6 in the main text.

In Phase~I, every processor evaluates
\[
s_1
=
\frac{N_2}{p}
\]
data-disjoint transforms of length
\[
n_1=N_1.
\]
In Phase~III, every processor evaluates
\[
s_2
=
\frac{N_1}{p}
\]
data-disjoint transforms of length
\[
n_2=N_2.
\]
Both local batches have size
\begin{equation}
n_1s_1
=
n_2s_2
=
\frac{N}{p}.
\label{eq:lower-bound-\algName{}-local-batch-size}
\end{equation}

\begin{theorem}[\algName{} local cache lower bound]
\label{thm:lower-bound-\algName{}-local-cache}
Under the exact slab decomposition and no-arithmetic-recomputation
assumptions, \algName{}'s two local FFT phases satisfy
\begin{equation}
Q_{\mathrm{crit}}^{\mathrm{local}}
=
\Omega\!\left(
\frac{N}{pB}
\bigl(1+\log_M N\bigr)
\right).
\label{eq:lower-bound-\algName{}-local-cache}
\end{equation}
Consequently, the complete \algName{} execution satisfies the same cache
lower bound.
\end{theorem}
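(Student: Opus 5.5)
The plan is to apply the batched lower bound of Corollary~\ref{e:cor35} to each local phase separately, then split into two regimes according to the per-processor batch size $N/p$ relative to $M$. By the slab decomposition, every processor evaluates $s_1=N_2/p$ data-disjoint length-$N_1$ FFT DAGs in Phase~I and $s_2=N_1/p$ length-$N_2$ DAGs in Phase~III. Each batch has mass $N/p$. By Remark~\ref{rmk:fixed-radix-scope}, expanding each radix-$b$ butterfly into radix-2 layers turns each of these batches into a disjoint union $G_2^{(s)}(n)$. Theorem~\ref{thm:34} and Corollary~\ref{e:cor35} therefore apply to every no-recomputation schedule of each phase, whatever the scheduler and layout. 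Since the phases execute sequentially on the same processor, their miss counts add.

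Fix the universal constant $\kappa$ of Corollary~\ref{e:cor35}. In the large-batch case $N/p\ge\kappa M$, the corollary gives $\Omega\bigl(\tfrac{N}{pB}(1+\log_M N_1)\bigr)$ misses for Phase~I and $\Omega\bigl(\tfrac{N}{pB}(1+\log_M N_2)\bigr)$ for Phase~III. Here the negative $M/B$ term is absorbed by the linear term. Summing the two bounds and using $\log_M N_1+\log_M N_2=\log_M N$ yields $\Omega\bigl(\tfrac{N}{pB}(1+\log_M N)\bigr)$.

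In the small-batch case $N/p<\kappa M$, the $M/B$ correction could swamp the logarithmic term, so I would avoid the corollary's subtracted form. Instead I use the divisibility assumption \eqref{exact-divisibility}. From $p\mid N_1$ and $p\mid N_2$ we get $N_1,N_2\le N/p$, hence $N\le (N/p)^2<\kappa^2M^2$. This gives $\log_M N<2+2\log_M\kappa=O(1)$, so it suffices to prove $\Omega(N/(pB))$. That follows from compulsory misses: the cache is initially empty, and Phase~I must read all $N/p$ distinct local inputs, which span at least $N/(pB)$ blocks.

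Finally, the exact slab decomposition gives every processor identical batch parameters, so the bound holds for each processor and hence for $Q_{\max}$. The fused middle stage only adds nonnegative misses, so the complete execution obeys the same bound.

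The step I expect to need the most care is the small-cache regime. There the subtracted $c'M/B$ term in Corollary~\ref{e:cor35} makes the batched bound vacuous, and the argument must rely on the empty-cache assumption together with the observation that divisibility forces $N=O(M^2)$. It is worth checking that ``initially empty'' applies at the start of Phase~I. It would not hold at an arbitrary segment, but that does not matter, because only Phase~I's compulsory reads are needed in this regime.
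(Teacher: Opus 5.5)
Your proposal is correct and follows the paper's proof essentially step for step. It makes the same split at $N/p \ge \kappa M$ versus $N/p<\kappa M$, with $\kappa$ taken from the batched corollary. In the large regime it sums the two phase bounds via $\log_M N_1+\log_M N_2=\log_M N$. In the small regime it uses the divisibility argument $N\le (N/p)^2<\kappa^2M^2$ together with Phase~I's compulsory misses from an initially empty cache. Your closing observations also match the paper's final paragraph: every processor has identical batch sizes, and the middle stage can only add misses.
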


\begin{proof}
Fix a sufficiently large universal constant \(\kappa\).

First suppose that
\begin{equation}
\frac{N}{p}
\geq
\kappa M.
\label{eq:lower-bound-\algName{}-large-local-batch}
\end{equation}
Applying Corollary~\ref{cor:lower-bound-batch-asymptotic} separately
to the two local phases gives
\begin{align}
Q_{q,\mathrm{I}}
&=
\Omega\!\left(
\frac{N}{pB}
\bigl(1+\log_M N_1\bigr)
\right),
\label{eq:lower-bound-\algName{}-phase-I}
\\
Q_{q,\mathrm{III}}
&=
\Omega\!\left(
\frac{N}{pB}
\bigl(1+\log_M N_2\bigr)
\right).
\label{eq:lower-bound-\algName{}-phase-III}
\end{align}
The phases execute sequentially.  Since
\[
\log_M N_1+\log_M N_2
=
\log_M N,
\]
their cache lower bounds sum to
\[
\Omega\!\left(
\frac{N}{pB}
\bigl(1+\log_M N\bigr)
\right).
\]

Now suppose that
\begin{equation}
\frac{N}{p}
<
\kappa M.
\label{eq:lower-bound-\algName{}-small-local-batch}
\end{equation}
The exact divisibility assumptions imply
\[
p\mid N_1,
\qquad
p\mid N_2,
\]
and hence
\[
p\leq N_1,
\qquad
p\leq N_2.
\]
Therefore,
\[
N_1
\leq
\frac{N}{p},
\qquad
N_2
\leq
\frac{N}{p},
\]
so
\[
N
=
N_1N_2
\leq
\left(\frac{N}{p}\right)^2
<
\kappa^2M^2.
\]
It follows that
\[
1+\log_M N
=
O(1).
\]

Phase~I begins with an empty cache and accesses \(\Theta(N/p)\)
distinct local input items.  It consequently incurs
\[
Q_{q,\mathrm{I}}
=
\Omega\!\left(
\frac{N}{pB}
\right)
=
\Omega\!\left(
\frac{N}{pB}
\bigl(1+\log_M N\bigr)
\right).
\]
This proves \eqref{lower-bound-\algName{}-local-cache} in both cases.

Finally, the exact slab decomposition assigns the same local FFT batch
sizes to every processor.  Hence, the lower bound applies to the
critical path.  The fused middle stage can only add cache misses, so
the complete \algName{} execution satisfies the same lower bound.
\end{proof}

\subsection{Fixed-Radix Extension}
\label{app:lower-bounds-fixed-radix}

\begin{corollary}[Extension to fixed radix]
\label{cor:lower-bound-fixed-radix}
Let
\[
b=2^t,
\qquad
t=O(1).
\]
If every radix-\(b\) butterfly is expanded into its \(t\) constituent
radix-2 butterfly layers, then the resulting dependency graph is the
standard radix-2 FFT DAG.  Therefore, the single-transform lower bounds
of Sections~\ref{app:lower-bounds-finite-single} and
\ref{app:lower-bounds-constant-one}, the batched lower bounds of
Section~\ref{app:lower-bounds-batches}, and the \algName{} local cache lower
bound of Theorem~\ref{thm:lower-bound-\algName{}-local-cache} all apply to
the expanded fixed-radix Cooley--Tukey DAG.
\end{corollary}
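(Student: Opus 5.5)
The plan is to reduce every fixed-radix claim to the radix-$2$ statements already proved, by showing that the expanded radix-$b$ Cooley--Tukey DAG \emph{is} $G_2(n)$ up to vertex relabeling. Once that graph identity is in place, the lower bounds of Theorem~\ref{thm:31}, Theorem~\ref{thm:32}, Theorem~\ref{thm:34}, Corollary~\ref{e:cor35} and Theorem~\ref{thm:36} transfer verbatim, since each uses only the DAG structure together with the no-recomputation and legal-schedule hypotheses.

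\textbf{Step 1: the graph identity.} Take one length-$n=b^\ell=2^{t\ell}$ transform. Every radix-$b$ DFT butterfly on $b=2^t$ inputs, with its incoming twiddles, is itself a length-$2^t$ Cooley--Tukey FFT. I would expand it into $t$ layers of radix-$2$ butterflies. The incoming twiddle multiplications are unary scalings that I would fold into the radix-$2$ vertex consuming them, exactly as in the standard radix-$2$ DAG convention. Then I would induct on $\ell$, using the two-factor split with $n_1=b^{\ell_1}$ and $n_2=b^{\ell_2}$ as in Theorem~\ref{eq:thm9}:
\begin{itemize}
  \item the first group of length-$n_1$ transforms, after expansion, is a batch of radix-$2$ FFT DAGs on $t\ell_1$ bit-levels;
  \item the second group contributes the remaining $t\ell_2$ bit-levels;
  \item the cross edges connect vertices whose indices differ in a single bit at each level.
\end{itemize}
This gives $t\ell=\log_2 n$ levels of $n$ vertices with butterfly connectivity, hence $n\log_2 n$ noninput vertices. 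This is precisely $G_2(n)$, possibly with a permuted ordering of the levels' index bits. That permutation is a graph isomorphism, because each level's butterfly pairs differ in exactly one index bit and the bits are consumed in some order.

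\textbf{Step 2: transfer the bounds.} The boundary-growth Theorem~\ref{thm:29} is invariant under DAG isomorphism, so it applies unchanged. Lemma~\ref{e:lem30} and the segment arguments in Theorems~\ref{thm:31} and~\ref{thm:34} count only previously uncomputed vertices and boundary values, so they also carry over. A batch of expanded fixed-radix transforms is the disjoint union $G_2^{(s)}(n)$, so Lemma~\ref{e:lem33} and Corollary~\ref{e:cor35} apply directly. Specializing to $(n,s)=(N_1,N_2/p)$ and $(N_2,N_1/p)$ reproduces Theorem~\ref{thm:36}. No-recomputation is preserved because each radix-$b$ vertex evaluated once corresponds to a set of radix-$2$ vertices each evaluated once.

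\textbf{Main obstacle.} The delicate point is Step 1's level-ordering claim: inside a radix-$b$ butterfly, the internal bit order of the expansion must mesh with the surrounding digit-reversed views so that the result is isomorphic to the canonical $G_2$ rather than to some other graph. I would handle this by proving the isomorphism abstractly. Any composition of levels, where level $r$ pairs indices differing only in bit $\sigma(r)$ for a permutation $\sigma$ of the bits, is isomorphic to $G_2$ via the relabeling induced by $\sigma$. I would also stress, as in Remark~\ref{rmk:fixed-radix-scope}, that this claim fails if a radix-$b$ DFT is treated as one atomic vertex; that model is excluded by hypothesis.
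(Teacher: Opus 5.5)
Your proposal is correct and takes essentially the paper's route. The paper gives no separate argument: its fixed-radix scope remark simply asserts that expanding each radix-$b$ butterfly into $t$ radix-2 layers yields the standard radix-2 FFT DAG, and then reuses the radix-2 single-transform, batched, and PACO-local bounds unchanged, since these depend only on the DAG, on no recomputation, and on the slab batch sizes.

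Your bit-permutation isomorphism supplies the detail the paper leaves implicit: level $r$ flips one index bit $\sigma(r)$, each bit is flipped exactly once, and twiddles are folded into edge weights. The one point needing care is to run the induction in physical in-place coordinates, not merely up to isomorphism, which your obstacle paragraph already anticipates.
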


If a complete radix-\(b\) DFT is instead treated as one indivisible
atomic vertex, rather than as a bounded collection of radix-2
arithmetic layers, then the graph structure and associated leading
constants differ.  Such an atomic model is outside the scope of the
lower-bound claims in this appendix.
}

\punt{
\section{Lower Bounds}
\label{app:lower-bounds}

This appendix establishes the lower bounds used by \algName{} and makes
their scopes explicit.  The claims concern distinct resources and
different classes of algorithms.

First, under the owner-computes and no-replication assumptions, every
correct parallel DFT requires at least one global communication round.
Second, \algName{} is migration-optimal for the prescribed fused
digit-reversal permutation between the prescribed source and
factor-swapped target slab distributions.  This does not assert
communication-volume optimality among all distributed FFT algorithms
with different intermediate ownership schedules.  Third, the work and
cache lower bounds concern the fixed-radix Cooley--Tukey computation
DAG under the no-arithmetic-recomputation assumption.

Throughout this appendix, input values are initially unreplicated, no
processor initially owns the complete input, and a processor may obtain
a remotely owned value only through an explicit BSP communication
superstep.  Unless stated otherwise, arithmetic vertices of the
fixed-radix Cooley--Tukey DAG are not recomputed.

\subsection{One Global Redistribution Is Necessary}
\label{app:lower-bounds-one-round}

\begin{proposition}[At least one redistribution is necessary]
\label{prop:lower-bound-one-redistribution}
Every correct parallel algorithm for the complete length-\(N\) DFT
satisfying the owner-computes and no-replication assumptions requires
at least one global redistribution round:
\begin{equation}
R\geq 1.
\label{eq:lower-bound-one-round}
\end{equation}
\end{proposition}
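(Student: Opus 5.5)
The plan is an indistinguishability (information-flow) argument, the same one sketched for Proposition~\ref{prop:one-redistribution-lower-bound}. Suppose, for contradiction, that some correct algorithm satisfies the owner-computes and no-replication assumptions of \secref{model} yet performs $R=0$ global redistribution rounds. I would first formalize what a processor's state can depend on. I would prove by induction over the local computation steps that, with no BSP communication superstep, the state of processor $q$ at every moment is a function only of the input items with $\operatorname{owner}_0(j_1,j_2)=q$. The owner-computes rule guarantees that remote values enter only through explicit communication, and no-replication guarantees that the initial state contains nothing else. In particular, every output value that $q$ returns is such a function.

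Second, I would use the structure of the DFT. Every entry $(\Omega_N)_{k,j}=\omega_N^{-kj}$ has modulus $1$ and so is nonzero. Hence each coefficient $y[k]$ depends nontrivially on every input coordinate $x[j]$: changing $x[j]$ by $\delta\neq0$ changes $y[k]$ by $\omega_N^{-kj}\delta\neq0$.

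Third, I would exhibit a coefficient and an input whose owners differ. Since $p\ge2$ and the slab distribution \eqref{input-slab-owner} is balanced, no processor owns the whole input. So take any output coefficient returned by some processor $q$; say $Y[k_2,k_1]$ with $\operatorname{owner}_1(k_2,k_1)=q$, though any returned coefficient works. Pick a column $j_2$ with $\lfloor j_2p/N_2\rfloor\neq q$; it exists because $p\ge2$. Perturb the single item $x[j_1N_2+j_2]$. By the first step, $q$'s returned value is unchanged. By the second step, the true value $y[k_1+k_2N_1]$ changes. So the algorithm is wrong on one of the two inputs, which is the contradiction, and hence $R\ge1$.

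The main obstacle is the first step: pinning down the model so that ``no redistribution round'' really means no information flow. That includes ruling out data-dependent synchronization or other side channels. I would handle this by reading the BSP model literally. Without an $h$-relation superstep no remote item is ever received, and the BSP synchronization itself carries no data, so the induction is routine once the model is fixed this way. The remaining steps are immediate.
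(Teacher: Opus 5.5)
Your proposal is correct and uses essentially the same indistinguishability argument as the paper. With $R=0$, each returned coefficient is independent of some remotely owned input item. Since every entry of $\Omega_N$ is nonzero, perturbing that single item changes the true coefficient, which contradicts correctness. You also make explicit two things the paper's short proof leaves implicit: the induction showing local state depends only on locally owned data, and the concrete choice of a returned coefficient together with a column $j_2$ owned by another processor.
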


\begin{proof}
Every DFT coefficient depends on every input item, because every entry
of the DFT matrix \(\Omega_N\) is nonzero.  Since no processor initially
owns the complete input, every processor lacks at least one input item
initially owned by another processor.

Suppose, for contradiction, that \(R=0\).  Then no processor can obtain
a remotely owned input item.  Consequently, the local state of a
processor, and every output coefficient returned by that processor,
is independent of at least one remotely owned input value.  Changing
only that missing input value changes every correct DFT coefficient,
which contradicts correctness.  Therefore \(R\geq 1\).
\end{proof}

Proposition~\ref{prop:lower-bound-one-redistribution} is a lower bound
on the number of communication rounds required for the complete DFT
under the stated ownership model.  It is not a lower bound specific to
the fused digit-reversal permutation.

\subsection{Migration Lower Bound for the Prescribed Fused Permutation}
\label{app:lower-bounds-migration}

We next consider only the permutation materialized by \algName{}'s middle
stage:
\begin{equation}
\pi_m(a\pplus c)
=
\pi_{m_2}(c)\pplus \pi_{m_1}(a),
\label{eq:lower-bound-fused-permutation}
\end{equation}
where \(a\in[N_1]\) and \(c\in[N_2]\).  The source ownership is
\begin{equation}
\operatorname{owner}_0(a,c)
=
\left\lfloor
\frac{c}{N_2/p}
\right\rfloor,
\label{eq:lower-bound-source-owner}
\end{equation}
and the target ownership is
\begin{equation}
\operatorname{owner}_1(u,v)
=
\left\lfloor
\frac{v}{N_1/p}
\right\rfloor.
\label{eq:lower-bound-target-owner}
\end{equation}

\begin{proposition}[Exact migration lower bound for the fused permutation]
\label{prop:lower-bound-migration}
Any algorithm that materializes
\eqref{lower-bound-fused-permutation} between the ownerships
\eqref{lower-bound-source-owner} and
\eqref{lower-bound-target-owner} must communicate at least
\begin{equation}
\mu
\geq
N\left(1-\frac{1}{p}\right)
\label{eq:lower-bound-migration-volume}
\end{equation}
remote items.  Moreover,
\begin{equation}
\max_{q\in[p]}
\left\{
h_q^{\mathrm{send}},
h_q^{\mathrm{recv}}
\right\}
\geq
\frac{N}{p}\left(1-\frac{1}{p}\right).
\label{eq:lower-bound-migration-h}
\end{equation}
\end{proposition}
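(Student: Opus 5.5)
The plan is a counting argument over items whose owner must change, followed by an averaging step. Throughout, \(q\) ranges over \([p]\).

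\emph{Step 1: fix each item's source and destination.} Fix an item at Phase-I physical coordinate \((a,c)\).
\begin{itemize}
\item Its source owner is \(\operatorname{owner}_0(a,c)=\lfloor c/(N_2/p)\rfloor\).
\item The permutation \eqref{lower-bound-fused-permutation} sends it to \((u,v)=(\pi_{m_2}(c),\pi_{m_1}(a))\).
\item Its target owner is therefore \(\operatorname{owner}_1(u,v)=\lfloor \pi_{m_1}(a)/(N_1/p)\rfloor\).
\end{itemize}
This owner pair is forced by the prescribed permutation and the two ownership maps. So any materializing algorithm must deliver the item's value to the target owner. Because there is no replication and remote values are obtained only through communication, an item whose two owners differ must be transmitted at least once.

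\emph{Step 2: count the items that stay put.} For each pair \((q,q')\), count the items with source \(q\) and destination \(q'\).
\begin{itemize}
\item Source \(q\) means \(c\) ranges over \(N_2/p\) columns.
\item Destination \(q'\) means \(\pi_{m_1}(a)\) lies in a fixed interval of length \(N_1/p\).
\item Since \(\pi_{m_1}\) is a bijection on \([N_1]\), exactly \(N_1/p\) values of \(a\) satisfy this.
\end{itemize}
Hence every pair carries exactly \(N/p^2\) items. The \(p\) self-pairs \(q=q'\) together account for \(N/p\) items. The remaining \(N-N/p=N(1-1/p)\) items must each be counted at least once in the aggregate remote volume, which gives \eqref{lower-bound-migration-volume}.

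\emph{Step 3: pass to the per-processor bound.} Each remote transfer is one send by some processor, so \(\sum_q h_q^{\mathrm{send}}\ge\mu\ge N(1-1/p)\). The maximum is at least the average, so some \(q\) has \(h_q^{\mathrm{send}}\ge (N/p)(1-1/p)\), proving \eqref{lower-bound-migration-h}. A sharper route is also available: each processor must receive the \((p-1)N/p^2\) items destined to it from other sources. That would give the bound for every \(q\), but the averaging argument already suffices.

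The main obstacle is Step 1, specifically justifying that each mismatched item costs at least one unit of transmitted volume. An algorithm could in principle relay an item or send linear combinations. I would handle this by invoking the owner-computes and no-replication model and by scoping the claim to materializing the permutation, exactly as the Scope of optimality paragraph states. Under that scope each item is an indivisible value that must physically arrive at its target owner, so multi-hop routing only increases the count.
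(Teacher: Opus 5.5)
Your proposal is correct and follows essentially the paper's argument. Bijectivity of \(\pi_{m_1}\) shows that exactly \(N/p\) items keep their owner, the remaining \(N(1-1/p)\) must cross a processor boundary, and averaging over the \(p\) processors gives the per-processor bound. Your per-pair count of \(N/p^2\) is slightly finer bookkeeping than the paper's per-column count, and it also gives the receive bound for every processor, but the route is the same.
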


\begin{proof}
Fix one source column coordinate \(c\).  Its source owner is
\[
q
=
\left\lfloor
\frac{c}{N_2/p}
\right\rfloor.
\]
After the fused permutation, the target owner is determined by
\[
v=\pi_{m_1}(a).
\]
Since \(\pi_{m_1}\) is a bijection on \([N_1]\), exactly \(N_1/p\)
row coordinates \(a\) satisfy
\[
\left\lfloor
\frac{\pi_{m_1}(a)}{N_1/p}
\right\rfloor
=
q.
\]
Thus, for each fixed source column \(c\), exactly \(N_1/p\) items
retain their original processor ownership.

There are \(N_2\) possible source columns.  Hence, exactly
\[
N_2\cdot\frac{N_1}{p}
=
\frac{N}{p}
\]
items can remain on their original processor.  Every other item must
cross a processor boundary at least once.  Therefore,
\[
\mu
\geq
N-\frac{N}{p}
=
N\left(1-\frac{1}{p}\right).
\]

The aggregate remote send volume and aggregate remote receive volume
are both at least the quantity above.  Averaging either quantity over
the \(p\) processors yields
\[
\max_q h_q^{\mathrm{send}}
\geq
\frac{\mu}{p},
\qquad
\max_q h_q^{\mathrm{recv}}
\geq
\frac{\mu}{p},
\]
which proves \eqref{lower-bound-migration-h}.
\end{proof}

\begin{corollary}[Optimality of \algName{}'s middle-stage migration]
\label{cor:lower-bound-\algName{}-migration-optimal}
The fused redistribution of Appendix~\ref{app:routing} is
migration-optimal for the prescribed permutation and slab ownership
contracts.  In particular, \algName{} attains
\begin{equation}
\mu
=
N\left(1-\frac{1}{p}\right),
\qquad
h
=
\frac{N}{p}\left(1-\frac{1}{p}\right).
\label{eq:lower-bound-\algName{}-migration-optimal}
\end{equation}
\end{corollary}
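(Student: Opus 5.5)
The statement is the corollary that PACO's fused redistribution attains the migration lower bound. It follows by matching the upper bound already established with the lower bound of the preceding proposition.

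\textbf{Step 1: achievability.} I will invoke Theorem~\ref{thm:fused-global-permutation} (equivalently Lemma~\ref{lem:trace} and the block-count argument of Section~\ref{sec:fused-redistribution}). These show that $\proc{PARA-CO-DRP}_b$ materializes exactly $\pi_m(a\pplus c)=\pi_{m_2}(c)\pplus\pi_{m_1}(a)$ from $\operatorname{owner}_0$ to $\operatorname{owner}_1$. Each source--destination block $S_{q,q'}$ contains $(N_1/p)(N_2/p)=N/p^2$ items. This holds because $\pi_{m_1}$ is a bijection on $[N_1]$, so exactly $N_1/p$ packed rows $v$ fall in each destination range $[q'N_1/p,(q'+1)N_1/p)$. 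Each processor therefore sends and receives $(p-1)N/p^2=(N/p)(1-1/p)$ remote items, and summing gives $\mu=N(1-1/p)$.

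\textbf{Step 2: matching lower bound.} Apply the preceding proposition, whose hypotheses are exactly this source/target ownership pair and this permutation. It gives $\mu\ge N(1-1/p)$ and $\max_q\{h_q^{\mathrm{send}},h_q^{\mathrm{recv}}\}\ge (N/p)(1-1/p)$.

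\textbf{Step 3: conclusion.} The achieved values coincide with the lower bounds. So PACO is optimal both in aggregate volume and in per-processor $h$, and is therefore migration-optimal for the prescribed contract.

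The one point needing care is consistency between the kept-local items counted in the lower bound and PACO's self block. PACO's retained set is exactly $S_{q,q}$, with $N/p^2$ items per processor and $N/p$ in total. This is precisely the set of items whose owner is unchanged, namely those with $\lfloor \pi_{m_1}(a)p/N_1\rfloor=\lfloor cp/N_2\rfloor$. Hence PACO never sends an item that could have stayed local, and each remote item crosses the network exactly once in the single $h$-relation.

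I expect no real obstacle. The only subtle point is scope: the optimality is stated for this fused permutation and these slab ownerships, not for arbitrary FFT schedules, as the \emph{Scope of optimality} paragraph already records.
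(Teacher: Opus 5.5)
Your proposal is correct and follows the paper's own argument. The corollary is obtained by pairing the balanced all-to-all achieved by the fused redistribution with the matching lower bounds $\mu\ge N(1-1/p)$ and $\max_q\{h_q^{\mathrm{send}},h_q^{\mathrm{recv}}\}\ge (N/p)(1-1/p)$ from the preceding migration proposition; the achieved side comes from exactly $N/p^2$ items per source--destination pair, giving $h=(N/p)(1-1/p)$ and $\mu=N(1-1/p)$.

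Your extra check that the self blocks $S_{q,q}$ are precisely the owner-preserving items is consistent, but it is not needed, since the counts already coincide.
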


The scope of Corollary~\ref{cor:lower-bound-\algName{}-migration-optimal} is
deliberately limited.  It does \emph{not} claim that \algName{} minimizes
communication volume among all distributed FFT algorithms.  Rather,
\algName{} is migration-optimal for the prescribed fused permutation between
the prescribed source and factor-swapped target slab distributions.

\subsection{Critical-Path Work Lower Bound}
\label{app:lower-bounds-work}

\begin{proposition}[Critical-path work lower bound]
\label{prop:lower-bound-work}
For the fixed-base Cooley--Tukey DAG used by \algName{}, under the
no-arithmetic-recomputation assumption,
\begin{equation}
W_{\mathrm{crit}}
=
\Omega\!\left(
\frac{N}{p}\log N
\right).
\label{eq:lower-bound-work}
\end{equation}
\end{proposition}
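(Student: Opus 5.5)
The plan is to reduce the critical-path work to a per-processor work total and then count DAG vertices. In the BSP execution model of \secref{model}, the critical-path work is the sum over supersteps of the slowest processor's local work in that superstep. Write $w_q^{(r)}$ for the local work of processor $q$ in superstep $r$. Then
\[
W_{\mathrm{crit}}=\sum_{r}\max_{q\in[p]} w_q^{(r)} .
\]
Every local arithmetic operation, including packing and unpacking, is charged to exactly one processor in exactly one superstep. The first step is the elementary exchange of sum and maximum:
\[
\sum_{r}\max_{q} w_q^{(r)}\;\ge\;\max_{q}\sum_{r} w_q^{(r)}\;=\;\max_q W_q\;=\;W_{\max}.
\]
This holds termwise, because for any fixed $q$ we have $\max_{q'} w_{q'}^{(r)}\ge w_q^{(r)}$ in every superstep. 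It is the step that connects $W_{\mathrm{crit}}$ to the per-processor quantity, and it uses no property of the algorithm beyond the superstep decomposition.

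The second step lower-bounds $\max_q W_q$ by an averaging argument over the DAG. By Remark~\ref{rmk:fixed-radix-scope}, expanding each radix-$b$ butterfly into its $t=\log_2 b$ radix-2 layers gives the standard radix-2 FFT DAG $G_2(N)$ for the fixed-base Cooley--Tukey computation. This DAG has $N\log_2 N$ noninput arithmetic vertices, and each costs at least one unit of work. Under the no-recomputation assumption, each vertex is evaluated by exactly one processor, so
\[
\sum_q W_q\ge N\log_2 N .
\]
The maximum is at least the average, so $\max_q W_q\ge (N/p)\log_2 N$. Combining this with the first step gives $W_{\mathrm{crit}}=\Omega\!\left(\frac{N}{p}\log N\right)$. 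The bound is insensitive to the number of supersteps and to communication; it uses only arithmetic vertices.

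The main obstacle is not arithmetic but scope. One must be explicit that $W_{\mathrm{crit}}$ is the superstep-summed maximum, so that it dominates $W_{\max}$ and not the other way around. One must also be explicit that the vertex count refers to the expanded fixed-radix DAG, where the $N\log N$ noninput vertices are well defined and each requires at least unit work, not to a model with atomic radix-$b$ vertices. I would state both conventions at the start of the proof and then carry out the two inequalities above. Together with Proposition~\ref{prop:paco-upper-bounds}, whose per-phase bounds also hold superstep by superstep because PACO has only three balanced supersteps, this shows that PACO's critical-path work is $\Theta\!\left(\frac{N}{p}\log N\right)$.
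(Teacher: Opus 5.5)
Your proof is correct and follows the paper's argument: the expanded fixed-radix Cooley--Tukey DAG has $\Theta(N\log N)$ arithmetic vertices, each of which must be evaluated by some processor, so averaging over the $p$ processors gives one processor with $\Omega\!\left(\frac{N}{p}\log N\right)$ work. Your superstep inequality $\sum_r\max_q w_q^{(r)}\ge\max_q\sum_r w_q^{(r)}$ makes rigorous the paper's one-line assertion that this busiest processor lies on the critical path.
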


\begin{proof}
For a fixed radix \(b=2^t=O(1)\), the expanded Cooley--Tukey DAG
contains \(\Theta(N\log N)\) arithmetic vertices.  Since arithmetic
recomputation is disallowed, every such vertex must be evaluated at
least once.  Therefore, by averaging over the \(p\) processors, at
least one processor executes
\[
\Omega\!\left(
\frac{N\log N}{p}
\right)
\]
arithmetic operations.  This processor lies on the critical path,
which proves \eqref{lower-bound-work}.
\end{proof}

This is a lower bound for the specified fixed-radix Cooley--Tukey
computation DAG without arithmetic recomputation.  It is not an
unrestricted arithmetic lower bound for every possible DFT algorithm.

\subsection{Imported FFT-DAG Input-Boundary Theorem}
\label{app:lower-bounds-input-boundary}

The cache lower bound uses a graph-theoretic boundary theorem for the
standard radix-2 FFT DAG.  Let \(G_2(N)\) denote the radix-2 FFT DAG on
\(N=2^m\) inputs.  It has \(m+1\) levels of \(N\) vertices and
\[
N\log_2 N
\]
noninput computation vertices.

For a set \(U\) of noninput vertices, define its input boundary by
\begin{equation}
\operatorname{in}(U)
=
\left\{
v\notin U:
\exists u\in U
\text{ such that }
(v,u)\in E
\right\}.
\label{eq:lower-bound-input-boundary-definition}
\end{equation}
Thus, \(\operatorname{in}(U)\) consists of values computed before the
execution of \(U\) that are required as direct predecessors of
vertices in \(U\).

\begin{theorem}[FFT input-boundary growth]
\label{thm:lower-bound-imported-boundary}
For every integer \(k\geq 1\), every set \(U\) of noninput vertices in
\(G_2(N)\) satisfying
\begin{equation}
|U|
\geq
k\log_2(4k)
\label{eq:lower-bound-boundary-premise}
\end{equation}
also satisfies
\begin{equation}
|\operatorname{in}(U)|
\geq
k.
\label{eq:lower-bound-boundary-conclusion}
\end{equation}
\end{theorem}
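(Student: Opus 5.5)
We plan to prove the contrapositive. If $U$ is a set of noninput vertices of $G_2(N)$ with $|\func{in}(U)|=r$, we aim to show
\[
  |U|\le F(r):=r\log_2(2r)\qquad(r\ge1).
\]
This suffices. If $|\func{in}(U)|\le k-1$, then $|U|\le (k-1)\log_2(2(k-1))<k\log_2(4k)$. Also $r=0$ forces $U=\emptyset$, since every noninput vertex has a predecessor and a minimal element of $U$ has all its predecessors outside $U$. So every set with $|U|\ge k\log_2(4k)$ must have $|\func{in}(U)|\ge k$. Equivalently, $F$ bounds how many vertices can be \emph{dominated} by a set $D=\func{in}(U)$ of size $r$: every vertex of $U$ is computable from $D$ alone. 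We therefore prove the dominator bound $|U|\le F(|D|)$ for every $U$ all of whose predecessors outside $U$ lie in $D$.

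\textbf{Structure.} The proof is an induction on $m=\log_2 N$, in the style of Hong and Kung's $S$-partition argument. It uses the recursive structure of $G_2(N)$: level $0\to1$ is a layer of $N/2$ disjoint butterflies, and levels $1,\dots,m$ form two vertex-disjoint copies $G^{A},G^{B}$ of $G_2(N/2)$. We write the three pieces as follows:
\begin{itemize}
  \item $U_1$: the vertices of $U$ on level $1$;
  \item $U_A,U_B$: the vertices of $U$ on levels $\ge2$ in $G^{A}$ and $G^{B}$.
\end{itemize}
In the subcopy $G^{A}$, treated with level $1$ as its input level, the dominator of $U_A$ is $D_A=(D\cap G^{A})\cup(U_1\cap G^{A})$, and similarly for $B$. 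By induction, $|U_A|\le F(|D_A|)$ and $|U_B|\le F(|D_B|)$.

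\textbf{Key counting step.} Every level-$1$ vertex of $U_1$ sits in a butterfly whose two inputs are either in $D$ or, if on level $0$, are inputs and hence must be in $D$. Hence $|U_1|\le 2\,|D\cap\text{level }0|$, and each level-$0$ dominator vertex contributes to at most two elements of $U_1$. Writing $d_0=|D\cap\text{level }0|$ and $d_{A},d_{B}$ for the remaining dominator vertices in the two copies, we obtain $|D_A|+|D_B|\le d_A+d_B+2d_0$. The recurrence then reads
\[
  |U|\le 2d_0+F(x)+F(y),\qquad x+y\le r+d_0 .
\]

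\textbf{Main obstacle.} The hard step is closing this recurrence with the chosen $F$. The naive estimate loses because the level-$0$ dominators are effectively counted twice. We will exploit two facts. First, $F$ is convex and superadditive, so $F(x)+F(y)$ is maximized at an extreme split. Second, a vertex on level $1$ can only be dominated through its butterfly pair, so the doubling $2d_0$ is paid for by the extra factor $\log_2 2=1$ in $F$. The inequality to verify is $2d_0+F(x)+F(y)\le F(r)$ under the constraint $x+y\le r+d_0$. We check it first at the extreme split $y=0$. If the simple layer split does not close, our fallback is to replace it with the standard fact that every vertex of $G_2(N)$ lies on exactly $N$ input--output paths. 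Combined with a mass-propagation argument, this gives $|U|\le |D|\log_2|D|+O(|D|)$, after which we adjust the constants to match $k\log_2(4k)$. Failing both, the statement is exactly the boundary theorem of \cite{RanjanSaZu11}, which the paper imports, and we would invoke it directly.
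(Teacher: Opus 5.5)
Your primary route has a genuine gap. The recurrence you set up cannot be closed by \emph{any} $F$, so the difficulty is not the choice $F(r)=r\log_2(2r)$. Under your constraints $|U_1|\le 2d_0$ and $x+y\le r+d_0$, put all dominators on level $0$ ($d_0=r$). Then $x=y=r$ is admissible, and you would need $2r+2F(r)\le F(r)$, which is impossible. The first check you propose already fails: at $y=0$, $x=r+d_0$, you need $2d_0+F(r+d_0)\le F(r)$, which is false for every $d_0\ge1$ because $F$ is increasing. The ``extra $\log_2 2$'' heuristic cannot pay for this, since the slack $F(r)-2F(r/2)$ is only $r$.

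What is missing is butterfly-level accounting. A vertex of $U_1$ needs \emph{both} inputs of its level-$0$ butterfly in $D$. If $\beta$ butterflies have both inputs in $D$, then $2\beta\le d_0$, so $|U_1|\le d_0$; your count loses a factor $2$. More importantly, each butterfly has exactly one output in $G^A$ and one in $G^B$, so $u_A:=|U_1\cap G^A|\le\beta$ and $u_B:=|U_1\cap G^B|\le\beta$. This per-copy cap is indispensable: with only $x+y\le r$ and $|U_1|\le d_0$, the split $x=r$, $y=0$ still defeats $|U_1|+F(x)+F(y)\le F(r)$.

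With these constraints your $F$, with $F(0)=0$, does close the induction.
\begin{itemize}
\item Set $a=d_A+u_A$, $b=d_B+u_B$ and $t=2\beta-u_A-u_B\ge0$. Then $r\ge a+b+t$, and $|U|\le u_A+u_B+F(a)+F(b)$.
\item Consecutive integer values of $F$ differ by at least $1$, so $F(r)-F(a+b)\ge t$.
\item For $a,b\ge1$, $F(a+b)-F(a)-F(b)=(a+b)H\bigl(a/(a+b)\bigr)\ge 2\min(a,b)$, where $H$ is the binary entropy and concavity gives $H(p)\ge 2\min(p,1-p)$. This also holds trivially when $\min(a,b)=0$.
\item Since $\min(a,b)\ge\min(u_A,u_B)$, adding gives $F(r)-F(a)-F(b)\ge 2\beta-\max(u_A,u_B)+\min(u_A,u_B)$. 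Because $\beta\ge\max(u_A,u_B)$, this is at least $u_A+u_B$.
\end{itemize}
Your contrapositive reduction then finishes the proof.

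For comparison, the paper does not prove this statement at all. It imports it from \cite{RanjanSaZu11}, whose proof is a mass-propagation argument, so your last fallback is exactly the paper's treatment. The repaired layer-peeling induction would be a self-contained, elementary alternative, and it even yields the stronger bound $|U|\le r\log_2(2r)$.

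Your middle fallback is not safe as stated. The constants in $k\log_2(4k)$ are fixed, so a bound $|U|\le |D|\log_2|D|+C|D|$ implies the theorem only if $C$ is at most about $2$. The constants cannot be ``adjusted'' afterwards.
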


Theorem~\ref{thm:lower-bound-imported-boundary} is an imported
FFT-DAG result from the boundary-growth literature \cite{RanjanSaZu11}.  We use only its
graph-theoretic consequence and do not reproduce its
mass-propagation proof here.

\subsection{From Input Boundaries to Block Cache Misses}
\label{app:lower-bounds-boundary-to-miss}

Consider one contiguous segment of a legal topological schedule.  Let
\(U\) be the set of previously uncomputed FFT vertices evaluated during
that segment.

\begin{lemma}[Input boundary to block-cache misses]
\label{lem:lower-bound-boundary-to-miss}
If
\[
|\operatorname{in}(U)|\geq k,
\]
then the execution segment incurs at least
\begin{equation}
\left\lceil
\frac{k-M}{B}
\right\rceil_+
\label{eq:lower-bound-boundary-to-miss}
\end{equation}
cache misses, where
\[
\lceil x\rceil_+
:=
\max\{0,\lceil x\rceil\}.
\]
\end{lemma}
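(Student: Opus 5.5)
The plan follows the standard counting argument: work over the execution segment itself, counting which values the segment needs from outside and how many of those the cache can already hold when the segment starts.

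\emph{Step 1: the segment must obtain every boundary value.} Each vertex $u\in U$ is evaluated during the segment and is not recomputed. Each predecessor $v\in\func{in}(U)$ lies outside $U$, so it was computed, or is an input, before the segment began. When $u$ is evaluated, the value of $v$ must therefore be present in cache. By the ideal-cache semantics, a value present in cache during the segment either was resident when the segment started or was brought in by a miss during the segment. This step uses only the no-recomputation assumption and the ideal-cache model of \secref{model}.

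\emph{Step 2: count how much the two sources can supply.} At the start of the segment the cache holds at most $M$ items, so at most $M$ of the $k$ distinct boundary values can be resident. Each miss transfers one block of $B$ items, so it can newly supply at most $B$ distinct boundary values. Let $X$ be the number of misses in the segment. Then $M+BX\ge k$, so $X\ge (k-M)/B$. Since $X$ is an integer, $X\ge\lceil (k-M)/B\rceil$ whenever $k>M$. When $k\le M$ the claimed bound is $0$, which holds trivially, and together this gives the $\lceil\cdot\rceil_+$ form.

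\emph{Expected obstacle.} The only delicate point is justifying that a boundary value evicted and reloaded within the segment, or stored in several slow-memory copies, cannot help. This is handled by counting distinct values rather than accesses. Each needed value either is among the initially resident items or must appear in at least one block loaded by some miss, so the per-miss bound of $B$ holds regardless of layout or replacement policy. This is also what justifies the strong-model remark that follows the lemma.
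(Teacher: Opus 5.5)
Your proposal is correct and uses the same counting argument as the paper: at most $M$ of the $k$ boundary values can be resident when the segment starts, and each miss supplies at most $B$ more, so at least $\lceil (k-M)/B\rceil$ misses are needed when $k>M$, and the bound is trivially $0$ otherwise. Your Step~1, which shows that each boundary value was produced before the segment and must be in cache when its consumer in $U$ is evaluated, makes explicit a fact the paper's proof uses without stating.
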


\begin{proof}
At the beginning of the segment, the cache contains at most \(M\)
items.  Thus, even under the most favorable possible cache state, at
most \(M\) distinct input-boundary values can already reside in cache.
At least \(k-M\) required boundary values are therefore absent whenever
\(k>M\).

One cache miss loads at most one cache block containing \(B\) items.
Even if all \(B\) loaded items are distinct missing boundary values,
one miss can supply at most \(B\) such values.  Hence at least
\[
\left\lceil
\frac{k-M}{B}
\right\rceil
\]
misses are required for \(k>M\), while the trivial lower bound is zero
for \(k\leq M\).  This yields
\eqref{lower-bound-boundary-to-miss}.
\end{proof}

Lemma~\ref{lem:lower-bound-boundary-to-miss} permits a strong
algorithmic model.  In particular, the algorithm may know \(M\) and
\(B\), choose arbitrary slow-memory layouts, pack \(B\) useful boundary
values into every loaded block, retain the most useful \(M\) values in
cache, and maintain multiple slow-memory copies of previously computed
values.  None of these capabilities eliminates the need to load a
block when a required boundary value is not resident in cache.

\subsection{Finite Block-Aware Lower Bound for One FFT DAG}
\label{app:lower-bounds-finite-single}

\begin{theorem}[Finite block-aware cache lower bound]
\label{thm:lower-bound-finite-single}
For every integer \(k>M\), every no-recomputation topological schedule
of \(G_2(N)\) satisfies
\begin{equation}
Q_{\mathrm{miss}}(N,M,B)
\geq
\left\lfloor
\frac{N\log_2 N}
{k\log_2(4k)}
\right\rfloor
\left\lceil
\frac{k-M}{B}
\right\rceil.
\label{eq:lower-bound-finite-single}
\end{equation}
\end{theorem}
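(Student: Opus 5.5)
The plan is a segment-counting argument driven by the imported boundary theorem (Theorem~\ref{thm:29}) together with the boundary-to-miss conversion of Lemma~\ref{e:lem30}. I will fix an arbitrary no-recomputation topological schedule of $G_2(N)$ and an integer $k>M$. Only its order of evaluating noninput vertices matters, and each vertex is evaluated exactly once.

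First, I scan the schedule from the beginning and cut it greedily into consecutive segments. Each segment ends at the moment it has evaluated exactly $k\log_2(4k)$ previously uncomputed noninput vertices. Any leftover incomplete final segment is discarded, since dropping it can only lower the count. Because $G_2(N)$ has exactly $N\log_2 N$ noninput vertices and none is recomputed, the number of complete segments is at least
\[
\Bigl\lfloor\frac{N\log_2 N}{k\log_2(4k)}\Bigr\rfloor .
\]

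Second, I let $U$ be the vertex set evaluated in a complete segment. Theorem~\ref{thm:29} then gives $|\func{in}(U)|\ge k$. One point needs care here. Boundary vertices are predecessors outside $U$, so they are either inputs or were computed in earlier segments. Either way, their values must be present in cache at some point during the segment.

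Lemma~\ref{e:lem30} then yields at least $\lceil (k-M)/B\rceil$ misses within the segment, which is positive because $k>M$. The reason is that at most $M$ boundary values can be resident at the segment start, and each miss brings in at most $B$ of them.

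Third, the segments occupy disjoint time intervals, so their miss counts add. Multiplying the number of complete segments by the per-segment bound gives \eqref{e:finite}.

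The only delicate step is justifying Lemma~\ref{e:lem30}'s use within a segment. Specifically, I must confirm that every boundary value must actually be read from cache during the segment; this holds because each $u\in U$ reads its predecessors when it is evaluated. I must also confirm that misses are charged only to the segment in which they occur. Both points follow directly from the ideal-cache model, so I expect no real obstacle beyond this bookkeeping.
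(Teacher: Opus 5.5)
Your proposal is correct and follows the paper's proof essentially step for step. Cut the schedule into consecutive complete segments of $k\log_2(4k)$ newly computed vertices, apply Theorem~\ref{thm:29} to get $|\mathrm{in}(U)|\ge k$ per segment, convert this to $\lceil (k-M)/B\rceil$ misses via Lemma~\ref{e:lem30}, and add over the disjoint segments.

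One small technicality is shared with the paper: $k\log_2(4k)$ is an integer only when $k$ is a power of two. In general each segment must contain $\lceil k\log_2(4k)\rceil$ vertices, so the argument as written gives the bound with that ceiling in the denominator. This loses only a factor close to $1$ and is harmless for every downstream asymptotic use.
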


\begin{proof}
Partition the schedule into consecutive complete segments, each of
which computes exactly
\[
k\log_2(4k)
\]
previously uncomputed FFT vertices.  The final incomplete segment, if
any, is ignored.

The radix-2 FFT DAG contains \(N\log_2 N\) noninput computation
vertices, so the number of complete segments is at least
\[
\left\lfloor
\frac{N\log_2 N}
{k\log_2(4k)}
\right\rfloor.
\]
For every complete segment, Theorem~\ref{thm:lower-bound-imported-boundary}
gives an input boundary of size at least \(k\).  By
Lemma~\ref{lem:lower-bound-boundary-to-miss}, each such segment incurs
at least
\[
\left\lceil
\frac{k-M}{B}
\right\rceil
\]
cache misses.  The segments are disjoint temporal intervals, so their
cache misses add, proving \eqref{lower-bound-finite-single}.
\end{proof}

\subsection{A Stronger Constant-One Asymptotic Bound}
\label{app:lower-bounds-constant-one}

This subsection is not needed for \algName{}'s main upper-bound proof.  It
records a stronger miss-only lower bound that applies to arbitrary
no-recomputation schedules of the radix-2 FFT DAG.

Let \(\phi(M)\) be a positive function satisfying
\begin{equation}
\phi(M)\longrightarrow\infty,
\qquad
\log\phi(M)=o(\log M),
\label{eq:lower-bound-phi-conditions}
\end{equation}
and choose
\begin{equation}
k
=
\left\lceil
M\phi(M)
\right\rceil.
\label{eq:lower-bound-constant-one-k}
\end{equation}
Assume additionally that
\begin{equation}
M\phi(M)\log_2\!\bigl(4M\phi(M)\bigr)
=
o(N\log_2 N).
\label{eq:lower-bound-constant-one-growth}
\end{equation}

\begin{theorem}[Asymptotic constant-one cache lower bound]
\label{thm:lower-bound-constant-one}
Under \eqref{lower-bound-phi-conditions} and
\eqref{lower-bound-constant-one-growth}, every no-recomputation
topological schedule of \(G_2(N)\) satisfies
\begin{equation}
Q_{\mathrm{miss}}(N,M,B)
\geq
(1-o(1))
\frac{N}{B}\log_M N.
\label{eq:lower-bound-constant-one}
\end{equation}
\end{theorem}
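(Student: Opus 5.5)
The plan is to derive the bound directly from Theorem~\ref{thm:lower-bound-finite-single}, choosing \(k=\lceil M\phi(M)\rceil\) and tracking each of the four multiplicative factors so that every one of them equals \(1-o(1)\) times its ideal value.

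First, \(\phi(M)\to\infty\) gives \(k>M\) for all large \(M\), so the theorem applies. Write the bound as \(\lfloor X\rfloor\lceil Y\rceil\), with
\[
X=\frac{N\log_2 N}{k\log_2(4k)},\qquad Y=\frac{k-M}{B}.
\]
Two elementary estimates follow.

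\emph{Floor term.} Using \(\lfloor X\rfloor\ge X-1=X(1-1/X)\), it suffices that \(X\to\infty\). This is exactly hypothesis \eqref{lower-bound-constant-one-growth}, once we note that replacing \(M\phi(M)\) by its ceiling \(k\) changes \(k\log_2(4k)\) only by a factor \(1+o(1)\).

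\emph{Ceiling term.} Simply use \(\lceil Y\rceil\ge Y\).

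Multiplying the two gives
\[
Q_{\mathrm{miss}}\ge(1-o(1))\,\frac{N}{B}\cdot\frac{\log_2 N}{\log_2 M}\cdot\Bigl(1-\frac{M}{k}\Bigr)\cdot\frac{\log_2 M}{\log_2(4k)}.
\]

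The remaining factors are handled one at a time. Since \(k\ge M\phi(M)\), we have \(M/k\le 1/\phi(M)\to0\). Next,
\[
\log_2(4k)\le 2+\log_2 M+\log_2\phi(M)+o(1),
\]
so the hypothesis \(\log\phi(M)=o(\log M)\) makes \(\log_2 M/\log_2(4k)=1-o(1)\). This step also needs \(M\to\infty\), which is implicit because \(\phi(M)\to\infty\) is an asymptotic statement in \(M\). Finally, \(\log_2N/\log_2M=\log_M N\). Combining the factors yields the claim.

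The main obstacle is bookkeeping rather than any new idea: the \(o(1)\) terms must be read in a joint limit where \(M\to\infty\) and \eqref{lower-bound-constant-one-growth} holds. The delicate point is the floor. Without \eqref{lower-bound-constant-one-growth}, \(X\) could stay bounded, and \(\lfloor X\rfloor\) could even be \(0\), which would lose everything. I would state explicitly that \(X\to\infty\) is precisely what this hypothesis provides, and that the ceiling on \(k\) contributes only a factor \(1+O(1/(M\phi(M)))\).
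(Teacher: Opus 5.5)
Your proposal is correct and follows the paper's proof. Like the paper, you substitute $k=\lceil M\phi(M)\rceil$ into the finite block-aware bound, use the growth hypothesis to make the floor and ceiling lower-order, and show that $1-M/k$ and $\log_2 M/\log_2(4k)$ are each $1-o(1)$. Your explicit use of $\lfloor X\rfloor\ge X-1$ with $X\to\infty$, together with your treatment of the ceiling on $k$, spells out what the paper compresses into the phrase ``lower-order.''
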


\begin{proof}
Substitute \eqref{lower-bound-constant-one-k} into
Theorem~\ref{thm:lower-bound-finite-single}.  Under
\eqref{lower-bound-constant-one-growth}, the floor, ceiling, and
final incomplete segment contribute only lower-order terms.  Therefore,
\begin{align*}
Q_{\mathrm{miss}}
&\geq
(1-o(1))
\frac{N\log_2 N}
{k\log_2(4k)}
\cdot
\frac{k-M}{B}
\\
&=
(1-o(1))
\frac{N}{B}
\cdot
\frac{\log_2 N}{\log_2 M}
\cdot
\left(1-\frac{M}{k}\right)
\cdot
\frac{\log_2 M}{\log_2(4k)}.
\end{align*}
The choice \(k=\lceil M\phi(M)\rceil\) gives
\[
1-\frac{M}{k}
=
1-\frac{1}{\phi(M)}+o(1)
=
1-o(1).
\]
Moreover, \(\log\phi(M)=o(\log M)\) implies
\[
\frac{\log_2 M}{\log_2(4k)}
=
1-o(1).
\]
Finally,
\[
\frac{\log_2 N}{\log_2 M}
=
\log_M N.
\]
Substituting these estimates proves
\eqref{lower-bound-constant-one}.
\end{proof}

Theorem~\ref{thm:lower-bound-constant-one} shows that the normalized
miss-only cache lower-bound constant is at least \(1-o(1)\).  It does
not prove that \algName{}, or any currently known schedule, attains an upper
bound of the form
\[
(1+o(1))
\frac{N}{B}\log_M N.
\]

\subsection{Disjoint Unions of FFT DAGs}
\label{app:lower-bounds-disjoint-unions}

The local \algName{} phases assign each processor a batch of data-disjoint
FFTs.  Let
\begin{equation}
G_2^{(s)}(n)
=
\bigsqcup_{i=1}^{s} G_2(n)^{(i)}
\label{eq:lower-bound-disjoint-union}
\end{equation}
denote the disjoint union of \(s\) radix-2 FFT DAGs of length \(n\).

\begin{lemma}[Input-boundary growth for disjoint FFT batches]
\label{lem:lower-bound-disjoint-boundary}
For every integer \(k\geq 1\), every set \(U\) of noninput vertices in
\(G_2^{(s)}(n)\) satisfying
\begin{equation}
|U|
\geq
4k\log_2(4k)
\label{eq:lower-bound-disjoint-premise}
\end{equation}
also satisfies
\begin{equation}
|\operatorname{in}(U)|
\geq
k.
\label{eq:lower-bound-disjoint-conclusion}
\end{equation}
\end{lemma}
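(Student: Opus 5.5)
We prove the contrapositive. Decompose \(U\) by components, \(U_i=U\cap V(G_2(n)^{(i)})\), and set \(r_i=|\func{in}(U_i)|\). Since the component DAGs share no vertices or edges, every boundary vertex of \(U\) lies in the boundary of exactly one \(U_i\). Hence \(|\func{in}(U)|=\sum_i r_i\). Assume \(|\func{in}(U)|<k\); the goal is to show \(|U|<4k\log_2(4k)\).

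For each nonempty \(U_i\), first show \(r_i\ge1\). A nonempty set of noninput vertices in a finite DAG contains a vertex whose predecessors lie outside \(U_i\): take a minimal element in topological order. That vertex has predecessors, because it is not an input, so the boundary is nonempty. Next apply Theorem~\ref{thm:29} in contrapositive form with parameter \(r_i+1\). Since \(|\func{in}(U_i)|=r_i<r_i+1\), we get \(|U_i|<(r_i+1)\log_2(4(r_i+1))\). Using \(r_i\ge1\), we have \(r_i+1\le 2r_i\) and \(4(r_i+1)\le 8r_i\le 8k\). This gives \(|U_i|<2r_i\log_2(8k)\).

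Now sum over the nonempty components: \(|U|<2\log_2(8k)\sum_i r_i<2k\log_2(8k)\). It remains to check the elementary inequality \(2\log_2(8k)\le 4\log_2(4k)\) for \(k\ge1\). It is equivalent to \(8k\le 16k^2\), which holds. Together these give \(|U|<4k\log_2(4k)\), which is the contrapositive of the claim.

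The only step needing care is the justification that \(r_i\ge1\) for nonempty \(U_i\). Without it, the bound \(r_i+1\le2r_i\) fails, and so does the per-component estimate. The minimal-vertex argument above settles it. The remaining steps are routine bookkeeping with constants.
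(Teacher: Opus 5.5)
Your proof is correct and follows the paper's argument essentially line for line. Both use the componentwise identity $|\operatorname{in}(U)|=\sum_i r_i$, the contrapositive of the imported boundary theorem with parameter $r_i+1$, the estimates $r_i+1\le 2r_i$ and $4(r_i+1)\le 8k$, and the closing inequality $2\log_2(8k)\le 4\log_2(4k)$. Your minimal-vertex argument for $r_i\ge 1$ fills in a step the paper only asserts.
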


\begin{proof}
We prove the contrapositive.  Let
\[
U_i
=
U\cap V\bigl(G_2(n)^{(i)}\bigr),
\qquad
r_i
=
|\operatorname{in}(U_i)|.
\]
Since the component DAGs are disjoint,
\[
|\operatorname{in}(U)|
=
\sum_i r_i.
\]
Suppose
\[
|\operatorname{in}(U)|<k.
\]
For every nonempty \(U_i\), we have \(r_i\geq 1\).  Applying the
contrapositive of Theorem~\ref{thm:lower-bound-imported-boundary} with
parameter \(r_i+1\) gives
\[
|U_i|
<
(r_i+1)\log_2\!\bigl(4(r_i+1)\bigr).
\]
Because \(r_i\geq 1\),
\[
r_i+1\leq 2r_i,
\qquad
4(r_i+1)\leq 8r_i.
\]
Thus,
\[
|U_i|
<
2r_i\log_2(8r_i)
\leq
2r_i\log_2(8k).
\]
Summing over all nonempty components yields
\[
|U|
<
2\left(\sum_i r_i\right)\log_2(8k)
<
2k\log_2(8k).
\]
For every \(k\geq 1\),
\[
2\log_2(8k)
\leq
4\log_2(4k).
\]
Hence,
\[
|U|
<
4k\log_2(4k),
\]
which proves the contrapositive of
\eqref{lower-bound-disjoint-premise}--\eqref{lower-bound-disjoint-conclusion}.
\end{proof}

The constant \(4\) in Lemma~\ref{lem:lower-bound-disjoint-boundary} is
chosen for a simple uniform disjoint-union statement and is not claimed
to be tight.

\subsection{Cache Lower Bound for a Batch of Local FFTs}
\label{app:lower-bounds-batches}

\begin{theorem}[Finite lower bound for a batch of FFTs]
\label{thm:lower-bound-finite-batch}
Let one processor evaluate \(s\) data-disjoint length-\(n\) radix-2 FFT
DAGs without arithmetic recomputation.  For every integer \(k>M\),
\begin{equation}
Q_{\mathrm{batch}}(n,s)
\geq
\left\lfloor
\frac{sn\log_2 n}
{4k\log_2(4k)}
\right\rfloor
\left\lceil
\frac{k-M}{B}
\right\rceil.
\label{eq:lower-bound-finite-batch}
\end{equation}
\end{theorem}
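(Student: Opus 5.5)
The plan is a segment argument on the schedule, parallel to the proof of Theorem~\ref{thm:31}, with Lemma~\ref{e:lem33} in place of Theorem~\ref{thm:29}. The union $G_2^{(s)}(n)$ is a disjoint union of $s$ copies of $G_2(n)$. Each copy has $n\log_2 n$ noninput vertices, so the union has exactly $sn\log_2 n$ noninput vertices. Under the no-recomputation assumption, a processor that evaluates the whole batch computes each of these vertices exactly once. This lets us cut the sequential schedule into consecutive time intervals by counting freshly computed vertices.

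Fix an integer $k>M$ and set the segment size $L=4k\log_2(4k)$. I will walk the schedule and close a segment each time exactly $L$ previously uncomputed vertices have been evaluated since the last cut. Any final incomplete segment is discarded. The number of complete segments is then at least $\lfloor sn\log_2 n/L\rfloor$, which is the first factor of~\eqref{eq:batchfinite}.

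For each complete segment, let $U$ be the set of vertices computed in it, so $|U|=L$. Lemma~\ref{e:lem33} then gives $|\func{in}(U)|\ge k$. By the definition of the input boundary, every vertex of $\func{in}(U)$ lies outside $U$ but feeds some vertex of $U$. Since there is no recomputation, each such value was either an input or was computed before the segment began, and it must be available in cache when it is used. Lemma~\ref{e:lem30} therefore charges the segment at least $\lceil (k-M)/B\rceil$ misses; this is positive because $k>M$.

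The segments are disjoint time intervals, so their misses add, and multiplying the two factors gives~\eqref{eq:batchfinite}. The one point that needs care is that Lemma~\ref{e:lem30} assumes nothing about cache contents at the start of a segment beyond capacity $M$. That worst-case allowance is what makes the per-segment bounds independent of one another and valid to sum. Everything else is bookkeeping that Lemma~\ref{e:lem33} has already handled for the disjoint union.
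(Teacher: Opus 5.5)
Your proposal is correct and follows the paper's proof essentially step for step. Like the paper, you cut the schedule into complete segments of $4k\log_2(4k)$ freshly computed vertices (at least $\lfloor sn\log_2 n/(4k\log_2(4k))\rfloor$ of them), apply Lemma~\ref{e:lem33} to get an input boundary of size at least $k$ per segment, convert that to $\lceil (k-M)/B\rceil$ misses via Lemma~\ref{e:lem30}, and sum over the disjoint segments; your extra remarks on why the boundary values predate each segment and why the worst-case cache allowance makes the per-segment bounds additive are sound elaborations of steps the paper leaves implicit.
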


\begin{proof}
The disjoint union contains \(sn\log_2 n\) noninput computation
vertices.  Partition the execution into complete segments, each
computing
\[
4k\log_2(4k)
\]
previously uncomputed vertices.  Lemma~\ref{lem:lower-bound-disjoint-boundary}
gives an input boundary of size at least \(k\) for every complete
segment.  Lemma~\ref{lem:lower-bound-boundary-to-miss} then gives at
least
\[
\left\lceil
\frac{k-M}{B}
\right\rceil
\]
cache misses per complete segment.  Summing the segment lower bounds
proves \eqref{lower-bound-finite-batch}.
\end{proof}

\begin{corollary}[Asymptotic lower bound for batched local FFTs]
\label{cor:lower-bound-batch-asymptotic}
There exist universal constants \(c,c'>0\) such that
\begin{equation}
Q_{\mathrm{batch}}(n,s)
\geq
c\frac{ns}{B}
\bigl(1+\log_M n\bigr)
-
c'\frac{M}{B}.
\label{eq:lower-bound-batch-asymptotic}
\end{equation}
If the cache is initially empty, or if
\begin{equation}
ns\geq \kappa M
\label{eq:lower-bound-batch-large-instance}
\end{equation}
for a sufficiently large universal constant \(\kappa\), then
\begin{equation}
Q_{\mathrm{batch}}(n,s)
=
\Omega\!\left(
\frac{ns}{B}
\bigl(1+\log_M n\bigr)
\right).
\label{eq:lower-bound-batch-clean}
\end{equation}
\end{corollary}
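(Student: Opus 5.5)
The statement to prove is the last one shown: the asymptotic batched lower bound (Corollary, asymptotic lower bound for batched local FFTs). I would derive it from the finite batch bound of Theorem~\ref{thm:34} by fixing the segment parameter $k=2M$, and then combine it with a compulsory-miss argument.

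\emph{Logarithmic term.} With $k=2M$ we get $k>M$, $k-M=M$ and $\log_2(4k)=3+\log_2 M$. Theorem~\ref{thm:34} then gives
\[
Q_{\mathrm{batch}}\ge \lfloor A\rfloor\,\lceil M/B\rceil,\qquad A:=\frac{sn\log_2 n}{8M(3+\log_2 M)}.
\]
Using $\lfloor A\rfloor\ge A-1$ and $\lceil M/B\rceil\ge M/B$ (the bound is trivial when $A<1$, since the left side is nonnegative), this becomes $Q_{\mathrm{batch}}\ge A\,M/B-M/B$. For $M\ge2$ we have $3+\log_2 M\le 4\log_2 M$, so $A\,M/B\ge \tfrac1{32}\tfrac{ns}{B}\log_M n$. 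This yields $Q_{\mathrm{batch}}\ge c_1\frac{ns}{B}\log_M n-c_2\frac{M}{B}$.

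\emph{Linear term.} The batch has $ns$ distinct inputs, and at most $M$ of them can be resident initially. Each remaining input must be brought in by a miss, and one miss delivers at most $B$ items, so $Q_{\mathrm{batch}}\ge (ns-M)_+/B\ge ns/B-M/B$. Averaging the two bounds via $\max\{x,y\}\ge (x+y)/2$ gives
\[
Q_{\mathrm{batch}}\ge \tfrac12\min(c_1,1)\,\frac{ns}{B}(1+\log_M n)-c'\frac{M}{B},
\]
which is \eqref{batchasym}.

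\emph{The clean $\Omega$ form.} There are two cases.
\begin{itemize}
\item If the cache is initially empty, the linear term has no $-M/B$ correction. The logarithmic term carries the $-M/B$ loss; when it is dominated (roughly $ns\log_M n\lesssim M$), I would argue that $ns\cdot\log_M n=O(ns)$ is covered by the exact compulsory bound $ns/B$. The clean way is to check that if $c_1\frac{ns}{B}\log_M n\le 2c_2 M/B$, then the log term is already $O(ns/B)$ only when $ns\gtrsim M$. Otherwise $ns<M$ forces $\log_M n<1$, so $1+\log_M n=O(1)$ and the compulsory $ns/B$ suffices.
\item If $ns\ge\kappa M$ with $\kappa$ large, the $c'M/B$ term is at most a small fraction of $\frac{ns}{B}$, so it is absorbed.
\end{itemize}

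The main obstacle is this empty-cache case analysis, namely making sure the negative $M/B$ term never dominates. The split on $ns$ versus $M$, together with $n\le ns$, resolves it.
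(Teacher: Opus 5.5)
Your proposal is correct and follows the paper's proof essentially step for step. The shared steps are: $k=2M$ in Theorem~\ref{thm:34}; the estimates $\lfloor A\rfloor\ge A-1$, $\lceil M/B\rceil\ge M/B$ and $3+\log_2 M\le 4\log_2 M$, which give $c_1=1/32$, $c_2=1$; the compulsory bound $(ns-M)_+/B$; averaging via $\max\{x,y\}\ge (x+y)/2$; and absorbing $c'M/B$ when $ns\ge\kappa M$.

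Your explicit split in the empty-cache case fills in a step the paper only asserts. If $ns\ge M$, a dominated logarithmic term is $O(ns/B)$; if $ns<M$, then $n\le ns<M$ forces $\log_M n<1$. The ``only when'' wording there should be restated as the implication ``if $ns\ge M$''.
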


\begin{proof}
Choose \(k=2M\) in Theorem~\ref{thm:lower-bound-finite-batch}.  This
gives constants \(c_1,c_2>0\) such that
\begin{equation}
Q_{\mathrm{batch}}(n,s)
\geq
c_1\frac{ns}{B}\log_M n
-
c_2\frac{M}{B}.
\label{eq:lower-bound-batch-logarithmic}
\end{equation}

Independently, the \(sn\) inputs of the batch are distinct.  Therefore,
if at most \(M\) useful input values are initially resident in cache,
the execution must incur at least
\begin{equation}
Q_{\mathrm{batch}}(n,s)
\geq
\frac{(ns-M)_+}{B}
\label{eq:lower-bound-batch-compulsory}
\end{equation}
cache misses to access the remaining input values.  Combining
\eqref{lower-bound-batch-logarithmic} and
\eqref{lower-bound-batch-compulsory}, and using
\[
\max\{x,y\}\geq\frac{x+y}{2},
\]
gives \eqref{lower-bound-batch-asymptotic} after absorbing constants.

If the cache is initially empty, the compulsory input scan contributes
\(\Omega(ns/B)\) directly.  If instead
\eqref{lower-bound-batch-large-instance} holds for sufficiently
large \(\kappa\), then the linear term \(ns/B\) absorbs the negative
\(M/B\) term in \eqref{lower-bound-batch-asymptotic}.  In either
case, \eqref{lower-bound-batch-clean} follows.
\end{proof}

\subsection{Application to \algName{} Local FFT Phases}
\label{app:lower-bounds-\algName{}-local}

In Phase~I, every processor evaluates
\[
s_1
=
\frac{N_2}{p}
\]
data-disjoint transforms of length
\[
n_1=N_1.
\]
In Phase~III, every processor evaluates
\[
s_2
=
\frac{N_1}{p}
\]
data-disjoint transforms of length
\[
n_2=N_2.
\]
Both local batches have size
\begin{equation}
n_1s_1
=
n_2s_2
=
\frac{N}{p}.
\label{eq:lower-bound-\algName{}-local-batch-size}
\end{equation}

\begin{theorem}[\algName{} local cache lower bound]
\label{thm:lower-bound-\algName{}-local-cache}
Under the exact slab decomposition and no-arithmetic-recomputation
assumptions, \algName{}'s two local FFT phases satisfy
\begin{equation}
Q_{\mathrm{crit}}^{\mathrm{local}}
=
\Omega\!\left(
\frac{N}{pB}
\bigl(1+\log_M N\bigr)
\right).
\label{eq:lower-bound-\algName{}-local-cache}
\end{equation}
Consequently, the complete \algName{} execution satisfies the same cache
lower bound.
\end{theorem}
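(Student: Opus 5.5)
The plan is a case split on the size of the local batch relative to the cache. I will apply the batched lower bound of Corollary~\ref{e:cor35} to each of the two local phases. The batches involved are as follows.

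\begin{itemize}
\item In Phase~I, every processor evaluates $s_1=N_2/p$ data-disjoint radix-2 FFT DAGs of length $n_1=N_1$.
\item In Phase~III, it evaluates $s_2=N_1/p$ such DAGs of length $n_2=N_2$.
\item Both batches have mass $n_1s_1=n_2s_2=N/p$.
\end{itemize}

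By Remark~\ref{rmk:fixed-radix-scope}, expanding each radix-$b$ butterfly into its $t$ radix-2 layers puts these batches within the scope of Corollary~\ref{e:cor35}. Fix the universal constant $\kappa$ supplied by that corollary.

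\emph{Large-batch case $N/p\ge\kappa M$.} Here the hypothesis $ns\ge\kappa M$ of Corollary~\ref{e:cor35} holds for both phases, so the negative $M/B$ term is absorbed. This gives
\[
\Omega\!\left(\frac{N}{pB}\left(1+\log_M N_1\right)\right)
\quad\text{and}\quad
\Omega\!\left(\frac{N}{pB}\left(1+\log_M N_2\right)\right)
\]
misses for Phases~I and~III respectively. The two phases run sequentially on the same processor and do not share arithmetic vertices, so their misses are disjoint and add. Since $\log_M N_1+\log_M N_2=\log_M N$, the sum is the claimed $\Omega\bigl(\tfrac{N}{pB}(1+\log_M N)\bigr)$.

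\emph{Small-batch case $N/p<\kappa M$.} I will not invoke the boundary argument here. Instead I will show that the logarithmic factor is constant, using the divisibility assumption~\eqref{eq:exact-divisibility}:
\begin{itemize}
\item since $p\mid N_1$ and $p\mid N_2$, we have $N_1\le N_1 (N_2/p)=N/p$ and likewise $N_2\le N/p$;
\item hence $N=N_1N_2\le (N/p)^2<\kappa^2M^2$;
\item so $1+\log_M N=O(1)$.
\end{itemize}
Phase~I starts from an empty cache, per the machine model, and must read all $N/p$ distinct local input items. That forces $\Omega(N/(pB))$ compulsory misses, which already matches the bound in this case.

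The main obstacle is the bookkeeping of constants across the two cases. The threshold $\kappa$ must be the one that makes Corollary~\ref{e:cor35} clean. The small-case estimate $\log_M(\kappa^2M^2)=O(1)$ also needs $M$ bounded below by a constant, which the tall-cache condition~\eqref{eq:tall-cache} provides for $M\ge 2$. Finally, because the exact slab decomposition gives every processor identical batch sizes, the bound holds for every processor and hence for $Q_{\max}$. The fused middle stage only adds misses, so the complete PACO execution obeys the same lower bound.
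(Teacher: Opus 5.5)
Your proposal follows the paper's proof essentially step for step. It uses the same case split at $N/p\ge\kappa M$ with the $\kappa$ of Corollary~\ref{e:cor35}, sums the two per-phase bounds via $\log_M N_1+\log_M N_2=\log_M N$ in the large case, and in the small case uses the same divisibility argument giving $N<\kappa^2M^2$ together with the compulsory scan of Phase~I from an empty cache. It closes with the same remarks about identical per-processor batch sizes and the middle stage only adding misses. One small point: the requirement $M\ge 2$ is a standing convention, already used in the proof of Corollary~\ref{e:cor35}, rather than something the tall-cache condition provides; this does not affect the argument.
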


\begin{proof}
Fix a sufficiently large universal constant \(\kappa\).

First suppose that
\begin{equation}
\frac{N}{p}
\geq
\kappa M.
\label{eq:lower-bound-\algName{}-large-local-batch}
\end{equation}
Applying Corollary~\ref{cor:lower-bound-batch-asymptotic} separately
to the two local phases gives
\begin{align}
Q_{q,\mathrm{I}}
&=
\Omega\!\left(
\frac{N}{pB}
\bigl(1+\log_M N_1\bigr)
\right),
\label{eq:lower-bound-\algName{}-phase-I}
\\
Q_{q,\mathrm{III}}
&=
\Omega\!\left(
\frac{N}{pB}
\bigl(1+\log_M N_2\bigr)
\right).
\label{eq:lower-bound-\algName{}-phase-III}
\end{align}
The phases execute sequentially.  Since
\[
\log_M N_1+\log_M N_2
=
\log_M N,
\]
their cache lower bounds sum to
\[
\Omega\!\left(
\frac{N}{pB}
\bigl(1+\log_M N\bigr)
\right).
\]

Now suppose that
\begin{equation}
\frac{N}{p}
<
\kappa M.
\label{eq:lower-bound-\algName{}-small-local-batch}
\end{equation}
The exact divisibility assumptions imply
\[
p\mid N_1,
\qquad
p\mid N_2,
\]
and hence
\[
p\leq N_1,
\qquad
p\leq N_2.
\]
Therefore,
\[
N_1
\leq
\frac{N}{p},
\qquad
N_2
\leq
\frac{N}{p},
\]
so
\[
N
=
N_1N_2
\leq
\left(\frac{N}{p}\right)^2
<
\kappa^2M^2.
\]
It follows that
\[
1+\log_M N
=
O(1).
\]

Phase~I begins with an empty cache and accesses
\(\Theta(N/p)\) distinct local input items.  It consequently incurs
\[
Q_{q,\mathrm{I}}
=
\Omega\!\left(
\frac{N}{pB}
\right)
=
\Omega\!\left(
\frac{N}{pB}
\bigl(1+\log_M N\bigr)
\right).
\]
This proves \eqref{lower-bound-\algName{}-local-cache} in both cases.

Finally, the exact slab decomposition assigns the same local FFT batch
sizes to every processor.  Hence, the lower bound applies to the
critical path.  The fused middle stage can only add cache misses, so
the complete \algName{} execution satisfies the same lower bound.
\end{proof}

\subsection{Fixed-Radix Extension}
\label{app:lower-bounds-fixed-radix}

\begin{corollary}[Extension to fixed radix]
\label{cor:lower-bound-fixed-radix}
Let
\[
b=2^t,
\qquad
t=O(1).
\]
If every radix-\(b\) butterfly is expanded into its \(t\) constituent
radix-2 butterfly layers, then the resulting dependency graph is the
standard radix-2 FFT DAG.  Therefore, the single-transform lower bounds
of Sections~\ref{app:lower-bounds-finite-single} and
\ref{app:lower-bounds-constant-one}, the batched lower bounds of
Section~\ref{app:lower-bounds-batches}, and the \algName{} local cache lower
bound of Theorem~\ref{thm:lower-bound-\algName{}-local-cache} all apply to
the expanded fixed-radix Cooley--Tukey DAG.
\end{corollary}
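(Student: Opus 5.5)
The statement is Corollary (Extension to fixed radix): the radix-2 lower bounds transfer to the expanded radix-$b$ Cooley--Tukey DAG. My plan is to prove it as a graph-isomorphism statement. After that, each earlier lower bound applies verbatim, because every one of them depends on the schedule only through the DAG structure. These are Theorem~\ref{thm:31}, Theorem~\ref{thm:32}, Theorem~\ref{thm:34}, Corollary~\ref{e:cor35} and Theorem~\ref{thm:36}.

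\textbf{Step 1: identify the DAGs.} I will show that expanding every radix-$b$ butterfly ($b=2^t$) into its $t$ radix-2 layers yields exactly $G_2(N)$, up to relabeling of vertices. I would do this by digit refinement. Write each base-$b$ digit as $t$ bits, so that an $m$-digit base-$b$ index becomes an $mt=\log_2 N$ bit index. A length-$b$ DFT applied along one base-$b$ digit, preceded by its twiddle, factors by the radix-2 Cooley--Tukey identity (the same computation as \eqref{CT}, specialized to $n=2^t$) into $t$ radix-2 butterfly layers, one for each bit of that digit. The intermediate twiddles are powers of $\omega_N$, which I would absorb into the butterfly edge weights. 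Concatenating over the $m$ digit levels gives $mt$ layers. In each layer a vertex combines exactly the two vertices whose indices differ in one bit. That is precisely the level structure of $G_2(N)$. Since edge weights are irrelevant to the boundary notion \eqref{e:inU}, the edge sets coincide.

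\textbf{Step 2: check the counts.} The expanded DAG has $\log_2 N+1$ levels of $N$ vertices, so it has $N\log_2 N$ noninput vertices. This is the count the segment arguments of Theorems~\ref{thm:31} and~\ref{thm:34} use. The no-recomputation hypothesis also transfers, because every vertex of the expanded DAG is an arithmetic vertex of the fixed-radix computation.

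\textbf{Step 3: transfer the results.} With the isomorphism in hand, Theorem~\ref{thm:29} holds for the expanded DAG, and so does Lemma~\ref{e:lem33} for disjoint unions. The batch versions need only the per-component identification. Lemma~\ref{e:lem30} is schedule-generic, so the bounds in Theorem~\ref{thm:31}, Theorem~\ref{thm:32}, Theorem~\ref{thm:34}, Corollary~\ref{e:cor35} and Theorem~\ref{thm:36} all carry over unchanged.

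\textbf{Main obstacle.} I expect the only delicate point to be Step~1: confirming that the twiddle placement in the mixed radix-$b$/radix-2 factorization does not change the connectivity. It could, for example, turn a $b$-point DFT into a non-butterfly dense block. I would handle this by noting that a twiddle is a diagonal scaling, which adds no edges. So the dependency graph of a radix-$b$ DIT step is determined by the index pattern alone, and the standard radix-2 index recursion matches it bit by bit.
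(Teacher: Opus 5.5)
Your proposal is correct and takes the same route as the paper, which justifies the corollary only by the DAG identification stated in it (and in the fixed-radix scope remark) and then transfers every bound verbatim, since each bound depends on the schedule only through the DAG and its $N\log_2 N$ noninput vertices. The one phrase to tighten is ``the edge sets coincide'': the recursion order and the natural-order leaf kernels need not process bits in the textbook order of $G_2(N)$, so what you actually obtain is an isomorphism (a relabeling of vertices by bit permutations) rather than literal equality of edge sets, and this suffices because the input-boundary bound is invariant under isomorphism.
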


If a complete radix-\(b\) DFT is instead treated as one indivisible
atomic vertex, rather than as a bounded collection of radix-2
arithmetic layers, then the graph structure and associated leading
constants differ.  Such an atomic model is outside the scope of the
lower-bound claims in this appendix.
}
	
\end{document}